\def\bSig\mathbf{\Sigma}
\newcommand{\indic}{\mathds{1}}
\newcommand{\E}{\mathds{E}}
\renewcommand{\P}{\mathds{P}}
\newcommand{\X}{\mathds{X}}
\newcommand{\R}{\mathds{R}}
\newcommand{\N}{\mathds{N}}
\newcommand{\Ztilde}{\Tilde{Z}}
\newcommand{\ztilde}{\Tilde{z}}
\newcommand{\mutilde}{\Tilde{\mu}}
\newcommand{\Ga}{\Gamma}
\newcommand{\Bcr}{\mathscr{B}}
\newcommand{\Ncr}{\mathscr{N}}
\newcommand{\BP}{\textsc{bp}}
\newcommand{\NB}{\textsc{nb}}
\newcommand{\ld}{\ldots}
\newcommand{\Var}{\operatornamewithlimits{Var}}
\newcommand{\suml}{\sum_{\ell=1}^k}
\newcommand{\sumj}{\sum_{j=1}^{N}}
\newcommand{\prodl}{\prod_{\ell=1}^k}
\newcommand{\dconv}{\overset{\rm d}{\longrightarrow}}
\newcommand{\dequal}{\overset{\rm d}{=}}
\newcommand{\dgamma}{{\rm Gamma}}
\newcommand{\dbeta}{{\rm Beta}}
\newcommand{\iid}{\stackrel{\textup{iid}}{\sim}}
\newcommand{\ind}{\stackrel{\textup{ind}}{\sim}}
\newcommand{\D}{{\rm d}}
\def\simind{\stackrel{\mbox{\scriptsize{ind}}}{\sim}}
\def\simiid{\stackrel{\mbox{\scriptsize{iid}}}{\sim}}
\date{}
\title{Bayesian analysis of product feature allocation models}
\author[1]{Lorenzo Ghilotti}
\author[1]{Federico Camerlenghi}
\author[1]{Tommaso Rigon}
\affil[1]{Department of Economics, Management, and Statistics, University of Milano--Bicocca, 20126 Milano, Italy}
\providecommand{\keywords}[1]{
  \small 
  \textbf{\textit{Keywords:}} #1
  \normalsize
}
\newtheorem{theorem}{Theorem}
\newtheorem{corollary}{Corollary}
\newtheorem{lemma}{Lemma}
\newtheorem{proposition}{Proposition}
\theoremstyle{definition}
\theoremstyle{remark}
\begin{document}

\maketitle

\begin{abstract}
Feature allocation models are an extension of Bayesian nonparametric clustering models, where individuals can share multiple features. We study a broad class of models whose probability distribution has a product form, which includes the popular Indian buffet process. This class plays a prominent role among existing priors, and it shares structural characteristics with Gibbs-type priors in the species sampling framework. We develop a general theory for the entire class, obtaining closed form expressions for the predictive structure and the posterior law of the underlying stochastic process. Additionally, we describe the distribution for the number of features and the number of hitherto unseen features in a future sample, leading to the $\alpha$-diversity for feature models. We also examine notable novel examples, such as mixtures of Indian buffet processes and beta Bernoulli models, where the latter entails a finite random number of features. This methodology finds significant applications in ecology, allowing the estimation of species richness for incidence data, as we demonstrate by analyzing plant diversity in Danish forests and trees in Barro Colorado Island.
\end{abstract}

\keywords{Bayesian nonparametrics, completely random measures, exchangeable feature probability function, Gibbs-type feature models, Indian buffet process}

\section{Introduction}

Random feature allocations have emerged as an important area of Bayesian nonparametrics. The pioneering work of \citet{Gri06} introduced the Indian buffet process (\textsc{ibp}), a stochastic mechanism for binary matrices, which is obtained by considering the infinite limit of a beta Bernoulli (\textsc{bb}) model. Unlike clustering problems, where each individual belongs to a single group, in feature allocation models every observation may possess a finite set of \emph{features} or \emph{attributes}. Shortly after its proposal, \citet{Thi07} demonstrated that de Finetti's celebrated theorem could be applied to the \textsc{ibp}, establishing its connection to the beta process of \citet{Hjo90}. This pivotal finding linked \textsc{ibp}s to the theory of completely random measures \citep{Kin67}, laying the groundwork for a new branch of Bayesian nonparametrics. These initial investigations sparked a rich stream of research, particularly within the machine learning community. \textsc{ibp} models have found widespread applicability across various domains, including Bayesian factor analysis and nonnegative matrix factorization \citep{Gri06,Kno11, Aye21}, topic modeling \citep{Wil10}, relational models \citep{Miller2009, Pal12}, and object recognition \citep{Bro15}. We refer to \citet{Teh10} and \citet{Gri11} for a comprehensive review of the early contributions.  The relevance of \textsc{ibp}s as a statistical tool is now firmly established; however, their limitations have also become apparent, as well as the need for a deeper theoretical understanding. For example, the logarithmic growth rate of the number of features in the two-parameter \textsc{ibp} \citep{Gri11} spurred the proposal of a three-parameter generalization by \cite{Teh09}, which exhibits a power-law behavior; see also \citet{Broderick2012}. Another major step was pursued by \cite{Broderick2013two}, who showed that most feature models are characterized by a combinatorial entity known as the \emph{exchangeable feature probability function} (\textsc{efpf}). More recently, \citet{Jam17} studied a general class of feature models based on completely random measures, while another extensive class, relying on random scaling of the underlying process, is investigated in \citet{Cam24}. Another recent construction is discussed in \citet{Heaukulani2020}. 

In this paper, we study the broad class of feature models defined by \citet{Bat18}, who holds the merit of characterizing all \textsc{efpf}s with a product form as mixtures of the two most widely used feature models, namely the Indian buffet process \citep[\textsc{ibp}, ][]{Teh09} and the beta Bernoulli \citep[\textsc{bb}, ][]{Gri06}. However, apart from this important representation theorem, a comprehensive statistical investigation is still lacking. We develop a general theory for this class of models, encompassing (i) the predictive structure, leading to a generalized Indian buffet metaphor, (ii) the posterior distribution of the underlying process, (iii) prior and posterior properties regarding the number of features, and (iv) the asymptotic behavior. Our findings are available in closed form, lead to computationally efficient inferential procedures, and enjoy a transparent interpretation. Moreover, this theoretical investigation allows us to identify three novel feature allocation models that stand out for their tractability: the gamma mixture of \textsc{ibp}s, and the Poisson and negative binomial mixture of \textsc{bb}s. The latter two models entail a random but finite number of possible features, therefore being structurally different from existing \textsc{ibp}-like specifications that involve infinitely many features. Finally, we highlight several remarkable parallelisms between the class of \citet{Bat18} and Gibbs-type priors for species sampling models \citep{Gne05,DeB15}. In light of these similarities, we will refer to this class as \emph{Gibbs-type feature models}. In species sampling problems, Gibbs-type priors are perhaps the most natural generalization of the Dirichlet process of \citet{Fer73}, owing to their balance between flexibility and analytical tractability. Notable examples are the Pitman--Yor process \citep{Pit97}, the normalized generalized gamma process \citep{Lijoi2007}, and mixtures of Dirichlet multinomial processes \citep{Gne10,DeBlasi2013}. For similar reasons, we argue that Gibbs-type feature models are one of the most natural extensions of the \textsc{ibp} and the \textsc{bb}.

We demonstrate here the usefulness of feature models in ecological problems as a tool to measure biodiversity. There exists a rich literature about the quantification of biodiversity \citep{Colwell2009, Magurran2011} with taxon richness, i.e., the number of different taxa present in a community, being perhaps the simplest and most natural definition. Richness estimation is, in turn, related to the notion of \emph{taxon accumulation curves} \citep{Got2001}. 
A Bayesian nonparametric inferential framework for predicting unseen species has been laid down by \citet{Lij07} for Gibbs-type priors. See also \citet{Favaro2009} for the Pitman--Yor special case and \citet{Zito2023} for a related model-based approach. These Bayesian methods are suitable for individual-based accumulation curves, that is, when species are observed one at a time. However, species are often captured or collected in chunks, and hence, each observation takes the form of a vector of binary variables accounting for the presence or absence of a species. Feature models are well-suited for this kind of data, called \emph{incidence data}, leading to a Bayesian analysis of sample-based accumulation curves. Despite the development of classical estimators for this setting \citep[e.g.][]{Col12, Chi14, Cha14, Chak19, Chi22, Chiu2023}, the Bayesian nonparametric literature remains much more limited, except for the recent works of \citet{Mas22, Cam24}. Our theoretical investigation allows for the prediction of the number of unseen species, the modeling of accumulation curves, and the quantification of biodiversity. For instance, an important theoretical result of this paper, particularly relevant for ecological applications, is the definition of the $\alpha$-diversity, a biodiversity measure that extends the notion of \citet{Pitman2003} to sample-based designs. In the proposed Poisson and negative binomial mixture of \textsc{bb} models, the $\alpha$-diversity coincides with the taxon richness, and its posterior distribution follows a Poisson and a negative binomial distribution, respectively. This leads to straightforward Bayesian estimators for the taxon richness whose uncertainty can be formally and easily quantified. 
{Although this work focuses on ecological applications, the proposed methodology is broadly applicable across various domains. 
For instance, in biological sciences, estimating  the number of unseen or rare genetic variants in the human
genome can help the understanding of human diseases or guide the design of effective clinical procedure \citep{ionita2009estimating, Gra(14), Zou16}.
In single-cell sequencing data, predicting the number and frequency of somatic mutations at the cellular level is essential for characterizing tumor heterogeneity, which is a key factor in cancer progression and resistance to therapy. Since the expense of sequencing is nontrivial, accurate prediction  is crucial to allocate limited sequencing budget \citep{Zhang(20)}. 
Other applications include cancer biology \citep{Chak19}, precision medicine \citep{momozawa2020unique} and microbiome analysis \citep{sanders2019optimizing}.}

The paper is structured as follows. In Section \ref{sec:feature_allocation}, we review feature allocation models. In Section \ref{sec:general_theory} we develop general theory for the class of Gibbs-type feature models. In Sections~\ref{sec:mixture_IBP}-\ref{sec:BB_mixtures} we propose and study novel examples of Gibbs-type feature allocation models, distinguishing between models with an infinite number of features (mixtures of \textsc{ibp}s) and those assuming finitely many features (mixtures of \textsc{bb}s). Simulation studies are discussed in Section \ref{sec:comparison_models}, while Section~\ref{sec:real_data} illustrates our methodology by analyzing two real datasets. The paper ends with a discussion; proofs, additional theorems, simulation studies {and additional details about the applications} are collected in the supplementary material.

\section{Feature allocation models}  \label{sec:feature_allocation}
\subsection{Preliminary concepts}

\begin{figure}
    \centering
    \includegraphics[width=0.6\textwidth]{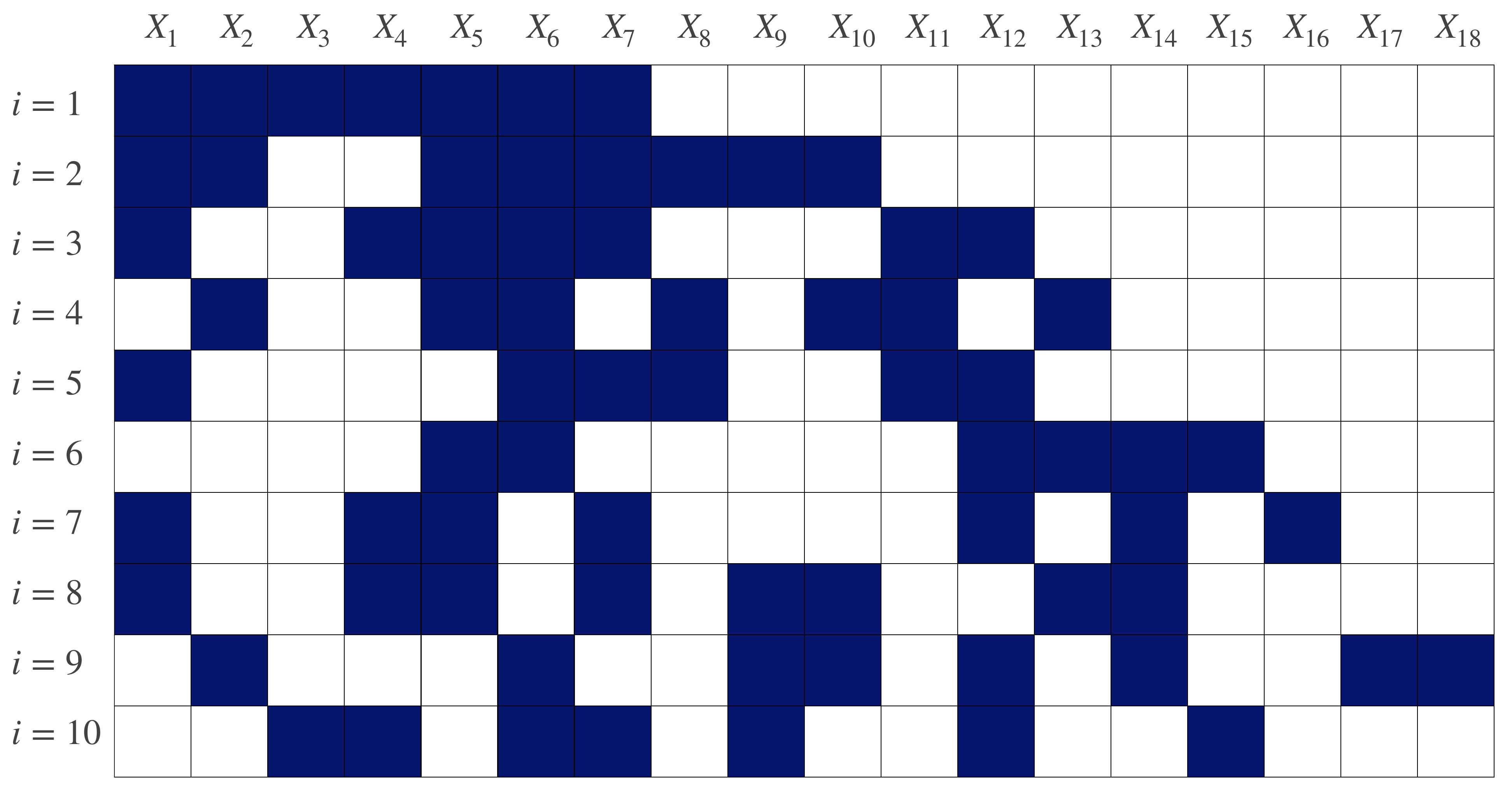}
    \caption{Matrix representation of a feature allocation, with $n=10$ individuals and $K_n = 18$ observed features. Features are in \emph{order of appearance} \citep{Broderick2013two}. The blue squares correspond to  $A_{i,\ell} = 1$ (the $i$th individual displays feature $X_\ell$), the white squares correspond to  $A_{i,\ell} = 0$ (the $i$th does not express feature~$X_\ell$). }
    \label{fig:matrix_representation}
\end{figure}

Feature allocation models describe how features are distributed among a sample of $n$ individuals \citep{Broderick2013two}. Let $[n]=\{1, \ldots, n\}$ be a set comprising the first $n$ natural numbers. An ordered \emph{feature allocation} is a sequence of non-empty sets $B_{n,\ell} \subseteq [n]$, for $\ell=1,\ldots, K_n$, such that $B_{n,\ell}$ identifies the set of individuals exhibiting the $\ell$th feature. To distinguish these sets, we assign them a \emph{label}. More precisely, suppose $\X$ is the space of feature labels, and let $X_\ell \in \X$ be the label associated with the set $B_{n,\ell}$, for $\ell = 1,\ldots, K_n$. The labels $X_\ell$ are independent and identically distributed (i.i.d.) draws, that is $X_\ell \iid P_0$, with $P_0$ being a diffuse distribution on $\X$, ensuring that the labels are almost surely (a.s.) distinct. The association between feature allocations and labels can be encoded using binary random variables~$A_{i,\ell}$ for $i=1,\dots,n$ and $\ell = 1,\dots,K_n$, so that $A_{i,\ell}$ equals $1$ if the $i$th individual displays feature $X_\ell$ and $0$ otherwise. In other terms, the random set $B_{n,\ell}$ may be written as $B_{n,\ell} = \{ i \in [n]: A_{i,\ell} = 1\}$. A feature allocation can be represented through binary matrices, as depicted in Figure~\ref{fig:matrix_representation}, where there are $n=10$ individuals and $K_n = 18$ features. Each column of the binary matrix corresponds to a set $B_{n,\ell}$, with the $i$th element of the $\ell$th column representing the value~$A_{i,\ell}$. We assume that each individual may exhibit only a finite number of features; that is, an individual belongs to a finite number of $B_{n,\ell}$'s. This implies that the total number of observed features $K_n$ is a.s. finite for any $n \ge 1$.

A \textit{feature allocation model} is a probability distribution for a random ordered  feature allocation $F_n = (B_{n,1},\dots, B_{n,K_n})$. Alternatively, one may consider the probability distribution for an \emph{unordered feature allocation}~$\tilde{F}_n = \{(\tilde{B}_{n,1}, \tilde{K}_{n,1}),\ldots, (\tilde{B}_{n,H_n}, \tilde{K}_{n,H_n})\}$, where $\tilde{B}_{n,h} \subseteq [n]$ are the $H_n \le K_n$ distinct sets among the $B_{n,1},\ldots,B_{n,K_n}$, with $\tilde{K}_{n,h}$ being the number of sets equal to $\tilde{B}_{n,h}$. The unordered $\tilde{F}_n$ is sometimes used to define feature allocation models \citep[see][]{Broderick2013two}, but for our purposes, it is more convenient to deal with the ordered $F_n$. In any event, we assume that the probability of $\tilde{F}_n$ being equal to any unordered feature allocation $\tilde{f}_n$ is equally split between the $K_n! / \prod_{h=1}^{H_n} \tilde{K}_{n,h} !$ possible orders of the sets $B_{n,1},\ldots,B_{n,K_n}$. Hence, the following relationship holds
\begin{equation*}
\P(\tilde{F}_n = \tilde{f}_n) = \frac{{K_n}!}{\prod_{h=1}^{H_n} \tilde{K}_{n,h} !} \P(F_n = f_n),
\end{equation*}
where $f_n$ is one of the possible orderings of $\tilde{f}_n$.

In this paper, we focus on feature allocation models admitting an \textit{exchangeable feature probability function}  (\textsc{efpf}). This is a probabilistic object whose role is analogous to the exchangeable partition probability function (\textsc{eppf}) in the species sampling framework \citep{Pit96}, as carefully discussed in \citet{Broderick2013two}. Specifically, let $(M_{n,1}, \ldots , M_{n, K_n})$ be the random vector of feature frequencies, where $M_{n,\ell} = \# B_{n,\ell} = \sum_{i=1}^n{A_{i,\ell}}$, for $\ell =1, \ldots , K_n$. We assume that the probability distribution of $F_n$ depends on the sample solely through the vector $\bm{M}^{(n)} = (M_{n,1}, \ldots , M_{n, K_n})$, namely 
\begin{equation*}\label{eq:def_efpf}
 \P(F_n = f_n) = \pi_{n}(m_1,\ld,m_k),
\end{equation*}
for every $f_n$ and for every $n \ge 1$, where $\pi_n$ is a $[0,1]$-valued symmetric function defined on  $\bigcup_{k \geq 0} [n]^k$, and $(m_1,\ldots,m_k)$ are the feature     frequencies for $f_n$.
The function $\pi_n$ is termed exchangeable feature probability function, and it encapsulates all the relevant properties of the model.  By construction, feature allocation models admitting an \textsc{efpf} are exchangeable, meaning that the distribution of $F_n$ is invariant under any permutation of the indexes of the $n$ individuals. It is also natural to require feature models to be \emph{Kolmogorov consistent}, which means that the probability distribution of the feature allocation for $n$ individuals coincides with that for $n+1$ individuals once the last individual is integrated out. We refer to \citet{Broderick2013two} for a detailed discussion.

\subsection{Exchangeable Gibbs-type feature allocation models} \label{sec:product-formFAM}

Among the exchangeable and consistent models, the class of \textsc{efpf} in product form introduced by \citet{Bat18} represents a special subset that is still very rich and diversified. We refer to this class as \emph{exchangeable Gibbs-type feature allocation models}, or \emph{Gibbs-type feature models} for brevity, for the evident similarity with exchangeable Gibbs-type random partitions \citep{Gne05}. We consider \textsc{efpf}s of the following product form $\pi_n (m_1, \ldots , m_k) = V_{n,k} \prod_{\ell=1}^k  W_{m_\ell} U_{n-m_\ell}$, where  $V= (V_{n,k} : (n,k) \in \N \times \N_0)$ and  $W=(W_j : j \in\N)$, $U=(U_j : j \in \N_0)$ are two sequences of non-negative weights, with $\N$ denoting the set of natural numbers and $\N_0 = \N \cup \{0\}$. Apart from some limiting cases, an important result of \cite{Bat18} states that Gibbs-type feature models are necessarily of the form
\begin{equation} \label{eq:EFPF_product}
\pi_n (m_1, \ldots , m_k) = V_{n,k} \prod_{\ell=1}^k  (1-\alpha)_{m_\ell -1} (\theta+\alpha)_{n-m_\ell},
\end{equation}
for $- \infty < \alpha < 1$ and  $-\alpha < \theta < \infty$, where $(x)_m= \Ga (x+m)/\Gamma (x)$ is the Pochhammer symbol, and $\Gamma(x)$ is the gamma function. The array $V$ must satisfy the following recurrence relationship $V_{n,k} = \sum_{j=0}^\infty (k + j)!/\{j!k!\} \{(\theta + \alpha)_n \}^j (\theta + n)^k V_{n+1,k +j}$, which guarantees the consistency of the \textsc{efpf}. The limiting case $\alpha = 1$ corresponds to no feature sharing, i.e., $M_{n, \ell} = 1$ almost surely for $\ell = 1,\dots, K_n$, whereas $\theta = -\alpha$ corresponds to complete feature sharing, that is $M_{n, \ell} = n$ almost surely, for $\ell = 1,\dots, K_n$. These degenerate situations are uninteresting in practice.

The most popular and widely used feature allocation models are of Gibbs-type. A first noteworthy example is the three-parameter Indian buffet process (\textsc{ibp}), introduced in \cite{Teh09}, with parameters $(\gamma, \alpha, \theta)$ satisfying $\gamma > 0$, $0 \leq \alpha < 1$, and $\theta > -\alpha$. The \textsc{efpf} is in product form~\eqref{eq:EFPF_product} and the $V_{n,k}$'s are given by
\begin{equation}
\label{eq:EFPF_IBP}
V_{n,k}= \frac{1}{k!} \left\{ \frac{\gamma}{(\theta+1)_{n-1}} \right\}^k
\exp \left\{ -\gamma  g_n(\theta, \alpha) \right\}, \quad \text{with} \quad g_n(\theta, \alpha) := \sum_{i=1}^n \frac{(\theta+\alpha)_{i-1}}{(\theta+1)_{i-1}}.
\end{equation}
The choice $\alpha=0$ corresponds to the two-parameter \textsc{ibp}, while the one-parameter model is obtained by further considering $\theta=1$; see  \citet{Gri11}. We stress that the distribution of $K_n$, in the \textsc{ibp} case, has unbounded support. 
A second notable example is the beta Bernoulli (\textsc{bb}), with parameters $(N, \alpha, \theta)$ such that $N \in \N$, $\alpha <0$ and $\theta > -\alpha$ \citep{Gri11}. The \textsc{efpf} of a \textsc{bb} is also in product form~\eqref{eq:EFPF_product}, with the $V_{n,k}$'s given by
\begin{equation}
\label{eq:EFPF_BB}
V_{n,k} = \binom{N}{k} \left\{ \frac{-\alpha}{(\theta+\alpha)_n}   \right\}^k   \left\{ \frac{(\theta+\alpha)_n}{(\theta)_n}  \right\}^N
\indic_{\{ 0, 1 , \ldots,N  \}} (k),
\end{equation}
where $\indic_C$ denotes the indicator function of a set $C$. The \textsc{bb} model prescribes that the observed number of features $K_n$ is bounded by $N$.  
Remarkably, a characterization theorem due to \citet{Bat18} establishes that the \textsc{ibp} and the \textsc{bb} are the building blocks of any Gibbs-type feature model. More precisely, for fixed values of $(\theta, \alpha)$, the set of Gibbs coefficients $V_{n,k}$ satisfying the aforementioned consistency condition are necessarily mixtures of the $\gamma$ and $N$ parameters of the \textsc{ibp} and the \textsc{bb}, respectively. This is better clarified in the following result. 

\begin{proposition}[Theorem 1.1 of \cite{Bat18}]\label{thm:battiston_second}
For fixed values of $(\theta, \alpha)$ such that $\theta > - \alpha$, the set of solutions of the recursions for the $V_{n,k}$'s is:
\begin{itemize} 
\item[(i)] \label{case_3ibp} for $0 \le \alpha < 1$, mixtures over $\gamma \in \mathds{R}^{+}$ of the $V_{n,k}$'s of \textsc{ibp}s with respect to a distribution $P_\gamma$;
\item[(ii)] \label{case_bb} for $\alpha<0$, mixtures over $N \in \mathds{N}$ of the $V_{n,k}$'s of \textsc{bb}s with respect to a distribution $P_N$.
\end{itemize}
\end{proposition}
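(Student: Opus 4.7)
The plan is to exploit the linearity of the consistency recursion obeyed by $V=(V_{n,k})$. For fixed $(\theta,\alpha)$ with $\theta > -\alpha$, the set $\mathcal{V}_{\theta,\alpha}$ of admissible non-negative arrays is a convex cone, and the theorem amounts to identifying its extreme rays with the one-parameter families of \textsc{ibp} arrays in~\eqref{eq:EFPF_IBP} (when $0 \leq \alpha < 1$) and \textsc{bb} arrays in~\eqref{eq:EFPF_BB} (when $\alpha < 0$). Once the extremes are pinned down, a Choquet-type integral representation on the convex slice obtained by fixing a normalization (say, $V_{1,1} = 1$) yields the claimed mixture over $P_\gamma$ or $P_N$.

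I would organize the argument in two halves. First, check by direct computation that the \textsc{ibp} and \textsc{bb} arrays do solve the recursion; this reduces to a handful of algebraic identities, essentially the exponential series for the \textsc{ibp} case and a binomial/Vandermonde-type identity for the \textsc{bb} case. Linearity then gives one inclusion: every such mixture is admissible. Second, for the reverse inclusion, I would pass to a generating-function reformulation: define $G_n(z) := \sum_{k \geq 0} V_{n,k}\, z^k / k!$ (weighted by a $k$-dependent factor chosen so that the coefficients line up) and translate the recursion into a functional equation of the schematic form $G_n(z) = G_{n+1}(z + c_n(\theta,\alpha))$. Iterating this equation pins down $G_n$ from its behaviour at a distinguished limit point, which provides a single free parameter.

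The sign of $\alpha$ then determines the extremal structure. For $\alpha \geq 0$, non-negativity of the $V_{n,k}$ forces $G_n$ to be a Laplace transform of a positive measure on $\R^+$, and uniqueness via a moment-problem argument yields the mixture over $\gamma$ with mixing law $P_\gamma$. For $\alpha < 0$, the Pochhammer factor $(1-\alpha)_{m_\ell - 1}$ in~\eqref{eq:EFPF_product} combined with consistency forces the $V_{n,k}$ to vanish beyond a finite index, so that $G_n$ becomes polynomial with degree bounded by some integer $N$, delivering the discrete mixture over $N \in \N$. The main obstacle will be the uniqueness of the mixing law: establishing linear independence of the candidate kernels $\{V^{(\gamma)}\}_{\gamma > 0}$ (resp.\ $\{V^{(N)}\}_{N \in \N}$) as parameter-indexed families, which is the content of an infinite-dimensional moment problem. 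A conceptually appealing alternative, which sidesteps the explicit generating-function analysis, is to invoke the completely random measure representation of exchangeable feature models \citep{Thi07} and show that the product-form constraint~\eqref{eq:EFPF_product} forces the underlying Lévy intensity either into a randomly scaled beta process (when $\alpha \geq 0$) or into a finite Bernoulli-atom superposition (when $\alpha < 0$); the mixture then arises simply from randomizing the total mass or the number of atoms.
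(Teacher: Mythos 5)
First, a point of context: the paper does not prove this proposition at all --- it is quoted verbatim as Theorem~1.1 of \citet{Bat18}, and no argument for it appears in the main text or the supplementary material. So there is no in-paper proof to compare against; what can be assessed is whether your sketch is a viable route to the cited result. The overall architecture is reasonable and in fact close in spirit to the original: the verification half (checking that the \textsc{ibp} and \textsc{bb} arrays solve the recursion, then using linearity) is routine, and your generating-function reduction is sound --- with $G_n(z) := \sum_{k\ge 0} V_{n,k}\, z^k$ the recursion becomes the affine conjugacy $G_n(z) = G_{n+1}\bigl((\theta+n)z + (\theta+\alpha)_n\bigr)$, which is the correct functional equation (an affine substitution, not a pure shift as your schematic form suggests, but that is cosmetic). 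Reducing the converse to a Hausdorff/Stieltjes-type moment problem on the iterated images of this map is the standard way to extract the mixing law, and correctly identifying uniqueness of that law as the hard step is to your credit, though in a complete proof it cannot be left as an acknowledged obstacle.

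There is, however, a concrete error in your $\alpha<0$ branch. You assert that non-negativity plus consistency ``forces the $V_{n,k}$ to vanish beyond a finite index, so that $G_n$ becomes polynomial of degree bounded by some $N$.'' This is false for general solutions, and the paper itself supplies counterexamples: the Poisson and negative binomial mixtures of \textsc{bb}s in \eqref{eq:EFPF_BB_Poiss}--\eqref{eq:EFPF_BB_NB} have $\alpha<0$ and $V_{n,k}>0$ for every $k$. Only the \emph{extreme} solutions (fixed $N$) are polynomials; a generic admissible array is an infinite convex combination of them, so $G_n$ is a mixture of polynomials of unbounded degree, not a polynomial. What actually has to be proved in this regime is that the representing measure produced by the moment-problem argument is concentrated on $\{$probability generating functions of $\N$-valued random variables$\}$, i.e., that the mixing distribution lives on finite $N$ --- this requires essentially the same analytic machinery as the $\alpha\ge 0$ case (a characterization of mixed binomial counting processes), not the shortcut you describe. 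As written, your argument would also conclude that every $\alpha<0$ solution is a single \textsc{bb}, contradicting part (ii) of the statement itself. The closing alternative via completely random measures begs the question, since showing that the product form forces a (randomly scaled) beta process or a finite atom superposition is precisely the content of the theorem.
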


Hence, any Gibbs-type feature model is obtained by considering a prior distribution for the $\gamma$ and $N$ parameters of the \textsc{ibp} and \textsc{bb} models. This draws an elegant parallelism between Gibbs-type feature models and Gibbs-type partitions of \citet{Gne05} since, in both cases, product form distributions are obtained as mixtures of a set of simple models. These analogies will be strengthened in Section~\ref{sec:general_theory}, where we will show that the parameter $\alpha$ also controls the asymptotic growth rate for the number of distinct features $K_n$, as in species sampling models, leading to the analogous of the $\alpha$-diversity of \citet{Pitman2003} for feature allocations.

\subsection{Hierarchical representations and random measures}\label{sec:hier_formulation}

Gibbs-type feature models admit a hierarchical representation in terms of Bernoulli processes (\textsc{bp}s) and random measures. For the two-parameter \textsc{ibp}, such a stochastic representation was established in the illuminating contribution of \citet{Thi07}.
Let us consider the sequence of all possible feature labels $(\tilde{X}_j)_{j \geq 1 }$, where $\tilde{X}_j \iid P_0$ for any $j \geq 1$. The labels $X_1,\dots,X_{K_n}$ observed in a sample of $n$ individuals are a subset of the complete list of labels $(\tilde{X}_j)_{j \geq 1 }$. The $i$th individual is characterized by its expressed features, i.e., by the pairs $((\tilde{X}_j, \tilde{A}_{i,j}))_{j \geq 1 }$, where each $\tilde{A}_{i,j}=1$ if the $i$th individual exhibits feature $\tilde{X}_j$, and $\tilde{A}_{i,j}=0$ otherwise. Thus, the pairs $((\tilde{X}_j, \tilde{A}_{i,j}))_{j \geq 1 }$ may be organized through a counting measure $Z_i$ on the space $\X$, which is given by
\begin{equation}\label{eq:BP_observation}
    Z_i(\cdot) = \sum_{j\geq 1} \tilde{A}_{i,j} \delta_{\tilde{X}_j}(\cdot).
\end{equation}
The feature allocation $F_n = (B_{n,1},\ldots,B_{n,K_n})$ and the binary variables $A_{i,\ell}$ can be then expressed through the counting measures $Z_i$, since we have $B_{n,\ell} = \{ i \in [n]: Z_i(\{X_\ell\}) = 1\}$ and $A_{i,\ell} = Z_i(\{X_\ell\})$. 

In Gibbs-type feature models, the $\tilde{A}_{i,j}$'s are conditionally independent Bernoulli random variables given a sequence of random probabilities $(\tilde{q}_j)_{j \ge 1}$ such that $\sum_{j \ge 1} \tilde{q}_j < \infty$ almost surely. In other terms, we suppose that $\tilde{A}_{i,j} \mid \tilde{q}_j \ind \text{Bernoulli}(\tilde{q}_j)$. We organize these probabilities using another measure $\tilde{\mu}$ on $\X$, namely
\begin{equation}\label{eq:random_measure}
\tilde{\mu}(\cdot) := \sum_{j \geq 1} \tilde{q}_j \delta_{\tilde{X}_j}(\cdot).
\end{equation} 
We will say that, conditionally on $\tilde{\mu}$, the $Z_i$'s are i.i.d. Bernoulli processes with base measure $\tilde{\mu}$, written $Z_i \mid \mutilde \iid \BP (\mutilde)$ for any $i \ge 1$. {In other words, the infinite sequence of random measures $(Z_i)_{i \ge 1}$ is exchangeable.} Summarizing, the following hierarchical representation holds:
\begin{equation} \label{eq:model_generative}
\begin{split}
    Z_i \mid \mutilde & \simiid \BP (\mutilde), \qquad i \ge 1,\\
    \mutilde & \sim Q,
\end{split}
\end{equation}
where $Q$ is the \emph{prior} distribution of $\mutilde$, i.e., the \emph{de Finetti measure}. The hierarchical generative scheme outlined in equations \eqref{eq:BP_observation}-\eqref{eq:random_measure}-\eqref{eq:model_generative} is termed a \emph{feature frequency model}, and it leads to a consistent and exchangeable \textsc{efpf} \citep{Broderick2013two}. Gibbs-type feature models always admit such a hierarchical construction under specific prior distributions for the random measure $\tilde{\mu}$. This is a well-known fact for \textsc{ibp} and \textsc{bb} models, whose random measures are denoted by $\tilde{\mu} \mid \gamma$ and $\tilde{\mu} \mid N$. Hence, thanks to Proposition~\ref{thm:battiston_second}, the law of $\tilde{\mu}$ for any Gibbs-type feature model is a mixture over $\gamma$ or $N$ of the corresponding law for the \textsc{ibp} or \textsc{bb} model.

Let us firstly consider the \textsc{bb} model with parameters $(N, \alpha, \theta)$, in which there are $N$ possible features $\tilde{X}_1,\dots,\tilde{X}_N$. The hierarchical representation of the beta Bernoulli process is straightforward as we have $\tilde \mu \mid N = \sum_{j=1}^N \tilde{q}_j \delta_{\tilde{X}_j}$, with $\tilde{q}_j \iid \dbeta(-\alpha, \theta+\alpha)$ and $\tilde{X}_j \iid P_0$, for $j=1,\ldots,N$, recalling that $\alpha < 0$ and $\theta > -\alpha$. See Lemma~\ref{lemma:bb_process_representation} in the supplementary material for a precise statement of this simple fact. The construction of the \textsc{ibp}, on the other hand, is more elaborate, and it involves infinitely many labels $\tilde{X}_j$ and probabilities~$\tilde{q}_j$. Let us define the class of homogeneous completely random measures \citep[\textsc{crm}s, ][]{Kin67} without fixed atoms, which are characterized by a Laplace functional of the following type 
\begin{equation*}
\E [e^{-\int_\X  f (x) \mutilde (\D x )}] = \exp \left\{ -\int_\X \int_0^\infty \left[1-e^{-s f (x)}\right] \rho  (\D s ) P_0(\D x)\right\},
\end{equation*}
for any measurable function $f: \X \to \R_+$, where $\rho(\mathrm{d}s)$ is an intensity measure on $\R_+$, identifying the distribution of the probabilities $(\tilde{q}_j)_{j \ge 1}$, and $P_0$ is a diffuse distribution on $\X$ from which the labels $\tilde{X}_j$ are sampled. We will write $\mutilde \sim \textsc{crm} (\rho; P_0)$. We refer to \cite{Dal08} for a mathematical treatment of \textsc{crm}s and \cite{Lij10} for a presentation of \textsc{crm}s as a unifying concept in Bayesian nonparametrics. In model \eqref{eq:model_generative} the random measure $\mutilde$ must have jumps $\tilde{q}_j \in (0,1)$, hence we require the intensity measure $\rho(\D s)$ of the \textsc{crm} to be supported in $(0,1)$. As shown in \citet{Teh09}, in the \textsc{ibp} the measure $\tilde{\mu} \mid \gamma$ is distributed as a completely random measure and, more precisely, it follows a stable-beta process, whose intensity measure $\rho(\D s)$ is 
\begin{equation}\label{eq:stable_beta_process}
\rho(\mathrm{d}s) = \gamma \frac{\Ga(1+\theta)}{\Ga(1-\alpha) \Ga(\theta+ \alpha)} s^{-\alpha -1} (1-s)^{\theta+\alpha -1} \D s.
\end{equation}
Note that $\gamma$ is sometimes called the \emph{total mass} parameter because $\gamma = \mathds{E}\left[ {\tilde{\mu}}(\mathds{X}) \right] = \sum_{j \ge 1} \mathds{E} \left[\tilde{q}_j \right]$  is the expected sum of all the probabilities. The choice $\alpha = 0$ leads to the beta process of \citet{Hjo90}, as it was established by \citet{Thi07}. There exist several sampling strategies for the weights $\tilde{q}_j \in (0,1)$, for example based on size-biased constructions or the inverse of the L\'evy measure \citep{Teh09}. Alternative and more recent approaches include stick-breaking representations \citep{Broderick2012}, or independent finite approximations \citep{Lee2023, Nguyen2023}.

\section{Predictive structure of Gibbs-type feature models}  \label{sec:general_theory}

\subsection{A buffet metaphor for Gibbs-type feature models}\label{sec:predictive}

We begin our theoretical investigation of Gibbs-type feature models by presenting the predictive distribution for the $(n+1)$th individual, given a sample of $n$ data points. In the notation of Section~\ref{sec:hier_formulation}, we study the conditional distribution of $Z_{n+1}$, given a random sample $\bm{Z}^{(n)} = (Z_1,\dots,Z_n)$, where the latter entails $K_n = k$ observed features $X_1,\ldots,X_k$ whose presence is encoded by the binary variables $A_{i,\ell}$'s. The relevant aspects of the distribution of $Z_{n+1}$ are conveyed by the vector of random variables $(Y_{n+1}, A_{n+1,1}, \ldots , A_{n+1,k})$ such that: (i) $Y_{n+1}$ is the number of new features displayed by the $(n+1)$th individual, i.e., the features hitherto unobserved in the sample $\bm{Z}^{(n)}$; (ii) each $A_{n+1,\ell}$ is a binary random variable such that $A_{n+1,\ell}=1$ if the $(n+1)$th individual displays feature $X_\ell$ and $A_{n+1,\ell}=0$ otherwise. Our first key result provides the predictive law of Gibbs-type feature models, i.e., the probability distribution
\begin{equation*}
 p_{n+1}(y, a_1, \ldots , a_k) := \P((Y_{n+1}, A_{n+1,1}, \ldots, A_{n+1,k})= (y, a_1, \ldots, a_{k}) \mid \bm{Z}^{(n)}).
\end{equation*}
We will write $\Bcr(a ; p) = p^a(1-p)^{1-a}$ to denote the probability mass function of a Bernoulli random variable with parameter $p \in (0,1)$ evaluated at $a \in \{0,1\}$.
\begin{theorem}[Predictive law]\label{thm:predictive_general_V}
Suppose the \textsc{efpf} is in product form \eqref{eq:EFPF_product}, then the predictive law is
\begin{equation*}\label{eq:pred_law}
    p_{n+1}(y, a_1, \ldots , a_k) = \binom{k+y}{k} \frac{V_{n+1,k+y}}{V_{n,k}} \{(\theta+\alpha)_n\}^y (\theta+n)^k \prod_{\ell=1}^k \Bcr\left( a_\ell; \frac{m_\ell - \alpha}{\theta +n} \right).
\end{equation*}
\end{theorem}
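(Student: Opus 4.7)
The plan is to compute the predictive law as a ratio of joint probabilities of ordered feature allocations, reading both numerator and denominator directly off the product-form \textsc{efpf} \eqref{eq:EFPF_product}. Concretely, for a realization $f_n$ with feature frequencies $(m_1, \ldots, m_k)$, I would write
$$p_{n+1}(y, a_1, \ldots, a_k) = \frac{\P(F_n = f_n,\, (Y_{n+1}, A_{n+1,1}, \ldots, A_{n+1,k}) = (y, a_1, \ldots, a_k))}{\P(F_n = f_n)}.$$
The denominator is $\pi_n(m_1, \ldots, m_k)$, and since the labels $X_\ell$ are i.i.d.\ from the diffuse $P_0$, conditioning on $\bm{Z}^{(n)}$ reduces to conditioning on $F_n$ for this event.

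\textbf{Counting extensions.} The key combinatorial step is to decompose the numerator by summing over ordered allocations $f_{n+1}$ that project back to $f_n$ when individual $n+1$ is integrated out. Any such extension has $k+y$ positions: exactly $y$ of them correspond to sets that vanish after restriction to $[n]$ (hence equal $\{n+1\}$), while the other $k$ host $B_{n,\ell} \cup \{n+1\}^{a_\ell}$ in the order inherited from $f_n$. Choosing which $y$ of the $k+y$ positions carry the new singletons is the only degree of freedom, giving $\binom{k+y}{y} = \binom{k+y}{k}$ distinct $f_{n+1}$, each of probability $\pi_{n+1}(m_1 + a_1, \ldots, m_k + a_k, 1, \ldots, 1)$ by symmetry of the \textsc{efpf}. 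A useful sanity check is that summing the resulting expression over $(y, a_1, \ldots, a_k)$ recovers the recurrence on $V_{n,k}$ stated just after \eqref{eq:EFPF_product}.

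\textbf{Algebraic simplification.} Plugging the product form into numerator and denominator, the leading coefficients cancel to $V_{n+1,k+y}/V_{n,k}$; each of the $y$ new features contributes $(1-\alpha)_0 (\theta+\alpha)_n = (\theta+\alpha)_n$; and for each old feature $\ell$ a short Pochhammer computation gives
$$\frac{(1-\alpha)_{m_\ell + a_\ell - 1}(\theta+\alpha)_{n+1-m_\ell - a_\ell}}{(1-\alpha)_{m_\ell - 1}(\theta+\alpha)_{n-m_\ell}} = (m_\ell - \alpha)^{a_\ell}(\theta+\alpha+n-m_\ell)^{1-a_\ell} = (\theta+n)\, \Bcr\!\left(a_\ell;\, \frac{m_\ell - \alpha}{\theta+n}\right).$$
Assembling the $V$ ratio, the $y$-fold factor $(\theta+\alpha)_n^y$, the product $(\theta+n)^k$, and the Bernoulli masses yields exactly the stated expression. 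The only non-routine ingredient is the combinatorial count $\binom{k+y}{k}$ of ordered extensions; once that is established, everything else reduces to mechanical Pochhammer algebra together with the symmetry of $\pi_{n+1}$.
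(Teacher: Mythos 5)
Your proposal is correct and follows essentially the same route as the paper's proof: express the predictive law as the ratio $\binom{k+y}{k}\,\pi_{n+1}(m_1+a_1,\ldots,m_k+a_k,1,\ldots,1)/\pi_n(m_1,\ldots,m_k)$, with the binomial coefficient counting the orderings of the $y$ new features relative to the $k$ old ones, and then simplify the Pochhammer ratios to $(m_\ell-\alpha)^{a_\ell}(\theta+\alpha+n-m_\ell)^{1-a_\ell}$. The identification of each factor as $(\theta+n)$ times a Bernoulli mass is exactly the paper's final step, so no further comment is needed.
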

An important remark is in order: given the sample $\bm{Z}^{(n)}$, the random variable $Y_{n+1}$ is independent on the binary random variables $A_{n+1,1},\ldots, A_{n+1,k}$, which are also mutually independent. This is a consequence of the product form representation \eqref{eq:EFPF_product}. In the second place, the count $Y_{n+1}$ of new features depends on the sample $\bm{Z}^{(n)}$ through the sample size $n$ and the number of observed features $K_n = k$, but not the frequencies $m_1,\dots,m_k$. It is also noteworthy that what distinguishes Gibbs-type feature models is only the distribution of the number of new features $Y_{n+1}$. In fact, the probability distribution of the variables referring to the previously observed features, $A_{n+1,1},\ldots, A_{n+1,k}$, is common to all Gibbs-type feature models and does not depend on the chosen set of $V_{n,k}$'s. We will provide more precise comments on the distribution of $Y_{n+1}$ when presenting specific examples in Sections \ref{sec:mixture_IBP}--\ref{sec:BB_mixtures}.

As an immediate consequence of Theorem~\ref{thm:predictive_general_V}, one can easily determine the probability that the $(n+1)$th individual does not exhibit new features. Such a probability may be interpreted as a sample-based version of the \emph{sample coverage} \citep{Goo53, Goo56}, that is, the probability of re-observing a feature among those in the sample. However, it is worth noting that other definitions of sample coverage have been proposed in the feature setting; see, for example, \cite{Chiu2023} and references therein.
\begin{corollary}[Sample coverage]\label{cor:notobs_new_general_V}
Suppose the \textsc{efpf} is in product form \eqref{eq:EFPF_product}, then the probability that $Z_{n+1}$ does not show any new features, given $\bm{Z}^{(n)}$, is
\begin{equation*}
    \P(``Z_{n+1} \text{ has no new features}" \mid \bm{Z}^{(n)})= \P(Y_{n+1} = 0 \mid \bm{Z}^{(n)}) = \frac{V_{n+1,k}}{V_{n,k}} (\theta+n)^k.
\end{equation*}
\end{corollary}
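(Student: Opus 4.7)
The plan is to obtain the corollary as a direct marginalization of the joint predictive law given by Theorem~\ref{thm:predictive_general_V}. The event ``$Z_{n+1}$ has no new features'' is precisely the event $\{Y_{n+1} = 0\}$, and to compute its conditional probability I must sum the joint predictive mass function $p_{n+1}(y, a_1,\ldots,a_k)$ over $a_1,\ldots,a_k \in \{0,1\}$ with $y$ fixed at $0$.

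First, I would substitute $y = 0$ into the formula of Theorem~\ref{thm:predictive_general_V}. This makes three simplifications immediate: the binomial coefficient $\binom{k+y}{k}$ becomes $\binom{k}{k}=1$; the factor $\{(\theta+\alpha)_n\}^y$ collapses to $1$ (using the convention that the empty Pochhammer product is $1$); and the ratio of Gibbs coefficients specializes to $V_{n+1,k}/V_{n,k}$. This leaves
\begin{equation*}
p_{n+1}(0, a_1, \ldots, a_k) = \frac{V_{n+1,k}}{V_{n,k}} (\theta + n)^k \prod_{\ell=1}^k \Bcr\!\left( a_\ell; \frac{m_\ell - \alpha}{\theta + n} \right).
\end{equation*}

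Next, I would marginalize the $A_{n+1,\ell}$'s by summing over all $(a_1,\ldots,a_k) \in \{0,1\}^k$. Since the Bernoulli factors depend on the $a_\ell$'s in a fully product form, the sum factorizes, and each factor $\sum_{a_\ell \in \{0,1\}} \Bcr(a_\ell; p_\ell)$ equals $1$ by the definition of a probability mass function. The prefactor $V_{n+1,k} (\theta+n)^k / V_{n,k}$ does not depend on the $a_\ell$'s and thus passes through the sum unchanged, yielding the stated identity.

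There is essentially no obstacle: the result is a one-line consequence of Theorem~\ref{thm:predictive_general_V}, and the only observation to flag explicitly is the independence structure already highlighted in the remark following the theorem, which ensures that the joint predictive factorizes cleanly into a piece depending on $y$ and a product of Bernoulli pieces depending on the $a_\ell$'s. No analytic manipulation beyond evaluating $\binom{k}{k}$ and the empty Pochhammer product is required.
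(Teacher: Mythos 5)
Your proposal is correct and follows essentially the same route as the paper: substitute $y=0$ into the predictive law of Theorem~\ref{thm:predictive_general_V} and marginalize the Bernoulli factors over $(a_1,\ldots,a_k)\in\{0,1\}^k$, each summing to $1$. Nothing is missing.
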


The predictive distribution presented in Theorem~\ref{thm:predictive_general_V} can be likened to the Indian buffet metaphor \citep{Gri11}. Our metaphor imagines a scenario where ``customers'', representing individuals, sequentially enter a restaurant and select a number of ``dishes'', corresponding to feature labels $(\tilde{X}_j)_{j\geq 1}$, as depicted in Figure~\ref{fig:metaphor_pic}. Each customer has the option to choose from previously ordered dishes or select new ones. For any Gibbs-type feature model, the generative process unfolds as follows: the first customer enters the restaurant and selects $Y_1$ dishes according to the distribution
\begin{equation*} \label{eq:genertive_1}
    \P(Y_1 = y) = V_{1,y},
\end{equation*}
and $K_1 = Y_1$. The $K_1$ selected dishes are associated with labels $X_\ell$, for $\ell = 1,\ldots, K_1$, provided that $K_1 > 0$. Then, the $(n+1)$th customer enters and selects dishes in two steps. Firstly, the customer picks $Y_{n+1}$ new dishes (not chosen by the previous $n$ customers) according to the distribution
\begin{equation*}
    \P(Y_{n+1} = y \mid K_n) = \binom{k+y}{k} \frac{V_{n+1,k+y}}{V_{n,k}} \{(\theta+\alpha)_n\}^y (\theta+n)^k,
\end{equation*}
where $K_n = k$ denotes the number of distinct dishes chosen by the first $n$ customers, so that $K_{n+1} = K_n + Y_{n+1}$. If $Y_{n + 1}>0$, then the new dishes are associated with labels $X_\ell$, for $\ell = K_n +1, \ldots, K_{n+1}$. Secondly, the $(n+1)$th customer may select some of the previously chosen dishes $X_1,\dots,X_k$ as encoded by the binary variables $A_{n+1,1},\ldots, A_{n+1,k}$, whose distribution is
\begin{equation} \label{eq:generative_2}
    A_{n + 1,\ell} \mid \bm{Z}^{(n)} \simind {\rm Bernoulli} \left( \frac{m_\ell - \alpha}{\theta + n} \right),
\end{equation}
where $m_\ell$ corresponds to the number of previous customers who selected dish $X_\ell$. Higher values of $m_\ell$ correspond to a higher probability of dish $X_\ell$ being selected again.

\begin{figure}[tbp]
\includegraphics[width=0.6\textwidth]{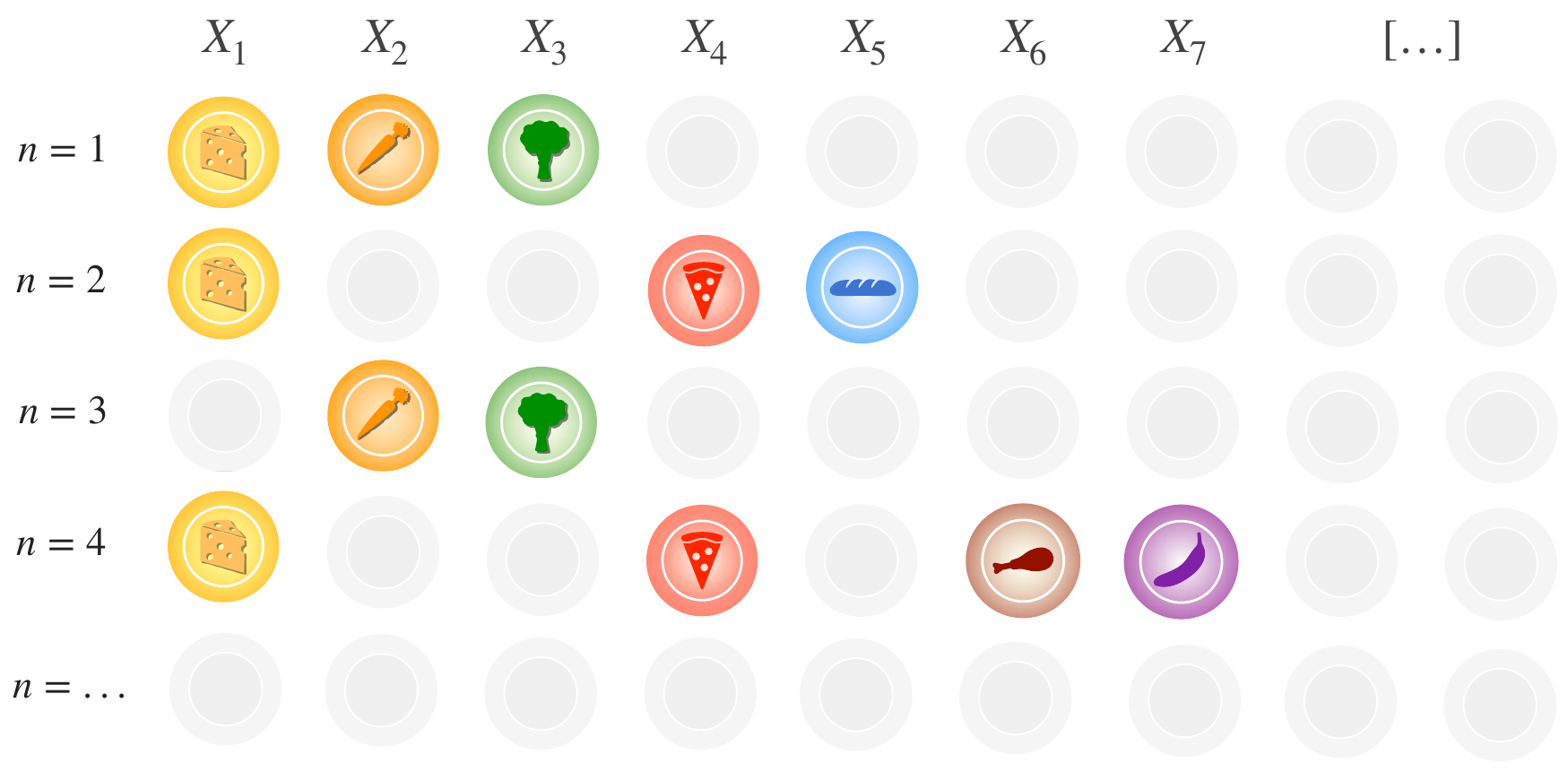}
\centering
\caption{The buffet metaphor for Gibbs-type feature models. In this example there are $n = 4$ customers and $K_n = 7$ dishes. The observed frequencies of the dishes are $(m_1,\dots, m_7) = (3,2,2,2,1,1,1)$. The numbers of new dishes picked by each customer (from top to bottom) are $Y_1 = 3$, $Y_2 = 2$, $Y_3 = 0$, and $Y_4 = 2$. \label{fig:metaphor_pic}}
\end{figure}

This general buffet metaphor provides a simple sampling strategy for any Gibbs-type feature allocation model; moreover, it offers a clear interpretation for the parameters $\theta$ and $\alpha$.
Essentially, the posterior probability of observing the $\ell$th feature can be viewed as the weighted combination of two factors: one representing the observed data and the other reflecting prior beliefs, akin to a typical Bayesian updating rule. Specifically, for $\ell = 1, \dots, k$:
\begin{equation*}
\P(A_{n+1, \ell} = 1 \mid \bm{Z}^{(n)}) = \frac{m_\ell - \alpha}{\theta +n}  =  \frac{n}{\theta +n} \hat{p}_\ell + \frac{\theta}{\theta + n} \left( - \frac{\alpha}{\theta}\right),
\end{equation*}
where $\hat{p}_\ell = m_\ell / n$ denotes the empirical frequency of the $\ell$th observed feature. When $\alpha < 0$, the quantity $-\alpha/\theta \in (0,1)$ can be conveniently interpreted as the prior frequency of a feature, while $\theta$ assumes the familiar role of prior sample size. Small values of $\theta >0$ suggest low confidence in the prior guess, and vice versa. In cases where $\alpha \in [0,1)$, $\theta$ continues to control the relevance of prior beliefs, while $-\alpha / \theta$ determines the extent of (potentially negative) shrinkage on the probability of re-observing an old feature.

Finally, we point out that the buffet metaphor for the \textsc{ibp}, also discussed in \cite{Teh09}, emerges as a special case.  In the \textsc{ibp}, the predictive process is characterized by $Y_1 \mid \gamma \sim \textup{Poisson}(\gamma)$ and $Y_{n +1} \mid K_n, \gamma \sim \textup{Poisson}\left(\gamma (\theta+\alpha)_n / (\theta+1)_n\right)$ for $n\ge 1$. Moreover,  as can be verified via Theorem \ref{thm:predictive_general_V} using $V_{n,k}$ from~\eqref{eq:EFPF_BB}, the \textsc{bb} model leads to $Y_1 \mid N \sim \textup{Binomial}\left(N, -\alpha / \theta \right)$ and $Y_{n+1} \mid K_n, N \sim \textup{Binomial}\left(N - k, -\alpha / (\theta + n) \right)$, where $\textup{Binomial}(n_0, p)$ denotes a binomial random variable with parameters $n_0 \in \mathds{N}$ and $p \in (0,1)$. Consequently, in a \textsc{bb} model, no new features can be displayed once $K_n = N$ features are observed.

\subsection{Distribution of the number of features}

We now investigate distributional properties of $K_n$, the total number of distinct features observed in a sample of size $n$. Note that we can also express $K_n$ as $Y_1 + Y_2 + \cdots +Y_n$, which is the summation of newly discovered features at each step of the buffet metaphor. In ecological applications, the expectations $\mathds{E}(K_1),\dots,\mathds{E}(K_n)$ should be regarded as a model-based \emph{rarefaction} curve \citep{Got2001, Zito23}, with the fundamental difference, compared to species sampling models, that our approach is appropriate for sample-based accumulation curves, and not individual-based. While these two frameworks are not comparable, as they refer to different sampling designs, there are several analogies between our work and \citet{Lij07}. For example, the following theorem describes the distribution of $K_n$ for any Gibbs-type feature model, which is the direct equivalent of a result of \citet{Gne05} for species sampling models.

\begin{theorem}[Number of features $K_n$] \label{thm:Kn_general}
Suppose the \textsc{efpf} is in product form \eqref{eq:EFPF_product} and let $\alpha<0$. Then, the number of distinct features observed in a sample of $n$ individuals, $K_n$, satisfies, for any $k \geq 0$, 
\begin{equation*}
    \P(K_n = k) = V_{n,k} \left\{\frac{(\theta+\alpha)_n - (\theta)_n}{\alpha} \right\}^k .
\end{equation*}
Alternatively, let $g_n(\theta, \alpha)$ be  defined as in \eqref{eq:EFPF_IBP}, if $\alpha \in [0,1)$ then for any $k \geq 0$,
\begin{equation*}
    \P(K_n = k) = V_{n,k} \left\{g_n(\theta,\alpha) \: (\theta+1)_{n-1} \right\}^k.
\end{equation*}
\end{theorem}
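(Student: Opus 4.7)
The plan is to evaluate $\P(K_n=k)$ by summing the \textsc{efpf} in product form \eqref{eq:EFPF_product} over all ordered feature allocations with exactly $k$ features. Such an allocation is a $k$-tuple $(B_{n,1},\dots,B_{n,k})$ of non-empty subsets of $[n]$, where the subsets may coincide across different features (the $X_\ell$ are a.s.\ distinct, but distinct labels need not correspond to distinct sets). Since $\pi_n(m_1,\dots,m_k)$ depends on $(B_{n,1},\dots,B_{n,k})$ only through the sizes $m_\ell=\#B_{n,\ell}$, and the number of subsets of $[n]$ of size $m_\ell$ equals $\binom{n}{m_\ell}$, the sum factorises as
\begin{equation*}
\P(K_n=k)\;=\;V_{n,k}\sum_{m_1,\dots,m_k=1}^n\prod_{\ell=1}^k\binom{n}{m_\ell}(1-\alpha)_{m_\ell-1}(\theta+\alpha)_{n-m_\ell}\;=\;V_{n,k}\,[\Sigma_n(\theta,\alpha)]^k,
\end{equation*}
where $\Sigma_n(\theta,\alpha):=\sum_{m=1}^n\binom{n}{m}(1-\alpha)_{m-1}(\theta+\alpha)_{n-m}$. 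The theorem thus reduces to two closed-form evaluations of this single univariate sum.

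For the case $\alpha<0$ (and in fact for any $\alpha\neq 0$), I would invoke the Chu--Vandermonde identity $\sum_{m=0}^n\binom{n}{m}(a)_m(b)_{n-m}=(a+b)_n$ with $a=-\alpha$ and $b=\theta+\alpha$. Combining $(-\alpha)_0=1$ with the rewriting $(-\alpha)_m=-\alpha(1-\alpha)_{m-1}$ for $m\ge 1$, and isolating the $m=0$ term, yields
\begin{equation*}
(\theta)_n\;=\;(\theta+\alpha)_n\;-\;\alpha\,\Sigma_n(\theta,\alpha),
\end{equation*}
so that $\Sigma_n(\theta,\alpha)=\{(\theta+\alpha)_n-(\theta)_n\}/\alpha$, which is the first claim of the theorem.

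For $\alpha\in[0,1)$ the previous quotient is indeterminate at $\alpha=0$, so I would rewrite it without the denominator by telescoping. Setting $a_i=(\theta+\alpha)_i(\theta+i)_{n-i}$, a direct computation gives $a_{i+1}-a_i=\alpha(\theta+\alpha)_i(\theta+i+1)_{n-i-1}$, hence $(\theta+\alpha)_n-(\theta)_n=a_n-a_0=\alpha\sum_{j=1}^n(\theta+\alpha)_{j-1}(\theta+j)_{n-j}$. Using the elementary factorisation $(\theta+1)_{j-1}(\theta+j)_{n-j}=(\theta+1)_{n-1}$, this sum rewrites as $(\theta+1)_{n-1}\sum_{j=1}^n(\theta+\alpha)_{j-1}/(\theta+1)_{j-1}=g_n(\theta,\alpha)(\theta+1)_{n-1}$, which proves the second formula for $\alpha\in(0,1)$; the $\alpha=0$ boundary follows by continuity of the telescoped expression in $\alpha$. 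The only non-routine step is recognising the Chu--Vandermonde sum after the algebraic shift $(-\alpha)_m=-\alpha(1-\alpha)_{m-1}$; the subsequent telescoping is a standard device used here solely to eliminate the apparent singularity at $\alpha=0$.
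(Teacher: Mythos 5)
Your proposal is correct. The reduction to $\P(K_n=k)=V_{n,k}\,[\Sigma_n(\theta,\alpha)]^k$ with $\Sigma_n(\theta,\alpha)=\sum_{m=1}^n\binom{n}{m}(1-\alpha)_{m-1}(\theta+\alpha)_{n-m}$ is exactly the paper's Lemma on the distribution of $K_n$ for a general \textsc{efpf}, and your treatment of the $\alpha<0$ case — isolating the $m=0$ term after the shift $(-\alpha)_m=-\alpha(1-\alpha)_{m-1}$ and applying Chu--Vandermonde — coincides with the paper's argument. Where you genuinely diverge is in the $\alpha\in[0,1)$ case. The paper converts the summand into a beta integral, sums the resulting binomial series inside the integral, expands $1-(1-x)^n=x\sum_{i=0}^{n-1}(1-x)^i$, and integrates term by term to produce $(\theta+1)_{n-1}\,g_n(\theta,\alpha)$. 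You instead telescope $a_i=(\theta+\alpha)_i(\theta+i)_{n-i}$ to obtain the purely algebraic identity $(\theta+\alpha)_n-(\theta)_n=\alpha\,(\theta+1)_{n-1}\,g_n(\theta,\alpha)$, valid for every $\alpha$, and dispose of the apparent singularity at $\alpha=0$ by noting that both sides of $\Sigma_n(\theta,\alpha)=(\theta+1)_{n-1}g_n(\theta,\alpha)$ are polynomials in $\alpha$ agreeing off $\alpha=0$. Your route is more elementary (no integral representations) and has the added benefit of making explicit that the two closed forms in the statement are the same quantity written two ways, i.e., that the $\alpha<0$ expression is just the $\alpha\neq0$ evaluation of $(\theta+1)_{n-1}g_n(\theta,\alpha)$; the paper's integral method is heavier here but reappears almost verbatim in its proof of the posterior analogue for $K_m^{(n)}$, where it is reused wholesale.
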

A preliminary version of this result is also presented in \cite{Bat18}, up to some further simplifications. In Sections \ref{sec:mixture_IBP}--\ref{sec:BB_mixtures}, we will specialize the formulas of Theorem~\ref{thm:Kn_general} for specific choices of $V_{n,k}$'s, recovering well-known distributions. In Proposition \ref{prop:msharedgeneralV} of the supplementary material, we also provide distributional insights for the statistic $K_{n,r}$ for $r \in \{1,\ldots,n\}$, denoting the number of features appearing exactly $r$ times among $n$ individuals, thereby defining $K_n = K_{n,1} + \cdots + K_{n,n}$. Note that this is helpful to determine the law of the number of rare features, i.e., features expressed only in a single individual, corresponding to $K_{n,1}$.

Theorem \ref{thm:Kn_general} and Proposition \ref{prop:msharedgeneralV} pertain to \emph{a priori} properties of $K_n$ and $K_{n,r}$. We now consider the much more compelling problem of predicting the number of hitherto unseen features $K_m^{(n)}$ that would be observed in an additional sample of size $m$, conditioned on the available data $\bm{Z}^{(n)}$, i.e., \emph{a posteriori}. In ecological contexts, this leads to the \emph{extrapolation} of the accumulation curve \citep{Zito23}, namely the expectations $\mathds{E}(K_{n +m} \mid \bm{Z}^{(n)}) = k + \mathds{E}(K_m^{(n)} \mid \bm{Z}^{(n)})$ for any $m \ge 1$. What follows is a key result of this paper since we provide the distribution of $K_m^{(n)}$ for any Gibbs-type feature model, analogous to the main finding of \citet[Proposition 1,][]{Lij07} for Gibbs-type species sampling models.

\begin{theorem}[Number of hitherto unobserved features]\label{thm:Kmn_general_V}
Suppose the \textsc{efpf} is in product form \eqref{eq:EFPF_product}. Then, the distribution of $K_m^{(n)}\mid \bm Z^{(n)}$ is such that, for any $y \geq 0$,
\begin{equation*}
    \P(K_m^{(n)} = y \mid \bm Z^{(n)}) = \binom{k+y}{k} \frac{V_{n+m,k+y}}{V_{n,k}} \left\{(\theta+n)_m\right\}^{k+y}  \left\{c_{n,m}(\theta,\alpha)\right\}^y ,
\end{equation*}
where, if $\alpha<0$,
\begin{equation*}
c_{n,m}(\theta, \alpha) =  \frac{(\theta+\alpha)_n}{\alpha} \left(  \frac{(\theta+\alpha+n)_m}{(\theta+n)_m} -1 \right),
\end{equation*}
and if $\alpha \in [0,1)$, 
\begin{equation*}
c_{n,m}(\theta, \alpha) = (\theta+1)_{n-1} \{g_{n+m}(\theta,\alpha) - g_n(\theta,\alpha)\}. 
\end{equation*}
\end{theorem}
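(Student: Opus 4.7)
The approach rests on Kolmogorov consistency of the EFPF. I will write
\[
\P(K_m^{(n)} = y \mid F_n = f_n) = \frac{1}{\P(F_n = f_n)} \sum_{f_{n+m}} \P(F_{n+m} = f_{n+m}),
\]
the sum ranging over all ordered feature allocations $f_{n+m}$ on $n+m$ individuals whose restriction to the first $n$ rows equals $f_n$ and which display exactly $y$ new features supported on $\{n+1, \ldots, n+m\}$. Both probabilities are evaluated via the product form \eqref{eq:EFPF_product}, and the plan is to decompose the sum combinatorially and then collapse the resulting one-dimensional sums via the Chu--Vandermonde identity.

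Each admissible $f_{n+m}$ is specified by three independent ingredients: (i) the positions $1 \leq p_1 < \cdots < p_k \leq k+y$ in $f_{n+m}$ hosting extensions of the existing columns, contributing the factor $\binom{k+y}{k}$ that appears in the statement; (ii) for each existing column $\ell=1,\ldots,k$, a subset $C_\ell \subseteq \{n+1,\ldots,n+m\}$ of size $a_\ell$, with $\binom{m}{a_\ell}$ choices; (iii) for each of the $y$ remaining positions, a nonempty subset of $\{n+1,\ldots,n+m\}$ of some size $b \in \{1,\ldots,m\}$, with $\binom{m}{b}$ choices. Because $\pi_{n+m}$ depends on its arguments only through frequencies, the sum factorizes as $\binom{k+y}{k}\, V_{n+m,k+y} \prod_{\ell=1}^{k} S_\ell \cdot T^{y}$, where
\[
S_\ell = \sum_{a=0}^{m} \binom{m}{a}(1-\alpha)_{m_\ell^{(n)}+a-1}(\theta+\alpha)_{n+m-m_\ell^{(n)}-a}, \quad T = \sum_{b=1}^{m} \binom{m}{b}(1-\alpha)_{b-1}(\theta+\alpha)_{n+m-b}.
\]
Using the Pochhammer splits $(1-\alpha)_{r+a-1}=(1-\alpha)_{r-1}(r-\alpha)_{a}$ and $(\theta+\alpha)_{n+m-r-a}=(\theta+\alpha)_{n-r}(\theta+\alpha+n-r)_{m-a}$, Chu--Vandermonde yields $S_\ell = (1-\alpha)_{m_\ell^{(n)}-1}(\theta+\alpha)_{n-m_\ell^{(n)}}(\theta+n)_m$; dividing by the corresponding factor of $\pi_n$ cancels every frequency-dependent term and leaves $\{(\theta+n)_m\}^{k}$. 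For $T$, I rewrite $(1-\alpha)_{b-1} = -(-\alpha)_{b}/\alpha$, factor $(\theta+\alpha)_{n+m-b} = (\theta+\alpha)_{n}(\theta+\alpha+n)_{m-b}$, complete the sum by including the $b=0$ term, and apply Chu--Vandermonde to obtain $T = \tfrac{(\theta+\alpha)_n}{\alpha}\{(\theta+\alpha+n)_m - (\theta+n)_m\}$, which is precisely $(\theta+n)_m\, c_{n,m}(\theta,\alpha)$ in the $\alpha<0$ case. Assembling the pieces gives the claim.

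For $\alpha \in [0,1)$ the Chu--Vandermonde evaluation of $T$ remains valid when $\alpha \in (0,1)$ and extends to $\alpha=0$ by continuity; what remains is to identify $\tfrac{(\theta+\alpha)_n}{\alpha}\{(\theta+\alpha+n)_m - (\theta+n)_m\}$ with the expression $(\theta+n)_m (\theta+1)_{n-1}\{g_{n+m}(\theta,\alpha) - g_n(\theta,\alpha)\}$ stated in the theorem. The telescoping identity
\[
\frac{(\theta+\alpha+n)_m - (\theta+n)_m}{\alpha} = \sum_{i=0}^{m-1}(\theta+\alpha+n)_i\,(\theta+n+i+1)_{m-i-1},
\]
combined with the factorizations $(\theta+n+i+1)_{m-i-1} = (\theta+n)_m/(\theta+n)_{i+1}$ and $(\theta+1)_{n+i} = (\theta+1)_{n-1}(\theta+n)_{i+1}$, reduces both candidate expressions to $(\theta+n)_m \sum_{i=0}^{m-1}(\theta+\alpha)_{n+i}/(\theta+n)_{i+1}$. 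This algebraic reconciliation is the only delicate step; the combinatorial decomposition and the two applications of Chu--Vandermonde in the main argument are largely bookkeeping once the structure is in place.
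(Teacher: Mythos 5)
Your proof is correct, and its skeleton coincides with the paper's: the conditional probability is written as a ratio of EFPF values summed over all consistent extensions, the admissible extensions are parametrized exactly as in your items (i)--(iii) (this is the paper's Lemma on $K_m^{(n)}$ for general EFPFs), and the per-column sums $S_\ell$ are collapsed by the same Pochhammer splits and Chu--Vandermonde, leaving $\{(\theta+n)_m\}^k$ after cancellation against $\pi_n$. The one place where you genuinely depart from the paper is the evaluation of the new-feature sum $T$ when $\alpha\in[0,1)$. The paper redoes that computation from scratch via the integral representation of the beta function, expanding $1-(1-x)^m=x\sum_{i=0}^{m-1}(1-x)^i$ to land directly on $(\theta+1)_{n-1}(\theta+n)_m\{g_{n+m}(\theta,\alpha)-g_n(\theta,\alpha)\}$. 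You instead observe that the Chu--Vandermonde closed form $\frac{(\theta+\alpha)_n}{\alpha}\{(\theta+\alpha+n)_m-(\theta+n)_m\}$ is a polynomial identity in $\alpha$ (hence valid for all $\alpha$, with $\alpha=0$ by continuity) and then reconcile it with the $g$-function form by the telescoping identity
\begin{equation*}
\frac{(\theta+\alpha+n)_m-(\theta+n)_m}{\alpha}=\sum_{i=0}^{m-1}(\theta+\alpha+n)_i\,(\theta+n+i+1)_{m-i-1},
\end{equation*}
which I have checked (set $f_i=(\theta+\alpha+n)_i(\theta+n+i)_{m-i}$ and telescope $f_{i+1}-f_i$), as I have the two Pochhammer factorizations that reduce both sides to $(\theta+n)_m\sum_{j=0}^{m-1}(\theta+\alpha)_{n+j}/(\theta+n)_{j+1}$. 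Your route is more elementary (no integrals) and makes transparent that the two regimes of $c_{n,m}(\theta,\alpha)$ are a single analytic expression written in two ways; the paper's integral route has the minor advantage of producing the $g_n$ form directly without a separate reconciliation step. Either way the argument is complete.
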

The predictive distribution for $Y_{n+1}$ corresponds to the special case $m = 1$ in the above theorem, as it can be easily checked. Moreover, from Theorem \ref{thm:Kmn_general_V}, it is evident that the distribution of $K_{m}^{(n)}$ depends on the data $\bm Z^{(n)}$ only through the sample size $n$ and the number of distinct features $K_n$, but not on the feature counts $m_\ell$. In other words, the number of observed features $K_n = k$ is sufficient for predicting  $K_{m}^{(n)}$. This remarkable property is also a characteristic of Gibbs-type priors \citep{Lij07}.  We refer once again to Sections \ref{sec:mixture_IBP}--\ref{sec:BB_mixtures} for tractable special cases of Theorem~\ref{thm:Kmn_general_V}.

\subsection{Asymptotic behavior and $\alpha$-diversity}\label{sec:alpha_diversity}

The $\alpha$ parameter of a Gibbs-type feature model plays a key role, as hinted by Proposition~\ref{thm:battiston_second}. We show that $\alpha$ identifies the asymptotic behavior of $K_n$, precisely as in Gibbs-type species sampling models \citep{DeB15}. In particular, as $n \rightarrow \infty$, the number $K_n$ converges to a finite random variable whenever $\alpha < 0$, and it diverges when $\alpha \in [0,1)$. Moreover, $K_n$ grows at a logarithmic rate when $\alpha = 0$ and at a polynomial rate when $\alpha \in (0,1)$. This behavior is well known for  \textsc{bb} and \textsc{ibp} models \citep[e.g.][]{Gri11, Teh09}, but it is in fact a general property of Gibbs-type feature models, as the following proposition illustrates.

\begin{proposition}[$\alpha$-diversity]
\label{diversity}
 Suppose the \textsc{efpf} is in product form \eqref{eq:EFPF_product} and let $K_n$ be the number of features observed in a sample of $n$ individuals, as $n \rightarrow \infty$
\begin{itemize}
\item[(i)] if $\alpha < 0$, then $K_n \dconv N$,
\item[(ii)] if $\alpha = 0$, then $K_n / \log(n) \dconv \gamma \:\theta$,
\item[(iii)] if $0 < \alpha < 1$, then $K_n / n^\alpha \dconv \gamma \: \Gamma(\theta+1) / \{\alpha \Gamma(\theta + \alpha)\}$,
\end{itemize}
where the random variables $N$ and~$\gamma$ have distributions $P_\gamma$ and $P_N$ as in the mixture representation of the weights $V_{n,k}$'s described in Proposition~\ref{thm:battiston_second}. 
\end{proposition}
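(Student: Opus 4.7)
The plan is to exploit the mixture representation in Proposition~\ref{thm:battiston_second} to reduce the problem to the asymptotic behavior of the two building blocks (\textsc{bb} and \textsc{ibp}), and then to lift the resulting conditional asymptotics to the claimed unconditional limits.

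\textbf{Case $\alpha<0$.} Conditionally on $N$, the hierarchical description of Section~\ref{sec:hier_formulation} gives $\tilde\mu \mid N = \sum_{j=1}^{N} \tilde q_j \delta_{\tilde X_j}$ with $\tilde q_j \iid \dbeta(-\alpha, \theta+\alpha)$ and each $\tilde q_j>0$ almost surely. The conditional probability that the $j$th feature is still unobserved after $n$ individuals is $(1-\tilde q_j)^n\to 0$ a.s.; since there are only $N$ features, $K_n\uparrow N$ a.s.\ conditionally on $N$ and hence a.s.\ unconditionally. A fortiori $K_n\dconv N$.

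\textbf{Cases $\alpha\in[0,1)$.} Conditionally on $\gamma$, the end of Section~\ref{sec:predictive} gives $K_n=\sum_{i=1}^n Y_i$ with the $Y_i$'s \emph{independent} and $Y_i\mid\gamma\sim \textup{Poisson}(\gamma(\theta+\alpha)_{i-1}/(\theta+1)_{i-1})$, so that $\mathds{E}[K_n\mid\gamma]=\gamma\,g_n(\theta,\alpha)$. Standard Pochhammer ratio asymptotics give $(\theta+\alpha)_{i-1}/(\theta+1)_{i-1}=\theta/(\theta+i-1)$ when $\alpha=0$, and $\sim \Gamma(\theta+1)/\Gamma(\theta+\alpha)\cdot i^{\alpha-1}$ when $0<\alpha<1$, so that $g_n(\theta,0)\sim \theta\log n$ and $g_n(\theta,\alpha)\sim \Gamma(\theta+1)n^\alpha/\{\alpha\,\Gamma(\theta+\alpha)\}$. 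Since Poisson variances equal the means, $\sum_i \textup{Var}(Y_i\mid\gamma)/g_i(\theta,\alpha)^2<\infty$ in both regimes, so Kolmogorov's strong law for sums of independent variables together with Kronecker's lemma gives $K_n/\mathds{E}[K_n\mid\gamma]\to 1$ a.s.\ conditionally on $\gamma$; combining with the asymptotics of $g_n$ yields the limits in (ii) and (iii) a.s.\ conditionally on $\gamma$.

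\textbf{From conditional to unconditional.} Any statement holding $P_N$-a.e.\ (resp.\ $P_\gamma$-a.e.) almost surely on the conditional probability space lifts to an almost sure statement on the joint probability space by integrating the indicator of the convergence event, and almost sure convergence implies convergence in distribution; this produces the three claims with random limits whose laws are induced by $P_N$ and $P_\gamma$ respectively. The main obstacle I anticipate is verifying Kolmogorov's SLLN criterion in the delicate $\alpha=0$ regime, where the normalization grows only logarithmically and the Poisson rates decay harmonically, so that the summability check $\sum_i 1/(i(\log i)^2)<\infty$ is essentially tight; once the Pochhammer ratio expansions are pinned down, the remainder is a routine application of standard limit theorems for independent sums.
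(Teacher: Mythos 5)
Your proposal is correct, but it follows a genuinely different route from the paper. The paper obtains Proposition~\ref{diversity} as the $n=0$ specialization of Proposition~\ref{prop_asy_anyprior}, whose proof also conditions on the mixing parameter but then works with characteristic functions: for $\alpha<0$ it shows the conditional binomial characteristic function of $K_m^{(n)}$ tends to $e^{it(N-k)}$, for $\alpha=0$ it computes the characteristic function of the conditionally Poisson $K_m^{(n)}/\log m$ directly, for $\alpha\in(0,1)$ it invokes an external almost-sure convergence result of \citet{Mas22}, and in each case it de-integrates the mixing parameter by dominated convergence on characteristic functions. You instead prove everything at the almost-sure level: a Borel--Cantelli-type argument that all $N$ features are eventually observed when $\alpha<0$, and Kolmogorov's variance criterion plus Kronecker's lemma applied to the conditionally independent Poisson increments $Y_i$ (a structural fact the paper records at the end of Section~\ref{sec:predictive}) when $\alpha\in[0,1)$, followed by a Fubini lift and the implication from a.s.\ convergence to convergence in distribution. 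Your summability checks and the Pochhammer asymptotics for $g_n(\theta,\alpha)$ are correct, including the tight $\sum_i 1/\{i(\log i)^2\}$ bound in the $\alpha=0$ regime. What your approach buys is a unified, self-contained treatment of the $\alpha=0$ and $\alpha\in(0,1)$ cases (no appeal to \citealp{Mas22}) and a strictly stronger conclusion (almost-sure rather than distributional convergence); what the paper's approach buys is that the prior result falls out for free from the posterior statement it needs anyway, with no separate argument.
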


Summarizing, let $c_\alpha(n)$ be a function such that $c_\alpha(n) = 1$ if $\alpha <0$, $c_\alpha(n) =\log(n)$ if $\alpha = 0$, and $c_\alpha(n) = n^{\alpha}$ if $\alpha\in\left(0,1\right)$, then, in general, $K_{n}/c_\alpha(n) \dconv S_{\alpha}$, as $n \rightarrow \infty$.
We call random variable $S_{\alpha}$ the \emph{$\alpha$-diversity} of a feature allocation model, analogous to the $\alpha$-diversity introduced by \citet{Pitman2003}. The random variable $S_\alpha$ is often of direct interest in ecological problems, as it represents a synthetic biodiversity measurement. Naturally, comparing $\alpha$-diversities across different datasets makes sense only if they are based on the same growth rate. Note that, for fixed values of $\alpha$ and $\theta$, in the \textsc{bb} and \textsc{ibp} models the $\alpha$-diversity is deterministic. In practice, the $\alpha$-diversity is unknown, and it may be estimated employing a prior distribution for $N$ or $\gamma$, which results in a Gibbs-type feature model thanks to Proposition~\ref{thm:battiston_second}.  Moreover, the posterior law of $\gamma$ and $N$ may be obtained by combining the prior density $p_\gamma(\mathrm{d}\gamma)$ associated to $P_\gamma$, or the prior probability distribution $p_N(y)$ associated to $P_N$, with the \textsc{efpf} of equations~\eqref{eq:EFPF_IBP}-\eqref{eq:EFPF_BB}, giving respectively 
\begin{equation}\label{eq:gamma_N_posterior}
p_\gamma(\mathrm{d}\gamma \mid \bm{Z}^{(n)}) \propto p_\gamma(\mathrm{d}\gamma) \gamma^k
\exp \left\{ -\gamma  g_n(\theta, \alpha) \right\}, \quad p_N(y \mid \bm{Z}^{(n)}) \propto p_N(y) \frac{y!}{(y-k)!} \left(\frac{(\theta+\alpha)_n}{(\theta)_n} \right)^y,
\end{equation}
for $y = k,k+1,\dots$. We note that under suitable choices for $p_\gamma(\mathrm{d}\gamma)$ and $p_N(y)$, the posterior law corresponds to well-known distributions. Such a posterior distribution of $S_\alpha$ also has an elegant connection with accumulation curves, as shown in the following proposition.

\begin{proposition}[Posterior law of the $\alpha$-diversity]\label{prop_asy_anyprior}
Suppose the \textsc{efpf} is in product form \eqref{eq:EFPF_product}. Let $K_m^{(n)}\mid\bm{Z}^{(n)}$ be the number of hitherto unseen features and let $S_\alpha$ be the $\alpha$-diversity. Then, as $m\rightarrow \infty$
\begin{equation*}
\frac{K_m^{(n)} + k}{c_\alpha(m)} \mid \bm{Z}^{(n)} \dconv S'_\alpha, \qquad S'_\alpha \dequal S_{\alpha} \mid \bm{Z}^{(n)}.
\end{equation*}
\end{proposition}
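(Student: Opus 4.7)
The plan is to reduce the analysis to the two building blocks of Proposition~\ref{thm:battiston_second}, namely the \textsc{ibp} (for $\alpha \in [0,1)$) and the \textsc{bb} (for $\alpha < 0$), and then to integrate against the posterior distributions of the mixing parameters $\gamma$ and $N$ displayed in \eqref{eq:gamma_N_posterior}. In each sub-model, conditioning on the latent parameter turns $K_m^{(n)}$ into an object with a classical limiting behavior, and the unconditional statement follows by dominated convergence combined with L\'evy's continuity theorem.

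For $\alpha \in [0,1)$, I would exploit the fact that, in the buffet metaphor for the \textsc{ibp} recalled at the end of Section~\ref{sec:predictive}, the rate of new features $\gamma (\theta+\alpha)_n/(\theta+1)_n$ depends only on the step index and on $\gamma$, not on $K_n$. Hence, given $\gamma$, the increments $Y_{n+1}, Y_{n+2}, \ldots$ are independent Poisson random variables, and
\[
K_m^{(n)} \mid \gamma, \bm Z^{(n)} \sim \text{Poisson}\bigl(\gamma \, [g_{n+m}(\theta,\alpha) - g_n(\theta,\alpha)]\bigr).
\]
A routine expansion of Pochhammer symbols yields $g_{n+m}(\theta,\alpha) - g_n(\theta,\alpha) \sim c_\alpha(m)\, D_\alpha$ as $m \to \infty$, with $D_\alpha = \theta$ for $\alpha = 0$ and $D_\alpha = \Gamma(\theta+1)/\{\alpha \Gamma(\theta+\alpha)\}$ for $\alpha \in (0,1)$. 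The Poisson law of large numbers then gives $K_m^{(n)}/c_\alpha(m) \to \gamma D_\alpha$ almost surely given $\gamma$, while $k/c_\alpha(m) \to 0$. Applying dominated convergence to the conditional characteristic function and invoking L\'evy's theorem, together with the identification $S_\alpha \dequal \gamma D_\alpha$ from Proposition~\ref{diversity}, delivers the claim after integrating against the posterior of $\gamma$ in \eqref{eq:gamma_N_posterior}.

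For $\alpha < 0$, we have $c_\alpha(m) = 1$ and $S_\alpha = N$. Conditionally on $(N, \bm Z^{(n)})$, the process $(K_m^{(n)})_{m\ge 1}$ is non-decreasing and bounded by $N - k$, so it admits an almost sure limit. By the \textsc{bb} buffet metaphor, whenever a specific feature is still unseen at step $n+j-1$, the conditional probability that it is captured at step $n+j$ equals $|\alpha|/(\theta+n+j-1)$; since $\sum_{j\ge 1} 1/(\theta+n+j-1) = \infty$, each of the $N-k$ remaining features is observed almost surely in finite time, and hence $K_m^{(n)} + k \to N$ almost surely given $N$. Integrating against the posterior of $N$ in \eqref{eq:gamma_N_posterior} and using dominated convergence yields $K_m^{(n)} + k \mid \bm Z^{(n)} \dconv N \mid \bm Z^{(n)}$.

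The main subtlety is the conditional independence in the \textsc{ibp} case: the reduction of $K_m^{(n)}$ to a Poisson variable with a \emph{deterministic} rate given $\gamma$ relies crucially on the rate being functionally independent of the running feature count. Once this is secured, the remaining work reduces to asymptotic bookkeeping on Pochhammer ratios and a standard dominated-convergence passage from conditional to unconditional distributional convergence.
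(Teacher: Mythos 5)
Your proposal is correct and follows essentially the same architecture as the paper's proof: condition on the mixing parameter ($\gamma$ or $N$), establish the limit within the \textsc{ibp} or \textsc{bb} sub-model, and then pass to the mixture by applying dominated convergence to the conditional characteristic function. The only (immaterial) difference is that you establish the conditional limits via almost-sure arguments (a strong law for the independent Poisson increments, and a Borel--Cantelli-type argument for the finitely many remaining features when $\alpha<0$), whereas the paper computes the conditional characteristic functions directly for $\alpha\le 0$ and cites an existing almost-sure result for $\alpha\in(0,1)$.
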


Thus, the posterior law of $S_\alpha$ coincides with the $\alpha$-diversity associated with the extrapolation of the accumulation curve $K_m^{(n)} + k \mid \bm{Z}^{(n)}$, providing an insightful alternative perspective. 

\subsection{Posterior characterization}

We conclude our theoretical investigation with another pivotal result: the determination of the posterior distribution arising from model~\eqref{eq:model_generative} of $\tilde{\mu} = \sum_{j \ge 1}\tilde{q}_j\delta_{\tilde{X}_j}$, given $\bm{Z}^{(n)}$. This posterior characterization of $\tilde{\mu}$ not only elucidates the learning mechanism underpinning Gibbs-type feature models but also enables the simulation of arbitrary functionals of interest associated with $\tilde{\mu}$. Its availability also proves advantageous for Markov Chain Monte Carlo algorithms when Gibbs-type feature models are employed as latent building blocks of more complex models. 

Posterior characterizations have already been studied for specific models. For the two-parameter \textsc{ibp}, namely the beta process, \citet{Thi07} used the conjugacy result of \citet{Hjo90} to obtain the posterior distribution. For the \textsc{ibp} with $\alpha \in (0, 1)$, namely the stable-beta process, the posterior can be deduced by carefully reading \citet{Teh09}, which, in turn, is based on \citet{Kim1999}. Finally, for the \textsc{bb} model, the posterior derivation is straightforward thanks to the independence among the $\tilde{q}_j$'s and the beta-binomial conjugacy. A systematic discussion for the broad class of \textsc{crm}s priors for $\mutilde$ is provided in \citet{Jam17}. Refer also to \citet{Cam24} for related findings. The following theorem applies to any Gibbs-type feature model, albeit with notable simplifications forthcoming in Sections~\ref{sec:mixture_IBP}-\ref{sec:BB_mixtures}, where we discuss specific choices for the priors of $\gamma$ and~$N$. 

\begin{theorem}[Posterior distribution of $\mutilde$]\label{thm:post_proc}
Suppose $\bm{Z}^{(n)} = (Z_1,\dots,Z_n)$ follows model~\eqref{eq:model_generative}, then the posterior distribution of $\mutilde$, given $\bm{Z}^{(n)}$, satisfies the following decomposition
\begin{equation} \label{eq:post_proc_dist_eq}
    \mutilde\mid\bm{Z}^{(n)} \dequal \mutilde' + \mutilde^*,
\end{equation}
where $\mutilde^* \dequal \suml q_\ell \delta_{X_\ell}$ is a random measure such that $q_\ell \ind \dbeta(m_\ell-\alpha, \alpha + \theta + n - m_\ell)$, for $\ell = 1,\dots,k$, and $X_1,\dots,X_k$ denote the observed features. Moreover, the random measure $\mutilde'$ in \eqref{eq:post_proc_dist_eq} is independent of $\mutilde^*$, and its distribution depends on the value of $\alpha$, as specified below.
\begin{itemize}
    \item[(i)] If $\alpha < 0$, then the random measure $\mutilde' \mid N' \dequal \sum_{j=1}^{N'} q'_j \delta_{\tilde{X}_j}$, where $q'_j \iid \dbeta(-\alpha, \theta +\alpha + n )$ and $\tilde{X}_j \iid P_0$, for $j=1,\ldots, N'$. Moreover, $N' + k \dequal N \mid \bm{Z}^{(n)}$ as in \eqref{eq:gamma_N_posterior}. 
    \item[(ii)] If $\alpha \in [0, 1)$, then the random measure $\mutilde' \mid \gamma' \sim \textsc{crm}(\rho'; P_0)$ with updated intensity $\rho'(\mathrm{d}s) = \gamma' \Ga(1+\theta)/\{\Ga(1-\alpha) \Ga(\theta+ \alpha)\} s^{-\alpha -1} (1-s)^{\theta+\alpha + n -1} \D s$. Moreover, $\gamma'\dequal  \gamma \mid \bm{Z}^{(n)}$ as in \eqref{eq:gamma_N_posterior}.
\end{itemize}
\end{theorem}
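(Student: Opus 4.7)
The plan is to exploit the mixture characterization of Proposition~\ref{thm:battiston_second}: since every Gibbs-type feature model is a mixture over $\gamma$ (for $\alpha \in [0,1)$) or over $N$ (for $\alpha < 0$) of an \textsc{ibp} or a \textsc{bb}, respectively, the posterior of $\mutilde$ can be obtained in two stages: (a) compute the conditional posterior of $\mutilde$ given both $\bm{Z}^{(n)}$ and the mixing parameter, then (b) mix against the posterior of that parameter, which is already available in~\eqref{eq:gamma_N_posterior}.

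For case~(i), condition on $N$. Under the \textsc{bb} specification, $\mutilde \mid N = \sum_{j=1}^N \tilde q_j \delta_{\tilde X_j}$ with $\tilde q_j \iid \dbeta(-\alpha, \theta+\alpha)$ and $\tilde X_j \iid P_0$. The Bernoulli likelihood contributed by $\bm{Z}^{(n)}$ factors over the atoms, so beta--Bernoulli conjugacy yields $\tilde q_j \mid \bm{Z}^{(n)}, N \ind \dbeta(-\alpha + s_j, \theta + \alpha + n - s_j)$, where $s_j = \sum_{i=1}^n \tilde A_{i,j}$. After relabelling, the $k$ atoms with $s_j > 0$ can be identified with the observed labels $X_1,\dots,X_k$ and have $s_j$ equal to the matched frequency $m_\ell$, which reproduces $\mutilde^*$ exactly as stated; the remaining $N-k$ atoms have $s_j = 0$ and hence give an independent measure $\mutilde'$ with $N' = N - k$ i.i.d.\ $\dbeta(-\alpha, \theta + \alpha + n)$ weights and i.i.d.\ labels. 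Integrating out $N$ against its posterior~\eqref{eq:gamma_N_posterior} and shifting by $k$ gives $N' + k \dequal N \mid \bm{Z}^{(n)}$, concluding case~(i).

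For case~(ii), condition on $\gamma$; then $\mutilde \mid \gamma \sim \textsc{crm}(\rho; P_0)$ is the stable--beta process with intensity~\eqref{eq:stable_beta_process}. The conditional posterior of a homogeneous \textsc{crm} under Bernoulli process observations is well known: it decomposes as a sum of independent fixed atoms at the observed labels with $\dbeta(m_\ell - \alpha, \theta + \alpha + n - m_\ell)$ weights, plus an independent \textsc{crm} whose L\'evy intensity is obtained by multiplying the prior intensity by the ``survival'' factor $(1-s)^n$ corresponding to non-observation across the $n$ trials. For the stable--beta intensity~\eqref{eq:stable_beta_process} this factor yields exactly $\rho'(\D s) = \gamma \Ga(1+\theta)/\{\Ga(1-\alpha)\Ga(\theta + \alpha)\} \, s^{-\alpha - 1} (1-s)^{\theta + \alpha + n - 1}\,\D s$. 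Mixing against the posterior of $\gamma$ in~\eqref{eq:gamma_N_posterior} then gives the announced representation with $\gamma' \dequal \gamma \mid \bm{Z}^{(n)}$.

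The main obstacle is the \textsc{crm} posterior step in case~(ii): deriving it from first principles requires either Palm-calculus arguments or a careful matching of Laplace functionals that separates the contribution of the $k$ observed atoms from the residual random measure. The cleanest route is to invoke the general posterior characterization of \citet{Jam17} -- specialized to the Bernoulli likelihood and to the stable--beta intensity~\eqref{eq:stable_beta_process} -- from which both the fixed-atom \dbeta\ distributions and the tilted intensity $\rho'(\D s) = (1-s)^n \rho(\D s)$ fall out; after that, only routine bookkeeping is needed to identify the mixing posterior via~\eqref{eq:gamma_N_posterior} and to unify it with the \textsc{bb} case.
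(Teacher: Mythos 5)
Your proposal is correct and follows essentially the same route as the paper's own proof: conditioning on the mixing parameter ($N$ or $\gamma$) via Proposition~\ref{thm:battiston_second}, using beta--Bernoulli conjugacy atom-by-atom in the \textsc{bb} case, invoking the general \textsc{crm} posterior characterization of \citet{Jam17} (fixed atoms with tilted densities plus a residual \textsc{crm} with intensity $(1-s)^n\rho(\D s)$) in the \textsc{ibp} case, and finally mixing against the posteriors in~\eqref{eq:gamma_N_posterior}. No substantive differences to report.
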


The previous distributional equality \eqref{eq:post_proc_dist_eq} decomposes the posterior distribution of $\mutilde$ into two parts: $\mutilde'$ accounts for the newly observed features, while $\mutilde^*$ deals with those previously observed. Regarding the latter, the $(n+1)$th individual exhibits an existing feature $X_\ell$ if $A_{n+1,\ell} = 1$, where each $A_{n+1,\ell}\mid q_\ell  \ind {\rm Bernoulli}(q_\ell)$ and $q_\ell  \ind \dbeta(m_\ell-\alpha, \alpha + \theta + n - m_\ell)$. By integrating out the $q_\ell$'s, we obtain $A_{n+1,\ell} \mid \bm{Z}^{(n)} \ind {\rm Bernoulli}\left((m_\ell - \alpha)/ (\theta + n) \right)$, consistent with the predictive structure~\eqref{eq:generative_2}. It is worth highlighting that the distribution of $\tilde{\mu}^*$ remains the same across all Gibbs-type feature models. However, $\mutilde'$ exhibits structural differences depending on the specific choices for the $V_{n,k}$'s. In the case of $\alpha \in [0,1)$, we observe that $\mutilde' \mid \gamma' \sim \textsc{crm}(\rho'; P_0)$ benefits from a conjugacy property, as the intensity measure $\rho'(\mathrm{d}s)$ characterizes a stable-beta process with updated parameters $(\gamma' (\alpha + \theta)_n / (\theta + 1)_n, \alpha, \theta + n)$, a point noted by \citet{Teh09} in the \textsc{ibp} case.

Simulating posterior samples for $\tilde{\mu}$ is straightforward. Initially, one needs to draw values from $N \mid \bm{Z}^{(n)}$ or $\gamma \mid \bm{Z}^{(n)}$, which typically follow highly tractable distributions. Then, conditioned on $N$ or $\gamma$, both random measures $\mutilde'$ and $\mutilde^*$ are simple to sample from: $\tilde{\mu}^*$ involves a finite number of beta random variables, as does $\tilde{\mu}'$ when $\alpha < 0$. Even simulating $\tilde{\mu}'$ when $\alpha \in [0,1)$ is straightforward, as due to conjugacy, $\tilde{\mu}'$ follows a stable-beta process, for which efficient sampling algorithms exist; see, for example, \citet{Teh09}.

\section{Gamma mixture of Indian buffet processes}  \label{sec:mixture_IBP}

\subsection{Predictive structure, number of features, and $\alpha$-diversity}

In this section, we discuss relevant special cases of Gibbs-type feature models within the $\alpha = 0$ and $\alpha \in (0,1)$ regimes, where there are infinitely many possible features $\tilde{X}_j$. We define a new Gibbs-type feature model termed \emph{gamma mixture of \textsc{ibp}s} by employing a gamma prior for $\gamma$. Upon examining the posterior distribution in~\eqref{eq:gamma_N_posterior}, it becomes evident that a gamma prior is \emph{conjugate}, as its posterior remains gamma with updated parameters. If $\gamma \sim \dgamma(a,b)$, then the associated \textsc{efpf}, obtained by integrating \eqref{eq:EFPF_IBP} with respect to the prior density, follows a product form~\eqref{eq:EFPF_product} and the weights are
\begin{equation}\label{eq:efpf_gamma_mix}
    V_{n,k}  = \frac{1}{k!}\frac{b^a \,(a)_k}{\{(\theta+1)_{n-1}\}^k \{b + g_n(\theta, \alpha)\}^{a + k}}.
\end{equation}
This Gibbs-type feature model has connections with the stable-beta scaled process of \cite{Cam24}, which is, in fact, a special case of~\eqref{eq:efpf_gamma_mix}. In particular, a stable-beta scaled process with parameters $(\alpha, c, d)$ can be represented as a gamma mixture of \textsc{ibp}s under the constraint $\theta = 1 - \alpha$ and prior distribution $\gamma \sim \dgamma(c+1, d (1-\alpha)/\alpha)$. Such a hierarchical representation is not discussed in \citet{Cam24}, but it can be proved by directly inspecting the \textsc{efpf}s. 

We now compare the \textsc{ibp} of \cite{Teh09} with the feature model~\eqref{eq:efpf_gamma_mix}, utilizing the general findings from Section~\ref{sec:general_theory}. The predictive laws of the \textsc{ibp} and the gamma mixture of \textsc{ibp}s follow by specializing Theorem~\ref{thm:predictive_general_V}, substituting the $V_{n,k}$'s of equations~\eqref{eq:EFPF_IBP} and~\eqref{eq:efpf_gamma_mix}, respectively, into the general formulas. As discussed in Section~\ref{sec:predictive}, the predictive distributions of Gibbs-type feature models differ only in the law governing the number of new features, whereas the law of the binary variables $A_{n+1,1},\ldots, A_{n+1,k}$, already described in~\eqref{eq:generative_2}, is the same. Thus, for the sake of brevity, here we concentrate on the distribution of the number of new features $Y_{n+1}$. Let $Y \sim \text{NegBinomial} (n_0, \mu_0)$ denote a negative binomial random variable with mean parameter $\mu_0 > 0$ and dispersion parameter $n_0 > 0$, where its probability mass function $\Ncr (y ; n_0, \mu_0) \propto p^{n_0}  (1-p)^y$ is defined for any $y \in \mathds{N}_0$, with $p = n_0 / (\mu_0 + n_0)$, so that $\mathds{E}(Y) = \mu_0$ and $\Var(Y) = \mu_0 + \mu_0^2 / n_0$. Moreover, let  $Y_{n+1} \mid K_n, \gamma$ be the number of new features for the $(n+1)$th individual in the \textsc{ibp} case, and let $Y_{n+1} \mid K_n$ be the same quantity for the gamma mixture model. Then, simple calculus yields  
\begin{equation*}
Y_{n+1} \mid K_n, \gamma \sim \text{Poisson}\left(\gamma \frac{(\theta+\alpha)_{n}}{(\theta+1)_{n}}\right), \quad Y_{n+1} \mid K_n \sim \text{NegBinomial}\left(a + k, \frac{a + k}{b + g_n(\theta, \alpha)}\frac{(\theta+\alpha)_{n}}{(\theta+1)_{n}}\right).
\end{equation*}
Additional distributional properties can be derived by specializing the results of Section \ref{sec:general_theory} for the \textsc{ibp} and gamma mixture of \textsc{ibp}s. We summarize some of these properties in the following proposition and refer to the supplementary material for additional findings and simplifications, such as the distribution of the number of shared features $K_{n,r}$ or the sample coverage.

\begin{proposition}\label{prop_k_gamma_ibp} Suppose the \textsc{efpf} is in product form \eqref{eq:EFPF_product} with $\alpha \in [0, 1)$. Let $K_n \mid \gamma$ and $K_n$ be the number of features observed in a sample of $n$ individuals for the \textsc{ibp} and the gamma mixture in~\eqref{eq:efpf_gamma_mix}. Then, we have
\begin{equation} \label{eq:Kn_P_NB}
K_{n} \mid \gamma \sim \textup{Poisson} \left( \gamma g_n(\theta, \alpha)\right), \qquad K_{n} \sim \textup{NegBinomial} \left(a, (a/b)g_n(\theta, \alpha)\right).
\end{equation}
Moreover, let $K_m^{(n)}\mid\bm{Z}^{(n)}$ be the number of hitherto unseen features in an additional sample of size $m \ge 1$, then for the \textsc{ibp} 
\begin{equation} \label{eq:Knm_Poiss}
K_m^{(n)}\mid \bm{Z}^{(n)}, \gamma \sim \textup{Poisson}\left( \gamma \left( g_{n+m}(\theta, \alpha) - g_n(\theta,\alpha) \right)\right),
\end{equation}
whereas for the gamma mixture
\begin{equation} \label{eq:Kmn_NB}
K_m^{(n)}\mid \bm{Z}^{(n)} \sim \textup{NegBinomial}\left(a + k,  \frac{a + k}{b + g_n(\theta, \alpha)} \left( g_{n+m}(\theta, \alpha) - g_n(\theta,\alpha) \right) \right).
\end{equation}
\end{proposition}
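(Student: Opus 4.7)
The plan is to derive all four distributional statements as direct specializations of the general results established earlier, namely Theorem~\ref{thm:Kn_general} for the prior law of $K_n$ and Theorem~\ref{thm:Kmn_general_V} for the predictive law of $K_m^{(n)}$, plugging in the two concrete families of Gibbs weights~\eqref{eq:EFPF_IBP} and~\eqref{eq:efpf_gamma_mix}. Since we are in the regime $\alpha\in[0,1)$, in both theorems the relevant multiplicative ingredients involve $g_n(\theta,\alpha)$ and $(\theta+1)_{n-1}$, and these cancel cleanly against terms already present in the two choices of $V_{n,k}$.

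\emph{Step 1 (prior law of $K_n$).} I apply the second identity in Theorem~\ref{thm:Kn_general}, $\P(K_n=k)=V_{n,k}\{g_n(\theta,\alpha)(\theta+1)_{n-1}\}^k$. For the \textsc{ibp}, the factor $(\theta+1)_{n-1}^{-k}$ inside $V_{n,k}$ in~\eqref{eq:EFPF_IBP} cancels $(\theta+1)_{n-1}^k$, leaving $\{\gamma g_n(\theta,\alpha)\}^k e^{-\gamma g_n(\theta,\alpha)}/k!$, which I recognize as Poisson$(\gamma g_n(\theta,\alpha))$. For the gamma mixture, substituting~\eqref{eq:efpf_gamma_mix} and simplifying gives $\P(K_n=k)= \frac{(a)_k}{k!}\, b^a\,g_n^k/(b+g_n)^{a+k}$; matching with the negative binomial mass function $\Ncr(k;n_0,\mu_0)=\binom{k+n_0-1}{k}p^{n_0}(1-p)^k$ under $p=n_0/(\mu_0+n_0)$ with $n_0=a$ and $\mu_0=(a/b)g_n(\theta,\alpha)$ yields $p=b/(b+g_n)$ and produces exactly the claimed form.

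\emph{Step 2 (posterior law of $K_m^{(n)}$).} I apply Theorem~\ref{thm:Kmn_general_V} in the $\alpha\in[0,1)$ regime with $c_{n,m}(\theta,\alpha)=(\theta+1)_{n-1}\{g_{n+m}(\theta,\alpha)-g_n(\theta,\alpha)\}$. The key algebraic identity is the split $(\theta+1)_{n+m-1}=(\theta+1)_{n-1}(\theta+n)_m$, which makes the factor $\{(\theta+n)_m\}^{k+y}$ in the theorem cancel the $(\theta+1)^{-(k+y)}_{n+m-1}$ produced by $V_{n+m,k+y}$, up to a residual $(\theta+1)_{n-1}^{-y}$ that is absorbed by $\{c_{n,m}(\theta,\alpha)\}^y$. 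After this cancellation, only the $\gamma$-dependent (or $(a,b)$-dependent) factors survive. In the \textsc{ibp} case the surviving factors give $\{\gamma(g_{n+m}-g_n)\}^y e^{-\gamma(g_{n+m}-g_n)}/y!$, i.e.\ Poisson$(\gamma(g_{n+m}-g_n))$. In the gamma mixture case, together with the Pochhammer identity $(a)_{k+y}/(a)_k=(a+k)_y$, the remaining terms yield
\[
\P(K_m^{(n)}=y\mid\bm{Z}^{(n)})=\frac{(a+k)_y}{y!}\left(\frac{b+g_n}{b+g_{n+m}}\right)^{a+k}\left(\frac{g_{n+m}-g_n}{b+g_{n+m}}\right)^{y},
\]
which I match against the negative binomial mass function with $n_0=a+k$ and $\mu_0=(a+k)(g_{n+m}-g_n)/(b+g_n)$; the implied success probability is $p=(b+g_n)/(b+g_{n+m})$, confirming~\eqref{eq:Kmn_NB}.

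No step poses a real obstacle; the main care required is bookkeeping. The slightly delicate point is tracking the Pochhammer factorizations $(\theta+1)_{n+m-1}=(\theta+1)_{n-1}(\theta+n)_m$ and $(a)_{k+y}=(a)_k(a+k)_y$ so that the cancellations producing the clean Poisson/negative binomial forms go through, and then correctly translating the resulting mass functions into the $(n_0,\mu_0)$ parametrization of $\textup{NegBinomial}$ fixed in the text. Everything else is routine simplification.
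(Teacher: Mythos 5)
Your proposal is correct and follows exactly the route the paper takes: the paper's proof simply states that \eqref{eq:Kn_P_NB} follows by specializing Theorem~\ref{thm:Kn_general} with the weights~\eqref{eq:EFPF_IBP} and~\eqref{eq:efpf_gamma_mix}, and that \eqref{eq:Knm_Poiss}--\eqref{eq:Kmn_NB} follow analogously from Theorem~\ref{thm:Kmn_general_V}. Your version just makes explicit the Pochhammer cancellations and the matching to the $(n_0,\mu_0)$ parametrization, all of which check out.
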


The results regarding the \textsc{ibp} have been deduced from the general theory outlined in Theorem~\ref{thm:Kn_general} and Theorem~\ref{thm:Kmn_general_V}, albeit these results were already known. Indeed, the distribution of $K_{n} \mid \gamma$  has been determined by \citet{Teh09}, while the distribution of $K_m^{(n)}\mid \bm{Z}^{(n)}, \gamma$ has been unveiled by \citet{Mas22}. Conversely, the results concerning the gamma mixture are novel. The expected values of the number of features, also known as rarefaction, are $\mathds{E}(K_n \mid \gamma) = \gamma g_n(\theta, \alpha)$ and $\mathds{E}(K_n) = (a/b) g_n(\theta, \alpha)$.  The function $g_n(\theta, \alpha)$ has an interesting interpretation, being the expected value of the number of clusters in a sample of size $n$ from a Pitman--Yor process \citep{Pit97} with parameters $(\alpha, \theta)$. The $\alpha = 0$ case corresponds to the Dirichlet process \citep{Fer73}, reducing to $g_n(\theta, 0) = \sum_{i=1}^n \theta / (\theta + i - 1)$. This fact underscores once more the close relationship between Gibbs-type feature models and Gibbs-type species sampling models. 

One notable advantage of the negative binomial distribution derived from the gamma mixture model is its ability to introduce overdispersion in $K_n$. Furthermore, the posterior distribution of $K_m^{(n)}\mid \bm{Z}^{(n)}, \gamma$, corresponding to the \textsc{ibp} case, remains independent of the number of observed features $K_n = k$, a characteristic that some may find unappealing. In contrast, the negative binomial posterior for $K_m^{(n)}\mid \bm{Z}^{(n)}$ considers $k$, influencing both the mean and variance of the distribution. Higher values of $k$ result in overdispersion, which is often desirable. The Bayesian estimators for the number of previously unseen features, crucial for extrapolating the accumulation curve, are as follows:
\begin{equation*}
\mathds{E}(K_m^{(n)}\mid \bm{Z}^{(n)}, \gamma)  = \gamma\left( g_{n+m}(\theta, \alpha) - g_n(\theta,\alpha) \right), \quad \mathds{E}(K_m^{(n)}\mid \bm{Z}^{(n)})  = \frac{a + k}{b + g_n(\theta, \alpha)}\left( g_{n+m}(\theta, \alpha) - g_n(\theta,\alpha) \right),
\end{equation*}
for the \textsc{ibp} and the gamma mixture of \textsc{ibp}s, respectively. 

Recall that, as stated in Proposition~\ref{diversity}, the parameter $\alpha$ controls the growth rate of $K_n$ so that when $\alpha = 0$, then $K_n / \log(n) \overset{\textup{d}}{\longrightarrow} \gamma \:\theta$, whereas when $0 < \alpha < 1$, then $K_n / n^\alpha \overset{\textup{d}}{\longrightarrow} \gamma \: \Gamma(\theta+1) / \{\alpha \Gamma(\theta + \alpha)\}$. In the \textsc{ibp}, the parameter $\gamma$, and hence the $\alpha$-diversity, is deterministic \citep{Mas22}. Conversely, by definition, $\gamma$ follows a priori a $\dgamma(a,b)$ in the gamma mixture model~\eqref{eq:efpf_gamma_mix}, which allows the Bayesian learning of the $\alpha$-diversity through its posterior.  

\begin{proposition} \label{prop_asy_km_gamma_3ibp}
Suppose the \textsc{efpf} is in product form \eqref{eq:EFPF_product} with $\alpha \in [0, 1)$ and the $V_{n,k}$'s defined as in~\eqref{eq:efpf_gamma_mix}, so that a priori $\gamma \sim \dgamma(a,b)$. Then, the posterior is $\gamma \mid \bm{Z}^{(n)} \sim \dgamma(a + k, b + g_n(\theta, \alpha))$ and therefore the $\alpha$-diversity, for $\alpha = 0$, is given by
    \begin{equation*}
        \frac{K_m^{(n)}}{\log(m)} \mid \bm{Z}^{(n)} \dconv S'_\alpha, \qquad S'_\alpha \dequal S_\alpha \mid \bm{Z}^{(n)} \sim \dgamma\left(a + k, \frac{b + g_n(\theta,0)}{\theta}\right),
    \end{equation*}
  as $m \to \infty$, whereas for $\alpha \in (0,1)$   
    \begin{equation*}
        \frac{K_m^{(n)}}{m^\alpha} \mid \bm{Z}^{(n)} \dconv S'_\alpha, \qquad S'_\alpha \dequal S_\alpha \mid \bm{Z}^{(n)}  \sim \dgamma\left(a + k, \{b + g_n(\theta,\alpha)\} \frac{\Ga(\theta+\alpha)\alpha}{\Ga(\theta+1)}\right) .
    \end{equation*}   
\end{proposition}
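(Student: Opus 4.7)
The plan is to split the statement into two pieces: the posterior update for $\gamma$, and the identification of the limiting $\alpha$-diversity under the posterior.

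First, I would derive the posterior of $\gamma$ directly from the general formula \eqref{eq:gamma_N_posterior}. Substituting the $\dgamma(a,b)$ prior density $p_\gamma(\mathrm{d}\gamma) \propto \gamma^{a-1}\exp(-b\gamma)\mathrm{d}\gamma$ gives a posterior proportional to $\gamma^{a+k-1}\exp\{-(b+g_n(\theta,\alpha))\gamma\}$, which is the kernel of $\dgamma(a+k,\, b+g_n(\theta,\alpha))$. This step is essentially conjugacy, and requires no further work.

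Next, I would invoke Proposition~\ref{prop_asy_anyprior}, which asserts $(K_m^{(n)}+k)/c_\alpha(m) \mid \bm{Z}^{(n)} \dconv S'_\alpha$, with $S'_\alpha \dequal S_\alpha \mid \bm{Z}^{(n)}$. Since $k$ is fixed given the data and $c_\alpha(m) \to \infty$ for every $\alpha \in [0,1)$ (namely $c_\alpha(m) = \log m$ or $m^\alpha$), Slutsky's theorem allows me to drop the additive $k$ and conclude $K_m^{(n)}/c_\alpha(m) \mid \bm{Z}^{(n)} \dconv S'_\alpha$ as well. All that remains is to identify the law of $S_\alpha \mid \bm{Z}^{(n)}$.

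To identify this law, I would read off from Proposition~\ref{diversity} that, in the mixture-over-$\gamma$ setting, $S_\alpha$ is a \emph{deterministic} linear function of $\gamma$: specifically $S_0 = \theta\gamma$ when $\alpha = 0$ and $S_\alpha = \{\Gamma(\theta+1)/(\alpha\Gamma(\theta+\alpha))\}\gamma$ when $\alpha \in (0,1)$. Hence $S_\alpha \mid \bm{Z}^{(n)}$ is the image of the posterior $\gamma \mid \bm{Z}^{(n)}$ under the corresponding scalar multiplication. Invoking the standard fact that if $X \sim \dgamma(a',b')$ then $cX \sim \dgamma(a', b'/c)$ for $c>0$, and plugging in the posterior parameters $(a+k,\, b+g_n(\theta,\alpha))$ from Step~1 with scaling constants $c=\theta$ or $c=\Gamma(\theta+1)/(\alpha\Gamma(\theta+\alpha))$, yields exactly the two gamma distributions displayed in the statement.

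There is no real obstacle in this argument: the only delicate point is recognizing that in the gamma mixture model the $\alpha$-diversity $S_\alpha$ of Proposition~\ref{diversity} is \emph{a priori} a scalar multiple of $\gamma$ (as opposed to an independent random variable), so that its posterior is determined by that of $\gamma$ via a change of scale. Everything else is conjugacy plus Proposition~\ref{prop_asy_anyprior}.
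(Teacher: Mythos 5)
Your proposal is correct and follows essentially the same route as the paper: conjugacy from \eqref{eq:gamma_N_posterior} gives $\gamma \mid \bm{Z}^{(n)} \sim \dgamma(a+k,\, b+g_n(\theta,\alpha))$, and Proposition~\ref{prop_asy_anyprior} identifies the limit of $K_m^{(n)}/c_\alpha(m)\mid\bm{Z}^{(n)}$ as the appropriate scalar multiple of $\gamma\mid\bm{Z}^{(n)}$. The only cosmetic difference is that the paper verifies the resulting scaled-gamma law by computing characteristic functions, whereas you invoke the standard scaling property of the gamma family directly; both are equivalent, and your observation that $S_\alpha$ is a deterministic multiple of $\gamma$ in this mixture is exactly the key point the paper uses.
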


A consequence of the deterministic $\alpha$-diversity in the \textsc{ibp} is that the width of the credible intervals for $K_m^{(n)}$ grows at a rate slower than $m^{\alpha}$.  In contrast, the mixtures of \textsc{ibp}s yield larger credible intervals, whose widths grow proportionally to $m^\alpha$, as highlighted by the previous proposition. This difference can be observed in simulation study B of the supplementary material: Figure \ref{fig:extr_simulation_B} suggests that the \textsc{ibp} underestimates the uncertainty in predicting the number of unseen features. Proposition \ref{prop_asy_km_gamma_3ibp} presents some of the first results concerning the posterior distribution of the $\alpha$-diversity for feature allocation models, with an early contribution for the stable-beta scaled process being available in~\citet{Cam24}. Analogous findings for the Pitman--Yor species sampling model are given in \citet{Favaro2009} when $\alpha \in (0,1)$, while for the Dirichlet process ($\alpha = 0$) an interpretable and tractable prior is proposed in \citet{Zito2023}.

\subsection{Posterior characterizations and negative binomial processes}  \label{sec:mixture_IBP_process}

We specialize here the posterior characterization of Theorem~\ref{thm:post_proc} for the gamma mixture of \textsc{ibp}s, which can be conveniently described in terms of \emph{negative binomial processes} \citep{Greg84}, whose use in Bayesian nonparametrics is still much unexplored. Building upon \cite{Greg84}, we say that $\mutilde$ is a \textit{negative binomial random measure} with parameter $a > 0$, intensity measure $\rho(\D s)$ on $\R_+$ and diffuse base measure $P_0$ on $\X$, if $\mutilde$ has Laplace functional
\begin{equation} \label{eq:NB_Laplace_functional}
    \E [e^{-\int_\X  f (x) \mutilde (\D x )}] = \left\{  1+ \int_\X  \int_0^\infty  \left[1-e^{-s f(x)}\right] \rho (\D s) P_0(\D x )  \right\}^{-a},
\end{equation}
for any measurable function $f: \X \to \R_+$. We will write $\mutilde \sim \textsc{nb} (a, \rho; P_0)$ and we assume the intensity measure $\rho(\D s)$ is supported in $(0,1)$ as before. A negative binomial random measure may arise by considering a \textsc{crm} with random intensity measure $\tilde{c} \: \rho(\mathrm{d}s)$, where $\tilde{c}$ is distributed as a gamma random variable with parameters $(a, 1)$. Hence, the hierarchical formulation for the gamma mixture of \textsc{ibp}s becomes
\begin{equation}\label{model:ibp_gamma_process}
\begin{aligned}
    Z_i \mid \mutilde &\iid \BP(\mutilde), \qquad \qquad i \ge 1,\\
    \tilde \mu & \sim \NB(a,\rho; P_0),
\end{aligned}
\end{equation}
where the intensity measure is $\rho(\D s) = (1/b)\Gamma(1+\theta) / \{\Gamma(1-\alpha) \Gamma(\theta+\alpha)\}s^{-\alpha-1}(1-s)^{\theta+\alpha-1} \D s$. The proof is shown in Section \ref{app:proofs_mixtureIBPs}, but it is merely a consequence of mixing the intensity measure of a completely random measure with respect to a gamma distribution. The following corollary of Theorem~\ref{thm:post_proc} characterizes the posterior distribution of the process $\mutilde$, given $\bm{Z}^{(n)}$, in terms of negative binomial random measures. 
\begin{corollary}\label{thm:post_proc_ibp_gamma}
Suppose $\bm{Z}^{(n)} = (Z_1,\dots,Z_n)$ follows model~\eqref{model:ibp_gamma_process}, then the posterior distribution of $\mutilde$, given $\bm{Z}^{(n)}$, satisfies the  decomposition $\mutilde\mid\bm{Z}^{(n)} \dequal \mutilde' + \mutilde^*$ in \eqref{eq:post_proc_dist_eq}, where $\mutilde'$ and $\mutilde^* $ are independent random measures such that $\mutilde^* $ is distributed as in Theorem~\ref{thm:post_proc}, whereas $\tilde \mu'  \sim \NB (a + k, \rho' ; P_0)$, with updated intensity $\rho'(\D s) = 1 / \{b + g_n(\theta, \alpha)\} \Gamma(1+\theta)/ \{\Gamma(1-\alpha) \Gamma(\theta+\alpha)\}s^{-\alpha-1}(1-s)^{n+\theta+\alpha-1} \D s$.
\end{corollary}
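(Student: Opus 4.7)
\textbf{Proof plan for Corollary \ref{thm:post_proc_ibp_gamma}.}

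My plan is to deduce the corollary as a direct specialization of Theorem \ref{thm:post_proc}(ii), since the gamma mixture of \textsc{ibp}s sits in the regime $\alpha\in[0,1)$. First I would invoke Theorem \ref{thm:post_proc}(ii), which gives the decomposition $\tilde\mu\mid\bm Z^{(n)} \dequal \tilde\mu'+\tilde\mu^*$ with $\tilde\mu^*$ already in the required form, and with $\tilde\mu'\mid\gamma' \sim \textsc{crm}(\rho^*;P_0)$ where the intensity is $\rho^*(\D s)=\gamma'\,C(\theta,\alpha)\,s^{-\alpha-1}(1-s)^{\theta+\alpha+n-1}\D s$, writing $C(\theta,\alpha)=\Gamma(1+\theta)/\{\Gamma(1-\alpha)\Gamma(\theta+\alpha)\}$ for brevity. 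The remaining task is to identify the \emph{unconditional} law of $\tilde\mu'$, once $\gamma'$ is integrated with respect to its posterior distribution.

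Next, I would read off the posterior of $\gamma$ from \eqref{eq:gamma_N_posterior}: the conjugacy between the gamma prior $\dgamma(a,b)$ and the factor $\gamma^k\exp\{-\gamma\,g_n(\theta,\alpha)\}$ in the \textsc{efpf} immediately yields $\gamma\mid\bm Z^{(n)}\sim\dgamma(a+k,\,b+g_n(\theta,\alpha))$, matching the statement already recorded in Proposition \ref{prop_asy_km_gamma_3ibp}. Hence $\gamma'$ has this distribution.

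The core computation is then to show that the mixture of the \textsc{crm} $\tilde\mu'\mid\gamma'$ over $\gamma'\sim\dgamma(a+k,b+g_n(\theta,\alpha))$ is an \textsc{nb} random measure in the sense of \eqref{eq:NB_Laplace_functional}. For any measurable $f\colon\X\to\R_+$, the Laplace functional factorizes as
\begin{equation*}
\E\bigl[e^{-\int f\,\D\tilde\mu'}\bigr]=\E_{\gamma'}\!\Bigl[\exp\!\Bigl(-\gamma'\,C(\theta,\alpha)\!\int_\X\!\!\int_0^1(1-e^{-sf(x)})\,s^{-\alpha-1}(1-s)^{\theta+\alpha+n-1}\D s\,P_0(\D x)\Bigr)\Bigr],
\end{equation*}
and applying the Laplace transform of a $\dgamma(a+k,b+g_n(\theta,\alpha))$ random variable converts this into $\{1+\int_\X\int_0^1(1-e^{-sf(x)})\rho'(\D s)P_0(\D x)\}^{-(a+k)}$, exactly the \textsc{nb} Laplace functional with the intensity $\rho'$ stated in the corollary. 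Independence of $\tilde\mu'$ and $\tilde\mu^*$ is inherited from Theorem \ref{thm:post_proc}.

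The bookkeeping step I expect to require the most care is the verification that the rescaling by $1/\{b+g_n(\theta,\alpha)\}$ and the update of the beta exponent from $\theta+\alpha-1$ to $n+\theta+\alpha-1$ line up precisely with the two conjugate effects at play, namely the update of the gamma rate through the posterior of $\gamma$ and the update of the \textsc{crm} intensity through Theorem \ref{thm:post_proc}(ii); once these are tracked carefully the identification with \eqref{eq:NB_Laplace_functional} is mechanical. As a sanity check, specializing $n=0$ recovers the prior statement in \eqref{model:ibp_gamma_process}, confirming the consistency of the parametrization.
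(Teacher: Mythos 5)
Your proposal is correct and follows essentially the same route as the paper: specialize Theorem \ref{thm:post_proc}(ii), identify $\gamma'\dequal\gamma\mid\bm{Z}^{(n)}\sim\dgamma(a+k,b+g_n(\theta,\alpha))$ from \eqref{eq:gamma_N_posterior}, and integrate the conditional \textsc{crm} Laplace functional against this gamma law to recognize the negative binomial Laplace functional \eqref{eq:NB_Laplace_functional} with the stated intensity $\rho'$. No gaps.
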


\section{Gibbs-type feature models with finitely many features} \label{sec:BB_mixtures}

\subsection{Predictive structure, number of features, and richness estimation}

In species sampling models, mixtures of Dirichlet multinomial processes are an important subclass of Gibbs-type priors \citep[see, e.g.][]{DeB15}, which include, for instance, the models of \citet{Gne10,DeBlasi2013}. In feature allocation models, a similar role is played by mixtures of \textsc{bb} models with $N$ features, which corresponds to the $\alpha < 0$ case. These feature allocation models assume a finite number of features $N$, representing the \emph{richness} in ecological problems.The standard \textsc{bb} model assumes that $N$ is known in advance, although this is a  critical parameter and object of inference. 
In the following, we concentrate on two novel and tractable specifications: (i) $N$ is a Poisson random variable with parameter $\lambda >0$, referred to as \emph{Poisson mixture of \textsc{bb}s}; (ii) $N$ is a negative binomial random variable with parameters $(n_0, \mu_0)$, referred to as \emph{negative binomial mixture of \textsc{bb}s}. These random variables serve as the prior distribution for the richness, enabling its Bayesian learning. We begin by providing the expressions for the corresponding \textsc{efpf}s.

\begin{proposition}\label{prop_efpf_poiss_nb}
If $N \sim \textup{Poisson} (\lambda)$ in the mixture representation of Proposition~\ref{thm:battiston_second}, then the model has \textsc{efpf} in product form~\eqref{eq:EFPF_product} and the $V_{n,k}$'s are given by
\begin{equation}
\label{eq:EFPF_BB_Poiss}
V_{n,k}  = 
\frac{1}{k!} \exp  \left\{ -\lambda \left(1-\frac{(\theta+\alpha)_n}{(\theta)_n}\right) \right\}  \left\{ \frac{-\lambda \alpha}{(\theta)_n} \right\}^k .
\end{equation}
If instead $N \sim \textup{NegBinomial}(n_0, \mu_0)$, then the $V_{n,k}$'s are given by
\begin{equation}
\label{eq:EFPF_BB_NB}
V_{n,k}  = 
\binom{k+n_0-1}{k}  \left\{ \frac{ -\alpha}{(\theta)_n} \frac{\mu_0}{\mu_0 + n_0}\right\}^k  \left( 1- \frac{\mu_0}{\mu_0 + n_0} \frac{(\theta+\alpha)_n}{(\theta)_n}  \right)^{-n_0-k}   \left(\frac{n_0}{\mu_0 + n_0}\right)^{n_0}.
\end{equation}
\end{proposition}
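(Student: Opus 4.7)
The plan is to leverage Proposition \ref{thm:battiston_second}, which guarantees that any Gibbs-type feature model with $\alpha<0$ arises by mixing the $V_{n,k}$ coefficients of the \textsc{bb} model in \eqref{eq:EFPF_BB} against a distribution $P_N$ on $\mathds{N}$. Hence, in both cases, the strategy is to write
\begin{equation*}
V_{n,k} = \left\{\frac{-\alpha}{(\theta+\alpha)_n}\right\}^k \sum_{N\geq k} \binom{N}{k} r^N\, p_N(N), \qquad r := \frac{(\theta+\alpha)_n}{(\theta)_n},
\end{equation*}
and evaluate the resulting series in closed form. Since $0<r<1$ (because $\theta>-\alpha$ with $\alpha<0$ makes each Pochhammer ratio in $r$ a product of factors less than one) and both candidate priors have moment generating function convergent at the relevant point, interchanging sums and rearranging is entirely legal.

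For the Poisson case I would plug in $p_N(N) = e^{-\lambda}\lambda^N/N!$, cancel the $N!$ with the $\binom{N}{k}$, reindex by $M = N-k$, and recognize $\sum_{M\geq 0}(r\lambda)^M/M! = e^{r\lambda}$. This yields $V_{n,k} = \{-\alpha/(\theta+\alpha)_n\}^k e^{-\lambda(1-r)} (r\lambda)^k/k!$, and absorbing the factor $r^k = \{(\theta+\alpha)_n/(\theta)_n\}^k$ into the bracket gives exactly \eqref{eq:EFPF_BB_Poiss}.

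For the negative binomial case, with $p_N(N) = \binom{N+n_0-1}{N} p^{n_0}(1-p)^N$ and $p=n_0/(\mu_0+n_0)$, the key obstacle is handling the product of two binomial coefficients $\binom{N}{k}\binom{N+n_0-1}{N}$ under the sum. My plan is to derive the generating function identity
\begin{equation*}
\sum_{N\geq k}\binom{N+n_0-1}{N}\binom{N}{k} z^N \;=\; \binom{k+n_0-1}{k}\frac{z^k}{(1-z)^{n_0+k}},
\end{equation*}
which follows by differentiating the standard expansion $(1-z)^{-n_0} = \sum_{N\geq 0}\binom{N+n_0-1}{N}z^N$ exactly $k$ times and dividing by $k!$. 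Applying this with $z = r(1-p) = (\theta+\alpha)_n/(\theta)_n \cdot \mu_0/(\mu_0+n_0)$, the factor $\{(\theta+\alpha)_n\}^k$ in the denominator of the prefactor combines with $r^k$ from $z^k$ to leave $\{(\theta)_n\}^{-k}$, so after collecting the $\mu_0/(\mu_0+n_0)$ and $n_0/(\mu_0+n_0)$ terms one obtains precisely \eqref{eq:EFPF_BB_NB}. The main technical hurdle is this generating function identity; once it is in hand, the rest is bookkeeping of the Pochhammer ratios.
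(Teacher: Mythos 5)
Your proposal is correct and takes essentially the same route as the paper: both mix the \textsc{bb} weights \eqref{eq:EFPF_BB} over the prior on $N$ and evaluate the resulting series, with the Poisson case handled via the exponential series after reindexing. Your generating-function identity for the negative binomial case is exactly the normalization of a $\textup{NegBinomial}(n_0+k,\cdot)$ mass function, which is how the paper sums the same series, so the two arguments differ only in presentation.
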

Clearly, the negative binomial mixture of \textsc{bb}s allows for a higher degree of prior uncertainty regarding the total number of features $N$ compared to the Poisson case, as the negative binomial induces overdispersion. It is worth noting that the negative binomial mixture of \textsc{bb}s may be obtained by choosing a gamma prior for the $\lambda$ parameter of the Poisson mixture of \textsc{bb}s. Specifically, assuming $N\mid \lambda \sim \textup{Poisson} (\lambda)$, and $\lambda \sim \dgamma(a,b)$ is equivalent to having $N \sim \textup{NegBinomial} (n_0,\mu_0)$, with $n_0 = a$ and $\mu_0 = a / b$.  This provides a hierarchical justification for the negative binomial mixture of \textsc{bb}s: one may initially consider the Poisson mixture model, but if there is uncertainty about $\lambda$, then one could learn it by employing a gamma prior, resulting in a negative binomial mixture of \textsc{bb}s. 

We now apply the general results of Section~\ref{sec:general_theory} to the aforementioned mixtures of \textsc{bb} models. For brevity, we focus solely on the number of features $K_n$, representing the rarefaction, and the number of hitherto unseen features $K_m^{(n)} \mid \bm{Z}^{(n)}$, leading to the extrapolation of the accumulation curves. It is worth reiterating that the predictive distribution $Y_{n+1} \mid \bm{Z}^{(n)}$ involved in the buffet metaphor can be derived as a special case, setting $m = 1$ in the distribution of $K_m^{(n)} \mid \bm{Z}^{(n)}$, which is a trivial task in the following formulas.

\begin{proposition} \label{prop_pred_poiss_nb}
Suppose the \textsc{efpf} is in product form~\eqref{eq:EFPF_product} with $\alpha < 0$ and let $p_n(\theta, \alpha) = 1-(\theta+\alpha)_n / (\theta)_n$. If $N$ is fixed, corresponding to the \textsc{bb} model~\eqref{eq:EFPF_BB}, then
\begin{equation*}
K_n \mid N \sim \textup{Binomial} \left(N, p_n(\theta,\alpha) \right), \qquad K_{m}^{(n)} \mid  \bm{Z}^{(n)}, N \sim \textup{Binomial} \left(N-k, p_m(\theta + n, \alpha) \right).
\end{equation*}
If instead $N \sim \textup{Poisson} (\lambda)$, corresponding to model~\eqref{eq:EFPF_BB_Poiss}, then
\begin{equation*}
K_n \sim \textup{Poisson}\left(\lambda \: p_n(\theta, \alpha)\right), \qquad K_{m}^{(n)}\mid  \bm{Z}^{(n)} \sim \textup{Poisson} \left( \lambda p_m(\theta + n, \alpha)\{1 - p_n(\theta, \alpha)\}\right).
\end{equation*}
Finally, if $N \sim \textup{NegBinomial} (n_0, \mu_0)$, corresponding to model~\eqref{eq:EFPF_BB_NB}, then
\begin{equation*}
\begin{aligned}
 K_n &\sim \textup{NegBinomial}\left( n_0, \mu_0 \: p_n(\theta, \alpha)\right), \\
K_{m}^{(n)} \mid  \bm{Z}^{(n)} &\sim \textup{NegBinomial}\left(n_0 + k,  \frac{n_0 + k}{n_0 / \mu_0 + p_n(\theta, \alpha)}p_m(\theta + n, \alpha)\{1 - p_n(\theta, \alpha)\} \right).
\end{aligned}
\end{equation*}
\end{proposition}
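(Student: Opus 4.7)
The plan is to reduce everything to the \textsc{bb} case and then use the mixture representation of Proposition~\ref{thm:battiston_second}, so that the Poisson and negative binomial formulas follow by standard binomial thinning identities rather than by tedious direct calculation.

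First I would treat the \textsc{bb} model by substituting the $V_{n,k}$'s of~\eqref{eq:EFPF_BB} into Theorem~\ref{thm:Kn_general}. The $\alpha<0$ branch supplies the factor $\{((\theta+\alpha)_n-(\theta)_n)/\alpha\}^k$, which rewrites as $\{-(\theta)_n p_n(\theta,\alpha)/\alpha\}^k$; combined with $V_{n,k}=\binom{N}{k}\{-\alpha/(\theta+\alpha)_n\}^k\{(\theta+\alpha)_n/(\theta)_n\}^N$, all Pochhammer symbols and signs cancel to leave $\binom{N}{k}p_n^k(1-p_n)^{N-k}$, the $\mathrm{Binomial}(N,p_n(\theta,\alpha))$ mass function. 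For $K_m^{(n)}\mid\bm{Z}^{(n)}$ I would analogously plug the \textsc{bb} $V_{n,k}$'s into Theorem~\ref{thm:Kmn_general_V} together with $c_{n,m}(\theta,\alpha)=-(\theta+\alpha)_n p_m(\theta+n,\alpha)/\alpha$; the splitting $(\theta+\alpha)_{n+m}=(\theta+\alpha)_n(\theta+\alpha+n)_m$ and the identity $\binom{k+y}{k}\binom{N}{k+y}/\binom{N}{k}=\binom{N-k}{y}$ collapse the expression to $\binom{N-k}{y}\{p_m(\theta+n,\alpha)\}^y\{1-p_m(\theta+n,\alpha)\}^{N-k-y}$, giving $\mathrm{Binomial}(N-k,p_m(\theta+n,\alpha))$.

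Next, by Proposition~\ref{thm:battiston_second}, for $\alpha<0$ the models~\eqref{eq:EFPF_BB_Poiss} and~\eqref{eq:EFPF_BB_NB} are obtained by mixing the \textsc{bb} over the prior on $N$. Since $K_n\mid N\sim\mathrm{Binomial}(N,p_n(\theta,\alpha))$, I would marginalise using the well-known binomial thinning identities: if $N\sim\textup{Poisson}(\lambda)$ then the thinned variable is $\textup{Poisson}(\lambda p_n)$, and if $N\sim\textup{NegBinomial}(n_0,\mu_0)$ then it is $\textup{NegBinomial}(n_0,\mu_0 p_n)$; both can be verified in one line with probability generating functions. For the predictive distribution of $K_m^{(n)}\mid\bm{Z}^{(n)}$ I would first compute the posterior law of $N-k\mid\bm{Z}^{(n)}$ from~\eqref{eq:gamma_N_posterior}: the Poisson prior yields $N-k\mid\bm{Z}^{(n)}\sim\textup{Poisson}(\lambda(1-p_n(\theta,\alpha)))$, and the negative binomial prior yields a $\textup{NegBinomial}(n_0+k,\mu_0')$ law with $\mu_0'=(n_0+k)\mu_0(1-p_n(\theta,\alpha))/(n_0+\mu_0 p_n(\theta,\alpha))$, both identified by matching kernels. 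Since conditionally $K_m^{(n)}\mid\bm{Z}^{(n)},N\sim\mathrm{Binomial}(N-k,p_m(\theta+n,\alpha))$, a second application of binomial thinning (this time with parameter $p_m(\theta+n,\alpha)$) gives the stated Poisson and negative binomial expressions.

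The main obstacle is not any single step but the sheer bookkeeping in the \textsc{bb} derivation for $K_m^{(n)}$: keeping track of the $(\theta+\alpha)_n$ versus $(\theta)_n$ factors, the negative signs generated by $\alpha<0$, and checking that the ratio $V_{n+m,k+y}/V_{n,k}$ multiplied by $\{(\theta+n)_m\}^{k+y}c_{n,m}^y$ simplifies precisely to a binomial kernel in $y$. A useful shortcut I would invoke throughout is the identity $p_{n+m}(\theta,\alpha)-p_n(\theta,\alpha)=\{1-p_n(\theta,\alpha)\}\,p_m(\theta+n,\alpha)$, which is immediate from the definition $p_n(\theta,\alpha)=1-(\theta+\alpha)_n/(\theta)_n$ and which makes the hierarchical and direct routes coincide, so I can cross-check either derivation against the other.
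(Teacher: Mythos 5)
Your proof is correct, and it takes a mildly but genuinely different route from the paper's. The paper's (very terse) proof specializes Theorem~\ref{thm:Kn_general} and Theorem~\ref{thm:Kmn_general_V} separately for each of the three models, i.e.\ it substitutes the $V_{n,k}$'s of \eqref{eq:EFPF_BB}, \eqref{eq:EFPF_BB_Poiss} and \eqref{eq:EFPF_BB_NB} directly into the general formulas and simplifies three times. You instead do the Pochhammer algebra only once, for the \textsc{bb} case, and then obtain the Poisson and negative binomial statements by marginalizing over the prior on $N$ (for $K_n$) and over the posterior $N\mid\bm{Z}^{(n)}$ from \eqref{eq:gamma_N_posterior} (for $K_m^{(n)}$), using binomial thinning. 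All the ingredients check out: the cancellation giving $\mathrm{Binomial}(N,p_n)$, the identity $\binom{k+y}{k}\binom{N}{k+y}/\binom{N}{k}=\binom{N-k}{y}$, the thinning facts for Poisson and negative binomial laws, and the posterior laws of $N-k$ (which agree with Proposition~\ref{prop:asymp} and \eqref{eq:posterior_Nprime_NB}). The one thing worth making explicit if you write this up is the justification for the conditional step $K_m^{(n)}\mid\bm{Z}^{(n)},N\sim\mathrm{Binomial}(N-k,p_m(\theta+n,\alpha))$ inside the mixture model: this holds because, conditionally on $N$, the entire exchangeable sequence follows the \textsc{bb} model, so the \textsc{bb} predictive applies verbatim; the marginalization over $N\mid\bm{Z}^{(n)}$ is then a plain application of the tower property. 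Your approach buys less bookkeeping and a transparent probabilistic interpretation (the Poisson/negative binomial results are literally thinned versions of the prior/posterior on $N$); the paper's direct route avoids having to argue about the hierarchical structure at all, at the cost of repeating the algebra. The cross-check identity $p_{n+m}(\theta,\alpha)-p_n(\theta,\alpha)=\{1-p_n(\theta,\alpha)\}\,p_m(\theta+n,\alpha)$ is a nice touch and is indeed correct.
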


 Proposition~\ref{prop_pred_poiss_nb} is the first result of this kind for feature allocation models with finitely many features. Moreover, it underscores the high degree of interpretability and transparency in Gibbs-type feature allocation models. Specifically, when $N$ is deterministic the prior expectation for  $\mathds{E}(K_n \mid N) = N p_n(\theta,\alpha)$ depends on $p_n(\theta,\alpha) = 1-(\theta+\alpha)_n / (\theta)_n$. This probability may be understood as the expected fraction of features observed in a sample of size $n$, out of a pool of $N$ possible features. This elegant interpretation holds true also for the Poisson and negative binomial cases, as $\mathds{E}(K_n) = \lambda p_n(\theta,\alpha)$ and $\mathds{E}(K_n) = \mu_0 p_n(\theta,\alpha)$, respectively, so that $p_n(\theta,\alpha)$ can be read as the expected fraction of observed features out of the expected total number of features $\lambda$ and $\mu_0$. The conditional distributions for $K_m^{(n)}$ may also be expressed in terms of the probability $1 - p_n(\theta,\alpha)$, accounting for the old features, and the ``updated'' probability $p_m(\theta + n, \alpha)$, representing the future sample. If $N$ is deterministic, the Bayesian estimator for the number of unseen features is $\mathds{E}(K_{m}^{(n)} \mid  \bm{Z}^{(n)}, N) = (N - k)p_m(\theta + n, \alpha)$, in which $p_m(\theta + n, \alpha)$ is the expected fraction of features in a future sample of size $m$, out of the remaining $N - k$  features. In the Poisson and negative binomial cases, where the total number of features $N$ is learned from the data, the predictive mechanism is more sophisticated. The Bayesian estimators for the number of hitherto unseen features, under a Poisson and negative binomial prior for $N$, are
\begin{equation*}
\mathds{E}(K_{m}^{(n)} \mid  \bm{Z}^{(n)}) = 
\begin{cases}
\lambda\, p_m(\theta + n, \alpha)\{1 - p_n(\theta, \alpha)\}, & \text{(Poisson mixture)}, \\
\frac{n_0 + k}{n_0 / \mu_0 + p_n(\theta, \alpha)}\, p_m(\theta + n, \alpha)\{1 - p_n(\theta, \alpha)\}, & \text{(negative binomial mixture)}.
\end{cases}
\end{equation*}
Note that $\mathds{E}(\lambda \mid \bm{Z}^{(n)}) = (n_0 + k)/\{n_0 / \mu_0 + p_n(\theta, \alpha)\}$ is the posterior expectation of $\lambda$ in the Poisson-gamma representation of the negative binomial, where $\mu_0 / n_0$ denotes the overdispersion. It is important to emphasize how the sampling information $\bm{Z}^{(n)}$ affects the distribution of the statistic $K_{m}^{(n)}$. In the Poisson mixture, the distribution of $K_{m}^{(n)}$ depends on the initial sample $\bm{Z}^{(n)}$ only through the sample size $n$. Conversely, under the negative binomial mixture, $K_{m}^{(n)}$ also depends on the number of distinct features $K_n = k$ observed in the initial sample. Lastly, Proposition \ref{prop_pred_poiss_nb} offers a key motivation for adopting the proposed mixtures of \textsc{bb}s over the standard \textsc{bb} model. In the latter, the uncertainty around $K_m^{(n)}$ monotonically decreases for large values of $m$, and in the limit $K_m^{(n)}$ degenerates to a point mass at $N - k$. This eventual shrinkage of uncertainty is a very undesirable behavior. In contrast, under both the Poisson and negative binomial mixtures of \textsc{bb}s, the variance of $K_m^{(n)}$ monotonically increases with $m$, yielding a more realistic representation of uncertainty. 

Finally, we study the total number of features $N$, i.e., the richness, which coincides with the $\alpha$-diversity, since $K_n \dconv N$ as $n\rightarrow \infty$. The posterior distribution of the richness is one of the main quantities of interest in ecology and, as outlined in Proposition~\ref{prop_asy_anyprior}, it may be equivalently obtained by extrapolating the accumulation curve, specifically by considering $\lim_{m\rightarrow \infty} K_m^{(n)} + k \mid \bm{Z}^{(n)}$.  For the \textsc{bb} model, the posterior distribution of $N$ is deterministic, since $N$ is known a priori. This yields critical issues as highlighted in simulation study A of the supplementary material (Figures \ref{fig:richness_points_simulation_A} and \ref{fig:richness_distr_simulation_A}).  For the proposed mixtures of \textsc{bb}s, the following result can be easily proved using Proposition \ref{prop_pred_poiss_nb} and noting that $\lim_{m \rightarrow \infty} p_m(\theta + n, \alpha) = 1$.

\begin{proposition}\label{prop:asymp} Suppose the \textsc{efpf} is in product form~\eqref{eq:EFPF_product} with $\alpha < 0$. Then  $K_m^{(n)}\mid \bm{Z}^{(n)} \dconv N'$ with $N' + k \dequal N \mid \bm{Z}^{(n)}$, as $m \rightarrow \infty$. Let $p_n(\theta, \alpha) = 1-(\theta+\alpha)_n / (\theta)_n$, if $N \sim \textup{Poisson} (\lambda)$, then  
\begin{equation*}
N' \sim \textup{Poisson}\left(\lambda (1 - p_n(\theta, \alpha) \right),
 \end{equation*}
 whereas if $N \sim \textup{NegBinomial} (n_0, \mu_0)$, then
    \begin{equation} \label{eq:posterior_Nprime_NB}
       N' \sim \textup{NegBinomial}\left(n_0 + k,  \frac{n_0 + k}{n_0 / \mu_0 + p_n(\theta, \alpha)}\{1 - p_n(\theta, \alpha)\} \right).
    \end{equation}
\end{proposition}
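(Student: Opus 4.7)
The plan is to split the proposition into two claims: the convergence $K_m^{(n)} \mid \bm{Z}^{(n)} \dconv N'$ with $N' + k \dequal N \mid \bm{Z}^{(n)}$, and the explicit identification of the law of $N'$ under each prior. The first claim is essentially free from Proposition \ref{prop_asy_anyprior}: since $\alpha < 0$ we have $c_\alpha(m) = 1$, and Proposition \ref{diversity}(i) identifies the $\alpha$-diversity $S_\alpha$ with $N$. Combining these gives $K_m^{(n)} + k \mid \bm{Z}^{(n)} \dconv N \mid \bm{Z}^{(n)}$, whence $N' \dequal (N - k) \mid \bm{Z}^{(n)}$, i.e.\ $N' + k \dequal N \mid \bm{Z}^{(n)}$.

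For the explicit distributions, the most direct route is to pass to the limit $m \to \infty$ in the formulas for $K_m^{(n)} \mid \bm{Z}^{(n)}$ in Proposition \ref{prop_pred_poiss_nb}. Since $\alpha < 0$, the Pochhammer ratio $(\theta+n+\alpha)_m/(\theta+n)_m$ vanishes as $m \to \infty$, so $p_m(\theta+n,\alpha) \to 1$. Substituting into the Poisson mean $\lambda\, p_m(\theta+n,\alpha)\{1 - p_n(\theta,\alpha)\}$ yields $\lambda\{1 - p_n(\theta,\alpha)\}$, while the negative binomial mean $(n_0+k)\,p_m(\theta+n,\alpha)\{1 - p_n(\theta,\alpha)\}/\{n_0/\mu_0 + p_n(\theta,\alpha)\}$ converges to the expression in \eqref{eq:posterior_Nprime_NB}. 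Both the Poisson and negative binomial families depend continuously on their mean parameters (the dispersion $n_0+k$ being fixed in the latter), so convergence of parameters entails convergence in distribution.

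A self-contained alternative is to compute $N \mid \bm{Z}^{(n)}$ directly from \eqref{eq:gamma_N_posterior}. For the Poisson prior, after cancelling $y!$ and setting $z = y - k$ one recognises a kernel proportional to $[\lambda(\theta+\alpha)_n/(\theta)_n]^z/z!$, which identifies a Poisson with mean $\lambda\{1 - p_n(\theta,\alpha)\}$. For the negative binomial prior, one substitutes $p_N(y) = \binom{y+n_0-1}{y} p^{n_0}(1-p)^y$ with $p = n_0/(\mu_0+n_0)$, and after the same shift $z = y - k$ rearranges factorials into the negative binomial kernel $\binom{z+n_0+k-1}{z} q^z$ with $q = (1-p)(\theta+\alpha)_n/(\theta)_n$. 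The main obstacle lies in this last step: writing $(n_0+k-1)!/(n_0-1)! = (n_0)_k$ as a $z$-independent factor that can be absorbed into the normalisation, and then translating the implicit ``success probability'' $1 - q$ into the paper's mean-dispersion parameterisation, i.e.\ dispersion $n_0+k$ and mean $(n_0+k)q/(1-q) = (n_0+k)\{1 - p_n(\theta,\alpha)\}/\{n_0/\mu_0 + p_n(\theta,\alpha)\}$, matching \eqref{eq:posterior_Nprime_NB}. Everything else is a direct invocation of prior results.
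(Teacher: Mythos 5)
Your proposal is correct and essentially matches the paper's proof: the convergence claim is obtained by invoking Proposition \ref{prop_asy_anyprior} with $c_\alpha(m)=1$ in the $\alpha<0$ regime, and the explicit laws follow from the posterior of $N$ in \eqref{eq:gamma_N_posterior}, which is exactly your ``self-contained alternative'' (your factorial bookkeeping and the translation of the success probability $1-q$ into the mean--dispersion parameterisation both check out). Your primary route --- passing to the limit $m\to\infty$ in the exact Poisson and negative binomial laws of Proposition \ref{prop_pred_poiss_nb} using $p_m(\theta+n,\alpha)\to 1$ --- is the argument the paper itself sketches informally just before the statement, and is equally valid given continuity of these families in their mean parameter.
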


Note that, as before, the results have an appealing interpretation in terms of the probability $1 - p_n(\theta,\alpha)$. The Bayesian estimators for the richness, under a Poisson and negative binomial prior for $N$, are the posterior expectations
\begin{equation*}
\mathds{E}(N \mid  \bm{Z}^{(n)}) =
\begin{cases}
k + \lambda \{1 - p_n(\theta, \alpha)\},  & \text{(Poisson mixture),}\\
k + \frac{n_0 + k}{n_0 / \mu_0 + p_n(\theta, \alpha)}\{1 - p_n(\theta, \alpha)\}, &\text{(negative binomial mixture).}    
\end{cases}
\end{equation*}
To make these formulas operative, one needs to specify values for $\lambda$, $\theta$, and $\alpha$ and possibly for $n_0$ and $\mu_0$. We remark that all these quantities have a very transparent interpretation. Therefore, their elicitation may be based on prior information in many applied contexts. In our numerical studies, we will propose an empirical Bayes approach to set the hyperparameters. Moreover, in the supplementary material we investigate a fully Bayesian procedure in which we specify suitable priors for the hyperparameters.

\subsection{Hierarchical formulation for mixtures of beta Bernoulli models}\label{sec:hierarchical_BB}

We now specialize the general posterior characterization in Theorem~\ref{thm:post_proc} for the Poisson and negative binomial mixtures of \textsc{bb} models. Recall that when $\alpha < 0$ the underlying random measure $\tilde{\mu}$ for any Gibbs-type feature model in the hierarchical representation \eqref{eq:model_generative} can be described as $\mutilde \mid N = \sumj \tilde{q}_j \delta_{\tilde{X}_j}$, with $\tilde{q}_j \iid \dbeta(-\alpha,\theta+\alpha)$ and $\tilde{X}_j \iid P_0$, for some prior distribution $N \sim P_N$. When $N$ follows a Poisson distribution, this can be compactly expressed in terms of completely random measures. Specifically, in the the Poisson mixture of \textsc{bb}s, the statistical model which induces the \textsc{efpf} in equation \eqref{eq:EFPF_BB_Poiss} may be written as 
\begin{equation}\label{model:bb_poisson_process}
\begin{aligned}
    Z_i \mid \mutilde &\iid \BP(\mutilde), \qquad i \ge 1\\
    \mutilde & \sim \textsc{crm} (\rho ; P_0),
\end{aligned}
\end{equation}
where the intensity measure $\rho(\D s)$ is finite since $\alpha < 0$ and is proportional to the density of a $\text{Beta}(-\alpha, \theta + \alpha)$ distribution $\rho (\D s) = \lambda \Gamma(\theta)/\{\Gamma(-\alpha)\Gamma(\theta+\alpha)\} s^{-\alpha-1} (1-s)^{\theta+\alpha-1} \D s$. This result follows by a construction of completely random measures with finitely many jumps $\tilde{q}_j$, whose number has a Poisson distribution \citep{Dal08}. As for the negative binomial mixture, the \textsc{efpf} \eqref{eq:EFPF_BB_NB} is associated with the following statistical model involving negative binomial processes:
\begin{equation}\label{model:bb_nb_process}
\begin{aligned}
    Z_i \mid \mutilde &\iid \BP(\mutilde), \qquad i \ge 1,\\
    \mutilde & \sim \NB (n_0, \rho ; P_0),
\end{aligned}
\end{equation}
where $\rho(\D s) = (\mu_0 / n_0) \Gamma(\theta)/\{\Gamma(-\alpha)\Gamma(\theta+\alpha)\} s^{-\alpha-1} (1-s)^{\theta+\alpha-1} \D s$. The proof of these results is straightforward, but it is provided in Section \ref{app:proofs_mixtures_BB} for completeness. The following corollary is a consequence of Theorem~\ref{thm:post_proc}, and it characterizes the posterior distribution of $\mutilde$ in both cases.
\begin{corollary}\label{thm:post_process_bb_mixture}
Suppose $\bm{Z}^{(n)} = (Z_1,\dots,Z_n)$ follows models~\eqref{model:bb_poisson_process} or \eqref{model:bb_nb_process}, then the posterior distribution of $\mutilde$, given $\bm{Z}^{(n)}$, satisfies the decomposition $\mutilde\mid\bm{Z}^{(n)} \dequal \mutilde' + \mutilde^*$ in \eqref{eq:post_proc_dist_eq}, where $\mutilde'$ and $\mutilde^* $ are independent random measures such that $\mutilde^* $ is distributed as in Theorem~\ref{thm:post_proc}. Under model \eqref{model:bb_poisson_process} $\tilde \mu'  \sim \textsc{crm}(\rho' ; P_0)$, with updated intensity $\rho'(\D s) = \lambda \Gamma(\theta)/\{\Gamma(-\alpha)\Gamma(\theta+\alpha)\} s^{-\alpha-1} (1-s)^{\theta+ n +\alpha-1} \D s$, whereas under model \eqref{model:bb_nb_process} $\tilde \mu'  \sim \textsc{nb} (n_0 + k, \rho' ; P_0)$, with updated intensity $\rho'(\D s) = 1/\{n_0 / \mu_0 + p_n(\theta, \alpha)\} \Gamma(\theta)/\{\Gamma(-\alpha)\Gamma(\theta+\alpha)\} s^{-\alpha-1} (1-s)^{\theta+ n +\alpha-1} \D s$.
\end{corollary}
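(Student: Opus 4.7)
The plan is to derive both statements as specializations of Theorem~\ref{thm:post_proc}(i), since both models have $\alpha<0$. From Theorem~\ref{thm:post_proc}(i), the distributional identity $\tilde{\mu}\mid\bm{Z}^{(n)}\dequal\tilde{\mu}'+\tilde{\mu}^*$ already holds, the law of $\tilde{\mu}^*$ is the same in every Gibbs-type feature model, and $\tilde{\mu}'\mid N'\dequal\sum_{j=1}^{N'}q'_j\delta_{\tilde{X}_j}$ with $q'_j\iid\dbeta(-\alpha,\theta+\alpha+n)$ and $\tilde{X}_j\iid P_0$, where $N'+k\dequal N\mid\bm{Z}^{(n)}$. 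So the only two things left to do are (a) compute $N\mid\bm{Z}^{(n)}$ from \eqref{eq:gamma_N_posterior} under each prior for $N$, and (b) recognize the resulting random measure $\tilde{\mu}'$ as either a \textsc{crm} or an \textsc{nb} with the stated intensity.

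For the Poisson mixture, I plug $p_N(y)\propto\lambda^y/y!$ into \eqref{eq:gamma_N_posterior} and obtain, after simplification of $y!/(y-k)!$, that $N'=N-k\mid\bm{Z}^{(n)}\sim\textup{Poisson}(\lambda\,(\theta+\alpha)_n/(\theta)_n)$. A \textsc{crm} with a finite intensity measure $\rho'(\ddr s)P_0(\ddr x)$ of total mass $T$ admits a Poisson-number-of-atoms representation: $N'\sim\textup{Poisson}(T)$ atoms with jumps distributed according to $\rho'/T$ and labels $\iid P_0$. Taking $T=\lambda\,(\theta+\alpha)_n/(\theta)_n$ and $\rho'/T$ equal to the density of $\dbeta(-\alpha,\theta+\alpha+n)$, a direct gamma-function manipulation---using $(\theta+\alpha)_n/(\theta)_n=\Gamma(\theta+\alpha+n)\Gamma(\theta)/[\Gamma(\theta+\alpha)\Gamma(\theta+n)]$---cancels the factors $\Gamma(\theta+n)$ and $\Gamma(\theta+\alpha+n)$ and yields precisely the $\rho'(\ddr s)$ of the statement. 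This identifies $\tilde{\mu}'\sim\textsc{crm}(\rho';P_0)$.

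For the negative binomial mixture, plugging $p_N(y)\propto\binom{y+n_0-1}{y}[\mu_0/(\mu_0+n_0)]^y$ into \eqref{eq:gamma_N_posterior} and reindexing $y=y'+k$ gives, after cancellation, $N'\mid\bm{Z}^{(n)}\sim\textup{NegBinomial}(n_0+k,\mu')$ with $\mu'=(n_0+k)(1-p_n(\theta,\alpha))/\{n_0/\mu_0+p_n(\theta,\alpha)\}$, matching \eqref{eq:posterior_Nprime_NB}. To pass from this to a \textsc{nb} random measure, I use the Poisson--gamma mixture representation: if $\tilde{c}'\sim\dgamma(n_0+k,1)$ and, conditionally on $\tilde{c}'$, $\tilde{\mu}'\sim\textsc{crm}(\tilde{c}'\rho';P_0)$, then its Laplace functional equals $\{1+\int\!\int(1-e^{-sf(x)})\rho'(\ddr s)P_0(\ddr x)\}^{-(n_0+k)}$, which is exactly \eqref{eq:NB_Laplace_functional} with parameter $n_0+k$. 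Marginally, this representation produces $N'\mid\tilde{c}'\sim\textup{Poisson}(\tilde{c}'T')$ atoms with iid $\dbeta(-\alpha,\theta+\alpha+n)$ jumps, where $T'$ is the total mass of $\rho'$; matching $\mathds{E}(N')=(n_0+k)T'=\mu'$ pins down $T'=(1-p_n(\theta,\alpha))/\{n_0/\mu_0+p_n(\theta,\alpha)\}$. A computation analogous to the Poisson case then recovers the stated form of $\rho'$.

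The only non-trivial step is the Poisson--gamma identification in the negative binomial case, where one must justify that marginalizing over $\tilde{c}'$ in the \textsc{crm} representation produces a bona fide negative binomial random measure in the sense of \eqref{eq:NB_Laplace_functional}, and verify that the scaling of $\rho'$ is consistent with the posterior mean $\mu'$ derived from \eqref{eq:gamma_N_posterior}; the remaining gamma-function bookkeeping to match the exact form of $\rho'$ is routine.
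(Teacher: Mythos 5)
Your proposal is correct, and its skeleton coincides with the paper's: both reduce everything to Theorem~\ref{thm:post_proc}(i), so that the only work is the posterior law of $N'$ (which you compute correctly from \eqref{eq:gamma_N_posterior}, matching Proposition~\ref{prop:asymp}) and the identification of $\tilde{\mu}'$ as a \textsc{crm} or \textsc{nb} process. Where you diverge is in that last identification step. The paper writes the Laplace functional of $\tilde{\mu}'$ as $\sum_t \bigl[\int\!\int e^{-sg(x)}\,\mathrm{Beta}(-\alpha,\theta+\alpha+n)(\ddr s)\,P_0(\ddr x)\bigr]^t p_{N'}(t)$ and sums the exponential (resp.\ negative binomial) series to recognize the Laplace functional of a \textsc{crm} (resp.\ of \eqref{eq:NB_Laplace_functional}) directly. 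You instead match the atomic description: a finite-activity \textsc{crm} is a Poisson number of atoms with i.i.d.\ jumps from the normalized intensity, and an \textsc{nb} process is its gamma mixture, so that conditionally on the atom count the jumps are still i.i.d.\ from $\rho'/T'$ and the count is negative binomial with dispersion $n_0+k$ and mean $(n_0+k)T'$. This buys you a proof with no series manipulation, at the cost of invoking (correctly) that these two-stage descriptions uniquely characterize the processes in question --- facts the paper itself states and uses when deriving \eqref{model:bb_poisson_process}--\eqref{model:bb_nb_process}, so nothing external is needed. Your bookkeeping is right: the total mass of the stated $\rho'$ is $\lambda(1-p_n(\theta,\alpha))$ in the Poisson case and $(1-p_n(\theta,\alpha))/\{n_0/\mu_0+p_n(\theta,\alpha)\}$ in the negative binomial case, and since the negative binomial family is determined by its dispersion and mean, matching $\mathds{E}(N')=\mu'$ together with the fixed dispersion $n_0+k$ does pin down the law of $N'$, so the argument is complete.
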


The relevance of this corollary is mostly theoretical. In practice, to simulate a posterior value for $\tilde{\mu}$ one relies on the hierarchical representation of Theorem~\ref{thm:post_proc}, given the availability of the posterior for $N$ given in Proposition~\ref{prop:asymp}. However, this posterior result unveils the central role of abstract constructions, such as completely random measures and negative binomial processes, even for allocation models with finitely many features. For example, it reveals that the lack of dependence on $K_n = k$ in the predictive structure of $K_m^{(n)}$, in the Poisson mixture model, follows because $\tilde{\mu}$ is distributed as \textsc{crm}s, as explained in \citet{Jam17} and \citet{Cam24}.

\section{Model fitting and simulation studies} \label{sec:comparison_models}

\subsection{Elicitation of the hyperparameters}\label{sec:parameter_elic}

We describe here an empirical Bayes approach to selecting the hyperparameters, albeit other Bayesian strategies could be considered. As for the two mixtures of \textsc{bb}s, we suggest to maximize the \textsc{efpf} \eqref{eq:EFPF_product} of a  \textsc{bb} model, obtained with $V_{n,k}$'s as in \eqref{eq:EFPF_BB}. This procedure provides us with an estimate $\hat{\alpha}$ and $\hat{\theta}$ of the values of $\alpha$ and $\theta$, together with an estimate $\hat{N}$ for the total number of features $N$. Then, in the Poisson mixture we set $\lambda = \mathds{E}(N) = \hat{N}$, whereas we set $\mu_0 = \mathds{E}(N) = \hat{N}$ in the negative binomial mixture of \textsc{bb}s. We remark that, differently from the Poisson mixture, the negative binomial mixture also allows us to specify the variance of the prior distribution on $N$. We argue that such variance should just reflect the practitioner's degree of uncertainty around the prior guess $\mathds{E}(N)$, possibly performing a sensitivity analysis for it. In our simulated scenarios, available in the supplementary material, we will consider additional choices of the prior expectation $\mathds{E}(N)$ for the mixtures of \textsc{bb}s, instead of specifying it through the data-driven approach just described. In such cases, we estimate the parameters $\alpha$ and $\theta$ by maximizing the model-specific \textsc{efpf}, that is \eqref{eq:EFPF_BB_Poiss} for the Poisson mixture, with $\lambda = \mathds{E}(N)$, and \eqref{eq:EFPF_BB_NB} for the negative binomial mixture, with $\mu_0 = \mathds{E}(N)$ and $n_0$ such that the desired prior variance is obtained. 

Along a similar argument, for the mixtures of \textsc{ibp}s we firstly propose to maximize the \textsc{efpf} \eqref{eq:EFPF_product}  with $V_{n,k}$'s as in \eqref{eq:EFPF_IBP} to find an estimate $\hat{\alpha}$ and $\hat{\theta}$ for the parameters $\alpha$, $\theta$, together with an estimate $\hat{\gamma}$ for the total mass $\gamma$.  Secondly, we choose the parameters of the prior for $\gamma$ by enforcing the condition $\E (\gamma)= \hat{\gamma}$. In particular, for the gamma mixture of \textsc{ibp}s, we assume $\gamma \sim \dgamma(a,b)$ and we set $\hat{\gamma} = \E(\gamma) = a/b$. Similarly to the negative binomial mixture of \textsc{bb}s, the prior variance of $\gamma$ can be specified according to the user's preferences, possibly exploring different values for robustness checks.

Finally, we remark that a fully Bayesian approach might be adopted, instead of the proposed empirical Bayes one. This consists of assuming prior distributions for parameters $\alpha$ and $\theta$ for all the mixtures. We exploit such a fully Bayesian approach in the real data analysis of  Section \ref{app:application_prior}, where we also show that posterior inferences obtained with the two procedures are coherent.

\subsection{Model-checking} \label{sec:model_check}

A preliminary step of our simulation studies and applications consists in the choice of the best model: either mixtures of \textsc{ibp}s or mixtures of \textsc{bb}s. 
The decision between the two classes pertains to the analyst. We propose two approaches to guide the selection of the best class of models: (i) a pair of visual procedures; (ii) a quantitative criterion for establishing which class of mixtures best fits the data. As for (i),
the first check relies on comparing the observed values $K_1, \ldots , K_n$ with the expected values $\E (K_1), \ldots , \E (K_n)$ under different models. 
Since the observed values  $K_1, \ldots , K_n$ refer to a particular ordering of the observations, in place of $K_1, \ldots , K_n$, we will consider the in-sample accumulation curve 
$K_1^\prime , \ldots , K_n^\prime $, obtained by averaging the number of distinct features over all possible orderings of the data. 
The second informal model check is based on the statistic $K_{n,r}$, i.e., the number of features observed with prevalence $r \geq 1$ in a sample of size $n$. To assess the  model performance, we compare the observed values $K_{n,1}, \ldots , K_{n,\bar r}$ and the expected  values $\E (K_{n,1}), \ldots ,\E ( K_{n,\bar r})$, until a certain $\bar r \leq n$, under different models' choices. While the empirical curves are always obtained from the data, the expected values depend on $\alpha$, $\theta$ and the prior mean of $N$ (resp. $\gamma$) if a mixture of \textsc{bb}s (resp. \textsc{ibp}s) is selected. As a consequence, if we adopt the  empirical Bayes approach described in Section \ref{sec:parameter_elic}  for parameters elicitation, then all the mixtures of \textsc{bb}s (resp. \textsc{ibp}s) have the same rarefaction curve and the same curve $\E(K_{n,r})$, for $r=1,\ldots,n$. By visual inspections, the previous model checks provide an indication of whether the mixtures of \textsc{bb}s or the mixtures of \textsc{ibp}s may be appropriate for the problem at the hand, which is the ultimate goal of our model selection. 

For a quantitative comparison of the goodness-of-fit between the two classes of mixtures, we rely on the (minimum) \emph{deviances} of the \textsc{bb} model and the \textsc{ibp} model, as representatives within the two classes. Given an observed dataset $\bm Z^{(n)}$ and a model described by parameters $\bm \theta$ (referred to as hyperparameters in our Bayesian setting), the (minimum) deviance is defined as $D(\hat{\bm \theta}) = -2 \log \mathcal{L}(\bm Z^{(n)} \mid \hat{\bm \theta}) = -2 \log \pi_n(m_1,\ldots, m_k \mid \hat{\bm \theta})$, where $\mathcal{L}$ denotes the likelihood of the model, i.e., the \textsc{efpf} in our case, and $\hat{\bm \theta}$ is the maximum likelihood estimate of $\bm \theta$. 
We suggest to compute the (minimum) deviances of the \textsc{bb} and the \textsc{ibp} model, and the one  yielding the smaller deviance is selected. Notably, since both models involve the same number of hyperparameters, comparing deviances yields the same conclusions as comparisons based on standard model selection criteria such as the Akaike information criterion (\textsc{aic}) and the Bayesian information criterion (\textsc{bic}).  

In general, we expect the conclusions drawn from the visual inspections of the curves and the quantitative information criteria to be consistent. We will show that the two criteria consistently support the same model selection decisions across all our simulation studies and real data analyses.

\subsection{Overview of simulation studies} \label{sec:simulation_study}

The goal of the simulation studies  is to showcase the prediction abilities of our models under different experiments. We stress that the class of Gibbs-type feature models with finitely many features also allows to perform inference on the total number of features $N$, corresponding to the $\alpha$-diversity. Specifically, in ecological applications, such quantity is referred to as taxon richness, which is a natural measure of biodiversity, as we will highlight in applications.

In Section \ref{app:simulation}, we extensively discuss three main simulation studies (A, B and C) to test the performance of our models. 
For each simulated dataset, we fix the parameters of the models via the empirical Bayes approach described in Section \ref{sec:parameter_elic}. As a second step, we apply the model-checking approaches we presented in Section \ref{sec:model_check} to conclude that either the mixtures of \textsc{bb}s or the mixtures of \textsc{ibp}s may be assumed to be correctly specified for the data at the hand. Experiments A and C correspond to situations where our model-checking clearly indicates that mixtures of \textsc{bb}s can be assumed to be correctly specified. Consequently, the assumption of finite species richness is plausible. Conversely, in experiment B, mixtures of \textsc{ibp}s best fit the data according to our model-checking procedures.
In all the cases, we focus on the prediction of the number of unseen features in an additional sample of size $m$. We present the predictions obtained via the selected models and compare them with a variation of the Good--Toulmin estimators (\textsc{gt}) from \cite{Chak19}. Additionally, in {experiments} A and C, we also compare with the well-known frequentist estimator from \cite{Cha14}, that is specifically designed under the assumption of finite species richness.  Our simulation studies indicate that the estimator of  \cite{Chak19} exhibits poor predictive performance and stability issues as $m$ grows. In general, our models show good predictive abilities, often outperforming the competitors.
In {experiments} A and C, we also address the estimation of the species richness $N$ and its uncertainty quantification. In this regard, our experiments highlight that the negative binomial mixture of \textsc{bb}s is usually more robust than the Poisson mixture under bad prior guesses on $N$.

\section{Assessing diversity in ecological applications}\label{sec:real_data}

The quantification of biological diversity is a central aspect of many ecological studies and an active research focus of ecology, due to its importance in many conservation strategies, in monitoring and management projects \citep{Cha14}. The most commonly employed and basic metric for biodiversity in a community is undoubtedly the species richness, namely the total number of species in the assemblage. Besides, another insightful characterization of the biological diversity of the assemblage may be provided by the asymptotic growth rate of the extrapolation curve $K_m^{(n)} + k \mid \bm{Z}^{(n)}$, for $m =1,2,\ldots$, and the associated $\alpha$-diversity (refer to Proposition \ref{prop_asy_anyprior} of Section \ref{sec:alpha_diversity}). In addition, these kinds of extrapolation problems are commonly faced in order to assess whether it is worth investing additional resources in looking for possibly new species. Specifically, ecologists may be interested in how many new species they are going to observe if they sample a number $m$ of additional plots. Based on such estimation, they might decide not to further analyze additional plots in the region if they expect to record a number of new species that are not worth the additional resources they are required to invest. Such information is naturally and straightforwardly available within our Bayesian framework, described by the posterior distribution of the statistic $K_m^{(n)}$, for $m \geq 1$.

Here, we illustrate how we address the aforementioned ecological research questions for two real-world datasets, which present different structural characteristics. We discuss the adequacy of the Poisson and negative binomial mixtures of \textsc{bb}s and the gamma mixture of \textsc{ibp}s, where the parameters are estimated via the empirical Bayes approach described in Section \ref{sec:parameter_elic}. Prediction and inference are then faced using the most appropriate model, selected through the model-checking described in Section \ref{sec:model_check}. {For a more exhaustive assessment of the models' predictive ability, we perform a data-holdout experiment in Section \ref{app:additional_plots}, where all the models are trained on half of the observed data, and predictions on the withheld data are compared. These analyses further support the decisions on model selection obtained through the two proposed procedures.}
In Section \ref{app:application_prior}, we also report posterior inferences obtained when 
we adopt a fully Bayesian approach for parameters' elicitation, showing that it leads to similar results obtained via the empirical Bayes procedure described in Section \ref{sec:parameter_elic}.

\subsection{Vascular plants in Danish forest}\label{sec:vascular_plants}

We consider the data collected in \cite{Mazz2016} concerning the forest of Lille Vildmose nature reserve in Denmark. Here, for each of the 102 forest plots object of the 2013 monitoring campaign, the species incidence (presence-absence) for four organism groups, i.e., epiphytic bryophytes, epiphytic lichens, vascular plants, and wood-inhabiting fungi, are measured. 
For the purpose of illustration, we focus on vascular plants, also analyzed in \cite{Mazz_paper2016}, where $k = 215$ distinct species are recorded on the $n=102$ plots. In Figure \ref{fig:accumulation_plants} of the supplementary material, we also report the taxon accumulation curve, which has clearly not yet reached convergence, thus the richness is certainly expected to be larger than the $215$ observed species.

In order to assess whether mixtures of \textsc{bb}s (finite species richness) or mixtures of \textsc{ibp}s (infinite species richness) are more appropriate in this context, we rely on {both the} visual inspections of the model-checking tools we introduced in Section \ref{sec:model_check} {and the quantitative assessment through the comparison of deviances}. From the plots of Figure \ref{fig:rare_knr_plants}, we argue that the mixtures of \textsc{ibp}s, which assume infinitely many features, are plausible models for such data, and are definitely more suitable than mixtures of \textsc{bb}s. {This claim is further supported by the comparison of deviances, with the \textsc{bb} model yielding $D(\hat{\bm \theta}) = 10320.1$ compared to  $D(\hat{\bm \theta}) = 10312.4$ for the \textsc{ibp} model. }
Therefore we focus on the gamma mixture of \textsc{ibp}s, and we consider two possible prior variances for $\gamma$, i.e., {$\Var(\gamma) \in \{1, 100\}$}. From Proposition \ref{prop_asy_km_gamma_3ibp}, the asymptotic growth rate of the curve $K_m^{(n)} + k \mid \bm{Z}^{(n)}$, for $m = 1,2,\ldots$, is of order $m^\alpha$, with an estimated rate of $\hat{\alpha} = 0.17$, and the posterior $\alpha$-diversity $S_\alpha'$ is gamma distributed; the expected value of $S_\alpha'$ equals $186.48$ for both the choices of the prior variance of $\gamma$, but we get {$\Var(S_\alpha') = 58.3$ if $\Var(\gamma) = 1$}, and $\Var(S_\alpha') = 158.9$ if $\Var(\gamma) = 100$.

\begin{figure}[tbp]
    \centering    
    \includegraphics[width = 0.49\linewidth]{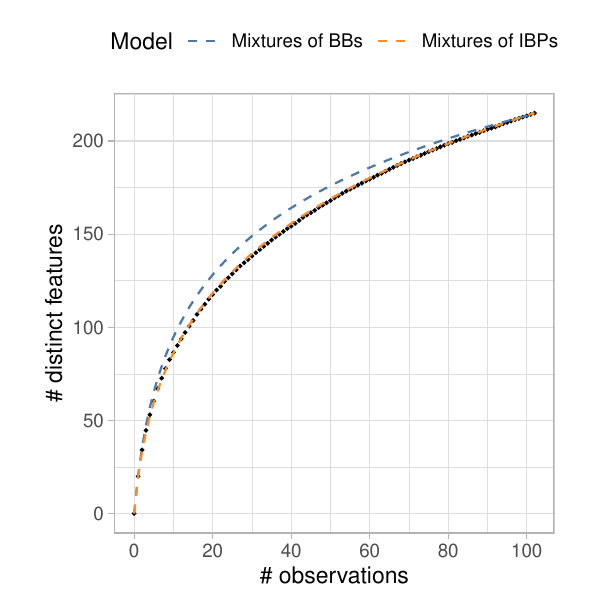}    
    \includegraphics[width = 0.49\linewidth]{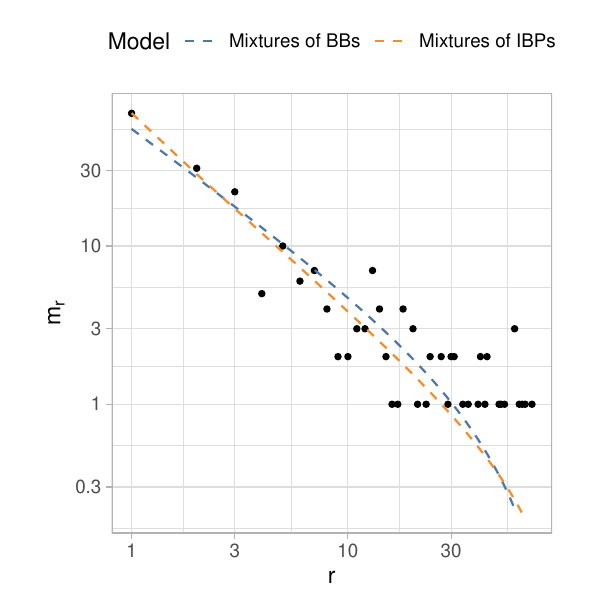}
    \caption{Left panel: the empirical accumulation curve (black dots) and the rarefaction curve of the models (blue and orange dashed lines). Right panel: the observed values of $K_{n,r}$ (black dots) compared with the {expected curve $\E(K_{n,r})$ of the models (blue and orange dashed lines). The right plot is in log-log scale; the orange (resp. blue) curves are identical for all the mixtures of \textsc{ibp}s (resp. \textsc{bb}s).}}
    \label{fig:rare_knr_plants}
\end{figure}

The extrapolation problem is addressed in Figure \ref{fig:extr_plants}, where we report the expected values and the $95\%$ credible intervals for the total number of species that might be observed in $m$ additional plots, given the observed collection of $n$ plots, i.e., $K_m^{(n)} + k \mid \bm{Z}^{(n)}$, with $k = 215$. The posterior point estimates are similar for the two selected prior variances, while the variability increases as the prior variance of  $\gamma$ increases.

\begin{figure}[!htb]
    \centering    
    \includegraphics[width = 0.49\linewidth]{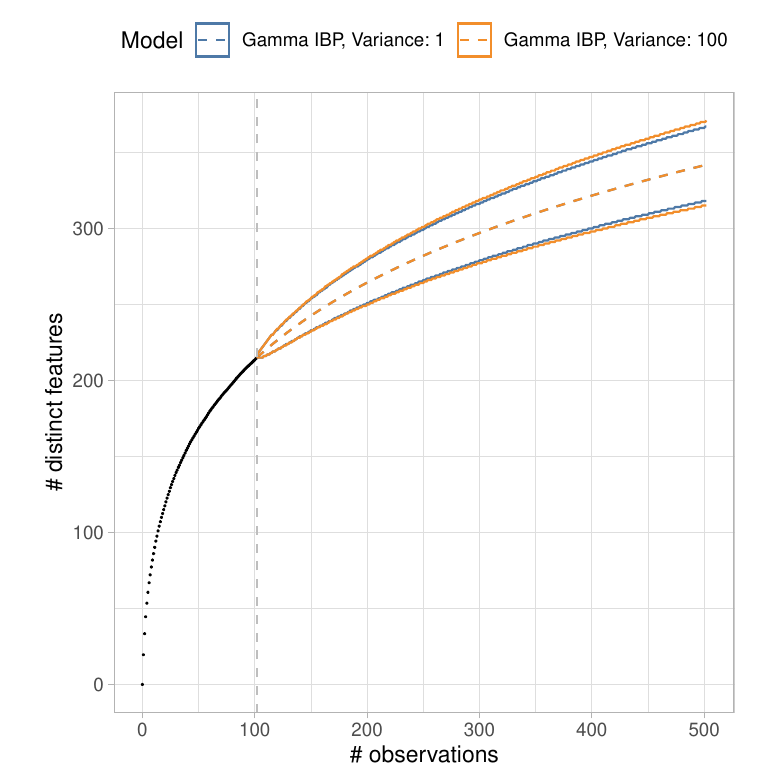}  
    
    \caption{Expected values and the $95\%$ credible intervals for $K_m^{(n)} + k \mid \bm{Z}^{(n)}$, with $k = 215$, for the gamma mixture of \textsc{ibp}s. The extrapolation horizon is $m=1,\ldots, 400$.}
    \label{fig:extr_plants}
\end{figure}

To provide a more quantitative answer to the extrapolation problem ecologists might be interested in, we report in Table \ref{table:K_mn_plants} the expected values and the credible intervals for the number of new species that are going to be observed if a number $m$ of additional plots is examined, for some values of $m$. Both the choices for the prior variance of $\gamma$ in the gamma mixture of \textsc{ibp}s lead to the same point-wise estimates: if ecologists are considering whether to analyze additional plots in the region, they should expect to find $6.50$ new species if they sample additional $m=10$ plots. Differences between the two gamma mixtures are visible for $m \in \{100, 1000\}$ in terms of their credible intervals.

\begin{table}[tbp]
\centering
\caption{Expected values $\mathds{E}(K_m^{(n)} \mid \bm{Z}^{(n)})$ and $95\%$ credible intervals (in brackets) for the statistic $K_m^{(n)} \mid \bm{Z}^{(n)}$, for $m \in \{1, 10, 100, 1000\}$, for the vascular plants in \cite{Mazz2016}.}
\begin{small}
\begin{tabular}{lrrrr}
\toprule
Gamma mixture of \textsc{ibp}s $(n = 102, k = 215)$ & $m = 1$ & $m = 10$ & $m = 100$  & $m 
= 1000$ \\ \midrule
Prior variance {$\text{Var}(\gamma) = 1$} & $0.673$ $[0,3]$  & $6.50$ $[2,12]$ & $50.1$ {$[36,65]$}  & $204$ {$[172, 237]$}    \\ 
Prior variance $\text{Var}(\gamma) = 100$  & $0.673$ $[0,3]$  & $6.50$ $[2,12]$ & $50.1$ $[35,66]$ & $204$ $[166, 244]$ \\ 
\bottomrule
\end{tabular}
\label{table:K_mn_plants}
\end{small}
\end{table}

\subsection{Trees in Barro Colorado Island
}\label{sec:bci_data}

As a second illustration, we analyze the presence-absence dataset of tree species in $n=50$ plots of one hectare in Barro Colorado Island, for a total of $k = 225$ observed species. The data are publicly available in the \textsc{vegan} package in \textsc{R}. In terms of richness estimation, exploring the taxon accumulation curve, reported in Figure \ref{fig:accumulation_bci} of the supplementary material, we may argue that it has not reached convergence yet, though the growth is rather slow. Overall, the species richness is expected to be larger than the $225$ observed species.

In order to select which model best fits the observed data, we perform the usual model-checking we introduced in Section \ref{sec:model_check}. {The visual inspection of Figure \ref{fig:rare_knr_bci} suggests that} the mixtures of \textsc{bb}s can be considered correctly specified for such data, while the mixtures of \textsc{ibp}s are not. {This preference for the mixtures of \textsc{bb}s is further supported by the comparison of deviances: $D(\hat{\bm \theta}) = 10245.6$ for the \textsc{bb} model, compared to $D(\hat{\bm \theta}) = 10266.9$ for the \textsc{ibp} model. }
Differently from the vascular plant data analyzed in the previous section, we thus claim that it is reasonable to assume that the species richness is finite. Hence we focus on the Poisson and negative binomial mixtures of \textsc{bb}s,  as for the latter, we analyze two choices for the prior variance of $N$, i.e., $\Var(N) = \mu_0 \times c$, for {$c \in \{10, 100\}$}.
\begin{figure}[tbp]
    \centering    
    \includegraphics[width = 0.49\linewidth]{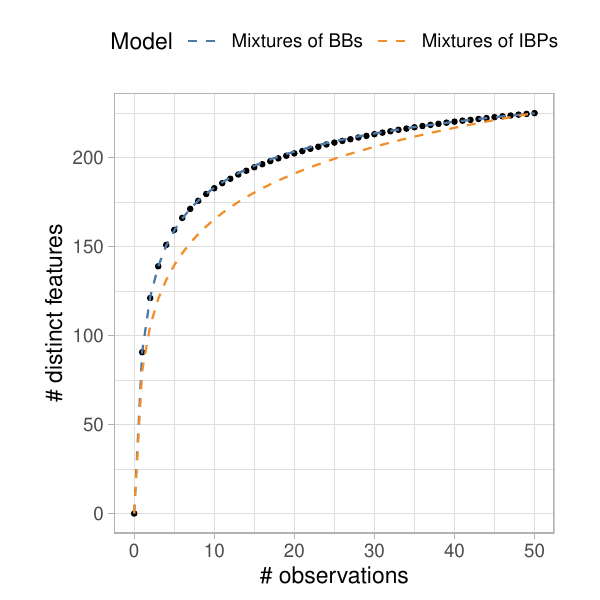}    
    \includegraphics[width = 0.49\linewidth]{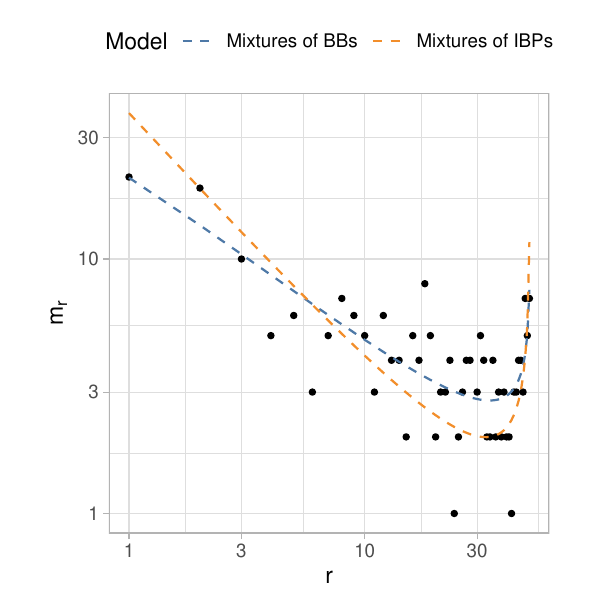}    
    \caption{Left panel: the empirical accumulation curve (black dots) and the rarefaction curve of the models (blue and orange dashed lines). Right panel: the observed values of $K_{n,r}$ (black dots) compared with the {expected curve $\E(K_{n,r})$ of the models (blue and orange dashed lines). The right plot is in log-log scale; the orange (resp. blue) curves are identical for all the mixtures of \textsc{ibp}s (resp. \textsc{bb}s).} }
    \label{fig:rare_knr_bci}
\end{figure}
In such contexts, the species richness represents the most natural measure of biodiversity of the assemblage, therefore there is interest in estimating it. In the left panel of Figure \ref{fig:bci_posterior_extr}, we report the posterior distribution of the species richness $N$, for the different mixtures of \textsc{bb}s. Specifically, the expected species richness is equal to $296.13$ for the Poisson mixture of \textsc{bb}s, with a credible interval equal to $[280, 313]$. For both the negative binomial mixtures, we get an expected species richness of $296.17$, with credible intervals~$[278, 316]$.

\begin{figure}[!htb]
    \centering    
    \includegraphics[width = 0.49\linewidth]{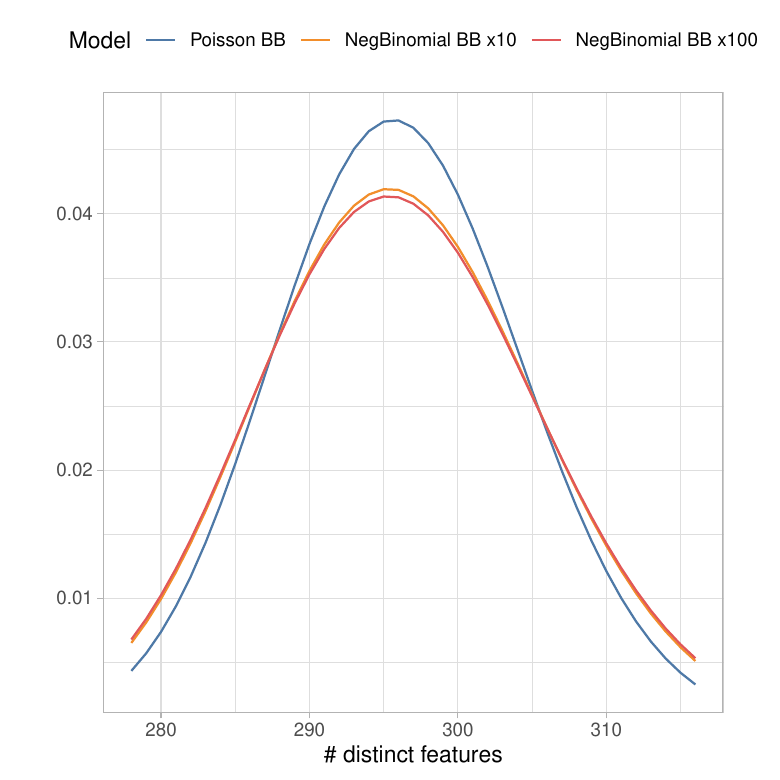}
    \includegraphics[width = 0.49\linewidth]{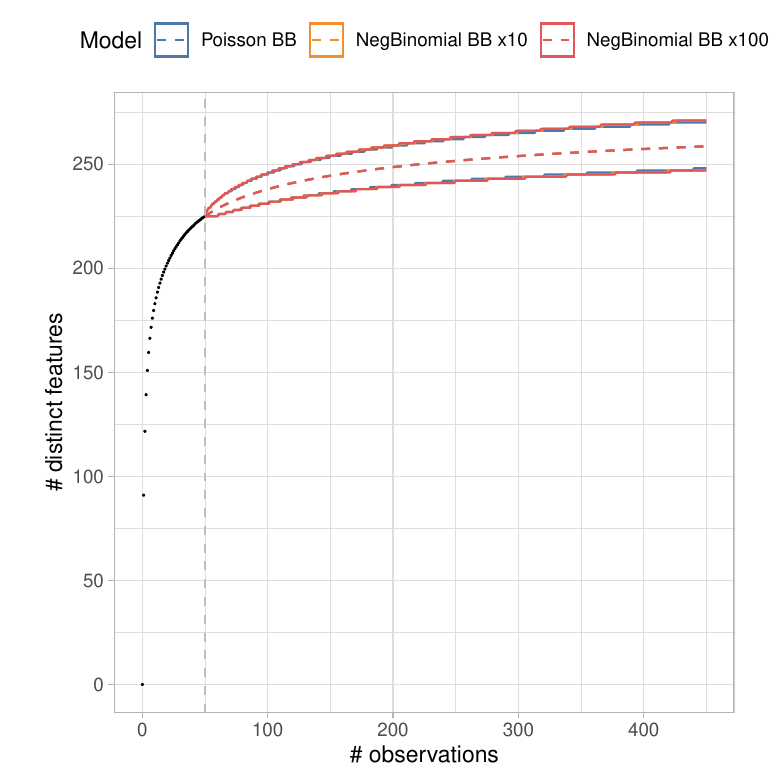} 
    \caption{Left panel: posterior distributions of the species richness $N$ for different mixtures of \textsc{bb}s. Right panel: expected values and  $95\%$ credible intervals for $K_m^{(n)} + k \mid \bm{Z}^{(n)}$, for different mixtures of \textsc{bb}s.}
    \label{fig:bci_posterior_extr}
\end{figure}

As far as the extrapolation problem is concerned, Figure \ref{fig:bci_posterior_extr} (right panel) reports the expected values and the $95\%$ credible intervals for the posteriors $K_m^{(n)} + k \mid \bm{Z}^{(n)}$, for $m=1,\ldots,400$, where $k = 225$. It can be noted that the expected number of new features grows rather slowly with the size of the additional sample $m$; moreover, from Proposition \ref{prop_asy_anyprior}, we remark that such a sequence $K_m^{(n)} + k \mid \bm{Z}^{(n)}$, $m = 1,2,\ldots$, converges to the posterior distribution of the species richness $N$. As we have just discussed, all the mixtures of \textsc{bb}s that we have fitted provide an expected richness of around  $296$.

We finally report the numerical values of expected values and the credible intervals for the number of new species that are going to be observed if a number $m$ of additional plots are sampled, for some values of $m$. All the mixtures of \textsc{bb}s fitted here provide the same point-wise estimates as well as the same credible intervals for $m \in \{1,10,100\}$: if ecologists are considering whether to analyze additional plots in the region, they should expect to find $0.41$ new species if they sample additional $m=1$ plots, $3.66$ new species for $m=10$ and $19.4$ new species for $m=100$. Moreover, the credible intervals are $[0,2]$ for $m=1$, $[0,8]$ for $m=10$ and $[11,29]$ for $m=100$. We can spot a difference among the models for $m = 1000$: the expected number of new species is $41.8$, with credible intervals equal to $[30,55]$ for the Poisson mixture and $[29,56]$ for the negative binomial mixtures.

\section{Discussion}

In the present paper, we analyzed Gibbs-type feature models, namely those models exhibiting an \textsc{efpf} in product form \eqref{eq:EFPF_product}. We argue that this class stands out among feature allocation models, similar to Gibbs-type priors, which play a fundamental role within species sampling models. We provided a comprehensive distribution theory for this class of models and a plethora of results in closed form. Additionally, we discussed two noteworthy examples: mixtures of \textsc{ibp}s and mixtures of \textsc{bb}s. While the first class assumes an infinite number of features in the population, the latter can be adopted when the total number of features is supposed to be finite. We also proposed coherent methods for parameters' elicitation and model selection.  Finally, we have emphasized the importance of our findings in addressing ecological problems, such as estimating biodiversity and quantifying species richness. The code is available online at the link: \url{https://github.com/LGhilotti/ProductFormFA}.

Our contribution could pave the way for several future research directions. First, recall that we introduced a class of Gibbs-type feature models for exchangeable observations. However, in some applied problems, data are divided into different, though related, groups and the assumption of partial exchangeability would be more appropriate. Hence, an interesting direction for future research involves defining and investigating feature allocation models in the presence of multi-sample data. Although numerous models are available for partially exchangeable data within the species setting \citep{Qui22}, work is still ongoing in the framework of feature allocation models; see, e.g., \citet{Teh10} for a few early examples. The availability of tractable classes of priors for grouped incidence data would enable the prediction for the number of shared species as well as the quantification of the so-called $\beta$-biodiversity, namely the heterogeneity among different ecological communities. Along these lines, the recent paper by \cite{stolf24} extends the \textsc{ibp} in defining a multivariate probit Indian buffet process.

Secondly, Gibbs-type are natural tools for modeling biodiversity when focusing on a single level of the Linnean taxonomy, such as \emph{family, genus} or \emph{species}. However, in modern sampling designs, each statistical unit often comprises a collection of $L$ different taxa, which are organized in a nested fashion; see, e.g., \citet{Zito23taxonomic}. As a crude exemplification, one might consider using a separate Gibbs-type feature model for each layer of the Linnean taxonomy. However, this approach would overlook the rich and informative nested structure of the data. Bayesian nonparametric models for such data are underdeveloped even for species sampling models, and there is ample room for new ideas and theoretical developments. 

Thirdly, a potentially impactful ramification of our results pertains to the usage of Gibbs-type feature models as a building block of more complex hierarchical models, i.e., when employed as a latent component. We refer to \cite{Gri11} for a general discussion. Among the potential applications of Gibbs-type feature models, it is worth mentioning their role in Bayesian factor analysis, in which $N$, in our notation, would represent the number of factors. The \textsc{ibp} has been successfully used in this context to incorporate sparsity for instance to model gene expression data \citep{Kno11}. A further example is given by \cite{Aye21}, who explored suitable extensions of the \textsc{ibp} to discover latent communities in network data. Our paper provides several alternatives to the \textsc{ibp} for Bayesian factor models and related applications. It is also worth mentioning that the mixtures of \textsc{bb}s, unlike the traditional \textsc{ibp} or the \textsc{bb}, enable the incorporation of prior opinions on the number of latent factors in Bayesian modeling. Work on these problems is deemed to be future research.

\section*{Acknowledgments}
{
The authors are extremely grateful to an Associate Editor and three anonymous referees for their constructive comments and valuable suggestions, which
led to a substantial improvement of the paper.
The authors  acknowledge the support from the Italian Ministry of University and Research (MUR), ``Dipartimenti di Eccellenza'' grant 2023-2027. 
FC and LG were supported by the European Union – Next Generation EU funds, component M4C2, investment 1.1., PRIN-PNRR 2022 (P2022H5WZ9). TR gratefully acknowledges the support from the European Research Council under the European Union’s Horizon 2020 research and innovation programme (grant agreement No 856506).
FC is a member of the Gruppo Nazionale per l’Analisi Matematica, la Probabilità e le loro Applicazioni (GNAMPA) of the Istituto Nazionale di Alta Matematica (INdAM).}

\bibliography{references}

\clearpage

\setcounter{equation}{0}
\renewcommand\theequation{S\arabic{equation}}
\renewcommand\theHequation{S\arabic{equation}}

\setcounter{theorem}{0}
\renewcommand\thetheorem{S\arabic{theorem}}
\renewcommand\theHtheorem{S\arabic{theorem}}

\setcounter{corollary}{0}
\renewcommand\thecorollary{S\arabic{corollary}}
\renewcommand\theHcorollary{S\arabic{corollary}}

\setcounter{proposition}{0}
\renewcommand\theproposition{S\arabic{proposition}}
\renewcommand\theHproposition{S\arabic{proposition}}

\setcounter{lemma}{0}
\renewcommand\thelemma{S\arabic{lemma}}
\renewcommand\theHlemma{S\arabic{lemma}}

\setcounter{figure}{0}
\renewcommand\thefigure{S\arabic{figure}}
\renewcommand\theHfigure{S\arabic{figure}}

\setcounter{table}{0}
\renewcommand\thetable{S\arabic{table}}
\renewcommand\theHtable{S\arabic{table}}

\setcounter{section}{0}
\renewcommand{\thesection}{S\arabic{section}}
\renewcommand{\theHsection}{S\arabic{section}}

\begin{center}
   \LARGE Supplementary material for:\\
    ``Bayesian analysis of product feature allocation models''
\end{center}

\section*{Organization of the supplementary material}

The supplementary material is organized as follows. In Section \ref{app:proof_general_results}, we prove all the general results of Section \ref{sec:general_theory}, valid for any Gibbs-type feature model. In addition, in Section \ref{app:msharedgeneralV} we provide general distributional results for the statistic $K_{n,r}$, denoting the number of features appearing exactly $r$ times across $n$ individuals. Section \ref{app:knr_results_spec} specializes the general distribution theory for $K_{n,r}$ under the Poisson and negative binomial mixtures of \textsc{bb}s and the gamma mixture of \textsc{ibp}s. Section \ref{app:proofs_mixtureIBPs} (resp. Section \ref{app:proofs_mixtures_BB})
contains all the proofs and details of Section \ref{sec:mixture_IBP} (resp. Section \ref{sec:BB_mixtures}). Our simulation studies are reported in Section \ref{app:simulation}. Finally, Section \ref{app:applications} includes additional plots referring to the ecological applications presented in the main paper; moreover, in Section \ref{app:application_prior}, we also propose a fully Bayesian approach for parameters' elicitation, alternative to the empirical Bayes one described in Section \ref{sec:parameter_elic}. We apply such a fully Bayesian approach to the two real data scenarios of Section \ref{sec:real_data} in comparison with the inference obtained via the empirical Bayes procedure.

\section{Proofs of Section \ref{sec:general_theory} and additional results}\label{app:proof_general_results}

\subsection{Proof of Theorem \ref{thm:predictive_general_V}}

In order to determine the predictive distribution, we have to pay attention to the ordering of the new features. More precisely, if $K_n=k$, there are $\binom{k+y}{k}$ possible ways to order the new $Y_{n+1}=y$ observed features with respect to the first $K_n$ ordered features. By taking into account this combinatorial coefficient, the predictive law equals
\begin{equation*}
p_n(y, a_1, \ldots , a_k) =  \frac{\binom{k+y}{k}  \pi_{n+1} (m_1 +a_1, \ldots , m_k+a_k, 1, \ldots , 1)}{\pi_n (m_1, \ldots , m_k)}
\end{equation*}
where the number $1$ at the numerator is repeated $y$ times. The numerator corresponds to the probability of the feature allocation induced by the observed sample $\bm{Z}^{(n)}$ and $Z_{n+1}$, by considering all the possible orders of the $y$ newly observed features. Besides, the denominator is the probability distribution of the feature allocation induced by the sample $\bm{Z}^{(n)}$. By substituting the specific expression of the \textsc{efpf} in product form \eqref{eq:EFPF_product}, we obtain
\begin{equation*}
    p_n(y, a_1, \ldots , a_k) =  \binom{k+y}{k} \frac{V_{n+1,k+y}}{V_{n,k}} (\theta+\alpha)_n^y \cdot \prod_{\ell=1}^k \frac{(1-\alpha)_{m_\ell + a_\ell -1}}{(1-\alpha)_{m_\ell-1}} \frac{(\theta+\alpha)_{n+1-m_\ell -a_\ell}}{(\theta+\alpha)_{n-m_\ell}} .
\end{equation*}
By observing that
\[
\frac{(1-\alpha)_{m_\ell + a_\ell -1}}{(1-\alpha)_{m_\ell-1}} = (m_\ell -\alpha)^{a_\ell},\quad \frac{(\theta+\alpha)_{n+1-m_\ell -a_\ell}}{(\theta+\alpha)_{n-m_\ell}} = (\theta+\alpha+n-m_\ell)^{1-a_\ell},
\]
we get the following expression:
\begin{equation*}
    \begin{split}
     p_n(y, a_1, \ldots , a_k) & =  \binom{k+y}{k} \frac{V_{n+1,k+y}}{V_{n,k}} \{(\theta+\alpha)_n\}^y (\theta+n)^k \prod_{\ell=1}^k \left(\frac{m_\ell-\alpha}{\theta+n}\right)^{a_\ell} \cdot \left( 1- \frac{m_\ell-\alpha}{\theta+n}\right)^{1-a_\ell}\\
     & =  \binom{k+y}{k} \frac{V_{n+1,k+y}}{V_{n,k}} \{(\theta+\alpha)_n\}^y (\theta+n)^k \prod_{\ell=1}^k \Bcr\left( a_\ell; \frac{m_\ell - \alpha}{\theta +n} \right)
     \end{split}
\end{equation*}
and the thesis follows.

\subsection{Proof of Corollary \ref{cor:notobs_new_general_V}}

The corollary easily follows from Theorem \ref{thm:predictive_general_V}. Indeed, the sample coverage equals
\begin{equation*}
    \begin{split}
       \P (Y_{n+1}=0 \mid \bm{Z}^{(n)}) &= \sum_{(a_1, \ldots , a_k ) \in \{ 0,1\}^k}  p_n(0, a_1, \ldots , a_k) 
       =  \frac{V_{n+1,k}}{V_{n,k}} (\theta+n)^k \prod_{\ell=1}^k \sum_{a_\ell =0}^1\Bcr\left( a_\ell; \frac{m_\ell - \alpha}{\theta +n} \right)\\
       & = \frac{V_{n+1,k}}{V_{n,k}} (\theta+n)^k
    \end{split}
\end{equation*}
where we have marginalized out the probability of observing old features from the predictive distribution.

\subsection{Proof of Theorem \ref{thm:Kn_general}}

We start by proving the following lemma, which provides the probability distribution of $K_n$ for any exchangeable feature allocation model admitting an \textsc{efpf}.
\begin{lemma}\label{lemma:general_Kn__on_EFPF}
For any feature allocation model admitting an \textsc{efpf}, the probability distribution of $K_n$ is
\begin{equation*}
    \P(K_n=k) = \sum_{\substack{m_\ell \in \{ 1,\ldots,n\}\\ \ell=1,\ldots,k} }\pi_n(m_1,\ldots,m_k) \binom{n}{m_1} \cdots \binom{n}{m_k} .
\end{equation*}
\end{lemma}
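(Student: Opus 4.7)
The plan is to enumerate ordered feature allocations with exactly $k$ observed features, then group them by their frequency vector and invoke the \textsc{efpf}. Specifically, by definition of $K_n$, I would write
\begin{equation*}
\P(K_n = k) = \sum_{f_n:\, K_n(f_n) = k} \P(F_n = f_n),
\end{equation*}
where the sum runs over all ordered feature allocations $f_n = (B_{n,1},\ldots,B_{n,k})$ with exactly $k$ non-empty subsets of $[n]$ (no distinctness constraint on the sets, since distinct feature labels $X_\ell$ may well share the same subset of individuals, as the labels are a.s.\ distinct thanks to $P_0$ being diffuse). Applying $\P(F_n = f_n) = \pi_n(m_1,\ldots,m_k)$ with $m_\ell = \# B_{n,\ell}$, the probability depends on $f_n$ only through the frequency vector.

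Next, I would partition the sum according to the frequency vector $(m_1,\ldots,m_k) \in \{1,\ldots,n\}^k$. For a fixed $(m_1,\ldots,m_k)$, I need to count the number of ordered tuples $(B_{n,1},\ldots,B_{n,k})$ with $\#B_{n,\ell} = m_\ell$ for all $\ell$. Since each slot $\ell$ is filled independently by any subset of $[n]$ of size $m_\ell$, this count is exactly $\prod_{\ell=1}^k \binom{n}{m_\ell}$. Hence
\begin{equation*}
\P(K_n = k) = \sum_{(m_1,\ldots,m_k) \in \{1,\ldots,n\}^k} \pi_n(m_1,\ldots,m_k) \prod_{\ell=1}^k \binom{n}{m_\ell},
\end{equation*}
which is the claimed identity.

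The only conceptual step that requires care is the combinatorial bookkeeping: one must recognize that the ordered feature allocation records the sets in the order of feature appearance (so $B_{n,\ell}$ is tied to label $X_\ell$) and that no constraint of distinctness is imposed between the $B_{n,\ell}$'s. Once this is clear, the product $\prod_{\ell=1}^k \binom{n}{m_\ell}$ follows from independent choices, and symmetry of $\pi_n$ makes the sum well-defined regardless of repeated $m_\ell$ values. No appeal to exchangeability or Kolmogorov consistency is needed here---the mere existence of an \textsc{efpf} suffices.
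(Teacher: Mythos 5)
Your proposal is correct and follows essentially the same argument as the paper: decompose the event $\{K_n = k\}$ over ordered feature allocations, group by the frequency vector $(m_1,\ldots,m_k)$, and count the $\prod_{\ell=1}^k \binom{n}{m_\ell}$ ordered tuples realizing each vector. Your explicit remark that no distinctness constraint is imposed on the sets $B_{n,\ell}$ is a useful clarification but does not change the substance of the argument.
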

\begin{proof}
The event $\{K_n=k\}$ corresponds to the union of all feature allocations of the type $F_n = (B_{n,1},\ldots,B_{n,k})$, where $m_\ell = |B_{n,\ell}|$ takes any value in the set $ \{1,\ldots,n\}$, for any $ \ell =1,\ldots,k$. 
For a specific configuration of $(m_1, \ldots , m_k)$, there are several feature allocations $F_n$ such that 
$m_\ell = |B_{n,\ell}|$, as $\ell=1, \ldots , k$. Indeed, there are $\binom{n}{m_1}$ different ways to choose the indexes in the set $[n]$ to define $B_{n,1}$, similarly there are $\binom{n}{m_2}$ different ways to choose the set $B_{n,2}$, etc.. Hence, the probability of observing all feature allocations having predetermined block sizes $(m_1, \ldots , m_k)$ equals $\pi_n(m_1,\ldots,m_k) \binom{n}{m_1} \cdots \binom{n}{m_k}$. As a consequence, the probability $\P(K_n=k)$ can be obtained by summing over all the possible configurations of the vector  $(m_1,\ldots,m_k) \in \{1,\ldots,n\}^k$.
\end{proof}

We now exploit Lemma \ref{lemma:general_Kn__on_EFPF} to find the probability distribution of $K_n$ for any Gibbs-type feature allocation model:
\begin{equation} \label{eq:PKn_proof1}
\begin{aligned}
    \P(K_n=k) &= \sum_{\substack{m_\ell \in \{ 1,\ldots,n \}\\ \ell=1,\ldots,k} } V_{n,k} \prodl (1-\alpha)_{m_\ell -1} (\theta+\alpha)_{n-m_\ell}\binom{n}{m_\ell}\\
    &= V_{n,k} \prodl \sum_{m_\ell=1}^n \binom{n}{m_\ell} (1-\alpha)_{m_\ell-1} (\theta+\alpha)_{n-m_\ell}\\
    &= V_{n,k} \left[ \sum_{i=1}^n \binom{n}{i} (1-\alpha)_{i-1} (\theta+\alpha)_{n-i} \right]^k .
\end{aligned}
\end{equation}
We now consider the case $\alpha <0$ and $\alpha \in [0,1)$ separately.

\noindent
\textbf{Case $\bm{\alpha<0}$}. For these values of $\alpha$, the last sum in \eqref{eq:PKn_proof1} can be expressed as
\begin{equation*}
    \sum_{i=1}^n \binom{n}{i} (1-\alpha)_{i-1} (\theta+\alpha)_{n-i} = \frac{\Gamma(-\alpha)}{\Gamma(1-\alpha)}\sum_{i=0}^n \binom{n}{i} (-\alpha)_{i} (\theta+\alpha)_{n-i} + \frac{(\theta+\alpha)_n}{\alpha}.
\end{equation*}
Thanks to the Chu-Vandermonde identity, we obtain
\begin{equation*}
    \frac{\Gamma(-\alpha)}{\Gamma(1-\alpha)} \sum_{i=0}^n \binom{n}{i} (-\alpha)_{i} (\theta+\alpha)_{n-i} + \frac{(\theta+\alpha)_n}{\alpha} = -\frac{(\theta)_n}{\alpha}  + \frac{(\theta+\alpha)_n}{\alpha} 
\end{equation*}
and the thesis follows for $\alpha <0$.\\

\noindent
\textbf{Case $\bm{\alpha \in [0,1)}$.}
For positive values of $\alpha$, the sum in \eqref{eq:PKn_proof1} can be expressed as
\begin{equation*}
\begin{split}
    \sum_{i=1}^n \binom{n}{i} (1-\alpha)_{i-1} &(\theta+\alpha)_{n-i} = 
     \sum_{i=1}^n \binom{n}{i} \frac{\Gamma(i-\alpha)}{\Ga(1-\alpha)} \frac{\Ga(\theta+\alpha+n-i)}{\Ga(\theta+\alpha)} \cdot \frac{\Ga(\theta+n)}{\Ga(\theta+n)} \\
     &= \frac{\Ga(\theta+n)}{\Ga(1-\alpha) \Ga(\theta+\alpha) } \sum_{i=1}^n \binom{n}{i} B(i-\alpha, \theta+\alpha+n-i),
     \end{split}
\end{equation*}
where $B(a,b)$ denotes the Euler's beta function evaluated at $a,b>0$. 
Thanks to the integral representation of the beta function, one obtains:
\begin{equation*}
\begin{split}
  \sum_{i=1}^n \binom{n}{i} (1-\alpha)_{i-1} &(\theta+\alpha)_{n-i} = \frac{\Ga(\theta+n)}{\Ga(1-\alpha) \Ga(\theta+\alpha) } \sum_{i=1}^n \binom{n}{i} \int_0^1 x^{i-\alpha-1} (1-x)^{\theta+\alpha+n-i-1} \D x\\
     &= \frac{\Ga(\theta+n)}{\Ga(1-\alpha) \Ga(\theta+\alpha) }  \int_0^1 x^{-\alpha-1} (1-x)^{\theta+\alpha+n-1} \sum_{i=1}^n \binom{n}{i} \left(\frac{x}{1-x}\right)^i  \D x\\
     &= \frac{\Ga(\theta+n)}{\Ga(1-\alpha) \Ga(\theta+\alpha) }  \int_0^1 x^{-\alpha-1} (1-x)^{\theta+\alpha+n-1} \left[ \left(\frac{x}{1-x} + 1\right)^n -1\right]  \D x\\
     &= \frac{\Ga(\theta+n)}{\Ga(1-\alpha) \Ga(\theta+\alpha) }  \int_0^1 x^{-\alpha-1} (1-x)^{\theta+\alpha+n-1} \left[ \left(\frac{1}{1-x}\right)^n -1\right]  \D x\\
     &= \frac{\Ga(\theta+n)}{\Ga(1-\alpha) \Ga(\theta+\alpha) }  \int_0^1 x^{-\alpha-1} (1-x)^{\theta+\alpha-1} \left[ 1-(1-x)^n \right]  \D x     .
\end{split}
\end{equation*}
By virtue of the following equality 
\[
1-(1-x)^n = x\sum_{i=0}^{n-1} (1-x)^i,
\]
we obtain
\begin{equation*}
\begin{aligned}
    \frac{\Ga(\theta+n)}{\Ga(1-\alpha) \Ga(\theta+\alpha) }  &\int_0^1 x^{-\alpha-1} (1-x)^{\theta+\alpha-1} \left[ 1-(1-x)^n \right]  \D x  \\
    &= \frac{\Ga(\theta+n)}{\Ga(1-\alpha) \Ga(\theta+\alpha) }  \int_0^1 x^{-\alpha+1 -1} (1-x)^{\theta+\alpha-1} \sum_{i=0}^{n-1} (1-x)^i  \D x  \\
    &= \frac{\Ga(\theta+n)}{\Ga(1-\alpha) \Ga(\theta+\alpha) }  \sum_{i=0}^{n-1} \int_0^1 x^{-\alpha+1 -1} (1-x)^{\theta+\alpha+ i-1}  \D x  \\
    &= \frac{\Ga(\theta+n)}{\Ga(1-\alpha) \Ga(\theta+\alpha) }  \sum_{i=0}^{n-1} \frac{\Ga(i+\theta+\alpha) \Ga(1-\alpha)}{\Ga(i+\theta+1)}\\
    &= \frac{\Ga(\theta+n)}{\Ga(1-\alpha) \Ga(\theta+\alpha) }  \sum_{i=1}^{n} \frac{\Ga(i+\theta+\alpha-1) \Ga(1-\alpha)}{\Ga(i+\theta)}\\
    &= \frac{\Ga(\theta+n)}{ \Ga(\theta+1) }  \sum_{i=1}^{n} \frac{(\theta+\alpha)_{i-1}}{(\theta+1)_{i-1}} = (\theta+1)_{n-1} \sum_{i=1}^{n} \frac{(\theta+\alpha)_{i-1}}{(\theta+1)_{i-1}}    
\end{aligned}
\end{equation*}
and the thesis follows by substituting the previous expression in \eqref{eq:PKn_proof1}.

\subsection{Distributional results for the number of $r$-shared features $K_{n,r}$}
\label{app:msharedgeneralV}

Here we provide distributional results for the statistic $K_{n,r}$, where $r \in \{1,\ldots,n\}$, denoting the number of features appearing exactly $r$ times among $n$ individuals. 
We start by proving the following lemma, which provides the probability distribution of $K_{n,r}$ for any feature allocation model admitting an \textsc{efpf}. 
\begin{lemma}\label{lemma:sample_cov_on_EFPF}
For any feature allocation model admitting an \textsc{efpf}, the distribution of $K_{n,r}$ equals 
\begin{equation} \label{eq:Mnr_general}
\begin{split}
  &  \P(K_{n,r} = y )\\
  & \qquad = \binom{n}{r}^y \Bigg\{ \pi_n(r,\ldots,r) + \sum_{k=y+1}^\infty  \binom{k}{y} \sum_{\substack{m_\ell \in \{ 1,\ldots,n\} \\ m_\ell \neq r\\
    \ell=1,\ldots,k-y}} \pi_n(r,\ldots,r,m_1,\ldots,m_{k-y}) \prod_{\ell=1}^{k-y}\binom{n}{m_\ell} \Bigg\}
    \end{split}
\end{equation}
where $(r,\ldots,r)$ in the previous expression is a vector of length $y$.
\end{lemma}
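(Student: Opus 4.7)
The strategy mirrors the proof of Lemma~\ref{lemma:general_Kn__on_EFPF}: decompose the event $\{K_{n,r}=y\}$ as a disjoint union over all ordered feature allocations $F_n$ realising it, then group those allocations according to the total number of features $K_n=k$ and the ordered vector of feature frequencies $(m_1,\dots,m_k)$. Since $\pi_n$ depends only on these frequencies, the probability of any specific allocation with frequency profile $(m_1,\dots,m_k)$ equals $\pi_n(m_1,\dots,m_k)$, and the number of ordered allocations realising that profile is $\prod_{\ell=1}^{k}\binom{n}{m_\ell}$ (one chooses which $m_\ell$ out of $n$ individuals form each block $B_{n,\ell}$, independently across blocks).

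Given $K_n=k\ge y$, the event $\{K_{n,r}=y\}$ requires exactly $y$ of the $k$ coordinates of $(m_1,\dots,m_k)$ to equal $r$ and the remaining $k-y$ coordinates to lie in $\{1,\dots,n\}\setminus\{r\}$. I would first choose the $\binom{k}{y}$ possible positions for the $y$ coordinates equal to $r$; by the symmetry of $\pi_n$, fixing them in the first $y$ positions does not change the probability weight. Summing over the admissible frequencies of the remaining coordinates then gives, for each $k\ge y$,
\begin{equation*}
\binom{k}{y}\,\binom{n}{r}^{y}\sum_{\substack{m_\ell\in\{1,\dots,n\}\setminus\{r\}\\ \ell=1,\dots,k-y}}\pi_n(r,\dots,r,m_1,\dots,m_{k-y})\prod_{\ell=1}^{k-y}\binom{n}{m_\ell}.
\end{equation*}

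Summing over $k\ge y$ and isolating the $k=y$ term (for which the inner sum degenerates to the single summand $\pi_n(r,\dots,r)$ with $y$ entries equal to $r$, and $\binom{y}{y}=1$) yields the claimed expression with $\binom{n}{r}^{y}$ factored out. The main bookkeeping subtlety, which I would spell out explicitly, is the use of symmetry of $\pi_n$ to replace the sum over which positions hold frequency $r$ by the single combinatorial factor $\binom{k}{y}$, together with the observation that the factor $\binom{n}{r}^{y}$ is common to every term and can be pulled out of the double sum. No technical obstacle is anticipated, as the argument is purely combinatorial.
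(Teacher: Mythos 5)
Your proposal is correct and follows essentially the same route as the paper's proof: decompose $\{K_{n,r}=y\}$ over ordered allocations grouped by $K_n=k$ and the frequency profile, use symmetry of $\pi_n$ to fix the $y$ blocks of size $r$ in the first positions at the cost of the factor $\binom{k}{y}$, count configurations via $\binom{n}{r}^y\prod_\ell\binom{n}{m_\ell}$, and isolate the degenerate $k=y$ term. No gaps.
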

\begin{proof}
The event $\{K_{n,r} = y \}$ corresponds to all random ordered feature allocations of the type $F_n = (B_{n,1},\ldots,B_{n,k})$,  where $k\geq y$ and there are exactly $y$ sets out of the $B_{n, \ell}$'s with cardinality $r$.\\
Consider $k > y$, and suppose that  the first $y$ sets are those with cardinality $r$, i.e., $|B_{n, \ell}|=r$, for all  ${\ell}=1,\ldots,y$, and that the remaining sets $B_{n,y+\ell}$, as $ \ell =1,\ldots,k-y$, have cardinalities $m_\ell \in \{1,\ldots,n\}$,  where $  m_\ell \neq r$. There are $\binom{n}{r}$ different ways to choose the indexes in $[n]$ to construct each of the first $y$ blocks $B_{n, \ell}$, as $\ell =1, \ldots , y$. Analogously, there are $\binom{n}{m_\ell}$ different ways to define $B_{n, y+\ell}$, as $\ell=1, \ldots , k-y$. Thus, the probability of observing all feature allocations having the first $y (< k)$ blocks with cardinality $r$ equals
\[
  \pi_n(r,\ldots,r,m_1,\ldots,m_{k-y}) \binom{n}{r}^y \binom{n}{m_1}\cdots \binom{n}{m_{k-y}} .  
\]
The previous probability refers to all the feature allocations where the first $y$ blocks have cardinality $r$, it is apparent that we have to multiply by the factor $\binom{k}{y}$ to take into account all the possible rearrangements of the $y$ blocks with cardinality $r$. When $k=y$,   the probability that all the blocks of the feature allocation have cardinality $r$ equals $\pi_n(r,\ldots,r)\binom{n}{r}^y$, where $(r,\ldots,r)$ is a vector of length $y$.
The expression of $\P(K_{n,r} = y ) $ can be obtained by summing the previous probabilities when $k \geq y$:
\begin{equation*}
\begin{aligned}
     \P(K_{n,r} = y ) &=\pi_n(r,\ldots,r)\binom{n}{r}^y + \\
     &\quad \sum_{k=y+1}^\infty   \sum_{\substack{m_\ell \in \{ 1,\ldots,n\}\\ m_\ell \neq r\\
    \ell=1,\ldots,k-y}} \binom{k}{y} \pi_n(r,\ldots,r,m_1,\ldots,m_{k-y}) \binom{n}{r}^y \prod_{\ell=1}^{k-y}\binom{n}{m_\ell}
\end{aligned}
\end{equation*}
and the thesis now follows.
\end{proof}

We now state our main result of the section by specializing the previous lemma to the class of Gibbs-type feature allocation models.
\begin{proposition}[Number of $r$-shared features $K_{n,r}$]\label{prop:msharedgeneralV}
Suppose the \textsc{efpf} is in product form \eqref{eq:EFPF_product}. Then, for any $k \geq 0$,
\begin{equation*}
  \P(K_{n,r} = k) = \left\{ \binom{n}{r} (1-\alpha)_{r-1} (\theta+\alpha)_{n-r}\right\}^k \sum_{j=0}^\infty \binom{j + k}{k} V_{n,j + k} \{t_{n,r}(\theta, \alpha)\}^k,
\end{equation*}
where, if $\alpha<0$,
\begin{equation*}
    t_{n,r}(\theta, \alpha) =         -\frac{(\theta)_n}{\alpha} + \frac{(\theta+\alpha)_n}{\alpha} - \binom{n}{r} (1-\alpha)_{r-1} (\theta+\alpha)_{n-r},
\end{equation*}
whereas, if $ \alpha \in [0,1)$,
\begin{equation*}
   t_{n,r}(\theta, \alpha) =  (\theta+1)_{n-1} \: g_n(\theta, \alpha) - \binom{n}{r} (1-\alpha)_{r-1} (\theta+\alpha)_{n-r}.
\end{equation*}
\end{proposition}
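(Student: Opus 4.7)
The plan is to start from Lemma~\ref{lemma:sample_cov_on_EFPF}, which already expresses $\P(K_{n,r}=k)$ in terms of the \textsc{efpf}, and then specialize it to the product form \eqref{eq:EFPF_product}. Rename the target value as $y$ temporarily to avoid conflict with the summation index $k$ of that lemma. For any configuration of the form $(r,\ldots,r,m_1,\ldots,m_{k-y})$ with $y$ repeated values of $r$, the Gibbs-type \textsc{efpf} factorizes as
\begin{equation*}
\pi_n(r,\ldots,r,m_1,\ldots,m_{k-y}) = V_{n,k}\,\{(1-\alpha)_{r-1}(\theta+\alpha)_{n-r}\}^y\,\prod_{\ell=1}^{k-y}(1-\alpha)_{m_\ell-1}(\theta+\alpha)_{n-m_\ell}.
\end{equation*}
The factor depending on $r$ alone is constant in the inner summation, so it can be pulled out, contributing $\{\binom{n}{r}(1-\alpha)_{r-1}(\theta+\alpha)_{n-r}\}^y$ once combined with the $\binom{n}{r}^y$ already present in the lemma.

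Next, I would exchange sum and product: since the $m_\ell$'s in the inner sum range independently over $\{1,\ldots,n\}\setminus\{r\}$, the $(k-y)$-fold sum factorizes as
\begin{equation*}
\left(\sum_{\substack{m=1\\ m\neq r}}^{n}\binom{n}{m}(1-\alpha)_{m-1}(\theta+\alpha)_{n-m}\right)^{\!k-y}.
\end{equation*}
The key identification is that the full sum (without the restriction $m\neq r$) has already been evaluated in the proof of Theorem~\ref{thm:Kn_general}: it equals $\{(\theta+\alpha)_n-(\theta)_n\}/\alpha$ when $\alpha<0$ and $(\theta+1)_{n-1}g_n(\theta,\alpha)$ when $\alpha\in[0,1)$. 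Subtracting the single term $m=r$ yields precisely $t_{n,r}(\theta,\alpha)$ in each of the two regimes.

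Combining these computations gives
\begin{equation*}
\P(K_{n,r}=y) = \{\tbinom{n}{r}(1-\alpha)_{r-1}(\theta+\alpha)_{n-r}\}^y\left(V_{n,y}+\sum_{k=y+1}^{\infty}\binom{k}{y}V_{n,k}\{t_{n,r}(\theta,\alpha)\}^{k-y}\right),
\end{equation*}
where the isolated $V_{n,y}$ term comes from the $\pi_n(r,\ldots,r)$ contribution in Lemma~\ref{lemma:sample_cov_on_EFPF}, corresponding to no additional blocks. The final step is the reindexing $j=k-y$, which sweeps $j\ge 1$ in the summation, and incorporates the isolated term as $j=0$ (for which $\binom{j+y}{y}=1$ and $\{t_{n,r}\}^0=1$). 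Renaming $y$ back to $k$ yields the stated formula. I do not anticipate any real obstacle: the proof is essentially a bookkeeping exercise that reuses the combinatorial identity already established in the proof of Theorem~\ref{thm:Kn_general}, and both the $\alpha<0$ and $\alpha\in[0,1)$ cases are handled uniformly once $t_{n,r}(\theta,\alpha)$ is identified.
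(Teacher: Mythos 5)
Your proof is correct and follows essentially the same route as the paper's: it starts from Lemma~\ref{lemma:sample_cov_on_EFPF}, substitutes the product-form \textsc{efpf}, pulls the $r$-block contribution out front, factorizes the inner multiple sum, and identifies the resulting single sum as the quantity already evaluated in the proof of Theorem~\ref{thm:Kn_general} minus the $m=r$ term, which is exactly $t_{n,r}(\theta,\alpha)$. The only discrepancy is that your (correct) derivation produces the exponent $j$ on $t_{n,r}(\theta,\alpha)$ inside the sum, whereas the proposition as printed shows $\{t_{n,r}(\theta,\alpha)\}^{k}$; this is a typo in the statement, since the paper's own proof likewise arrives at $[\,\cdot\,]^{z}$ with $z$ the running summation index.
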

\begin{proof}
We exploit Lemma \ref{lemma:sample_cov_on_EFPF} to evaluate $\P(K_{n,r} = y )$ for the Gibbs-type feature allocation models. A plain application of Equation \eqref{eq:Mnr_general} leads to
\begin{equation*}
\begin{aligned}
    \P(K_{n,r} = y ) &= \left[ \binom{n}{r} (1-\alpha)_{r-1} (\theta+\alpha)_{n-r} \right]^y  \\
    &\qquad \times\Bigg\{ V_{n,y} +  \sum_{z=1}^\infty  \binom{z+y}{y} V_{n,z+y}  \sum_{\substack{m_\ell    \in  \{ 1,\ldots,n\}\\ m_\ell \neq r\\
    \ell=1,\ldots,k-y}}  \prod_{\ell=1}^{z}\binom{n}{m_\ell}(1-\alpha)_{m_\ell-1} (\theta+\alpha)_{n-m_\ell} \Bigg\} \\
    &= \left[ \binom{n}{r} (1-\alpha)_{r-1} (\theta+\alpha)_{n-r} \right]^y \times\\
    &\qquad \times \sum_{z=0}^\infty  \binom{z+y}{y} V_{n,z+y}  \Bigg[\sum_{\substack{i \in \{ 1,\ldots,n \}\\ i \neq r}}  \binom{n}{i}(1-\alpha)_{i-1} (\theta+\alpha)_{n-i}\Bigg]^z  \\
        &= \left[ \binom{n}{r} (1-\alpha)_{r-1} (\theta+\alpha)_{n-r} \right]^y \times\\
    &\qquad \times \sum_{z=0}^\infty  \binom{z+y}{y} V_{n,z+y}  \Bigg[\sum_{i=1}^n  \binom{n}{i}(1-\alpha)_{i-1} (\theta+\alpha)_{n-i}   - \binom{n}{r}(1-\alpha)_{r-1} (\theta+\alpha)_{n-r}   \Bigg]^z  .
\end{aligned}
\end{equation*}
We can now apply a similar strategy as in the proof of Theorem \ref{thm:Kn_general} to get the expressions for the case  $\alpha<0$ and $\alpha\in [0,1)$.
\end{proof}


\subsection{Additional specialized results for novel models}\label{app:knr_results_spec}

The general expression for the number of $r$-shared features $K_{n,r}$ presented in Proposition \ref{prop:msharedgeneralV} can be specialized in the case of gamma mixture of \textsc{ibp}s and Poisson and negative binomial mixtures of \textsc{bb}s, leading to standard and well-known probability distributions. In the following proposition, we report the specialized results for the mixtures of \textsc{bb}s.

\begin{proposition} 
Suppose the \textsc{efpf} is in product form~\eqref{eq:EFPF_product} with $\alpha < 0$ and let $b_{n,r}(\theta,\alpha) := - \alpha \binom{n}{r}  (1-\alpha)_{r-1} (\theta +\alpha)_{n-r} / (\theta)_n $. If $N$ is fixed, corresponding to the \textsc{bb} model~\eqref{eq:EFPF_BB}, then
\begin{equation*}
K_{n,r} \mid N \sim \textup{Binomial} \left(N, b_{n,r} \right).
\end{equation*}
If instead $N \sim \textup{Poisson} (\lambda)$, corresponding to model~\eqref{eq:EFPF_BB_Poiss}, then
\begin{equation*}
K_{n,r} \sim \textup{Poisson}\left(  \lambda \cdot b_{n,r} \right).
\end{equation*}
Finally, if $N \sim \textup{NegBinomial} (n_0, \mu_0)$, corresponding to model~\eqref{eq:EFPF_BB_NB}, then
\begin{equation*}
 K_{n,r} \sim \textup{NegBinomial}\left( n_0, \mu_0 \cdot b_{n,r}  \right).
\end{equation*}
\end{proposition}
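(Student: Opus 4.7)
The plan is to exploit the hierarchical representation of the \textsc{bb} model described in Section~\ref{sec:hier_formulation} and Section~\ref{sec:hierarchical_BB}, under which all three claims become transparent consequences of Bernoulli thinning. The purely algebraic route, plugging the three $V_{n,k}$ expressions of Proposition~\ref{prop_efpf_poiss_nb} into the general formula of Proposition~\ref{prop:msharedgeneralV}, also works and gives a useful sanity check, but is heavier on Pochhammer bookkeeping.

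First, for fixed $N$ (the \textsc{bb} model) I would use that, conditionally on $N$, one has $\mutilde\mid N = \sum_{j=1}^N \tilde{q}_j \delta_{\tilde{X}_j}$ with $\tilde{q}_j \iid \dbeta(-\alpha, \theta+\alpha)$, and that conditionally on $\mutilde$ the matrix entries $\tilde A_{i,j}$ are independent Bernoulli$(\tilde q_j)$. Hence the per-feature counts $M_{n,j}^\star := \sum_{i=1}^n \tilde A_{i,j}$, for $j=1,\dots,N$, are i.i.d.\ Beta-Binomial$(n,-\alpha,\theta+\alpha)$. Writing $K_{n,r} = \sum_{j=1}^N \indic\{M_{n,j}^\star = r\}$ as a sum of $N$ i.i.d.\ Bernoulli indicators immediately gives $K_{n,r}\mid N \sim \textup{Binomial}(N, b_{n,r})$ with $b_{n,r} = \P(M_{n,j}^\star = r)$. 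A one-line beta-binomial calculation, based on $B(r-\alpha,\,n-r+\theta+\alpha)/B(-\alpha,\,\theta+\alpha) = (-\alpha)_r (\theta+\alpha)_{n-r}/(\theta)_n$ and the identity $(-\alpha)_r = (-\alpha)(1-\alpha)_{r-1}$, identifies $b_{n,r}$ with the claimed expression $-\alpha\binom{n}{r}(1-\alpha)_{r-1}(\theta+\alpha)_{n-r}/(\theta)_n$.

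The remaining two cases now follow by standard thinning arguments. If $N \sim \textup{Poisson}(\lambda)$, then marginalising $K_{n,r}\mid N \sim \textup{Binomial}(N,b_{n,r})$ over $N$ via Poisson thinning yields $K_{n,r} \sim \textup{Poisson}(\lambda b_{n,r})$. For $N \sim \textup{NegBinomial}(n_0,\mu_0)$ I would use the Poisson--gamma representation $N\mid\Lambda \sim \textup{Poisson}(\Lambda)$ with $\Lambda \sim \dgamma(n_0,\,n_0/\mu_0)$, which is already invoked in Section~\ref{sec:BB_mixtures}: the previous case gives $K_{n,r}\mid \Lambda \sim \textup{Poisson}(\Lambda b_{n,r})$, and integrating out $\Lambda$ produces $\textup{NegBinomial}(n_0, \mu_0 b_{n,r})$.

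No genuine obstacle is anticipated: the only potentially delicate point is the beta-binomial computation of $b_{n,r}$, which requires careful handling of the Pochhammer symbols and the factor $(-\alpha)$ when converting between $(-\alpha)_r$ and $(1-\alpha)_{r-1}$. An alternative, fully algebraic derivation would substitute the $V_{n,k}$'s from \eqref{eq:EFPF_BB}, \eqref{eq:EFPF_BB_Poiss} and \eqref{eq:EFPF_BB_NB} into Proposition~\ref{prop:msharedgeneralV}; the \textsc{bb} sum collapses via Vandermonde's identity $\binom{j+k}{k}\binom{N}{j+k} = \binom{N}{k}\binom{N-k}{j}$ to a binomial expansion $(1+p\,t_{n,r})^{N-k}$, the Poisson sum reduces to an exponential series, and the key identity to verify in both cases is $-\alpha/(\theta)_n \cdot \binom{n}{r}(1-\alpha)_{r-1}(\theta+\alpha)_{n-r} = b_{n,r}$, after which the claimed distributions drop out.
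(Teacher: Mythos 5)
Your proof is correct, and your primary route is genuinely different from the one the paper (implicitly) takes. The paper states this proposition as a direct specialization of the general formula in Proposition~\ref{prop:msharedgeneralV}, i.e., one substitutes the $V_{n,k}$'s of \eqref{eq:EFPF_BB}, \eqref{eq:EFPF_BB_Poiss} and \eqref{eq:EFPF_BB_NB} and resums the series — essentially your "alternative" algebraic route. Your main argument instead works at the level of the hierarchical representation of Lemma~\ref{lemma:bb_process_representation}: conditionally on $N$, the column totals $M^\star_{n,j}$ are i.i.d.\ beta-binomial, so $K_{n,r}=\sum_{j=1}^N \indic\{M^\star_{n,j}=r\}$ is a binomial count with success probability $b_{n,r}=\binom{n}{r}(-\alpha)_r(\theta+\alpha)_{n-r}/(\theta)_n$, and the Poisson and negative binomial cases follow by Poisson thinning and the Poisson--gamma mixture representation already invoked in Section~\ref{sec:BB_mixtures}. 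Your beta-binomial evaluation of $b_{n,r}$, including the conversion $(-\alpha)_r=(-\alpha)(1-\alpha)_{r-1}$, checks out, as do both mixing steps. What your approach buys is a proof that avoids the Pochhammer resummation entirely and makes the Bernoulli-thinning structure of the result transparent; what the algebraic route buys is uniformity with the rest of the paper's derivations (everything flows from the general $V_{n,k}$ formulas, with no appeal to the latent-measure representation). One cosmetic point: the collapsing identity $\binom{j+k}{k}\binom{N}{j+k}=\binom{N}{k}\binom{N-k}{j}$ you cite in the algebraic sketch is the subset-of-a-subset (trinomial revision) identity rather than Vandermonde's, though the computation it enables is the intended one.
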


For the gamma mixture of \textsc{ibp}s, the prior distribution of $K_{n,r}$ is described in the next proposition.

\begin{proposition} Suppose the \textsc{efpf} is in product form \eqref{eq:EFPF_product} with $\alpha \in [0, 1)$ and $d_{n,r} := \binom{n}{r} (1-\alpha)_{r-1} (\theta+\alpha)_{n-r} \Gamma(\theta+1)/\Gamma(\theta+n)$. If $\gamma$ is fixed, corresponding to the \textsc{ibp} model \eqref{eq:EFPF_IBP}, then 
\begin{equation*}
 K_{n,r} \mid \gamma \sim \textup{Poisson}\left( \gamma  \cdot d_{n,r} \right).
\end{equation*}
If instead $\gamma \sim \textup{Gamma}(a,b)$, then
\begin{equation*}
 K_{n,r} \sim \textup{NegBinomial}\left( a, a/b \cdot d_{n,r} \right).
\end{equation*}
\end{proposition}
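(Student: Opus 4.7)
The plan is to derive both statements directly from Proposition~\ref{prop:msharedgeneralV}, specialized to the corresponding $V_{n,k}$ coefficients, and then use a Poisson–Gamma mixing identity for the second part. The key quantity to keep track of is the ratio $d_{n,r} = \binom{n}{r}(1-\alpha)_{r-1}(\theta+\alpha)_{n-r}/(\theta+1)_{n-1}$, where we used $\Gamma(\theta+1)/\Gamma(\theta+n) = 1/(\theta+1)_{n-1}$. Note that the quantity $t_{n,r}(\theta,\alpha)$ from Proposition~\ref{prop:msharedgeneralV}, once divided by $(\theta+1)_{n-1}$, reads $g_n(\theta,\alpha) - d_{n,r}$. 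This will be the algebraic linchpin of the argument.

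For the \textsc{ibp} case, I will substitute the \textsc{ibp} weights $V_{n,j+k}$ from~\eqref{eq:EFPF_IBP} into the formula of Proposition~\ref{prop:msharedgeneralV}. The binomial coefficient $\binom{j+k}{k}$ combines with the factor $1/(j+k)!$ to give $1/(k!\, j!)$, so the series in $j$ factors as an exponential series with argument $\gamma\, t_{n,r}/(\theta+1)_{n-1} = \gamma (g_n(\theta,\alpha) - d_{n,r})$. Combined with the prior factor $\exp\{-\gamma g_n(\theta,\alpha)\}$, the $g_n$-terms cancel and one obtains
\begin{equation*}
\P(K_{n,r} = k \mid \gamma) = \frac{(\gamma d_{n,r})^k}{k!}\, e^{-\gamma d_{n,r}},
\end{equation*}
i.e.\ $K_{n,r}\mid\gamma \sim \textup{Poisson}(\gamma d_{n,r})$.

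For the gamma mixture, the slickest path is to avoid re-expanding the series: the hierarchical representation of Section~\ref{sec:mixture_IBP} yields $K_{n,r}\mid\gamma \sim \textup{Poisson}(\gamma d_{n,r})$ and $\gamma \sim \dgamma(a,b)$, so $K_{n,r}$ is a Poisson mixed over a gamma rate. Integrating the Poisson mass function against the $\dgamma(a,b)$ density gives the standard negative binomial pmf
\begin{equation*}
\P(K_{n,r} = k) = \binom{a+k-1}{k}\left(\frac{b}{b + d_{n,r}}\right)^{a}\left(\frac{d_{n,r}}{b + d_{n,r}}\right)^{k},
\end{equation*}
which matches the paper's parameterization $\textup{NegBinomial}(a, (a/b)d_{n,r})$ since the mean equals $a d_{n,r}/b$. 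Alternatively, one can substitute the weights~\eqref{eq:efpf_gamma_mix} into Proposition~\ref{prop:msharedgeneralV} and recognize the resulting series as a binomial series; both routes lead to the same expression.

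I do not anticipate any genuine obstacle: the main care is bookkeeping, specifically recognising the identity $t_{n,r}/(\theta+1)_{n-1} = g_n(\theta,\alpha) - d_{n,r}$ that makes the exponential in the \textsc{ibp} series collapse cleanly, and noting the (apparent) exponent typo in Proposition~\ref{prop:msharedgeneralV} so that the summation variable in the series is $j$, not $k$, when performing the substitutions.
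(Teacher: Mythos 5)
Your proposal is correct and follows exactly the route the paper intends: the supplement states this proposition as a direct specialization of Proposition~\ref{prop:msharedgeneralV} (with the \textsc{ibp} weights~\eqref{eq:EFPF_IBP} collapsing the series to an exponential via the identity $t_{n,r}/(\theta+1)_{n-1}=g_n(\theta,\alpha)-d_{n,r}$), and the gamma case is the standard Poisson--gamma mixture. You also correctly identify that the exponent on $t_{n,r}(\theta,\alpha)$ in Proposition~\ref{prop:msharedgeneralV} should be the summation index $j$ rather than $k$, which is needed for the computation to go through.
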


\subsection{Proof of Theorem \ref{thm:Kmn_general_V}}
We start by proving the following lemma, which provides the probability distribution of $K_m^{(n)}\mid \bm Z^{(n)}$ for any feature allocation model admitting an \textsc{efpf}. 
\begin{lemma}\label{lemma:Kmn_on_EFPF}
For any  feature allocation model admitting an \textsc{efpf}, the distribution of $K_m^{(n)}\mid \bm Z^{(n)}$ equals
\begin{equation*}
\begin{aligned}
   &  \P(K_m^{(n)} = y\mid \bm Z^{(n)}) = \binom{k+y}{k} \frac{1}{\pi_n(m_1,\ldots,m_k)} \\
     &\qquad\qquad \times \sum_{\substack{r_t \in \{1, \ldots , m\}\\ t=1,\ldots,y}} \sum_{\substack{c_\ell \in \{0, \ldots , m\}\\ \ell=1,\ldots,k}} \pi_{n+m}(m_1+c_1,\ldots,m_k+c_k,r_1,\ldots,r_y) \prod_{r=1}^y \binom{m}{r_t} \prod_{\ell=1}^k \binom{m}{c_\ell} .
\end{aligned}
\end{equation*}
\end{lemma}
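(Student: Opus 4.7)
The plan is to unfold the event $\{K_m^{(n)} = y\}$ into its constituent ordered feature allocations on the enlarged sample of $n+m$ individuals, evaluate each one using the EFPF, and then divide by $\pi_n(m_1,\ldots,m_k)$ to obtain the conditional probability. This mimics the ordered-bookkeeping already used in the proof of Theorem~\ref{thm:predictive_general_V}, but with $m$ fresh individuals in place of one.

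First, I would fix the observed feature allocation $f_n$ with frequencies $(m_1,\ldots,m_k)$. Conditioning on $\{K_m^{(n)} = y\}$, any compatible feature allocation $f_{n+m}$ of the full sample is obtained by (i) possibly enlarging each existing set $B_{n,\ell}$ via extra counts $c_\ell \in \{0,\ldots,m\}$ coming from the additional individuals, for $\ell = 1,\ldots,k$, and (ii) adding $y$ new feature sets, each of which must involve at least one of the $m$ new individuals, with frequencies $r_t \in \{1,\ldots,m\}$, for $t=1,\ldots,y$. The frequency vector for $f_{n+m}$ is therefore $(m_1+c_1,\ldots,m_k+c_k,r_1,\ldots,r_y)$ and its EFPF value is $\pi_{n+m}(m_1+c_1,\ldots,m_k+c_k,r_1,\ldots,r_y)$. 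For each fixed choice of $(c_\ell)$ and $(r_t)$ there are multiple distinct feature allocations yielding the same EFPF value: for each existing feature $\ell$ the $c_\ell$ additional individuals showing it can be picked in $\binom{m}{c_\ell}$ ways among the $m$ new individuals; likewise each new feature $t$ involves $\binom{m}{r_t}$ choices of which new individuals exhibit it. These are precisely the distinct binary columns of the extended matrix in Figure~\ref{fig:matrix_representation} consistent with the prescribed counts.

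Finally, since our EFPF refers to ordered feature allocations and the first $k$ label-positions are already taken by the previously observed features, the $y$ new feature labels can be interleaved with the $k$ old ones in $\binom{k+y}{k}$ ways, exactly as in the derivation of Theorem~\ref{thm:predictive_general_V}. Summing over all admissible $(c_1,\ldots,c_k)$ and $(r_1,\ldots,r_y)$ gives the joint probability $\P(F_n = f_n,\, K_m^{(n)} = y)$, and dividing by $\P(F_n = f_n) = \pi_n(m_1,\ldots,m_k)$ yields the stated formula. The main obstacle is keeping the two layers of combinatorics cleanly separated and avoiding double-counting: the binomials $\binom{m}{c_\ell}$ and $\binom{m}{r_t}$ enumerate \emph{which} of the new individuals realize each feature, whereas $\binom{k+y}{k}$ enumerates the placements of the $y$ new labels among the $k+y$ total labels. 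The summation ranges $c_\ell\in\{0,\ldots,m\}$ and $r_t\in\{1,\ldots,m\}$ are exactly what the event $\{K_m^{(n)}=y\}$ requires, and setting $m=1$ recovers the marginalized form of Theorem~\ref{thm:predictive_general_V} as a sanity check.
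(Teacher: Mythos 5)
Your proposal is correct and follows essentially the same route as the paper's proof: both decompose the event into compatible ordered feature allocations on the $n+m$ individuals, count the $\binom{m}{c_\ell}$ and $\binom{m}{r_t}$ choices of which new individuals realize each old and new feature, multiply by the $\binom{k+y}{k}$ interleavings of the new labels, and divide by $\pi_n(m_1,\ldots,m_k)$.
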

\begin{proof}
We denote by $F_n = (B_{n,1}, \ldots , B_{n,K_n})$ the random ordered feature allocation corresponding to the random sample
$\bm{Z}^{(n)}$, and by $f_n$ the observed value of $F_n$.
Thanks to the  correspondence between $\bm{Z}^{(n)}$ and $F_n$, as explained by \cite{Broderick2013two}, we have
\begin{equation}
    \label{eq:Kmn_ratio}
    \P(K_m^{(n)} = y\mid \bm Z^{(n)}) = \frac{\P(K_m^{(n)} = y, F_n =f_n)}{\P(F_n=f_n )},
\end{equation}
where we observe that $\P(F_n=f_n ) = \pi_n(m_1,\ldots,m_k)$. Denote by $(B_{m+n,1}, \ldots , B_{m+n,k})$  the random ordered feature allocation which refers to the old features, and by $(B_{m+n,k+1}, \ldots , B_{m+n, k+y})$ the random ordered feature allocation corresponding to the $y$ new features. 
In order to evaluate the probability at the numerator in \eqref{eq:Kmn_ratio}, we need to evaluate the probability that the aforementioned random ordered feature allocations satisfy:
\begin{itemize}
    \item[(i)]  $(B_{m+n,1}, \ldots , B_{m+n,k})$ is consistent with the initial $F_n$, in the sense that each set 
    $B_{m+n,\ell}$ contains the $m_\ell$ indexes of the corresponding $B_{n,\ell}$, as $\ell=1, \ldots , k$;
    \item[(ii)] each set in $(B_{m+n,1}, \ldots , B_{m+n,k})$  contains $c_\ell$ indexes from $\{n+1,\ldots,n+m\}$, for all the possible choices of $c_\ell \in \{0,\ldots,m\}$, as $\ell =1, \ldots , k$;
    \item[(iii)] the sets $B_{m+n,k+1}, \ldots , B_{m+n, k+y}$, corresponding to the \textit{new} features, contain  $r_t$ indexes out of the set $\{n+1,\ldots,n+m\}$, for all the possible choices of 
    $r_t \in \{1,\ldots,m\}$, as $t=1, \ldots , y$.
\end{itemize}
We note that $B_{m+n,\ell}$ can be chosen in $\binom{m}{c_\ell}$ different ways, as $\ell =1, \ldots, k$, because this is the number of possible ways to select $c_\ell$ indexes among  $\{n+1,\ldots,n+m\}$ of the subjects displaying feature $\ell$. Besides, $B_{m+n, k+t}$ can be chosen in $\binom{m}{r_t}$ different ways, as $t=1, \ldots , y$. The probability of observing random ordered feature allocations satisfying (i)--(iii) equals
\[
\pi_{n+m}(m_1+c_1,\ldots,m_k+c_k, r_1,\ldots,r_y)\prod_{t=1}^y \binom{m}{r_t} \prod_{\ell=1}^k \binom{m}{c_\ell}.
\]
We finally observe that there are  $\binom{k+y}{y}$ possible ways for ordering the $y$ new features  among the $k$ old features, as a consequence one has:
\begin{equation*}
\begin{aligned}
    & \P(K_m^{(n)} = y\mid \bm Z^{(n)}) = \frac{1}{\pi_n(m_1,\ldots,m_k)}  \\
     & \qquad \qquad\times \sum_{\substack{r_t \in \{  1, \ldots , m\}\\ t=1,\ldots,y}} \sum_{\substack{c_\ell \in \{ 0, \ldots , m\}\\ \ell=1,\ldots,k}} \pi_{n+m}(m_1+c_1,\ldots,m_k+c_k,r_1,\ldots,r_y) \prod_{r=1}^y \binom{m}{r_t} \prod_{\ell=1}^k \binom{m}{c_\ell}\binom{k+y}{k}
\end{aligned}
\end{equation*}
and the thesis follows.
\end{proof}

We now concentrate on the posterior probability distribution of $K_m^{(n)}$ in Gibbs-type feature allocation models. By virtue of   Lemma \ref{lemma:Kmn_on_EFPF}, one has:
\begin{equation} \label{eq:prob_Kmn_proof1}
\begin{aligned}
    \P(K_m^{(n)} = y\mid \bm Z^{(n)}) & = \binom{k+y}{k} \frac{V_{n+m,k+y}}{\pi_n(m_1,\ldots,m_k)}  \sum_{\substack{r_t \in \{1, \ldots , m\}\\ t=1,\ldots,y}}  \prod_{t=1}^y \binom{m}{r_t} (1-\alpha)_{r_t-1} (\theta+\alpha)_{n+m-r_t}  \\ &\qquad\qquad \times\sum_{\substack{c_\ell \in \{0, \ldots , m\}\\ \ell=1,\ldots,k}}
 \prod_{\ell=1}^k  \binom{m}{c_\ell} (1-\alpha)_{m_\ell + c_\ell -1} (\theta+\alpha)_{n+m-m_\ell -c_\ell} .
\end{aligned}
\end{equation}
First, consider the following sum
\begin{equation*}
   \sum_{\substack{c_\ell \in \{0, \ldots , m\}\\ \ell=1,\ldots,k}} 
 \prod_{\ell=1}^k  \binom{m}{c_\ell} (1-\alpha)_{m_\ell + c_\ell -1} (\theta+\alpha)_{n+m-m_\ell -c_\ell} = \prod_{\ell=1}^k \sum_{c_\ell =0}^m 
 \binom{m}{c_\ell} (1-\alpha)_{m_\ell + c_\ell -1} (\theta+\alpha)_{n+m-m_\ell -c_\ell} ,
\end{equation*}
by observing that
\[
(1-\alpha)_{m_\ell + c_\ell -1} = (1-\alpha)_{m_\ell -1} (m_\ell - \alpha)_{c_\ell}
\]
and 
\[
(\theta+\alpha)_{n+m-m_\ell -c_\ell} = (\theta+\alpha)_{n-m_\ell} (\theta+\alpha+n-m_\ell)_{m-c_\ell},
\]
the Chu-Vandermonde identity implies 
\begin{equation*}
   \prod_{\ell=1}^k   \sum_{c_\ell =0}^m 
 \binom{m}{c_\ell} (1-\alpha)_{m_\ell + c_\ell -1} (\theta+\alpha)_{n+m-m_\ell -c_\ell} =   \prod_{\ell=1}^k \Bigg[   (1-\alpha)_{m_\ell -1} (\theta+\alpha)_{n-m_\ell} (\theta+n)_m  \Bigg] .
\end{equation*}
By exploiting the previous identity and the expression of the \textsc{efpf} \eqref{eq:EFPF_product}, the probability of interest \eqref{eq:prob_Kmn_proof1} boils down to
\begin{equation}
\label{eq:post_Kmn_proof2}
    \P(K_m^{(n)} = y\mid \bm Z^{(n)}) = \binom{k+y}{k} \frac{V_{n+m,k+y}}{V_{n,k}} (\theta+n)_m^{k}  \left[ \sum_{i =1}^m   \binom{m}{i} (1-\alpha)_{i-1} (\theta+\alpha)_{n+m-i}    \right]^y .  
\end{equation}
We now consider separately the two cases $\alpha<0$ and $\alpha \in [0,1)$.\\
\textbf{Case $\bm{\alpha<0}$.} For such values of $\alpha$, we observe that
\[
(1-\alpha)_{i-1} = -\frac{1}{\alpha} (-\alpha)_{i} \quad \text{and}  \quad
(\theta+\alpha)_{n+m-i} = (\theta+\alpha)_n (\theta+\alpha+n)_{m-i},
\]
and, thanks to the Chu-Vandermonde identity, one has
\begin{equation*}
    \sum_{i =1}^m   \binom{m}{i} (1-\alpha)_{i-1} (\theta+\alpha)_{n+m-i} = -\frac{1}{\alpha} (\theta+\alpha)_n \left[(\theta+n)_m - (\theta+\alpha+n)_m \right] .
\end{equation*}
The result for $\alpha<0$ follows  by substituting the previous expression in \eqref{eq:post_Kmn_proof2}. \\

\noindent
\textbf{Case $\bm{\alpha \in [0,1)}$.}
As for $\alpha \in [0,1)$, the sum appearing in \eqref{eq:post_Kmn_proof2} can be rewritten as follows:
\begin{equation*}
\begin{aligned}
    \sum_{i=1}^m \binom{m}{i} (1-\alpha)_{i-1} &(\theta+\alpha)_{n+m-i} = 
     \sum_{i=1}^m \binom{m}{i} \frac{\Gamma(i-\alpha)}{\Ga(1-\alpha)} \frac{\Ga(\theta+\alpha+m+n-i)}{\Ga(\theta+\alpha)} \cdot \frac{\Ga(\theta+m+n)}{\Ga(\theta+m+n)} \\
     &= \frac{\Ga(\theta+m+n)}{\Ga(1-\alpha) \Ga(\theta+\alpha) } \sum_{i=1}^m \binom{m}{i} B(i-\alpha, \theta+\alpha+m+n-i).
\end{aligned}
\end{equation*}
We can now use the integral representation of the beta function to get
\begin{equation*}
\begin{aligned}
    \sum_{i=1}^m \binom{m}{i} (1-\alpha)_{i-1} &(\theta+\alpha)_{n+m-i}  = \frac{\Ga(\theta+m+n)}{\Ga(1-\alpha) \Ga(\theta+\alpha) } \sum_{i=1}^m \binom{m}{i} \int_0^1 x^{i-\alpha-1} (1-x)^{\theta+\alpha+m+n-i-1} \D x\\
     &= \frac{\Ga(\theta+m+n)}{\Ga(1-\alpha) \Ga(\theta+\alpha) }  \int_0^1 x^{-\alpha-1} (1-x)^{\theta+\alpha+m+n-1} \sum_{i=1}^m \binom{m}{i} \left(\frac{x}{1-x}\right)^i  \D x\\
     &= \frac{\Ga(\theta+m+n)}{\Ga(1-\alpha) \Ga(\theta+\alpha) }  \int_0^1 x^{-\alpha-1} (1-x)^{\theta+\alpha+m+n-1} \left[ \left(\frac{x}{1-x} + 1\right)^m -1\right]  \D x\\
     &= \frac{\Ga(\theta+m+n)}{\Ga(1-\alpha) \Ga(\theta+\alpha) }  \int_0^1 x^{-\alpha-1} (1-x)^{\theta+\alpha+n-1} \left[ 1-(1-x)^m \right]  \D x  .   
\end{aligned}
\end{equation*}
By virtue of the simple identity
\[ 
1-(1-x)^m = x \sum_{i=0}^{m-1} (1-x)^i,
\]
the previous expression becomes
\begin{equation*}
\begin{aligned}
     \sum_{i=1}^m \binom{m}{i} (1-\alpha)_{i-1} &(\theta+\alpha)_{n+m-i}  \\
    &= \frac{\Ga(\theta+m+n)}{\Ga(1-\alpha) \Ga(\theta+\alpha) }  \int_0^1 x^{-\alpha+1 -1} (1-x)^{\theta+\alpha+n-1} \sum_{i=0}^{m-1} (1-x)^i  \D x  \\
    &= \frac{\Ga(\theta+m+n)}{\Ga(1-\alpha) \Ga(\theta+\alpha) }  \sum_{i=0}^{m-1} \int_0^1 x^{-\alpha+1 -1} (1-x)^{\theta+\alpha+n+ i-1}  \D x  \\
    &= \frac{\Ga(\theta+m+n)}{\Ga(1-\alpha) \Ga(\theta+\alpha) }  \sum_{i=0}^{m-1} \frac{\Ga(i+\theta+\alpha+n) \Ga(1-\alpha)}{\Ga(i+\theta+n+1)}\\
    &= \frac{\Ga(\theta+m+n)}{\Ga(1-\alpha) \Ga(\theta+\alpha) }  \sum_{i=1}^{m} \frac{\Ga(i+\theta+\alpha+n-1) \Ga(1-\alpha)}{\Ga(i+\theta+n)}\\
    &= \frac{\Ga(\theta+m+n)}{ \Ga(\theta+1) }  \sum_{i=1}^{m} \frac{(\theta+\alpha)_{n+i-1}}{(\theta+1)_{n+i-1}} = (\theta+1)_{m+n-1} \sum_{i=1}^{m} \frac{(\theta+\alpha)_{n+i-1}}{(\theta+1)_{n+i-1}}\\
    &= (\theta+1)_{n-1} (\theta+n)_{m} \sum_{i=1}^{m} \frac{(\theta+\alpha)_{n+i-1}}{(\theta+1)_{n+i-1}}
\end{aligned}
\end{equation*}
and the thesis for $\alpha \in [0,1)$ follows by substituting this expression in~\eqref{eq:post_Kmn_proof2}.

\subsection{Proof of Proposition~\ref{diversity}}

The asymptotic behavior of $K_n$  follows by specializing  the
asymptotic distribution of $K_{m}^{(n)}$, determined in Proposition~\ref{prop_asy_anyprior}, when $n=0$ and $m$ is replaced with $n$. One may also prove this result by working along the same lines as in the proof of Proposition~\ref{prop_asy_anyprior}.

\subsection{Proof of Proposition~\ref{prop_asy_anyprior}}

We prove the theorem for $\alpha <0$, $\alpha=0$ and $\alpha \in (0,1)$ separately.\\
\textbf{Case \bm{$\alpha<0$}.} This choice corresponds to a mixture over $N$ of a \textsc{bb} model. We first observe that, in the case of a \textsc{bb} model with parameter $N$, the posterior distribution of the number of hitherto unseen features $ K_m^{(n)}\mid \bm{Z}^{(n)}, N$ has a binomial distribution given by
\begin{equation*}
K_{m}^{(n)} \mid  \bm{Z}^{(n)}, N \sim \textup{Binomial} \Big(N-k, 1-\frac{(\theta+\alpha+n)_m}{(\theta+n)_m} \Big).
\end{equation*}
This result is a simple consequence of Theorem \ref{thm:Kmn_general_V} that can be obtained  by substituting the expression of the $V_{n,k}$'s of a 
\textsc{bb} model with parameter $N$, displayed in Equation~\eqref{eq:EFPF_BB}.
Thus, by letting $p_{m} (\theta+n, \alpha) =  1-(\theta+\alpha+n)_m /(\theta+n)_m$, the characteristic function of $K_{m}^{(n)} \mid \bm{Z}^{(n)}, N$, denoted here as $\Psi_{m}^{(n)}$, equals
\begin{equation} \label{eq:BB_characteristic_Knm}
    \Psi_{m}^{(n)}(t) = \left[ 1-p_{m} (\theta+n, \alpha)  + p_{m} (\theta+n, \alpha) e^{it} \right]^{N-k},
\end{equation}
for any $t \in \R$, and $i$ is the imaginary unit. 
Note that $\lim\limits_{m\to \infty} p_{m} (\theta+n, \alpha)  = 1$, indeed we have:
\begin{equation*}
\begin{split}
    p_{m} (\theta+n, \alpha)  & = 1- \frac{(\theta+\alpha+n)_m}{(\theta+n)_m} =1-  \frac{\Gamma(\theta+\alpha+n+m)}{\Gamma(\theta+\alpha+n) } \cdot\frac{\Gamma(\theta+n)}{\Gamma(\theta+n+m)}\\
    & = 1-  \frac{\Gamma(\theta+n)}{\Gamma(\theta+\alpha+n) } \cdot m^\alpha+ o (m^\alpha) , 
    \end{split}
\end{equation*}
where we have used the asymptotic expansion of ratio of gamma functions \citep{expansion_gamma}, and the little-$o$ notation. Since $\alpha <0$, the limit of $p_{m} (\theta+n, \alpha) $ goes to $1$, as $m \to + \infty$. Thus, for all $ t \in \R$, the limit of the characteristic function in \eqref{eq:BB_characteristic_Knm} equals 
\[
\lim\limits_{m\to \infty} \Psi_{m}^{(n)}(t) = e^{it(N-k)},
\]
in other words, it holds $K_m^{(n)}\mid \bm{Z}^{(n)}, N \dconv N-k$.\\
Now, consider a prior distribution for the parameter $N$ having probability mass function $p_N$, and denote by $\Phi_m^{(n)}$ the characteristic function of $K_m^{(n)}\mid \bm{Z}^{(n)}$ for this statistical model. We can now compute the limit of the characteristic function for any $t \in \R$:
\begin{equation*}
\begin{aligned}
    \lim_{m\to \infty}\Phi_m^{(n)}(t) = \lim_{m\to \infty} \E\left[ e^{itK_m^{(n)}}\mid \bm{Z}^{(n)}\right] = \lim_{m\to \infty} \E\left[\E \left[ e^{itK_m^{(n)}}\mid \bm{Z}^{(n)}, N\right]\mid \bm{Z}^{(n)}\right]
\end{aligned}
\end{equation*}
and note that $\Psi_m^{(n)}(t) = \E \left[ e^{itK_m^{(n)}}\mid \bm{Z}^{(n)}, N\right]$ is the characteristic function under a \textsc{bb} model. Since $\mid~\Psi_m^{(n)}(t)~\mid \leq 1$ and $\lim\limits_{m\to \infty} \Psi_m^{(n)}(t) = e^{it(N-k)}$, the  dominated convergence theorem implies
\[
 \lim_{m\to \infty}\Phi_m^{(n)}(t) = \E\left[ e^{it(N-k)}\mid \bm{Z}^{(n)}\right].
\]
Therefore, $ K_m^{(n)}\mid \bm{Z}^{(n)}$ does converge in distribution to the random variable $ N-k\mid \bm{Z}^{(n)}  \dequal N'$, as $m \to + \infty$. Note that $N'$ is almost surely finite since $N$ is a priori almost surely finite, and the posterior distribution of $N$ equals 
\[
p_N (y \mid \bm{Z}^{(n)}) \propto \frac{y!}{(y-k)!} \left(\frac{(\theta+\alpha)_n}{(\theta)_n} \right)^y \cdot \indic_{\{k,k+1,\ldots\}}(y) p_N (y).
\]

\noindent
\textbf{Case $\bm{\alpha=0}$.} This case corresponds to a mixture over $\gamma$ of a two-parameter \textsc{ibp} model. We first assume that $\gamma$ is a fixed parameter, and we focus on the asymptotic behaviour of the sequence of random variables $K_m^{(n)}/\log(m)\mid \bm{Z}^{(n)}, \gamma$, as $m \to + \infty$.
As a consequence of Theorem \ref{thm:Kmn_general_V}, by substituting the expression of the $V_{n,k}$'s of a two-parameter \textsc{ibp} displayed in Equation \eqref{eq:EFPF_IBP}, it follows that 
\begin{equation*}
K_{m}^{(n)} \mid  \bm{Z}^{(n)}, \gamma \sim \textup{Poisson} \left( \gamma \theta \sum_{i=1}^m {(n + \theta + i -1)^{-1}} \right).
\end{equation*}
Denoting with $\Psi_{m}^{(n)}$ the characteristic function of $K_{m}^{(n)}/\log(m) \mid \bm{Z}^{(n)}, \gamma$, it holds that
\begin{equation*}
    \begin{aligned}
        \Psi_{m}^{(n)}(t) &= \E\left[ \exp\left\{ i t K_m^{(n)} /\log(m) \right\} \mid \bm{Z}^{(n)}, \gamma \right]  \\
        &= \exp\left\{ \gamma \theta \sum_{i=1}^m {(n + \theta + i -1)^{-1}} \cdot \left( e^{it/\log(m)} -1 \right) \right\},
    \end{aligned}
\end{equation*}
for any $t \in \R$, and $i$ stands for the imaginary unit. We observe that, as $m \to + \infty$
\begin{equation*}
        \Psi_m^{(n)}(t) = \exp\left\{ \gamma \theta \sum_{i=1}^m {(n + \theta + i -1)^{-1}} \cdot \left[ it/\log(m) + \mathcal{O}(\log(m)^{-2}) \right] \right\},
\end{equation*}
where we have used the big-$\mathcal{O}$ notation. Moreover, as $m \to + \infty$, it is easy to see that
\[
 \sum_{i=1}^m {(n + \theta + i -1)^{-1}} / \log(m) = 1 + \mathcal{O}(\log(m)^{-1}),
\]
as a consequence, for any $t \in \R$, the characteristic function satisfies 
\[
\lim_{m \to \infty} \Psi_m^{(n)}(t) = e^{it\gamma \theta}.
\]
In other words, we have shown that  $K_m^{(n)}/\log(m)\mid\bm{Z}^{(n)}, \gamma \dconv \gamma\theta$.

Now, we consider a prior distribution for the parameter $\gamma$ having density function $p_\gamma$, and we focus on the point-wise convergence of the characteristic function of the random variable~$K_m^{(n)}/\log(m)\mid \bm{Z}^{(n)}$, denoted here as $\Phi_m^{(n)}$. To this end, let $t \in \R$ and evaluate the following limit
\begin{equation*}
\begin{aligned} 
\lim_{m\to\infty}\Phi_m^{(n)}(t) &= \lim_{m\to\infty} \E\left[ \exp\left\{it K_m^{(n)}/\log(m)\right\} \mid \bm{Z}^{(n)} \right] \\
& = \lim_{m\to\infty} \E\left[ \E\left[ \exp\left\{it K_m^{(n)}/\log(m)\right\} \mid \bm{Z}^{(n)},\gamma\right] \mid \bm{Z}^{(n)} \right]. 
\end{aligned}
\end{equation*}
Note that $\Psi_m^{(n)}(t) = \E\left[ \exp\left\{it K_m^{(n)}/\log(m)\right\} \mid \bm{Z}^{(n)},\gamma\right] $ is the characteristic function under the two-parameter \textsc{ibp} model. Since $\mid \Psi_m^{(n)}(t)\mid \leq 1$ and $\lim\limits_{m\to \infty} \Psi_m^{(n)}(t) = e^{it \gamma \theta}$, the  dominated convergence theorem implies
\begin{equation*}
     \lim_{m\to\infty}\Phi_m^{(n)}(t) = \E\left[e^{it\gamma\theta}\mid\bm{Z}^{(n)}\right],
\end{equation*}
therefore $K_m^{(n)}/\log(m)\mid \bm{Z}^{(n)}$ does converge in distribution to the random variable  $\gamma\theta\mid\bm{Z}^{(n)}$, and the thesis follows.

\noindent
\textbf{Case $\bm{\alpha \in (0,1)}$.} We consider the limit of the characteristic function of $K_m^{(n)}/m^\alpha\mid \bm{Z}^{(n)}$, denoted with $\Phi_m^{(n)}$, for a mixture over $\gamma$ of a three-parameter \textsc{ibp} model with parameters $(\gamma, \alpha, \theta)$:
\begin{equation*}
\begin{aligned}
    \lim_{m\to\infty}\Phi_m^{(n)}(t) &= \lim_{m\to\infty} \E\left[ \exp\left\{itK_m^{(n)}/m^\alpha\right\} \mid \bm{Z}^{(n)} \right]\\
    &= \lim_{m\to\infty} \E\left[ \E\left[ \exp\left\{it K_m^{(n)}/m^\alpha\right\} \mid \bm{Z}^{(n)},\gamma\right] \mid \bm{Z}^{(n)} \right].
\end{aligned}
\end{equation*}
Note that $\E\left[ \exp\left\{it K_m^{(n)}/m^\alpha\right\} \mid \bm{Z}^{(n)},\gamma\right]$ corresponds to the characteristic function of the random variable $K_m^{(n)}/m^\alpha\mid\bm{Z}^{(n)},\gamma$ for the three-parameter \textsc{ibp}. \cite[Proposition 2]{Mas22} show that $K_m^{(n)}/m^\alpha\mid\bm{Z}^{(n)},\gamma \overset{a.s.}{\longrightarrow} \xi$, with $\xi = \gamma\Ga(\theta+1)/(\alpha\Ga(\theta+\alpha))$. Thus, as $m\to \infty$, we get
\[
\E\left[ \exp\left\{it K_m^{(n)}/m^\alpha\right\} \mid \bm{Z}^{(n)},\gamma\right] \longrightarrow e^{it\xi} .
\]
By an application of the dominated convergence theorem, we obtain 
\begin{equation*}
     \lim_{m\to\infty}\Phi_m^{(n)}(t) = \E\left[e^{it\xi}\mid\bm{Z}^{(n)} \right] = \E\left[\exp\left\{it\frac{\gamma\Ga(\theta+1)}{\alpha\Ga(\theta+\alpha)}\right\}\mid\bm{Z}^{(n)} \right] = \Phi_{\gamma\mid\bm{Z}^{(n)}}\left(t\frac{\Ga(\theta+1)}{\alpha\Ga(\theta+\alpha)} \right)
\end{equation*}
where $\Phi_{\gamma\mid\bm{Z}^{(n)}}$ denotes the characteristic function of $\gamma\mid\bm{Z}^{(n)}$, and the thesis follows for $\alpha \in (0,1)$.

\subsection{Proof of Theorem~\ref{thm:post_proc}}

We prove the theorem for $\alpha <0$ and $\alpha \in [0,1)$ separately.\\
\textbf{Case \bm{$\alpha<0$}.}

We preface a Lemma to  provide the hierarchical representation of the \textsc{bb} process. We remark that the following is an alternative construction of the \textsc{bb} model in \cite{Bat18} and \cite{Gri11}.
\begin{lemma}\label{lemma:bb_process_representation}
The \textsc{bb} model with parameters $(N,\alpha,\theta)$ may be equivalently described in the following hierarchical form
\begin{equation}\label{eq:bb_process}
\begin{aligned}
    Z_i \mid \mutilde &\iid \BP(\mutilde), \qquad i \geq 1 \\
    \tilde \mu & = \sum_{j=1}^N \tilde q_j \delta_{\tilde X_j} ,
\end{aligned}
\end{equation}
where $\tilde q_j$ are i.i.d. beta random variables with parameters $(-\alpha, \theta+\alpha)$ and $\tilde X_j \iid P_0$, for $j=1,\ldots,N$.
\end{lemma}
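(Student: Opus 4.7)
The plan is to verify that the hierarchical model in the statement induces, upon marginalization of $\tilde{\mu}$, precisely the \textsc{bb} \textsc{efpf} obtained by substituting the $V_{n,k}$'s of~\eqref{eq:EFPF_BB} into the product form~\eqref{eq:EFPF_product}. Conditional on $\tilde{\mu} = \sum_{j=1}^N \tilde{q}_j \delta_{\tilde{X}_j}$, the $n\times N$ binary array $\tilde{A}_{i,j} = Z_i(\{\tilde{X}_j\})$ consists of entries that are independent across $(i,j)$ with $\tilde{A}_{i,j}\mid \tilde{q}_j \sim \mathrm{Bernoulli}(\tilde{q}_j)$; moreover, the labels $\tilde{X}_j$ are almost surely distinct since $P_0$ is diffuse, so the feature allocation is read off unambiguously from the active columns of this matrix.

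Rather than targeting $\P(F_n = f_n)$ directly, I would first compute the unordered probability $\P(\tilde{F}_n = \tilde{f}_n)$, which sidesteps any ambiguity stemming from the order of appearance. Writing $\tilde{K}_{n,1},\dots,\tilde{K}_{n,H_n}$ for the multiplicities of the distinct sets in $\tilde{f}_n$ (so that $\sum_h \tilde{K}_{n,h} = k$), the configurations of $(\tilde{A}_{i,j})$ that yield $\tilde{f}_n$ are obtained by selecting the $k$ active columns out of $N$ ($\binom{N}{k}$ ways) and then assigning the prescribed multiset of column contents to these positions (multinomial factor $k!/\prod_h \tilde{K}_{n,h}!$), for a total of $\binom{N}{k}\cdot k!/\prod_h \tilde{K}_{n,h}!$ configurations. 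By the exchangeability of the jumps $\tilde{q}_j$, each one marginally contributes
\begin{equation*}
P_1 = \E\!\left[\prod_{\ell=1}^k \tilde{q}_\ell^{m_\ell}(1-\tilde{q}_\ell)^{n-m_\ell}\prod_{j=k+1}^N (1-\tilde{q}_j)^n\right].
\end{equation*}
Plugging this into the ordered-to-unordered identity $\P(F_n=f_n) = (\prod_h \tilde{K}_{n,h}!/k!)\,\P(\tilde{F}_n=\tilde{f}_n)$ stated in the paper cancels the multinomial factors and leaves $\P(F_n = f_n) = \binom{N}{k}\,P_1$.

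Finally, I would evaluate the beta integrals inside $P_1$. Standard Pochhammer manipulations based on $\Gamma(-\alpha+m) = -\alpha(1-\alpha)_{m-1}\Gamma(-\alpha)$, $\Gamma(\theta+\alpha+n-m) = (\theta+\alpha)_{n-m}\Gamma(\theta+\alpha)$ and $\Gamma(\theta+n) = (\theta)_n\Gamma(\theta)$ yield
\begin{equation*}
\E[\tilde{q}^{m_\ell}(1-\tilde{q})^{n-m_\ell}] = \frac{-\alpha\,(1-\alpha)_{m_\ell-1}(\theta+\alpha)_{n-m_\ell}}{(\theta)_n},\qquad \E[(1-\tilde{q})^n] = \frac{(\theta+\alpha)_n}{(\theta)_n}.
\end{equation*}
Substituting into $\binom{N}{k}P_1$ and collecting powers of $(\theta+\alpha)_n$ and $(\theta)_n$ reproduces $V_{n,k}\prod_{\ell=1}^k (1-\alpha)_{m_\ell-1}(\theta+\alpha)_{n-m_\ell}$ with $V_{n,k}$ as in~\eqref{eq:EFPF_BB}. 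The only delicate point is the combinatorial cancellation between the configuration count $k!/\prod_h \tilde{K}_{n,h}!$ and the ordering factor $\prod_h \tilde{K}_{n,h}!/k!$; apart from this bookkeeping, the argument reduces to an elementary beta-integral evaluation and Pochhammer-symbol algebra.
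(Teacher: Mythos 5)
Your proposal is correct and follows essentially the same route as the paper's proof: condition on the jumps to get the product Bernoulli likelihood for the $n\times N$ binary matrix, integrate out the i.i.d.\ $\dbeta(-\alpha,\theta+\alpha)$ weights, and multiply by the combinatorial factor $\binom{N}{k}$ to recover the \textsc{efpf} with $V_{n,k}$ as in \eqref{eq:EFPF_BB}. The only (minor) difference is that you justify the factor $\binom{N}{k}$ more carefully, passing through the unordered allocation so that the multiplicity terms $k!/\prod_h \tilde{K}_{n,h}!$ cancel explicitly, whereas the paper simply counts the placements of the null columns.
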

\begin{proof}
We show that the \textsc{efpf} induced by the model \eqref{eq:bb_process} is the \textsc{efpf} in \eqref{eq:EFPF_product} with $V_{n,k}$'s in \eqref{eq:EFPF_BB} characterizing the \textsc{bb} model. Assuming model \eqref{eq:bb_process}, introduce the $n \times N$ binary matrix $\bm \Ztilde$, whose generic element $\Ztilde_{i,j}$ is defined as $\Ztilde_{i,j} = Z_i(\tilde X_j)$. In particular, $\Ztilde_{i,j} = 1$ if subject $i$ possesses feature $j$, $\Ztilde_{i,j} = 0$ otherwise.  It holds that the entries of the matrix $\bm \Ztilde$ are distributed as
\begin{equation*}
\begin{aligned}
\Ztilde_{i,j} \mid \tilde{q}_j & \simind {\rm Bernoulli} (\tilde{q}_j),  \quad i \geq 1,\, j= 1, \ldots , N,\\
\tilde{q}_j  &  \simiid  {\rm Beta} (-\alpha, \theta + \alpha),  \quad j=1, \ldots, N.
\end{aligned}
\end{equation*}
Indeed, let $\bm{\ztilde} \in \{ 0,1 \}^{n\times N}$ a generic realization of $\bm \Ztilde$, then
\begin{equation} \label{eq:model_BB_matrix}
\P (\bm{\Ztilde}= \bm{\ztilde}\mid \tilde{q}_1, \ldots ,  \tilde{q}_N) = \prod_{j=1}^N  \prod_{i=1}^n  \tilde{q}_j^{\ztilde_{i,j}} (1-\tilde{q}_j)^{1-\ztilde_{i,j}} = \prod_{j=1}^N  \tilde{q}_j^{m_j} (1-\tilde{q}_j)^{n-m_j},
\end{equation}
where $m_j = \sum_{i=1}^n \ztilde_{i,j}$. The marginal distribution of $\bm \Ztilde$ results in 
\begin{equation*}
    \begin{aligned}
        \P (\bm{\Ztilde} = \bm \ztilde) &= \int \P (\bm{\Ztilde}= \bm{\ztilde}\mid \tilde{q}_1, \ldots ,  \tilde{q}_N) \cdot \prod_{j=1}^N \tilde{q}_j^{-\alpha - 1} (1 - \tilde{q}_j)^{\alpha + \theta  -1 } \D \tilde{q}_1 \cdots \D \tilde{q}_N\\
        &=  \left(  \frac{\Gamma(\theta)}{\Gamma (- \alpha) \Gamma (\theta + \alpha) \Gamma (n+\theta)}\right)^N \prod_{j=1}^N \Gamma (m_j - \alpha) \Gamma (n-m_j +\alpha + \theta).
    \end{aligned}
\end{equation*}
The \textsc{efpf}  corresponding to $\bm \ztilde$, can be evaluated by taking into account the $\binom{N}{k}$ possible ways  to place the null columns of $ \bm{ \tilde z}$, thus obtaining:
\begin{equation*}
\begin{aligned}
\pi_n(m_1,\ldots,m_k) & =  \binom{N}{k} \P(\bm \Ztilde = \bm{\tilde z}) \\ 
&= \binom{N}{k} 
\left(  \frac{\Gamma(\theta)}{\Gamma (- \alpha) \Gamma (\theta + \alpha) \Gamma (n+\theta)}\right)^N \prod_{j=1}^N \Gamma (m_j - \alpha) \Gamma (n-m_j +\alpha + \theta),
 \end{aligned}
\end{equation*}
which corresponds to the \textsc{efpf} in \eqref{eq:EFPF_product} with $V_{n,k}$'s as in \eqref{eq:EFPF_BB}.
\end{proof}

We now need to provide a posterior representation of the latent measure $\tilde{\mu}$ in Lemma \ref{lemma:bb_process_representation}. In particular, assume that $k$ features have been observed in $\bm Z^{(n)}$, with labels $X_1,\ldots, X_k$. Moreover, assume that each observed feature $X_\ell$ has been displayed in $m_\ell$ subjects, for $\ell = 1,\ldots, k$. Among the $N$ atoms of $\tilde{\mu}$ in model \eqref{eq:bb_process}, $k$ atoms of the posterior random measure $\tilde{\mu} \mid \bm Z^{(n)}$ are necessarily $X_1,\ldots,X_k$. The remaining $N-k$ atoms of $\tilde{\mu} \mid \bm Z^{(n)}$ are drawn independently from the prior. Specifically, the posterior distribution of $\tilde{\mu} \mid \bm Z^{(n)}$ may be characterized as
\[
\mutilde\mid\bm{Z}^{(n)} \dequal \mutilde' + \mutilde^*,
\]
where $\mutilde^* = \suml q_\ell \delta_{X_\ell}$ accounts for the $k$ observed features and $\mutilde' = \sum_{j=1}^{N-k} q'_j \delta_{\tilde{X}_j}$ describes the remaining $N-k$ possible features, with $\tilde{X}_j \iid P_0$, for $j=1,\ldots,N-k$. The joint distribution of  $q'_j$'s and $q_\ell$'s can be obtained by multiplying  the likelihood function \eqref{eq:model_BB_matrix} with the prior distribution of the $\tilde{q}_j$'s, which are independent beta random variables with parameters $(-\alpha, \theta + \alpha)$. Thus, an application of the Bayes theorem leads to $q_\ell \ind \dbeta(m_\ell-\alpha, \alpha + \theta + n - m_\ell)$, as $\ell=1, \ldots , k$, and  $q'_j \iid \dbeta(-\alpha, \alpha + \theta + n)$, for $j=1,\ldots,N-k$.


Point (i) of Theorem \ref{thm:post_proc}, valid for any mixture of \textsc{bb}s, follows by conditioning on $N$ and applying the just discussed distributional equality for the posterior distribution of $\tilde{\mu}$.

\noindent
\textbf{Case $\bm{\alpha \in [0,1)}$.}
This corresponds to the \textsc{ibp} case, where $\tilde\mu \mid \gamma$ is a stable-beta process with intensity measure given by \eqref{eq:stable_beta_process}, namely:
\begin{equation} \label{eq:rho_stable_beta_app}
    \rho(\mathrm{d}s) = \gamma \frac{\Ga(1+\theta)}{\Ga(1-\alpha) \Ga(\theta+ \alpha)} s^{-\alpha -1} (1-s)^{\theta+\alpha -1} \D s.
\end{equation}
Thus, conditionally on $\gamma$, the posterior representation in (ii) follows from \cite[Theorem 3.1]{Jam17}, which provides a characterization for general \textsc{crm}s. More precisely, one has
\begin{equation*}
    \mutilde\mid\bm{Z}^{(n)} \dequal \mutilde^* + \mutilde'
\end{equation*}
where  $\mutilde^*$ and $\mutilde'$ are independent random measures such that:
\begin{itemize}
    \item[(1)] $\tilde \mu '\sim \textsc{crm} (\rho' ; P_0)$, with $\rho'(\D s) = (1-s)^n \rho( \D s)$;
    \item[(2)] the random measure $\mutilde^*= \sum_{\ell=1}^k q_\ell \delta_{X_\ell}$ is almost surely discrete,
    where the $X_\ell$'s are the distinct feature labels out of  the observed sample $\bm{Z}^{(n)}$, and $ q_\ell$'s are independent random variables with density 
    \begin{equation*} 
        f_{q_\ell} (\D s ) \propto (1-s)^{n-m_\ell} s^{m_\ell} \rho(\D s)
    \end{equation*}
    as $\ell = 1, \ldots , k$. 
\end{itemize}
By substituting the specific expression of $\rho$ \eqref{eq:rho_stable_beta_app} in the previous characterization, part (ii) of the theorem easily follows.

\section{Proofs of Section \ref{sec:mixture_IBP}}\label{app:proofs_mixtureIBPs}

\subsection{Details for the determination of \eqref{eq:efpf_gamma_mix}}

Here we show that the gamma mixture of \textsc{ibp}s has a product form \textsc{efpf} with weights $V_{n,k}$'s as in~\eqref{eq:efpf_gamma_mix}.
To this end, we integrate the product form \textsc{efpf} in \eqref{eq:EFPF_product}-\eqref{eq:EFPF_IBP} with respect to the parameter $\gamma \sim \dgamma(a,b)$:
\begin{equation*}
\begin{aligned}
    \pi_n (m_1, \ldots , m_k)  &= \int_0^\infty \frac{1}{k!} \left( \frac{\gamma}{(\theta+1)_{n-1}} \right)^k
e^{ - \gamma g_n (\theta, \alpha) }\prod_{\ell=1}^k (1-\alpha)_{m_\ell-1} (\theta+\alpha)_{n-m_\ell}  \frac{b^a}{\Gamma(a)} \gamma^{a-1} e^{-\gamma b} \D \gamma \\
&= \frac{b^a}{k!\Ga(a) \{(\theta+1)_{n-1}\}^k} \int_0^\infty \gamma^{k+a-1} \exp\{-\gamma(g_n (\theta, \alpha) +b)\} \D \gamma  \prod_{\ell=1}^k (1-\alpha)_{m_\ell-1} (\theta+\alpha)_{n-m_\ell}\\
&= \frac{b^a}{k!\Ga(a) \{(\theta+1)_{n-1}\}^k} \cdot \frac{\Ga(k+a)}{(g_n (\theta, \alpha) +b)^{k+a}} \prod_{\ell=1}^k (1-\alpha)_{m_\ell-1} (\theta+\alpha)_{n-m_\ell}\\
&= \frac{b^a \, (a)_k}{k! \{(\theta+1)_{n-1}\}^k (g_n (\theta, \alpha) +b)^{k+a}} \prod_{\ell=1}^k (1-\alpha)_{m_\ell-1} (\theta+\alpha)_{n-m_\ell}.
\end{aligned}
\end{equation*}
It is easy to observe that the last expression is an \textsc{efpf} of type \eqref{eq:EFPF_product} where
\[
V_{n,k} = \frac{b^a \, (a)_k}{k! \{(\theta+1)_{n-1}\}^k (g_n (\theta, \alpha) +b)^{k+a}}.
\]

\subsection{Proof of Proposition~\ref{prop_k_gamma_ibp}}

The distributions of $K_n$ in \eqref{eq:Kn_P_NB} follow by specializing Theorem~\ref{thm:Kn_general} with the two specifications~\eqref{eq:EFPF_IBP} and \eqref{eq:efpf_gamma_mix} for $V_{n,k}$. Analogously, the posterior distributions for~$K_m^{(n)}$ in~\eqref{eq:Knm_Poiss} and~\eqref{eq:Kmn_NB} follow from Theorem~\ref{thm:Kmn_general_V}.

\subsection{Proof of Proposition \ref{prop_asy_km_gamma_3ibp}}

From Equation \eqref{eq:gamma_N_posterior}, it is easy to see that the posterior distribution of $\gamma\mid\bm{Z}^{(n)}$ is a gamma with parameters $(k+a,g_n (\theta, \alpha) + b)$, since $p_\gamma$ is a gamma density with parameters $(a,b)$. Thus, we also remind that the characteristic function of $\gamma\mid\bm{Z}^{(n)}$  equals
\[
\Phi_{\gamma\mid\bm{Z}^{(n)}}(t) = \left( \frac{g_n (\theta, \alpha) + b}{g_n (\theta, \alpha) + b - it}\right)^{k+a}.
\]
We now consider the case $\alpha =0$ and $\alpha \in (0,1)$ separately.\\ 
\textbf{Case $\bm{\alpha \in (0,1)}$.}
 By an application of Proposition \ref{prop_asy_anyprior}, $K_m^{(n)}/m^\alpha\mid \bm{Z}^{(n)}$ converges in distribution to $\gamma \Ga(\theta+1)/(\alpha\Ga(\theta+\alpha))\mid \bm{Z}^{(n)}$, whose characteristic function equals
\[
    \Phi_{\gamma \frac{\Ga(\theta+1)}{\alpha\Ga(\theta+\alpha)}\mid\bm{Z}^{(n)}}(t) = \Phi_{\gamma\mid\bm{Z}^{(n)}}\left(t \frac{\Ga(\theta+1)}{\alpha\Ga(\theta+\alpha)}\right) = \left( \frac{g_n (\theta, \alpha) + b}{g_n (\theta, \alpha) + b - it \frac{\Ga(\theta+1)}{\alpha\Ga(\theta+\alpha)}}\right)^{k+a}
\]
which is the characteristic function of a gamma random variable with parameters $(k+a, (g_n (\theta, \alpha) + b) \Ga(\theta+\alpha)\alpha/\Ga(\theta+1) )$, as stated.\\
\noindent 
\textbf{Case $\bm{\alpha =0}$.} By virtue of Proposition \ref{prop_asy_anyprior}, one has that $K_m^{(n)}/\log(m)\mid \bm{Z}^{(n)}$ converges in distribution to $\gamma \theta\mid \bm{Z}^{(n)}$, whose characteristic function equals
\[
    \Phi_{\gamma \theta \mid\bm{Z}^{(n)}}(t) = \Phi_{\gamma\mid\bm{Z}^{(n)}}\left(t \theta \right) = \left( \frac{g_n (\theta, 0) + b}{g_n (\theta, 0) + b - it \theta}\right)^{k+a}
\]
which is the characteristic function of a gamma random variable with parameters $(k+a, (g_n (\theta, 0) + b)/\theta )$, as stated.

\subsection{Proof of Equation \eqref{model:ibp_gamma_process}}

We need to show the equivalence between the hierarchical formulation of the gamma mixture of \textsc{ibp}s and the formulation \eqref{model:ibp_gamma_process}, which relies on the negative binomial process. More specifically, we consider the stable-beta process with parameters $(\gamma, \alpha, \theta)$, i.e.,  $\tilde \mu \mid \gamma \sim \textsc{crm} (\rho '; P_0)$, where $\rho '$ is as in \eqref{eq:stable_beta_process}
\[
\rho '(\D s) = \gamma \frac{\Ga(1+\theta)}{\Ga(1-\alpha) \Ga(\theta+ \alpha)} s^{-\alpha -1} (1-s)^{\theta+\alpha -1} \D s, 
\]
and we choose a  gamma prior with parameters $(a,b)$ for $\gamma$. The  Laplace functional of the resulting random measure $\tilde \mu$ equals
\begin{equation*}
\begin{aligned}
    \mathcal{L}_{\tilde \mu}(g)& = \E\left[e^{-\int_\X  g(x) \tilde \mu (\D x )} \right] = \E\left[ \E\left[ e^{-\int_\X  g(x) \tilde \mu (\D x )} \mid \gamma \right] \right] = \E\left[ \exp\left\{ - \int_{0}^1 \int_\X \left(1- e^{-s g(x)} \right) \rho'(\D s) P_0(\D x) \right\} \right] \\
    & = \int_0^\infty \exp\left\{ - \gamma \int_{0}^1 \int_\X \left(1- e^{-s g(x)} \right)  \frac{\Ga(1+\theta) s^{-\alpha -1} (1-s)^{\theta+\alpha -1}}{\Ga(1-\alpha) \Ga(\theta+ \alpha) }  \D s P_0(\D x) \right\} \frac{b^a}{\Ga(a)} \gamma^{a-1} e^{-b\gamma} \D \gamma,
\end{aligned}
\end{equation*}
for any measurable  function $g: \X \to \R_+$. Thus, by integrating with respect to $\gamma$, the Laplace functional of $\tilde\mu $ boils down to
\[
\mathcal{L}_{\tilde \mu}(g) = \left( 1 + \int_{0}^1 \int_\X \left(1- e^{-s g(x)} \right) \frac1b \frac{\Ga(1+\theta)}{\Ga(1-\alpha) \Ga(\theta+ \alpha)} s^{-\alpha -1} (1-s)^{\theta+\alpha -1} \D s P_0(\D x) \right)^{-a} .
\]
Therefore, $\tilde \mu$ is distributed as a negative binomial process with parameters $(a, \rho; P_0)$, where 
\[
\rho (\D s):= \frac1b \frac{\Ga(1+\theta)}{\Ga(1-\alpha) \Ga(\theta+ \alpha)} s^{-\alpha -1} (1-s)^{\theta+\alpha -1} \D s  .
\]

\subsection{Proof of Corollary \ref{thm:post_proc_ibp_gamma}}

It is sufficient to specialize part (ii) of Theorem \ref{thm:post_proc} and to show that the measure $\tilde \mu ' $, which appears in the posterior representation, is a negative binomial process.
In order to do this, we now compute the Laplace functional of $\tilde  \mu ' $, and we show it coincides with the Laplace functional of a negative binomial process of type~\eqref{eq:NB_Laplace_functional}.
For any measurable function~$g: \X \to \R_+$, we evaluate
\begin{equation*}
    \begin{split}
       & \E \left[ e^{-\int_\X  g(x) \tilde \mu ' (\D x )}\right]  = \E \left[ \E \left[    e^{-\int_\X  g(x) \tilde \mu ' (\D x )} \mid \gamma' \right]  \right]\\
        & \qquad = \E \left[  \exp\left\{ - \gamma' \int_\X\int_{0}^1  \left(1- e^{-s g(x)} \right)  \frac{\Ga(1+\theta )s^{-\alpha -1} (1-s)^{\theta+\alpha+n -1} \D s }{\Ga(1-\alpha) \Ga(\theta+ \alpha )}  P_0(\D x) \right\} \right],
    \end{split}
\end{equation*}
where we used the fact that~$\tilde \mu ' \mid \gamma '$ is a \textsc{crm} characterized by the intensity measure
\[
     \gamma'\frac{\Gamma(1+\theta )}{\Gamma(1-\alpha) \Gamma(\theta+\alpha)}s^{-\alpha-1}(1-s)^{n+\theta+\alpha-1} \D s .
\]
We now integrate out~$\gamma'$, which equals in distribution to the posterior~$\gamma \mid \bm{Z}^{(n)} \sim \dgamma(a + k, b + g_n(\theta, \alpha))$. Therefore, the Laplace functional under study boils down to
\begin{align}
  & \E \left[ e^{-\int_\X  g(x) \tilde \mu ' (\D x )}\right] \nonumber \\
  & \qquad = \int_0^\infty \exp\left\{ - \gamma' \int_{0}^1 \int_\X \left(1- e^{-s g(x)} \right)  \frac{\Ga(1+\theta )}{\Ga(1-\alpha) \Ga(\theta+ \alpha )} s^{-\alpha -1} (1-s)^{\theta+\alpha+n -1} \D s P_0(\D x) \right\} \nonumber\\
    &\qquad \qquad\qquad\qquad\qquad \times\frac{(g_n (\theta, \alpha) + b)^{k +a}}{\Ga(k + a)} \gamma^{k + a-1} e^{-(g_n (\theta, \alpha) + b)\gamma} \D \gamma \nonumber\\
    & \qquad = \left( 1 + \int_{0}^1 \int_\X \left(1- e^{-s g(x)} \right) \rho'(\D s) P_0(\D x) \right)^{-k -a},
    \label{eq:Laplace_final_NB_gamma_proof}
\end{align}
where we have set
\[
\rho ' (\D s ) := (g_n (\theta, \alpha) + b)^{-1} \cdot \frac{\Ga(1+\theta )}{\Ga(1-\alpha) \Ga(\theta+ \alpha )} s^{-\alpha -1} (1-s)^{\theta+\alpha+n -1}  \D s .
\]
We note that the expression in \eqref{eq:Laplace_final_NB_gamma_proof} is the Laplace functional of a negative binomial process \eqref{eq:NB_Laplace_functional} with parameters $(k+a, \rho'; P_0)$. Thus, the thesis follows.


\section{Proofs of Section \ref{sec:BB_mixtures}}
\label{app:proofs_mixtures_BB}

\subsection{Proof of Proposition \ref{prop_efpf_poiss_nb}}

We first concentrate on the proof of \eqref{eq:EFPF_BB_Poiss}. We integrate the \textsc{efpf} of a \textsc{bb} process \eqref{eq:EFPF_BB} with respect to  $N \sim \textup{Poisson} (\lambda)$:
\begin{equation*}
\begin{split}
\pi_n (m_1, \ldots , m_k) & = \sum_{N \geq k} \binom{N}{k} \left( \frac{-\alpha}{(\theta+\alpha)_n}   \right)^k   \left( \frac{(\theta+\alpha)_n}{(\theta)_n}  \right)^N
\prod_{\ell=1}^k  (1-\alpha)_{m_\ell-1} (\theta+\alpha)_{n-m_\ell}   \cdot e^{-\lambda} \frac{\lambda^N}{N!}\\
& =  \frac{1}{k!} e^{-\lambda} \lambda^k  \left( \frac{(\theta+\alpha)_n}{(\theta)_n} \right)^k  \left( \frac{-\alpha}{(\theta+\alpha)_n} \right)^k\\
& \qquad \times\prod_{\ell=1}^k  (1-\alpha)_{m_\ell-1} (\theta+\alpha)_{n-m_\ell}    \sum_{N \geq k}  \frac{1}{(N-k)!} \lambda^{N-k} \left( \frac{(\theta+\alpha)_n}{(\theta)_n} \right)^{N-k} \\
& =  \frac{1}{k!} e^{-\lambda} \lambda^k  \left( \frac{(\theta+\alpha)_n}{(\theta)_n} \right)^k  \left( \frac{-\alpha}{(\theta+\alpha)_n} \right)^k\\
& \qquad \times\prod_{\ell=1}^k  (1-\alpha)_{m_\ell-1} (\theta+\alpha)_{n-m_\ell}   \exp \left\{ \lambda \frac{(\theta+\alpha)_n}{(\theta)_n} \right\}\\
& = \frac{1}{k!} \left\{ \frac{-\alpha \lambda  }{(\theta)_n} \right\}^k\exp \left\{ - \lambda \left( 1- \frac{(\theta+\alpha)_n}{(\theta)_n} \right) \right\} \prod_{\ell=1}^k  (1-\alpha)_{m_\ell-1} (\theta+\alpha)_{n-m_\ell} .
\end{split}
\end{equation*}
It is now easy to realize that
\[
\pi_n (m_1, \ldots , m_k) = V_{n,k } \prod_{\ell=1}^k  (1-\alpha)_{m_\ell-1} (\theta+\alpha)_{n-m_\ell} 
\]
where $V_{n,k}$ is as in \eqref{eq:EFPF_BB_Poiss}.\\

Now, we move to the proof of \eqref{eq:EFPF_BB_NB}. As before, we consider the \textsc{efpf} in product form of a \textsc{bb} process, as specified in \eqref{eq:EFPF_BB}, and we integrate out $N$, distributed as a negative binomial random variable with parameters $(n_0, \mu_0)$.
Throughout the computation, we also set $p = n_0 / (\mu_0 + n_0)$ so that the negative binomial is parametrized with respect to $n_0$ and the success probability $p$.
The \textsc{efpf} equals
\begin{align*}
\pi_n (m_1, \ldots , m_k)  &= \sum_{N \geq k} \binom{N}{k}   \Big( \frac{-\alpha}{(\theta+\alpha)_n} \Big)^k  \Big(\frac{(\theta+\alpha)_n}{(\theta)_n} \Big)^N
\prod_{\ell=1}^k  (1-\alpha)_{m_\ell-1} (\theta+\alpha)_{n-m_\ell}  \\
& \qquad\qquad \times   \binom{N+n_0-1}{N} p^{n_0}  (1-p)^N \\
&=  \prod_{\ell=1}^k  (1-\alpha)_{m_\ell-1} (\theta+\alpha)_{n-m_\ell} \cdot p^{n_0}  \Big( \frac{-\alpha}{(\theta+\alpha)_n} \Big)^k \\
& \qquad\qquad \times
\frac{1}{k!(n_0-1)!}\sum_{N \geq k}  \frac{(N+n_0-1)!}{(N-k)!} \Big((1-p) \frac{(\theta+\alpha)_n}{(\theta)_n} \Big)^N ,
\end{align*}
where we have simply rearranged all the terms.
By a simple change of variable we get
\begin{equation*}
    \begin{split}
      \pi_n (m_1, \ldots , m_k)   &=  \prod_{\ell=1}^k  (1-\alpha)_{m_\ell-1} (\theta+\alpha)_{n-m_\ell} \cdot p^{n_0}  \Big( \frac{-\alpha}{(\theta+\alpha)_n} \Big)^k \\
& \qquad\qquad \times
\frac{1}{k!(n_0-1)!}\sum_{N \geq 0}  \frac{(N+k+n_0-1)!}{N!} \Big((1-p)\frac{(\theta+\alpha)_n}{(\theta)_n} \Big)^{N+k} .
    \end{split}
\end{equation*}
Since $\alpha<0$ and $\alpha + \theta >0$, we have 
\[
0<(1-p)\frac{(\theta+\alpha)_n}{(\theta)_n}<1, 
\]
thus, it is easy to rearrange all the terms and recognize the probability mass function of a negative binomial random variable in the summation over $N$. More precisely, we have
\begin{align*}
\pi_n (m_1, \ldots , m_k)  &=   \prod_{\ell=1}^k  (1-\alpha)_{m_\ell-1} (\theta+\alpha)_{n-m_\ell} \cdot p^{n_0}  \Big( \frac{-\alpha}{(\theta+\alpha)_n} \Big)^k \\
& \qquad \times
\frac{(k+n_0-1)!}{k!(n_0-1)!}\Big((1-p)\frac{(\theta+\alpha)_n}{(\theta)_n} \Big)^k
\sum_{N \geq 0}  \binom{N+k+n_0-1}{N}\Big((1-p)\frac{(\theta+\alpha)_n}{(\theta)_n} \Big)^{N} \\
&=   \prod_{\ell=1}^k  (1-\alpha)_{m_\ell-1} (\theta+\alpha)_{n-m_\ell} \cdot p^{n_0}  \Big( \frac{-\alpha}{(\theta+\alpha)_n} \Big)^k
\Big((1-p)\frac{(\theta+\alpha)_n}{(\theta)_n} \Big)^k\\
& \qquad \times
\binom{k+n_0-1}{k} \Big( 1- (1-p)\frac{(\theta+\alpha)_n}{(\theta)_n}\Big)^{-k-n_0} \\
& \qquad\qquad\times
\sum_{N \geq 0}  \binom{N+k+n_0-1}{N}\Big((1-p)\frac{(\theta+\alpha)_n}{(\theta)_n} \Big)^{N}
\Big( 1- (1-p)\frac{(\theta+\alpha)_n}{(\theta)_n}\Big)^{k+n_0} .
\end{align*}
The sum in the last expression is equal to $1$ because this is the sum of the probability masses of a negative binomial distribution with dispersion parameter
$k+n_0$ and success probability  $1 -(1-p) (\theta+\alpha)_n/(\theta)_n$. Thus, we get
\[
\begin{split}
\pi_n (m_1, \ldots , m_k)  &=  \prod_{\ell=1}^k  (1-\alpha)_{m_\ell-1} (\theta+\alpha)_{n-m_\ell} \cdot p^{n_0}  \Big( \frac{-\alpha}{(\theta+\alpha)_n} \Big)^k
\Big((1-p)\frac{(\theta+\alpha)_n}{(\theta)_n} \Big)^k\\
& \qquad \times
\binom{k+n_0-1}{k} \Big( 1- (1-p)\frac{(\theta+\alpha)_n}{(\theta)_n}\Big)^{-k-n_0} \\
& = V_{n,k} \prod_{\ell=1}^k  (1-\alpha)_{m_\ell-1} (\theta+\alpha)_{n-m_\ell} 
\end{split}
\]
which is exactly in product form and $V_{n,k}$ is as in Equation \eqref{eq:EFPF_BB_NB}.

\subsection{Proof of Proposition \ref{prop_pred_poiss_nb}}

The distributions of $K_n$ follow by specializing Theorem~\ref{thm:Kn_general} with the specifications of the weights  $V_{n,k}$'s for the three models. Analogously, the posterior distributions of~$K_m^{(n)}$  follow from Theorem~\ref{thm:Kmn_general_V}.

\subsection{Proof of Proposition \ref{prop:asymp}}

It follows from the general Theorem \ref{prop_asy_anyprior} by considering the case $\alpha <0$. The posterior distribution of $N\mid \bm{Z}^{(n)}$ easily follows from \eqref{eq:gamma_N_posterior}, by plugging-in the correct prior probability mass function $p_N$. Specifically, for $N \sim \text{Poisson}(\lambda)$, $p_N$ is a Poisson probability mass function; for $N \sim \text{NegBinomial}(n_0, \mu_0)$, $p_N$ is a negative binomial probability mass function.

\subsection{Proofs of hierarchical representations \eqref{model:bb_poisson_process} and \eqref{model:bb_nb_process}}

Using the hierarchical representation of the \textsc{bb} model  provided in Lemma \ref{lemma:bb_process_representation}, we first note that the Laplace functional of the \textsc{bb} model with parameters $(N, \alpha, \theta)$ can be written as \begin{equation}\label{eq:laplace_bb}
\mathcal{L}_{\tilde\mu}(g) = \E\left[ e^{-\int_\X  g (x) \tilde \mu (\D x )} \right] = \left[  \int_\X\int_0^1 e^{-s g(x) } \frac{1}{B(-\alpha, \theta+\alpha)} s^{-\alpha-1} (1-s)^{\theta+\alpha-1} \D s P_0(\D x) \right]^N , 
\end{equation}
for any measurable function $g: \X \to \R_+$. 

Consequently, in order to derive representation \eqref{model:bb_poisson_process}, it is sufficient to integrate out $N \sim  \textup{Poisson} (\lambda)$ in the expression of the Laplace functional \eqref{eq:laplace_bb}: 
\begin{equation*}
\begin{aligned}
     \mathcal{L}_{\tilde\mu}(g) &= e^{-\lambda} \sum_{N=0}^\infty \frac{1}{N!}\left[\lambda \int_\X\int_0^1  e^{-s g(x) } \frac{1}{B(-\alpha, \theta+\alpha)} s^{-\alpha-1} (1-s)^{\theta+\alpha-1} \D s P_0(\D x) \right]^N \\
     &= \exp\left\{ -\lambda \left[ 1 -  \int_\X \int_0^1 e^{-s g(x) } \frac{1}{B(-\alpha, \theta+\alpha)} s^{-\alpha-1} (1-s)^{\theta+\alpha-1} \D s P_0(\D x)\right] \right\}\\
     &= \exp\left\{ -\lambda \int_\X \int_0^1  (1 - e^{-s g(x) } ) \frac{1}{B(-\alpha, \theta+\alpha)} s^{-\alpha-1} (1-s)^{\theta+\alpha-1} \D s P_0(\D x) \right\}
\end{aligned}
\end{equation*}
and this is the Laplace functional of a \textsc{crm} with intensity measure as in \eqref{model:bb_poisson_process}. Similarly, for the negative binomial prior on $N$, we end up with the Laplace functional of a negative binomial process with parameters as in \eqref{model:bb_nb_process}.

%
\subsection{Proof of Corollary \ref{thm:post_process_bb_mixture}}

The corollary follows from Theorem \ref{thm:post_proc} by specializing the result for $\alpha<0$ in the case of a Poisson or a negative binomial prior for $N$. 
We first concentrate on the case $N \sim \textup{Poisson} (\lambda)$. Since in the posterior representation \eqref{eq:post_proc_dist_eq}, $\tilde \mu'$ and $\tilde\mu^*$ are independent, we need only to evaluate the Laplace functional of $\tilde\mu'$.
Observe that  $\mutilde' \mid N' \dequal \sum_{j=1}^{N'} q'_j \delta_{\tilde{X}_j}$ can be seen as the random measure associated with a \textsc{bb} model with parameters $(N',\alpha,\theta +n)$, moreover $N' + k \dequal N \mid \bm{Z}^{(n)}$. Therefore, by a suitable adaptation of  Equation \eqref{eq:laplace_bb}, the Laplace functional of $\mutilde'$, for an arbitrary measurable function $g : \X \to \R_+$, equals
\begin{equation} \label{eq:laplace_BB_new_features_N_general}
\begin{aligned}
  & \mathcal{L}_{\tilde\mu'}(g) = \E \left[ \E \left[ e^{- \int_\X g (x) \tilde \mu' (\D x )} \mid N' \right] \right] \\
   & \qquad = \sum_{t=0}^\infty \left[ \int_\X \int_0^1  e^{-s g(x) } \frac{1}{B (-\alpha, \theta+n+\alpha)} s^{-\alpha-1} (1-s)^{\theta+n+\alpha-1} \D s P_0(\D x) \right]^{t}  p_{N'}(t)
\end{aligned}
\end{equation}
where $p_{N'}$ denotes the probability mass function of $N'$.
The posterior of $N'$ has been characterized in Proposition \ref{prop:asymp}, where we showed that 
$N' \sim \textup{Poisson}\left(\lambda (1 - p_n(\theta, \alpha) \right)$, with $p_n(\theta, \alpha) = 1-(\theta+\alpha)_n / (\theta)_n$.
Thus, we get
\begin{equation*}
     \mathcal{L}_{\tilde\mu'}(g)  = \exp\left\{ - \int_\X \int_0^1  (1 - e^{-s g(x) } ) \frac{\lambda}{B (-\alpha, \theta+n+\alpha)} \frac{(\theta+\alpha)_n}{(\theta)_n} s^{-\alpha-1} (1-s)^{\theta+n+\alpha-1} \D s P_0(\D x) \right\},
\end{equation*}
that is the Laplace functional of a \textsc{crm} with intensity measure 
\[
\rho'(\D s)= \lambda\cdot \frac{\Gamma(\theta)}{\Gamma(-\alpha)\Gamma(\theta+\alpha)} s^{-\alpha-1} (1-s)^{n+\theta+\alpha-1} \D s.
\]

When $N$ has a negative binomial distribution, then $p_{N'}$ in \eqref{eq:laplace_BB_new_features_N_general} coincides with a negative binomial distribution as specified in \eqref{eq:posterior_Nprime_NB}, i.e., for $t = 0,1,\ldots$,
\[
p_{N'}(t) = \binom{t + n_0 + k -1}{t} (1- p^*_n (\theta, \alpha))^t (p^*_n (\theta, \alpha))^{n_0 + k},
\]
where we have set
\[
p^*_n (\theta, \alpha) := 1- (1-p_n (\theta, \alpha))\frac{\mu_0}{\mu_0+n_0}.
\]
Thus, simple calculations show that the Laplace functional of $\tilde\mu'$ boils down to
\begin{equation*}
\begin{split}
   & \mathcal{L}_{\tilde\mu'}(g) 
    =\left\{ 1 + \int_0^1 \int_\X (1 - e^{-s g(x) } ) \frac{1}{B(-\alpha, \theta+n+\alpha)} \frac{1-p^*_{n} (\theta, \alpha)}{p^*_n (\theta, \alpha)} s^{-\alpha-1} (1-s)^{\theta+n+\alpha-1} \D s P_0(\D x) \right\}^{-(n_0 + k)}.
\end{split}
\end{equation*}
We recognize that the previous Laplace functional is the one of a negative binomial process~$\NB(n_0 + k, \rho'; P_0)$, with 
\[
 \rho'(\D s) = \frac{\Gamma(\theta)}{\Gamma(-\alpha)\Gamma(\theta+\alpha)} \cdot\frac{\frac{\mu_0}{\mu_0+n_0}}{1-\frac{\mu_0}{\mu_0+n_0}\frac{(\theta+\alpha)_n}{(\theta)_n}} s^{-\alpha-1} (1-s)^{n+\theta+\alpha-1} \D s.
\]
Thus, with simple rearrangements of the terms one may realize that the intensity $\rho'$ is the one specified in the statement of the corollary, i.e., 
\[
 \rho'(\D s) = \frac{\Gamma(\theta)}{\Gamma(-\alpha)\Gamma(\theta+\alpha)} \cdot
 \frac{1}{n_0/\mu_0+ p_n (\theta, \alpha)}s^{-\alpha-1} (1-s)^{n+\theta+\alpha-1} \D s.
\]

\section{Simulation studies} \label{app:simulation}

\subsection{Simulation study A}
We consider a true data generating process with $H=600$  total features, having occurrence probabilities $\pi_k$, $k = 1,\ldots, H$,  set as follows: $200$ features have $\pi_k = 0.005$, $200$ features have  $\pi_k = 0.01$ and $200$ features have  $\pi_k = 0.015$. 
We generate a dataset with increasing dimensions for the training set $n \in \{50, 250, 1250\}$. 
In Figure \ref{fig:rare_knr_simulation_A}, we {apply the visual approach to} model-checking for the case $n=250$. As it is evident from the plots, the mixtures of \textsc{ibp}s are definitely not a good model for such data. Conversely, we argue that the mixtures of \textsc{bb}s can be assumed to be correctly specified here. {This preference for the mixtures of \textsc{bb}s is further supported by the comparison of deviances: $D(\hat{\bm \theta}) = 16406.7$ for the \textsc{bb} model versus $D(\hat{\bm \theta}) = 16603.4$ for the \textsc{ibp} model.} Therefore, we select the mixtures of \textsc{bb}s for the inference and prediction. {For completeness, Figure \ref{fig:rare_knr_simulation_A_credible_intervals} reports the $95\%$ credible intervals of $K_1,\ldots,K_n$ and of $K_{n,r}$, $r \geq 1$, for the Poisson and two examples of negative binomial mixture of \textsc{bb}s. It is apparent that the width of the credible intervals increases as the prior variance on $N$  increases.}

\begin{figure}[tbp]
    \centering    
    \includegraphics[width = 0.49\linewidth]{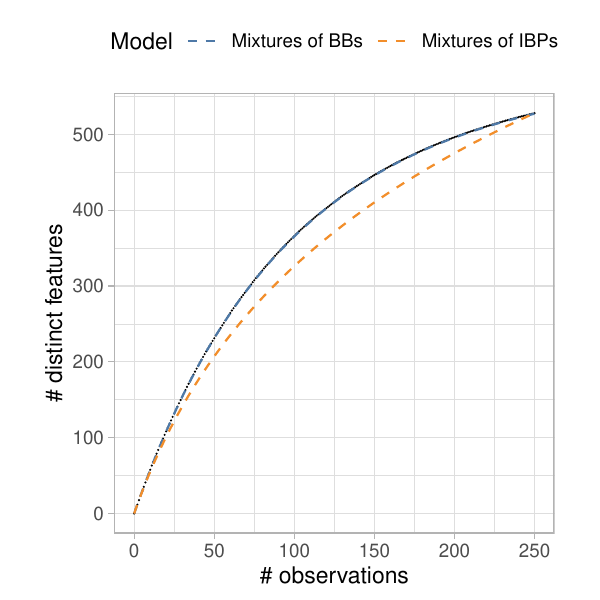}    
    \includegraphics[width = 0.49\linewidth]{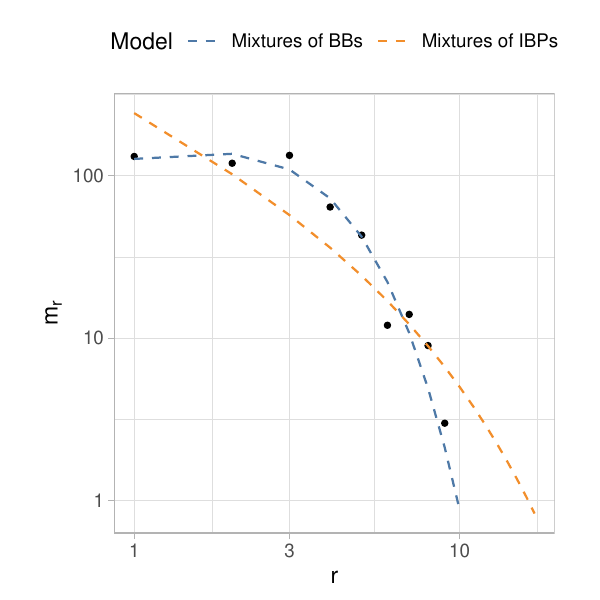}
    
    \caption{Case $n=250$. Left panel: the empirical accumulation curve (black dots) and the rarefaction curve of the models (blue and orange dashed lines). Right panel: the observed values of $K_{n,r}$ (black dots) compared with the {expected curve $\E(K_{n,r})$ of the models (blue and orange dashed lines). The right plot is in log-log scale; the orange (resp. blue) curves are identical for all the mixtures of \textsc{ibp}s (resp. \textsc{bb}s).}}
    \label{fig:rare_knr_simulation_A}
\end{figure}

\begin{figure}[tbp]
    \centering    
    \includegraphics[width = 0.49\linewidth]{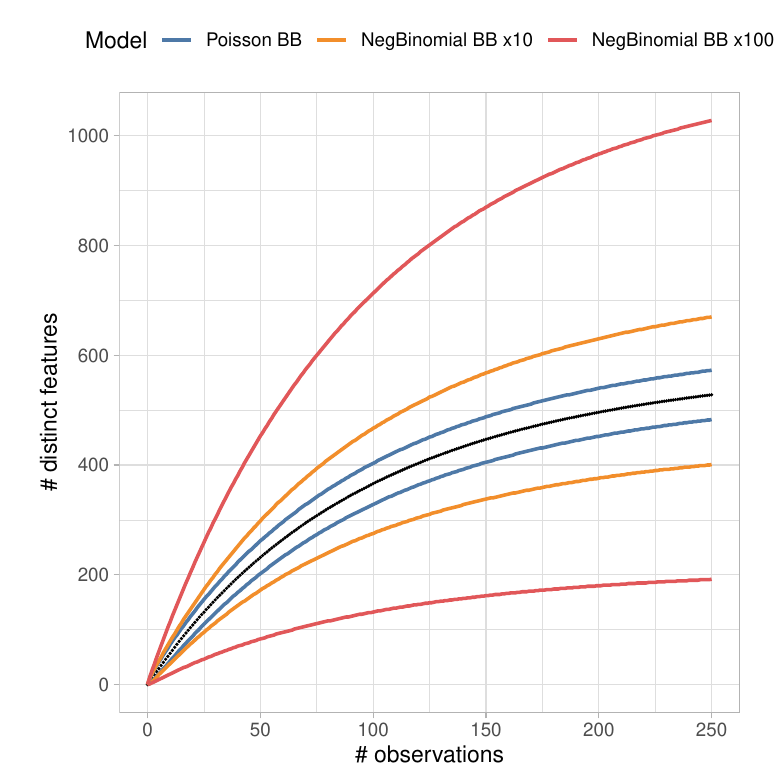}    
    \includegraphics[width = 0.49\linewidth]{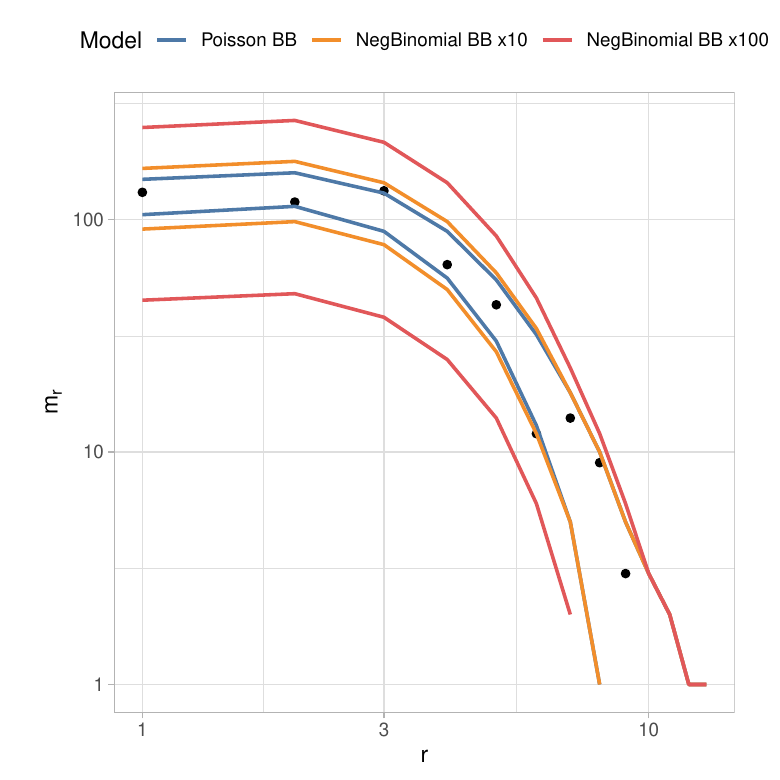}
    
    \caption{{Case $n=250$. Left panel: the empirical accumulation curve (black dots) and the credible intervals (delimited by colored lines) of $K_1,\ldots,K_n$ for the Poisson mixture of \textsc{bb}s and two examples of negative binomial mixture of \textsc{bb}s. Right panel: the observed values of $K_{n,r}$ (black dots) and the credible intervals (delimited by colored lines) of $K_{n,r}$  for the same mixtures of \textsc{bb}s. The right plot is in log-log scale.}}
    \label{fig:rare_knr_simulation_A_credible_intervals}
\end{figure}

Focusing on the prediction of the number of unseen features in an additional sample of increasing size, we compare the Poisson mixture of \textsc{bb}s, the negative binomial mixture of \textsc{bb}s, the frequentist estimator (Chao) in \cite{Cha14} and a variation of the {Good--Toulmin} estimators (\textsc{gt}) in \cite{Chak19}. For the negative binomial mixture of \textsc{bb}s, we analyse two distinct choices for the prior variance of $N$, specifically $\Var(N) = \mu_0 \times c$, with {$c \in \{10, 100\}$}. We report the comparison in Figure \ref{fig:extr_simulation_A}, for $n \in \{50, 250\}$. In the case $n=50$, the point-wise estimates produced by our mixtures of \textsc{bb}s are rather good, definitely outperforming the ones obtained
with the frequentist estimator in \cite{Cha14}, which underestimates the extrapolation curve. While the {Good--Toulmin} estimates seem pretty reliable in the case $n=50$, such an estimator shows bad predictive performance and well-known stability issues for $n=250$. Moreover, the mixtures of \textsc{bb}s allow us to quantify the uncertainty around the point estimates through credible intervals: this is a remarkable advantage of our Bayesian framework. Observe that the credible  bands contain the observed curve in the test set; for $n=50$ it is evident how the negative binomial mixtures account for larger dispersion than the Poisson mixture, having wider credible intervals. For completeness, the sanity check in the larger sample $n=1250$, where $k = 599$ features are observed, is satisfactory: the credible interval of $K_m^{(n)} + k \mid \bm{Z}^{(n)}$, for $m = 300$, is $[599, 601]${, for all the considered mixtures of \textsc{bb}s}. Finally, we remark that, while both the frequentist estimators are exclusively designed for the specific extrapolation problem, our model-based approach through mixtures of \textsc{bb}s offers a coherent and
self-contained framework for a number of inference problems, where the prediction of the number of unseen features is just one available example. 

\begin{figure}[tbp]
    \centering    
    \includegraphics[width = \linewidth]{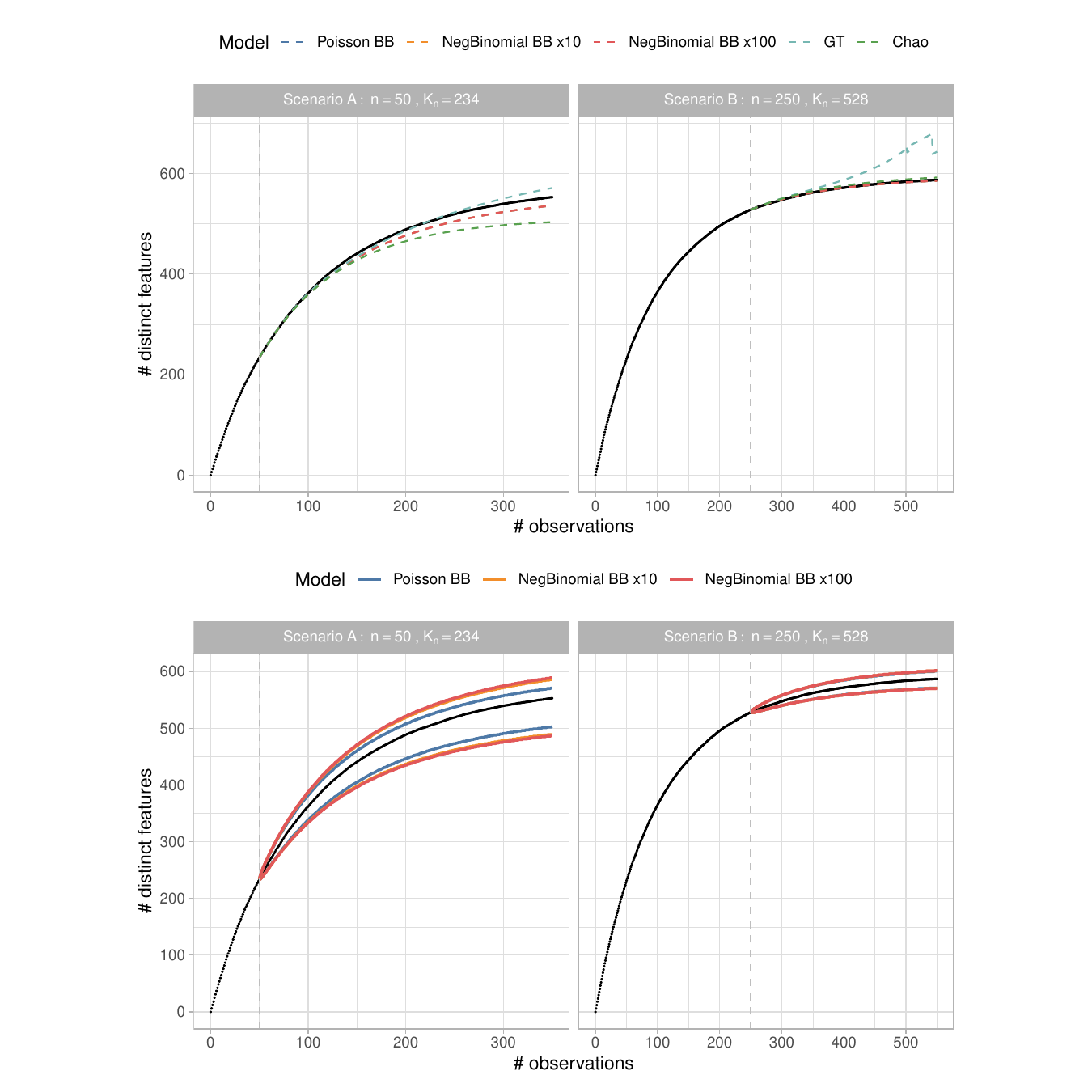}    
    
    \caption{Top row: the accumulation curve $K_m^{(n)} + k \mid \bm{Z}^{(n)}$ under the mixtures of \textsc{bb}s, the {Good--Toulmin} estimator and the Chao estimator for $n = 50$ (left) and $n = 250$ (right). Bottom row: the $95\%$ credible intervals of $K_m^{(n)} + k \mid \bm{Z}^{(n)}$, for $n = 50$ (left) and $n = 250$ (right), for the Poisson and the negative binomial mixtures of \textsc{bb}s. 
    The grey vertical lines indicate the training set.}
    \label{fig:extr_simulation_A}
\end{figure}

Additionally,  mixtures of  \textsc{bb}s also allow estimation of the richness. We refer to Figure \ref{fig:richness_points_simulation_A} for an insightful illustration on the richness estimation via the Poisson mixture and two negative binomial mixtures with $\Var(N) = \mu_0 \times c$, for {$c \in \{10, 100\}$}. {For the purpose of discussion, we also consider the standard \textsc{bb} model: interpret it as a mixture of \textsc{bb}s where the prior on $N$ is a point mass on the value $N$.} Specifically, Figure \ref{fig:richness_points_simulation_A} shows, for increasing sample sizes $n \in \{50,250,1250\}$, the posterior mean of the species richness $N$, for different choices of the parameters of the prior distribution on $N$. In particular, for each model, we estimate the parameters following the usual empirical Bayes procedure described in Section \ref{sec:parameter_elic} (referred to as EB in the figure); moreover, we fix different choices for the prior mean of $N$, i.e., $\E(N) \in \{200,400,800\}$, and we estimate the parameters $\alpha$ and $\theta$ as discussed in Section \ref{sec:parameter_elic}. We observe that, as the size $n$ of the training set increases, the richness estimates shrink towards the true value $H = 600${, under proper mixtures of \textsc{bb}s}. We also highlight that when prior guesses on $N$ are far from the truth, e.g., $\E(N) \in \{200, 400, 800\}$, negative binomial mixtures of \textsc{bb}s with larger variances produce richness estimates that are closer to the true value $H$ than the ones produced by the Poisson mixture, due to the greater flexibility of the negative binomial prior, which gives less weight to prior guesses. {In contrast, under the standard \textsc{bb} model, the prior guess for $N$ deterministically fixes the richness, making it impossible to correct a poor prior guess through posterior inference.}

\begin{figure}[tbp]
    \centering    
    \includegraphics[width = \linewidth]{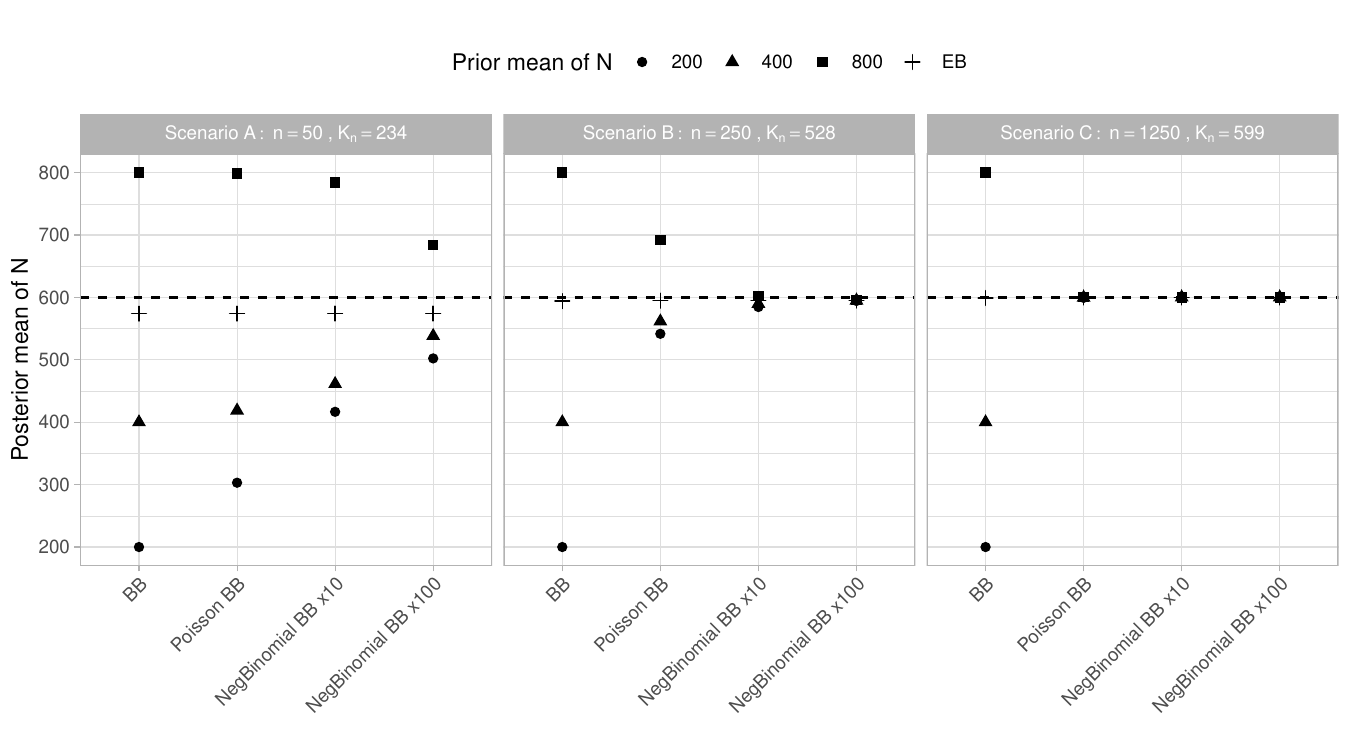}    
    
    \caption{Posterior mean of the richness  $N$, for increasing sample sizes $n \in \{50, 250, 1250\}$, and different values of the parameters of the models. The dashed lines indicate the true richness $H=600$.}
    \label{fig:richness_points_simulation_A}
\end{figure}

Finally, in Figure  \ref{fig:richness_distr_simulation_A}, we show the posterior distributions of the richness $N$, reporting (i) the empirical Bayes approach for parameters elicitation and (ii) the prior mean of $N$ equal to $ 400$. {First, we comment on the inference produced by the proper mixtures of \textsc{bb}s. In case (i),} we observe that such posterior distributions give high probability mass to the true richness $H = 600$; for smaller sample size $n=50$, we can highlight the higher dispersion of the posterior distribution of $N$ for the negative binomial mixtures. In case (ii), when the prior guess of $N$ is bad, e.g., $\E(N) = 400$, the inference gets worse as the sample size decreases, e.g., $n=50$. This is clearly expected since the posterior of $N$ gives more weight to the contribution of the prior, which brings an incorrect guess. However, it is interesting to note that {this posterior under the} negative binomial mixture with larger prior variance for $N$, i.e., {$\Var(N) = 100 \times \mu_0$, is significantly moving towards the true richness} even for $n=50$. {Second, we discuss the undesired behavior of the standard \textsc{bb} model. In case (i), although the posterior of $N$ concentrates on a value relatively close to the true one $H=600$, the lack of uncertainty in the posterior makes $H$ not plausible under the standard \textsc{bb} model. As a result, the inference is highly unreliable. 
The same issue emerges in case (ii), when the prior guess of $N$ is bad (equal to $400$). Again, note that under the standard \textsc{bb} model, the posterior of $N$ remains concentrated at $400$ and cannot learn from data as the sample size increases.}

\begin{figure}[tbp]
    \centering    
    \includegraphics[width = \linewidth]{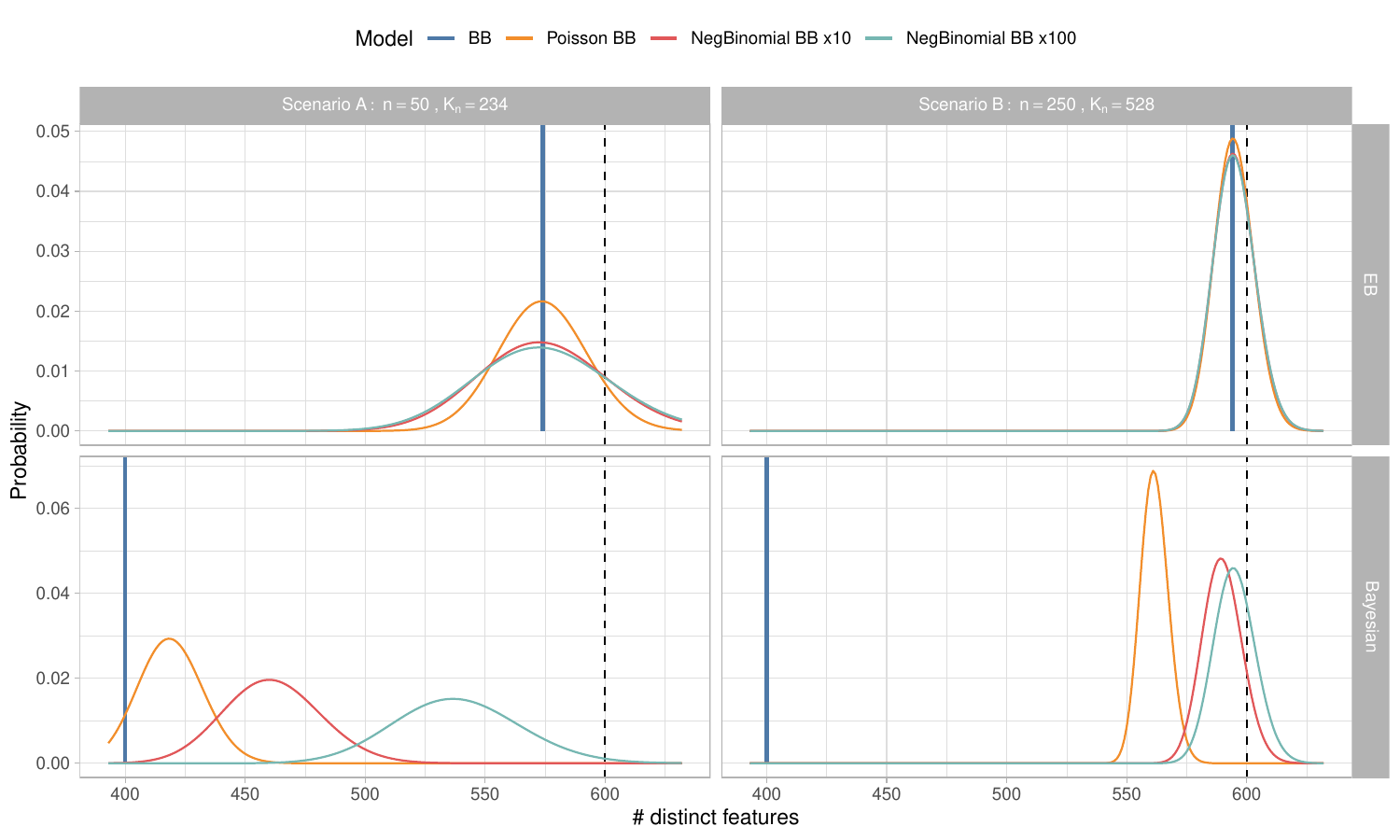}    
    
    \caption{Posterior distribution of the richness $N$ for increasing sample sizes $n \in \{50, 250\}$. Top row: parameters elicited according to the empirical Bayes approach; bottom row:  $\E(N) = 400$, with $\alpha$ and $\theta$ chosen as  in Section \ref{sec:parameter_elic}. The vertical black lines indicate the true richness $H=600$.}
    \label{fig:richness_distr_simulation_A}
\end{figure}

\subsection{Simulation study B}
We assume a generating mechanism such that the feature occurrence probabilities $\pi_k$'s are set as $\pi_k = 1/k$, for $k =1,\ldots,H$, with $H = 10^6$ total features. The choice of a large $H$ is intended to mimic a scenario where the number of features is infinite. We generate a dataset of observations and we consider  increasing dimensions of the training set $n \in \{10, 50, 250\}$. In Figure \ref{fig:rare_knr_simulation_B} we report the results of the suggested {visual}
model-checking procedure for the case $n = 50$. Differently from simulation study A, the mixtures of \textsc{bb}s are not properly fitting the data. Instead, we argue that
the mixtures of \textsc{ibp}s can be assumed to be correctly specified. {A more quantitative assessment of model fit can be obtained by comparing the deviances, with  $D(\hat{\bm \theta}) = 5398.7$ for the \textsc{bb} model versus $D(\hat{\bm \theta}) = 5094.5$ for the \textsc{ibp} model. These results are consistent with the conclusions drawn from the visual inspection of Figure \ref{fig:rare_knr_simulation_B}. }
{For completeness, Figure \ref{fig:rare_knr_simulation_B_credible_intervals} reports the $95\%$ credible intervals of $K_1,\ldots,K_n$ and of $K_{n,r}$, $r \geq 1$, for two examples of gamma mixture of \textsc{ibp}s.  We note that  the width of the credible intervals increases as the prior variance on $\gamma$  increases.}

\begin{figure}[tbp]
    \centering    
    \includegraphics[width = 0.49\linewidth]{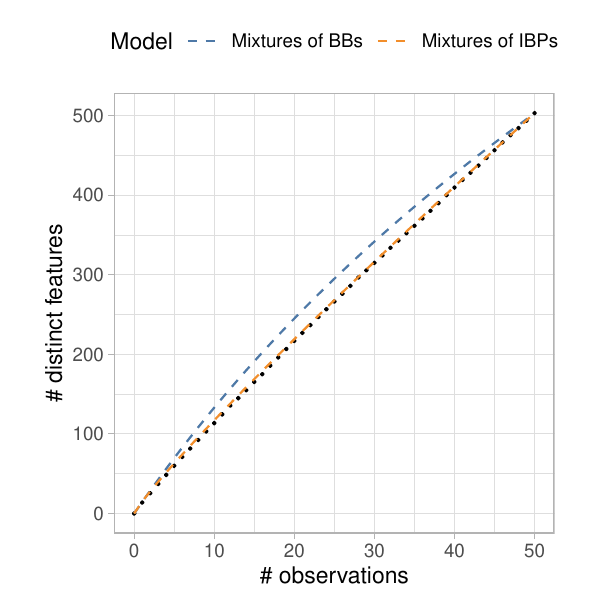}    
    \includegraphics[width = 0.49\linewidth]{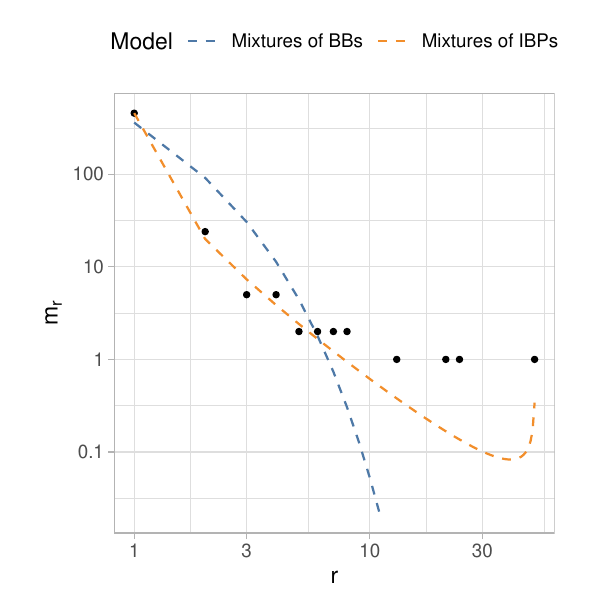}
    
    \caption{Case $n=50$.  Left panel: the empirical accumulation curve (black dots) and the rarefaction curve of the models (blue and orange dashed lines). Right panel: the observed values of $K_{n,r}$ (black dots) compared with the {expected curve $\E(K_{n,r})$ of the models (blue and orange dashed lines). The right plot is in log-log scale; the orange (resp. blue) curves are identical for all the mixtures of \textsc{ibp}s (resp. \textsc{bb}s).}}
    \label{fig:rare_knr_simulation_B}
\end{figure}

\begin{figure}[tbp]
    \centering    
    \includegraphics[width = 0.49\linewidth]{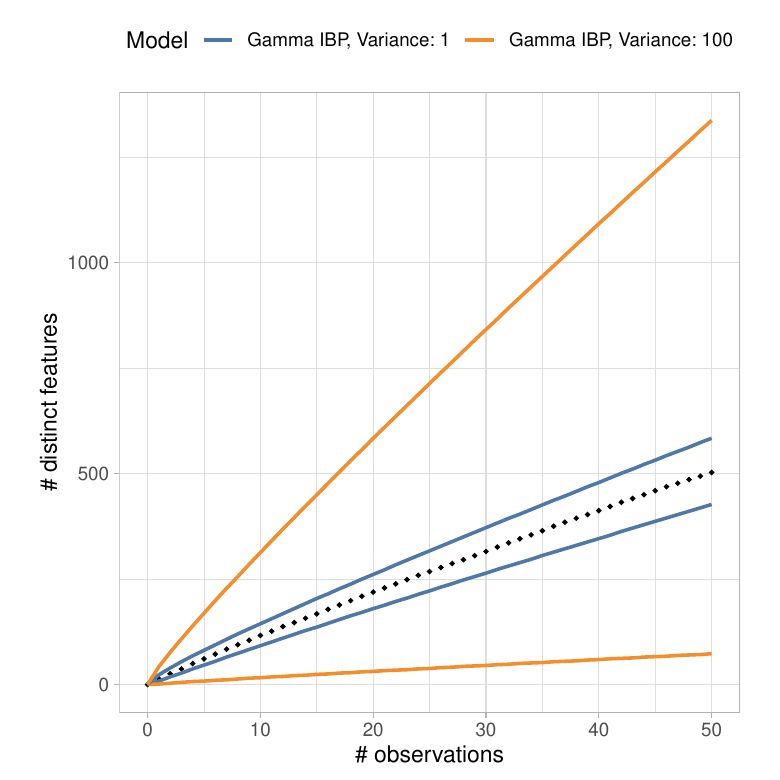}    
    \includegraphics[width = 0.49\linewidth]{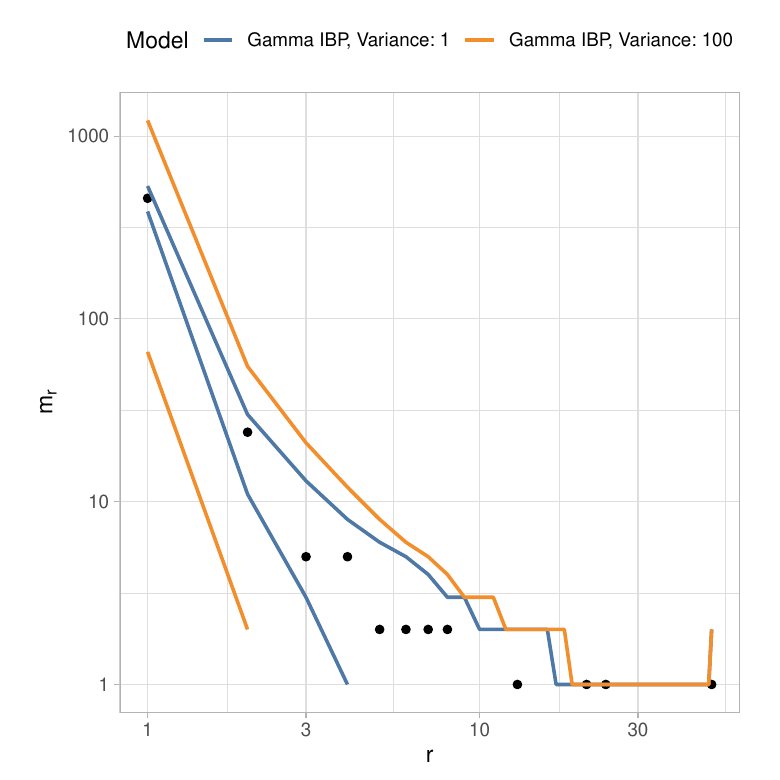}
    
    \caption{{Case $n=50$. Left panel: the empirical accumulation curve (black dots) and the credible intervals (delimited by colored lines) of $K_1,\ldots,K_n$ for two examples of gamma mixture of \textsc{ibp}s. Right panel: the observed values of $K_{n,r}$ (black dots) and the credible intervals (delimited by colored lines) of $K_{n,r}$  for the same mixtures of \textsc{ibp}s. The right plot is in log-log scale.} }
    \label{fig:rare_knr_simulation_B_credible_intervals}
\end{figure}

We address the prediction of the number of unseen features in an additional sample of increasing size, comparing the gamma mixture of \textsc{ibp}s and a variation of the {Good--Toulmin} estimators (\textsc{gt}) in \cite{Chak19}. We do not consider the estimator in \cite{Cha14} here, since it is specifically designed for situations where the assumption of finite richness is plausible. For the gamma mixture of \textsc{ibp}s, we consider two distinct choices for the prior variance of $\gamma$, i.e., {$\Var(\gamma) \in \{1, 100\}$}. {For the purpose of discussion, we also analyze the standard \textsc{ibp} model.} We report the comparison in Figure \ref{fig:extr_simulation_B}, for $n \in \{10, 50, 250\}$. Starting with the {Good--Toulmin} estimates, it is evident how the prediction curve catches the observed curve just for very limited horizons $m$. On the other hand, all the gamma mixtures are able to capture the growth rate of $K^{(n)}_m$ for increasing values of $m$. Remarkably, a larger prior variance of $\gamma$, e.g., $\Var(\gamma) = 100$, results in wider credible intervals for the prediction, allowing for a more conservative uncertainty quantification.{ Indeed, under the standard \textsc{ibp} model, the observed extrapolation curve is not always contained in the credible bands, leading to a less reliable prediction of the variability of the phenomenon. Remind that the \textsc{ibp} model corresponds to the limiting case $\Var(\gamma) \to 0$; this clearly justifies the need of the mixtures of \textsc{ibp}s.}

\begin{figure}[tbp]
    \centering    
    \includegraphics[width = \linewidth]{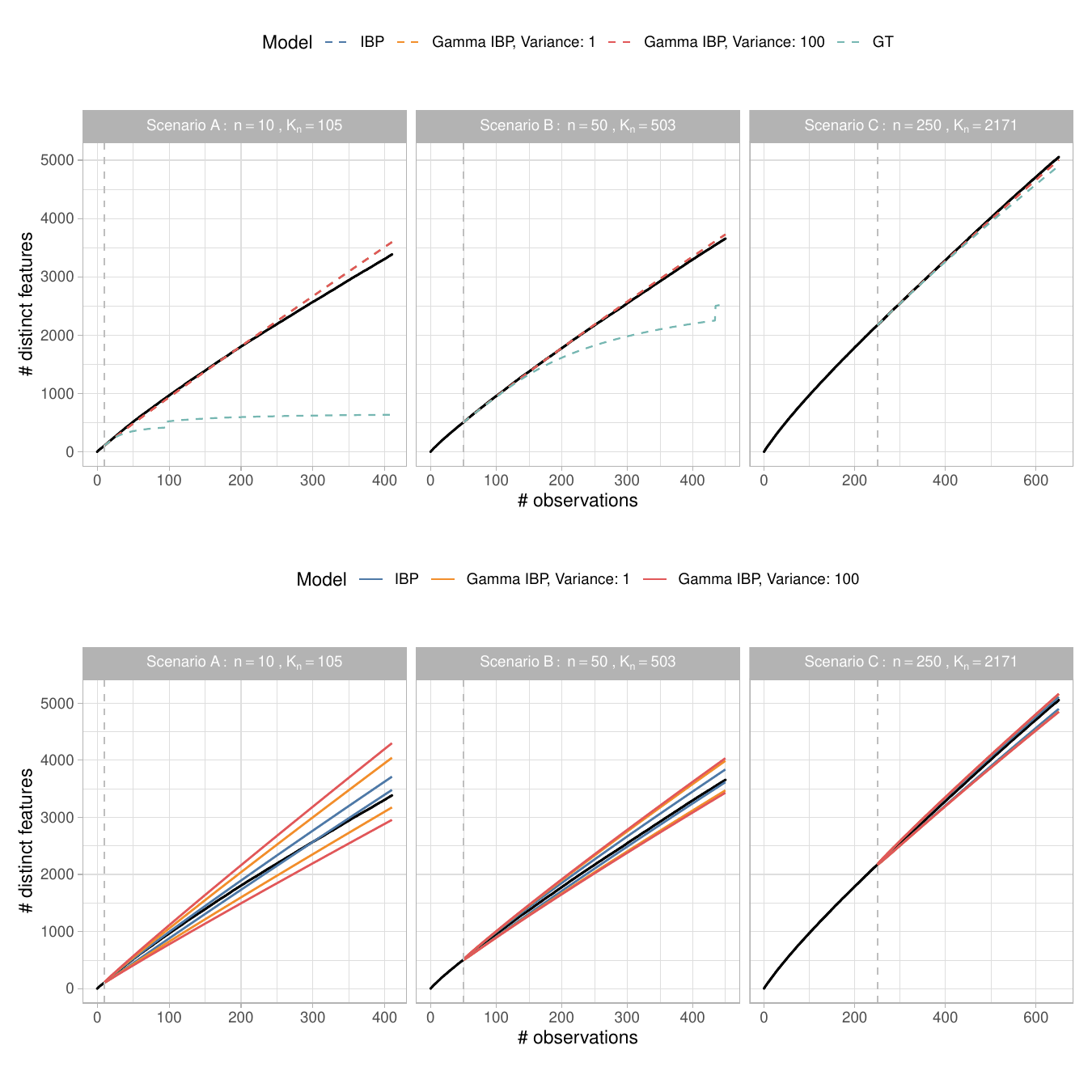}    
    
    \caption{Top row: the accumulation curve $K_m^{(n)} + k \mid \bm{Z}^{(n)}$ under the gamma mixtures of \textsc{ibp}s and the {Good--Toulmin} estimator for $n = 10$ (left), $n = 50$ (center) and $n=250$ (right). Bottom row: the $95\%$ credible intervals of $K_m^{(n)} + k \mid \bm{Z}^{(n)}$, for $n = 10$ (left), $n = 50$ (center) and $n=250$ (right), for the gamma mixtures of \textsc{ibp}s.
    }
    \label{fig:extr_simulation_B}
\end{figure}

\subsection{Simulation study C}

Here, we report a simulation study where the data are generated from the \textsc{bb} model, with total number of features equal to $H=500$. Specifically, the feature occurrence probabilities are drawn as $\pi_k \iid \dbeta(1, 100)$, $k = 1,\ldots, H$. We generate a dataset of observations and we consider increasing dimensions of the training set $n \in \{200, 1000, 5000\}$. The sample sizes are larger than the other simulation studies since the growth of the accumulation curve is rather slow. 

Figure \ref{fig:rare_knr_simulation_beta} shows the {visual} model-checking for the case $n=1000$. As expected, the mixtures of \textsc{ibp}s cannot explain the observed data. Conversely, it is reasonable to assume the mixtures of \textsc{bb}s to be correctly specified for this dataset. {This claim is further supported by the comparison of deviances, with the \textsc{bb} model yielding $D(\hat{\bm \theta}) = 53208.8$ compared to $D(\hat{\bm \theta}) = 53392.2$ for the \textsc{ibp} model.} {Based on both visual and quantitative evidence}, we focus on the mixtures of \textsc{bb}s for the inference and prediction.
{As in the previous simulation studies, Figure \ref{fig:rare_knr_simulation_beta_credible_intervals} reports the $95\%$ credible intervals of $K_1,\ldots,K_n$ and of $K_{n,r}$, $r \geq 1$, for the Poisson and two examples of negative binomial mixture of \textsc{bb}s.  }

\begin{figure}[tbp]
    \centering    
    \includegraphics[width = 0.49\linewidth]{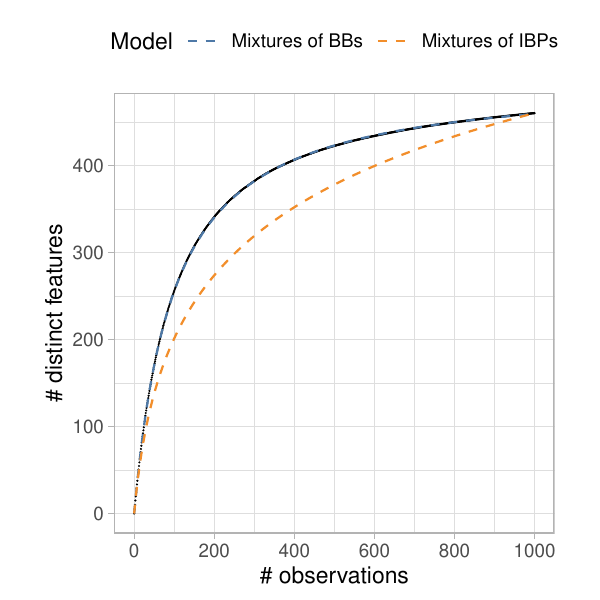}    
    \includegraphics[width = 0.49\linewidth]{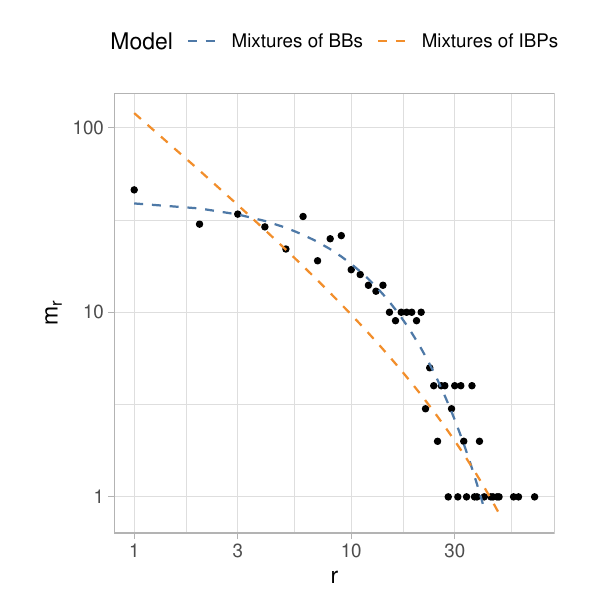}
    
    \caption{Case $n=1000$.
    Left panel: the empirical accumulation curve (black dots) and the rarefaction curve of the models (blue and orange dashed lines). Right panel: the observed values of $K_{n,r}$ (black dots) compared with the {expected curve $\E(K_{n,r})$ of the models (blue and orange dashed lines). The right plot is in log-log scale; the orange (resp. blue) curves are identical for all the mixtures of \textsc{ibp}s (resp. \textsc{bb}s).}}
    \label{fig:rare_knr_simulation_beta}
\end{figure}

\begin{figure}[tbp]
    \centering    
    \includegraphics[width = 0.49\linewidth]{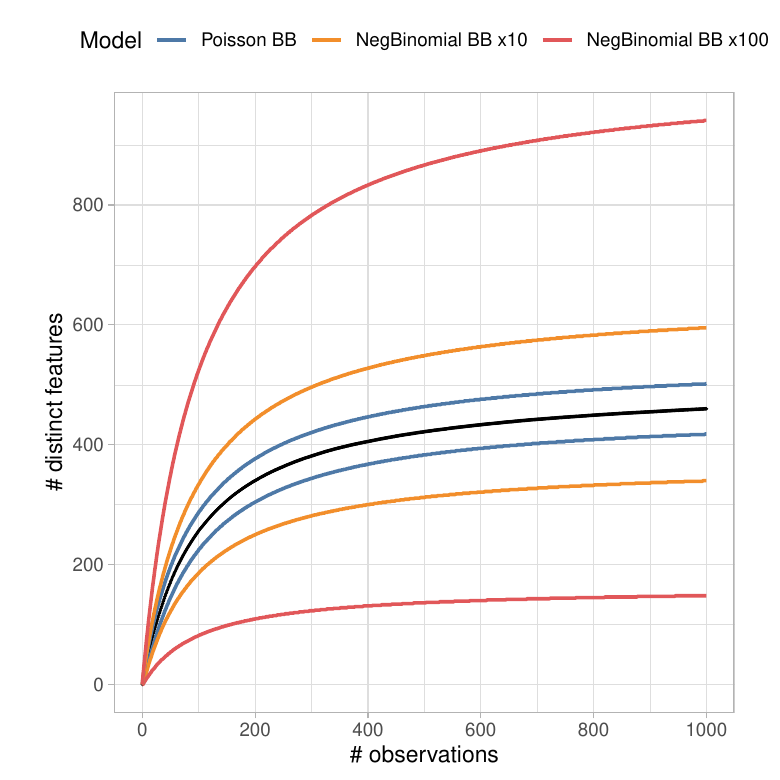}    
    \includegraphics[width = 0.49\linewidth]{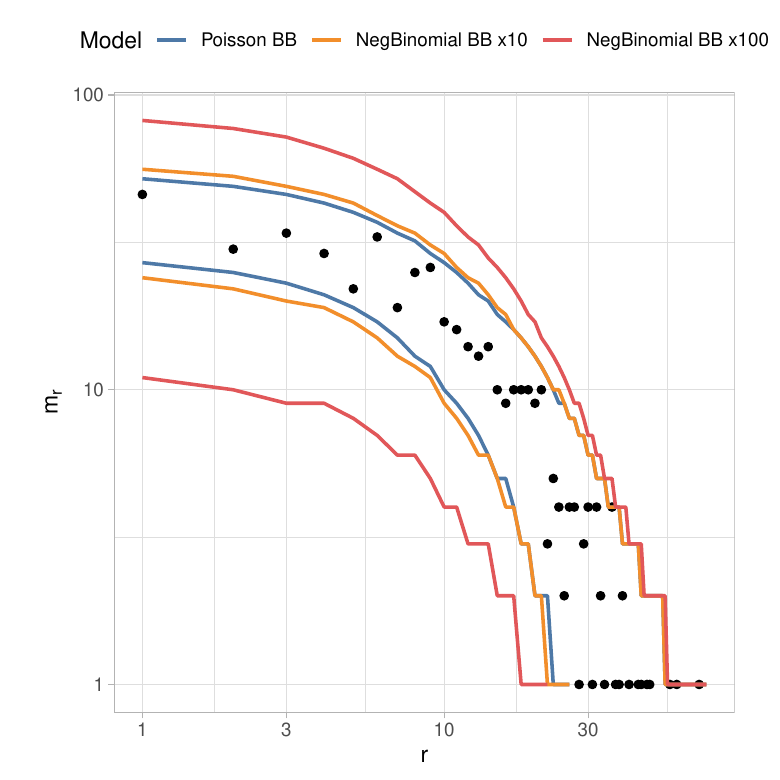}
    
    \caption{
    {Case $n=1000$. Left panel: the empirical accumulation curve (black dots) and the credible intervals (delimited by colored lines) of $K_1,\ldots,K_n$ for the Poisson mixture of \textsc{bb}s and two examples of negative binomial mixture of \textsc{bb}s. Right panel: the observed values of $K_{n,r}$ (black dots) and the credible intervals (delimited by colored lines) of $K_{n,r}$  for the same mixtures of \textsc{bb}s. The right plot is in log-log scale.}}
    \label{fig:rare_knr_simulation_beta_credible_intervals}
\end{figure}

Similarly to simulation study A, we compare the Poisson mixture of \textsc{bb}s, the negative binomial mixture of \textsc{bb}s, the frequentist estimator (Chao) in \cite{Cha14} and a variation of the {Good--Toulmin} estimators (\textsc{gt}) in \cite{Chak19}, in terms of prediction of the number of unseen features in an additional sample of increasing size. For the negative binomial mixture of \textsc{bb}s, we analyse two distinct choices for the prior variance of $N$, specifically $\Var(N) = \mu_0 \times c$, with {$c \in \{10, 100\}$}. We report the comparison in Figure \ref{fig:extr_simulation_beta}, for $n \in \{200, 1000\}$. For larger sample sizes, e.g., $n = 1000$, all the models seem to produce reliable predictions, at least for the analyzed horizon $m=1,\ldots, 500$. Significant differences are observed for the smaller sample size $n=200$, where the point-wise estimates (the expected values of $K_m^{(n)} + k$, for $m=1,\ldots, 500$) produced by our mixtures of \textsc{bb}s accurately predict the observed curve in the test set, with the credible intervals nicely quantifying the uncertainty around such estimates. Indeed, the observed curve is always contained in the credible bands. On the other hand, the {Good--Toulmin} estimates show their well-known stability issues, while the estimator in \cite{Cha14} clearly underestimates the observed curve.

\begin{figure}[tbp]
    \centering    
    \includegraphics[width = \linewidth]{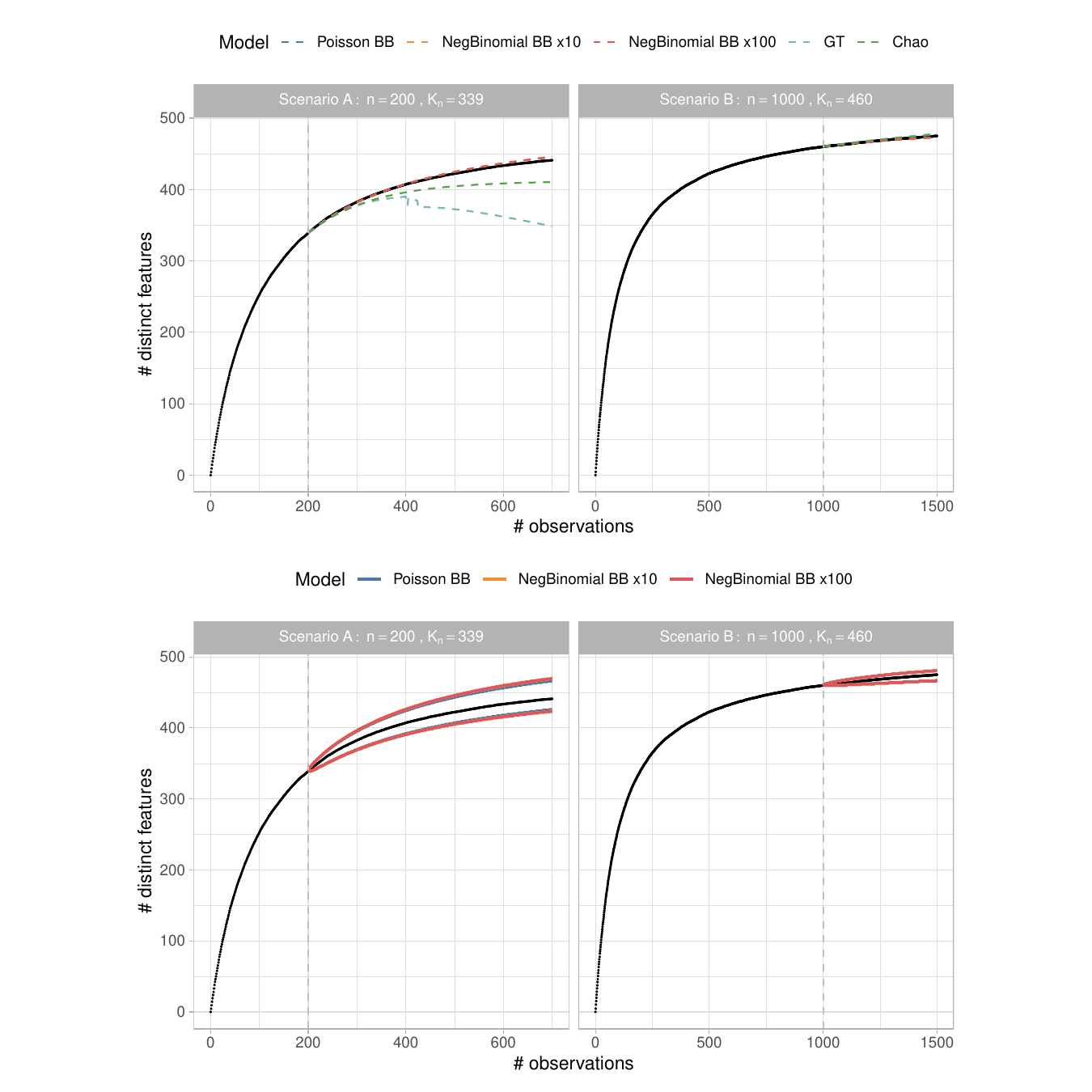}    
    
    \caption{Top row: the accumulation curve $K_m^{(n)} + k \mid \bm{Z}^{(n)}$ under the mixtures of \textsc{bb}s, the {Good--Toulmin} estimator and the Chao estimator for $n = 200$ (left) and $n = 1000$ (right). Bottom row: the $95\%$ credible intervals of $K_m^{(n)} + k \mid \bm{Z}^{(n)}$, for $n = 200$ (left) and $n = 1000$ (right), for the Poisson and the negative binomial mixtures of \textsc{bb}s. 
    The grey vertical lines indicate the training set.}
    \label{fig:extr_simulation_beta}
\end{figure}

For the analyzed dataset, the model-checking in Figure \ref{fig:rare_knr_simulation_beta} suggests the correct specification of the mixtures of \textsc{bb}s. Consequently, it is reasonable to assume the finiteness of the richness and address its estimation. Figure \ref{fig:richness_points_simulation_beta} shows an expected, still remarkable, illustration on the richness estimation via the Poisson mixture and two negative binomial mixtures with $\Var(N) = \mu_0 \times c$, with {$c \in \{10, 100\}$}. Specifically, for increasing sample sizes $n \in \{200,1000,5000\}$, it reports the posterior mean of the species richness $N$, for different choices of the parameters of the prior distribution on $N$. In particular, for each model, we estimate the parameters following the usual empirical Bayes procedure (referred to as EB in the figure); moreover, we fix different choices for the prior mean of $N$, i.e., $\E(N) \in \{200,400,800\}$, and we estimate the parameters $\alpha$ and $\theta$ as discussed in Section \ref{sec:parameter_elic}. We observe that, as the size $n$ of the training set increases, the richness estimates shrink towards the true value $H = 500$. We also highlight that when prior guesses on $N$ are substantially wrong, e.g., $\E(N) \in \{200, 400, 800\}$, negative binomial mixtures of \textsc{bb}s with larger variances produce richness estimates that are closer to the true value $H$ than the ones produced by the Poisson mixture, due to the greater flexibility of the negative binomial prior, which is able to give less weight to wrong prior guesses. {In contrast, under the standard \textsc{bb} model, the prior guess for $N$ deterministically fixes the richness, and thus posterior inference on $N$ is impossible.}

\begin{figure}[tbp]
    \centering    
    \includegraphics[width = \linewidth]{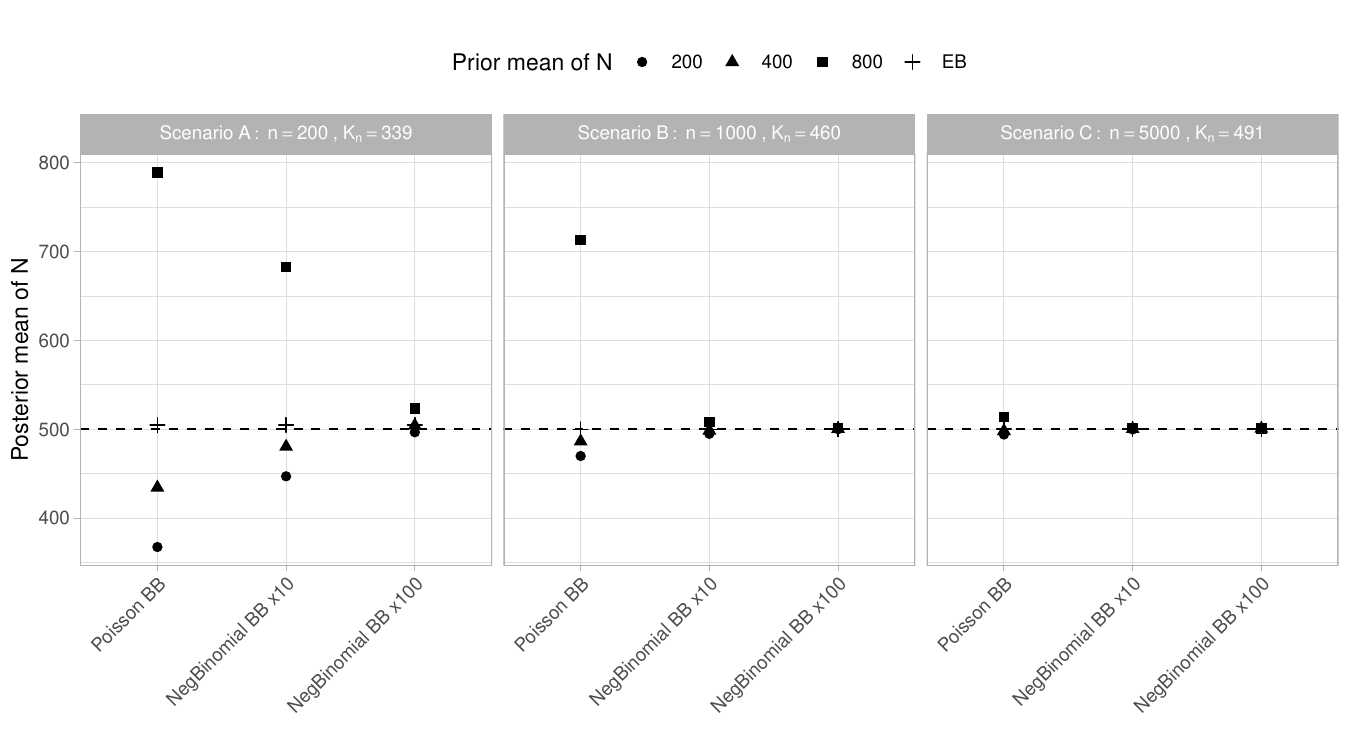}    
    
    \caption{Posterior mean of the richness  $N$, for increasing sample sizes $n \in \{200,1000,5000\}$, and different values of the parameters of the models. The dashed lines indicate the true richness $H=500$.}
    \label{fig:richness_points_simulation_beta}
\end{figure}

Finally, in Figure  \ref{fig:richness_distr_simulation_beta}, we show the posterior distributions of the richness $N$, reporting (i) the empirical Bayes approach for parameters elicitation and (ii) the prior mean of $N$ equal to $400$. Commenting on the inference produced in case (i), we observe that such posterior distributions give high probability mass to the true richness $H = 500$; for smaller sample size $n=200$, we can remark the higher dispersion of the posterior distribution of $N$ for the negative binomial mixtures. In case (ii), when the prior guess of $N$ is wrong, e.g., $\E(N) = 400$, the inference gets worse as the sample size decreases, e.g., $n=200$. This is clearly expected since the posterior of $N$ gives more weight to the contribution of the prior, which brings an incorrect guess. However, It is interesting to note how the negative binomial mixtures are perfectly catching the true richness even for $n=200$. 

\begin{figure}[tbp]
    \centering    
    \includegraphics[width = \linewidth]{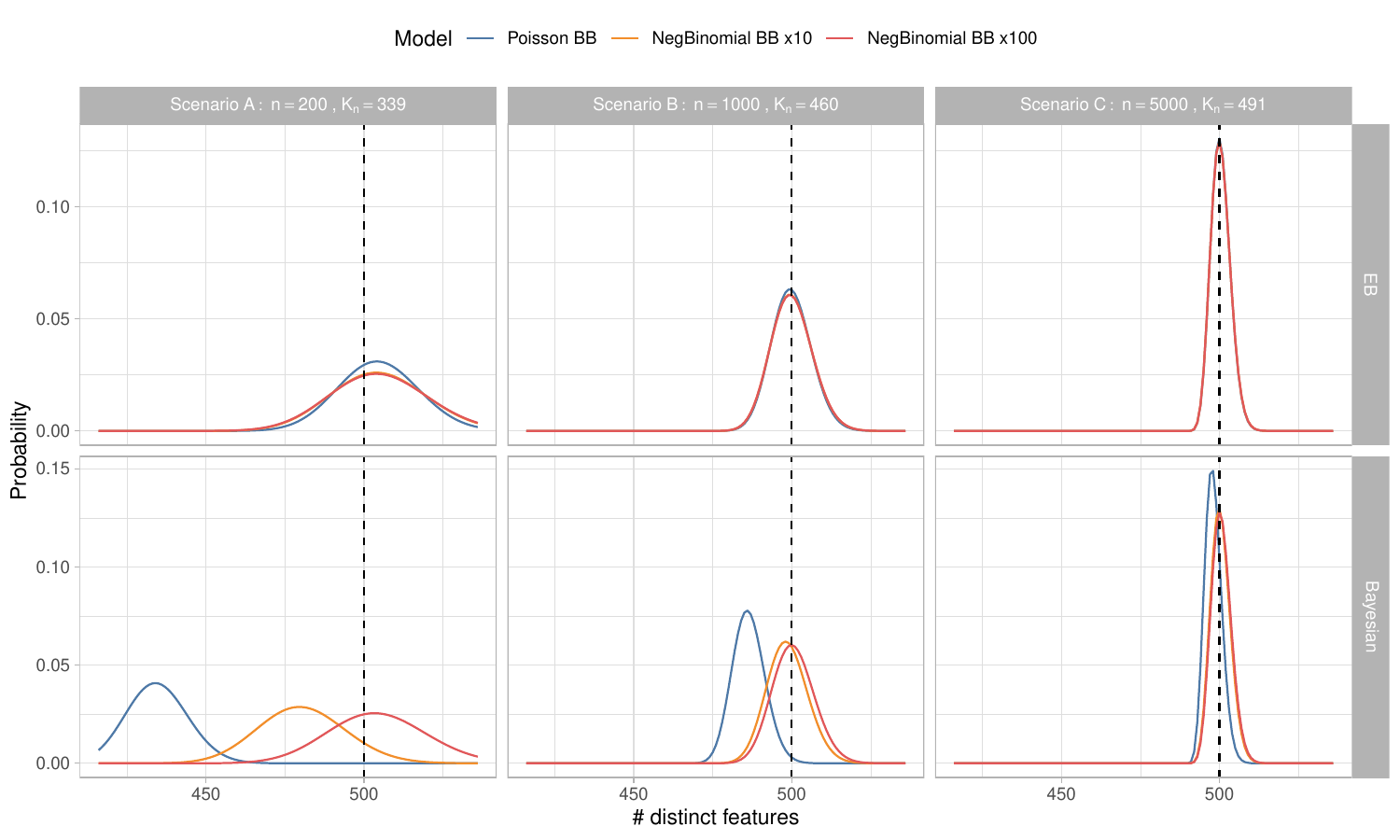}    
    
    \caption{Posterior distribution of the richness $N$ for increasing sample sizes $n \in \{200, 1000, 5000\}$. Top row: parameters elicited according to the empirical Bayes approach; bottom row: $\E(N) = 400$, with $\alpha$ and $\theta$ chosen as in Section \ref{sec:parameter_elic}. The vertical black lines indicate the true richness $H=500$.}
    \label{fig:richness_distr_simulation_beta}
\end{figure}

\section{Ecological applications: additional details and fully Bayesian approach}
\label{app:applications}

\subsection{Additional analyses}\label{app:additional_plots}

{In the present section we provide additional details and analyses for the ecological applications discussed in Section \ref{sec:real_data} of the main text. For each ecological dataset, we present: (i) the observed taxon accumulation curve, (ii) the $95\%$ credible intervals of $K_1,\ldots,K_n$ and of $K_{n,r}$, $r \geq 1$, based on some examples from the class of mixtures favored by the model selection procedures, (iii) the results of a data-holdout experiment, in which each model is trained on half of the observed data and evaluated in terms of its predictive performance on the held-out portion. }

\subsubsection*{Vascular plants in Danish forest}

{Figure \ref{fig:accumulation_plants} shows the observed taxon accumulation for the vascular plants in \cite{Mazz2016}; the plot clearly indicates that the asymptote of the curve has not yet been reached, suggesting that the species richness will exceed the observed number of species in the sample. 

As detailed in the main text, the model-checking procedures favor mixtures of \textsc{ibp}s for this dataset.
Accordingly, Figure \ref{fig:rare_knr_plants_credible_intervals} reports the $95\%$ credible intervals of $K_1,\ldots,K_n$ and of $K_{n,r}$, $r \geq 1$, for two examples of gamma mixture of \textsc{ibp}s.

For the data-holdout experiment, we randomly partition the observed data, selecting half as the training set, whose size equals $n/2 = 51$, with $n=102$. Then, we fit the Poisson and two examples of negative binomial mixture of \textsc{bb}s, along with two examples of gamma mixture of \textsc{ibp}s, on the training data. In Figure \ref{fig:extr_held_out_plants}, we compare the predictions of the extrapolation curve from all fitted models against the observed extrapolation curve in the held-out test set. The plots clearly show that the mixtures of \textsc{ibp}s effectively predict the number of unseen species in the training data which are displayed in the test data. It is also apparent  that the mixtures of \textsc{bb}s struggle to capture the true growth of the observed extrapolation curve. }

\begin{figure}[tbp]
    \centering    
    \includegraphics[width = 0.5\linewidth]{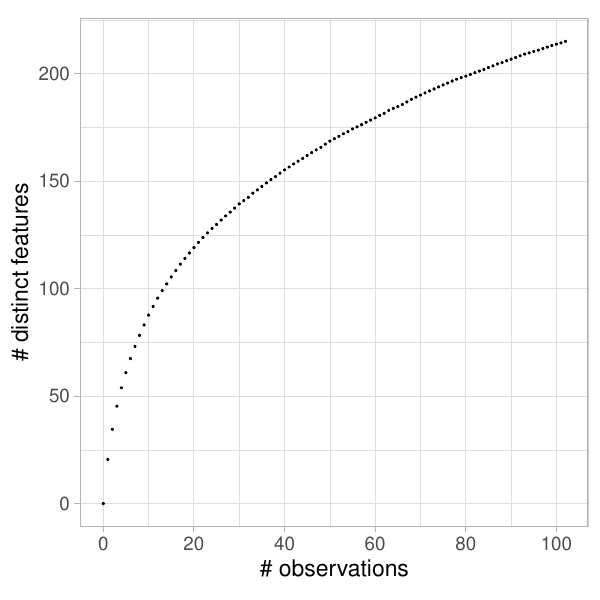}
    \caption{Taxon accumulation curve for the vascular plants in \cite{Mazz2016}.}
    \label{fig:accumulation_plants}
\end{figure}

\begin{figure}[tbp]
    \centering    
    \includegraphics[width = 0.49\linewidth]{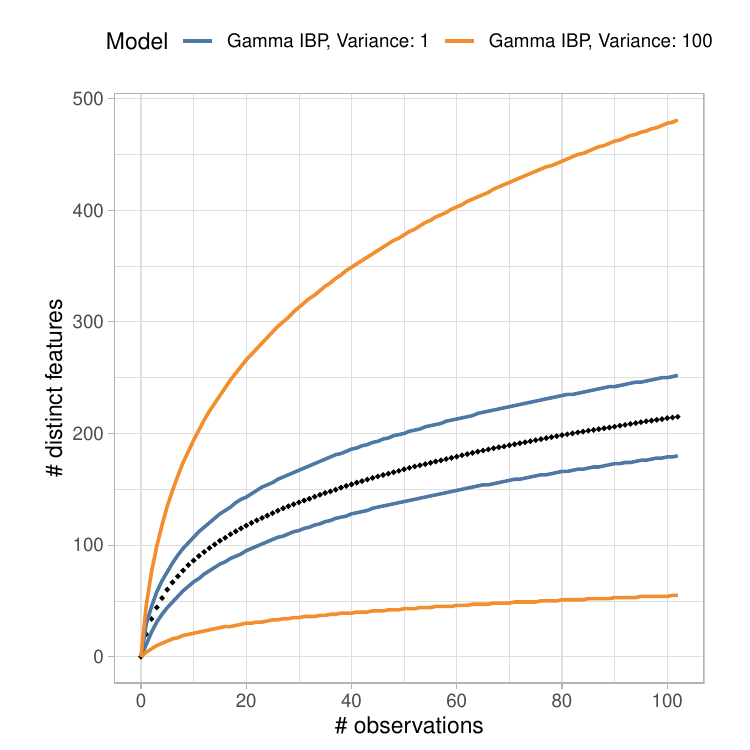}    
    \includegraphics[width = 0.49\linewidth]{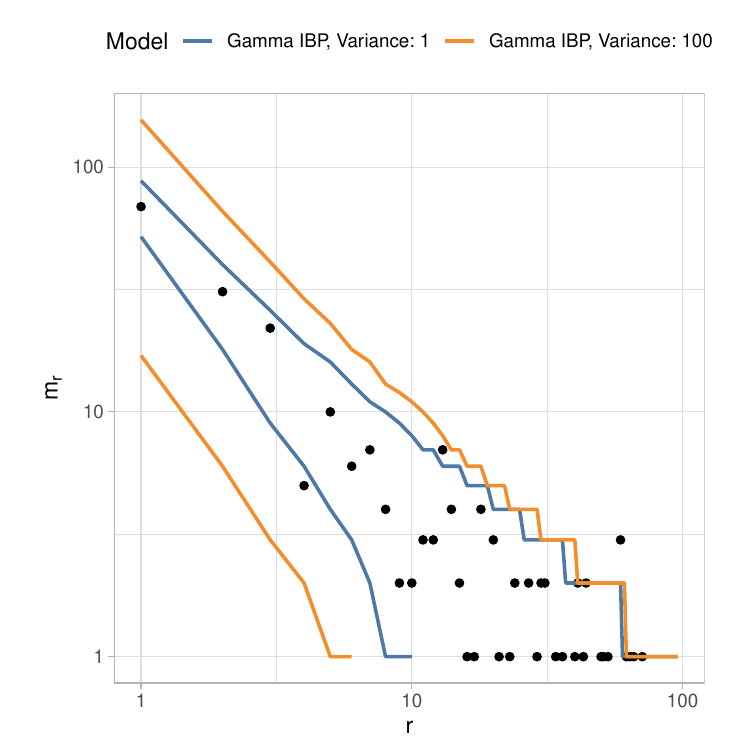}
    \caption{{
    Left panel: the empirical accumulation curve (black dots) and the credible intervals (delimited by colored lines) of $K_1,\ldots,K_n$ for two examples of gamma mixture of \textsc{ibp}s. Right panel: the observed values of $K_{n,r}$ (black dots) and the credible intervals (delimited by colored lines) of $K_{n,r}$  for the same mixtures of \textsc{ibp}s. The right plot is in log-log scale.} }
    \label{fig:rare_knr_plants_credible_intervals}
\end{figure}

\begin{figure}[tbp]
    \centering    
     
    \includegraphics[width = \linewidth]{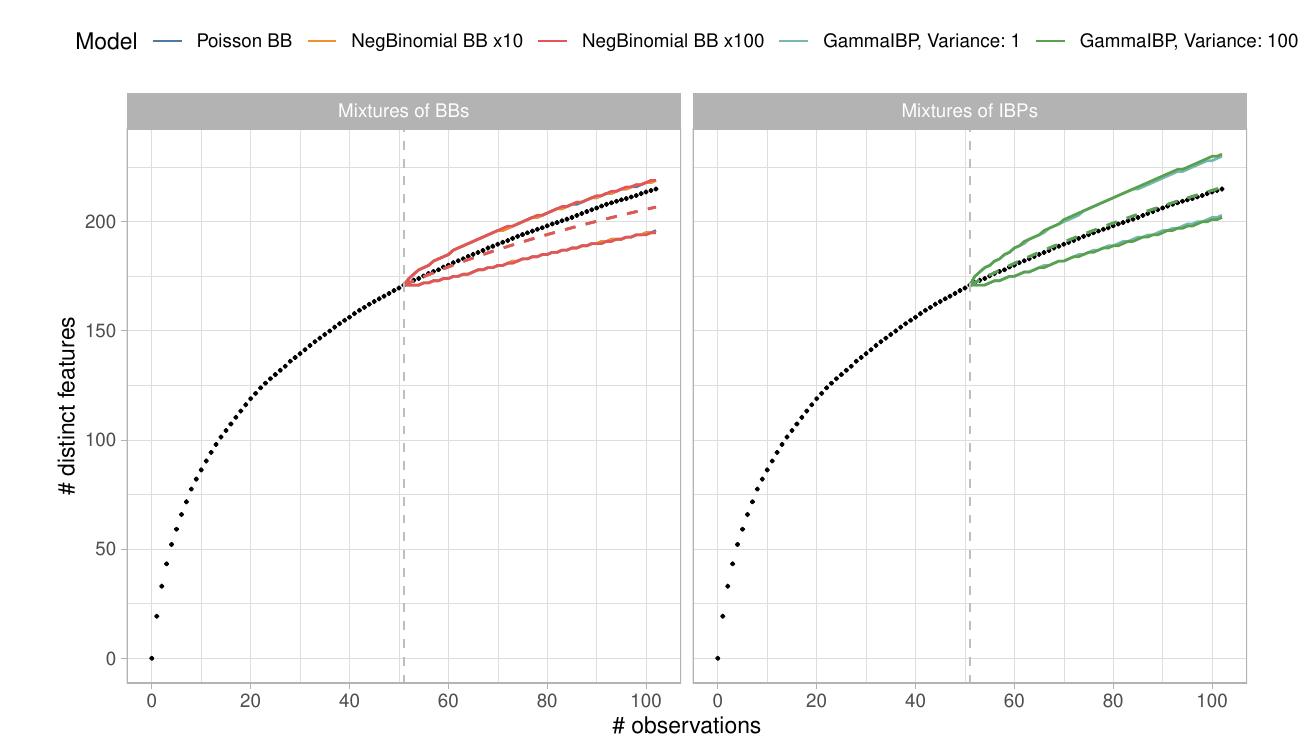}
   
    \caption{{Data-holdout analysis with training set size equal to $n/2$, where $n=102$: expected values (dashed lines) and $95\%$ credible intervals (delimited by solid lines) of $K_m^{(n/2)} + k \mid \bm{Z}^{(n/2)}$, for the mixtures of \textsc{bb}s (left) and the mixtures of \textsc{ibp}s (right). 
    The grey vertical lines indicate the training set.} }
    \label{fig:extr_held_out_plants}
\end{figure}

\subsubsection*{Trees in Barro Colorado Island}

{Figure \ref{fig:accumulation_bci} illustrates the observed taxon accumulation for the 
Barro Colorado Island dataset, freely available in the \textsc{vegan} package in \textsc{R}. The curve does not exhibit any clear asymptote, indicating that the species richness likely exceeds the observed number of species in the sample. 

As discussed in the main text, the model-checking procedures favor mixtures of \textsc{bb}s for this dataset.
Accordingly, Figure \ref{fig:rare_knr_bci_credible_intervals} reports the $95\%$ credible intervals of $K_1,\ldots,K_n$ and of $K_{n,r}$, $r \geq 1$, for the Poisson and two examples of negative binomial mixture of \textsc{bb}s.

For the data-holdout experiment, we randomly split the dataset, using half of the data as the training set, whose size equals $n/2 = 50$, with $n=100$. Then, we fit the Poisson and two examples of negative binomial mixture of \textsc{bb}s, along with two examples of gamma mixture of \textsc{ibp}s, on the training data. In Figure \ref{fig:extr_held_out_bci}, we compare the predictions of the extrapolation curve from all fitted models against the observed extrapolation curve in the held-out test set. The plots clearly show that the mixtures of \textsc{bb}s successfully predict the number of previously unseen species observed in the test data, while the mixtures of \textsc{ibp}s fail to capture this extrapolation behavior.}

\begin{figure}[tbp]
    \centering    
    \includegraphics[width = 0.5\linewidth]{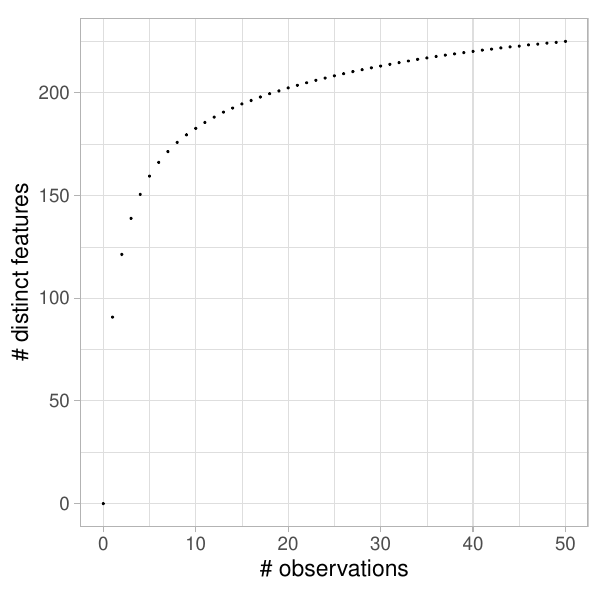}
    \caption{Taxon accumulation curve for the trees in the Barro Colorado Island data.}
    \label{fig:accumulation_bci}
\end{figure}

\begin{figure}[tbp]
    \centering    
    \includegraphics[width = 0.49\linewidth]{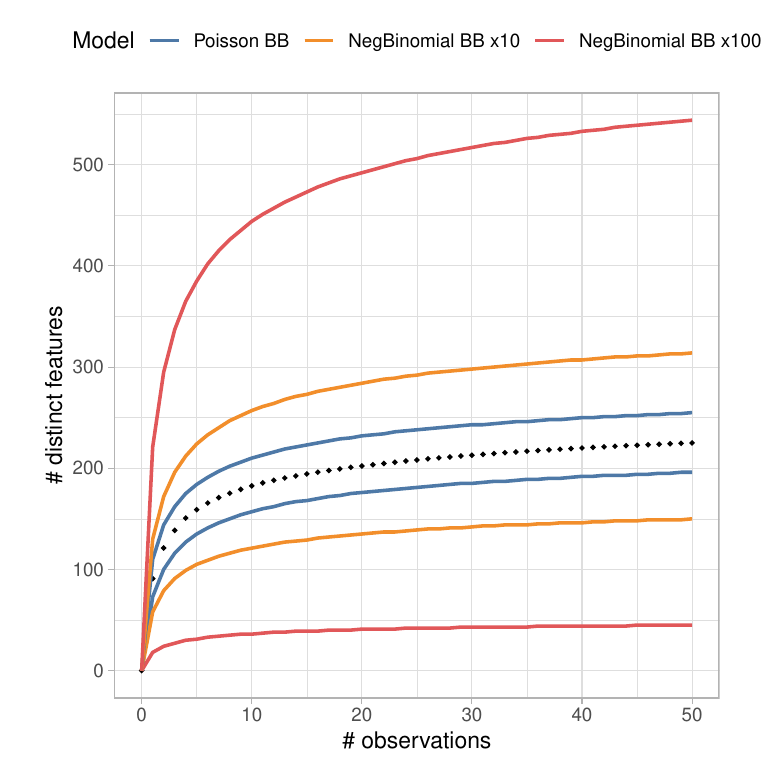}    
    \includegraphics[width = 0.49\linewidth]{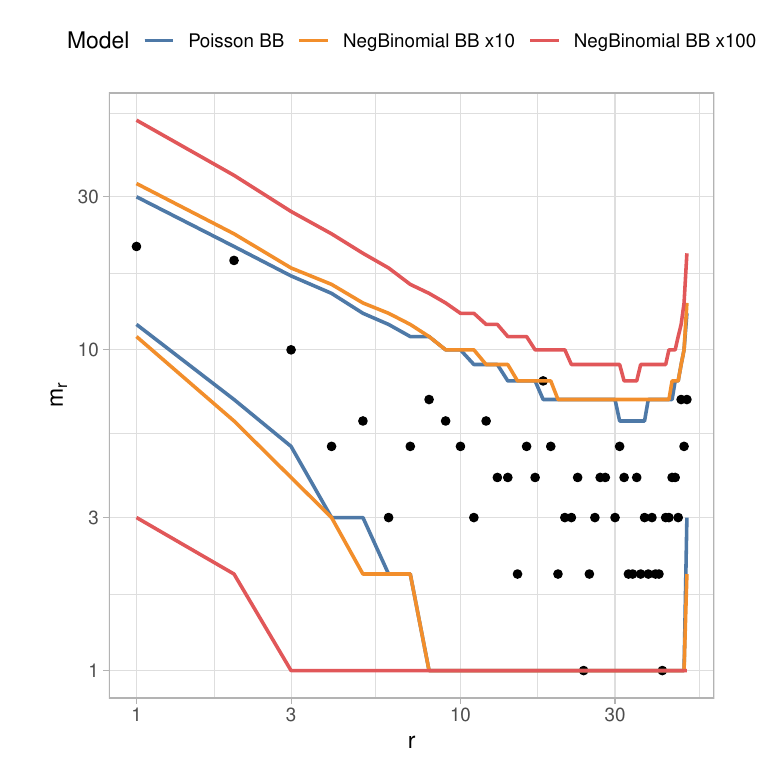}
    \caption{{Left panel: the empirical accumulation curve (black dots) and the credible intervals (delimited by colored lines) of $K_1,\ldots,K_n$ for the Poisson mixture of \textsc{bb}s and two examples of negative binomial mixture of \textsc{bb}s. Right panel: the observed values of $K_{n,r}$ (black dots) and the credible intervals (delimited by colored lines) of $K_{n,r}$  for the same mixtures of \textsc{bb}s. The right plot is in log-log scale.} }
    \label{fig:rare_knr_bci_credible_intervals}
\end{figure}

\begin{figure}[tbp]
    \centering    
    
    \includegraphics[width = \linewidth]{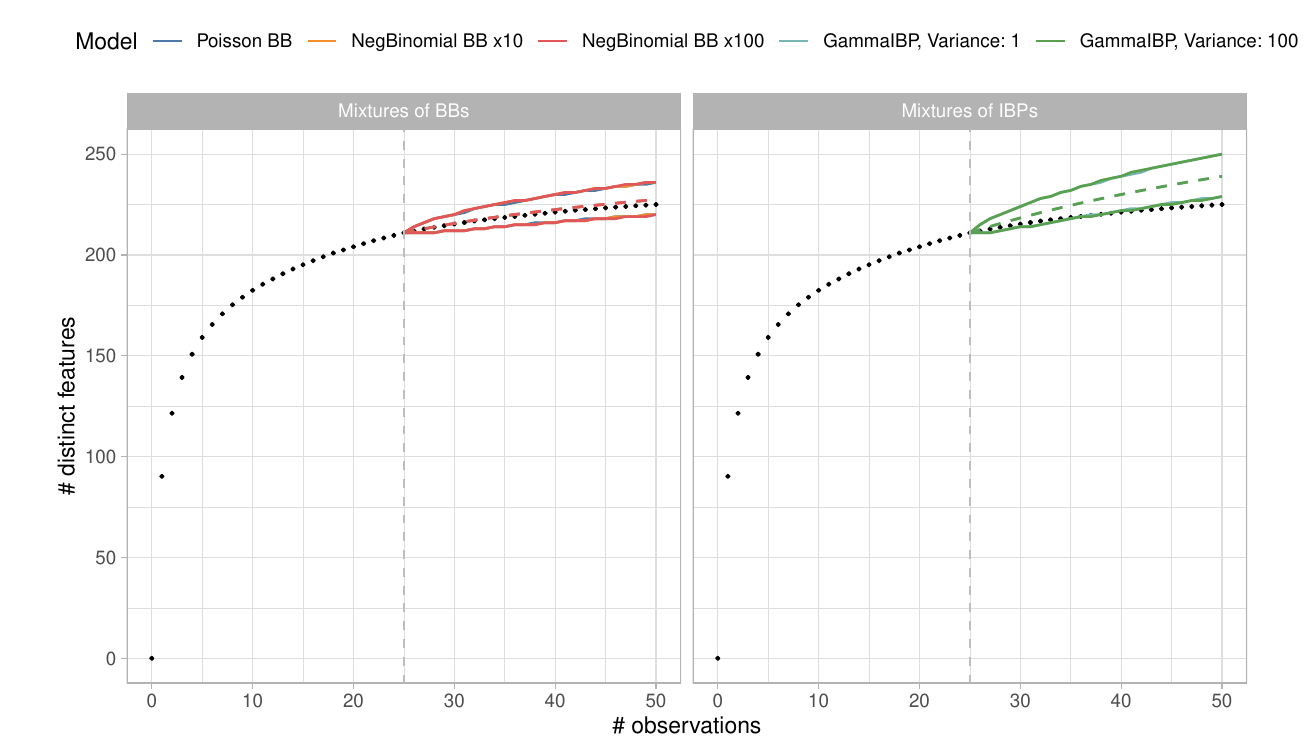}
    \caption{{Data-holdout analysis with training set size equal to $n/2$, where $n=100$: expected values (dashed lines) and $95\%$ credible intervals (delimited by solid lines) of $K_m^{(n/2)} + k \mid \bm{Z}^{(n/2)}$, for the mixtures of \textsc{bb}s (left) and the mixtures of \textsc{ibp}s (right). 
    The grey vertical lines indicate the training set.} }
    \label{fig:extr_held_out_bci}
\end{figure}

\subsection{Fully Bayesian approach: details}\label{app:application_prior}

In Section \ref{sec:parameter_elic} we describe a recommended procedure for eliciting the parameters of the models via an empirical Bayes approach. We consider such a procedure for all the discussed simulation studies and real data examples. However, one might prefer to adopt a fully Bayesian approach instead of the proposed empirical Bayes one. Specifically, the interest might be in assuming prior distributions for the parameters $\alpha$ and $\theta$, for all of the mixtures. Within the mixtures of \textsc{bb}s, we argue that the negative binomial mixture has to be regarded as the fully Bayesian version of the Poisson mixture, since the former is obtained by placing a gamma prior on the $\lambda$ parameter of the latter. 

In this section, we first discuss the prior elicitation for the parameters $\alpha$ and $\theta$; then, we illustrate the comparison with the empirical Bayes approach, in terms of inference and prediction obtained in the two real data scenarios. As far as prior elicitation is concerned, since the constraint $\theta > - \alpha$ for all of the models, it is convenient to perform a change of variable, introducing $s = \theta + \alpha$ and removing $\theta$, so that $s > 0$. Then, prior distributions on $\alpha$ and $s$ are specified. Specifically, for the mixtures of \textsc{ibp}s, characterized by $\alpha \in [0,1)$, we consider
\[
\alpha \sim \dbeta\left(a_\alpha, b_\alpha\right),\; s \sim \dgamma\left(a_s, b_s\right),
\]
with $a_\alpha, b_\alpha > 0$ and $a_s, b_s > 0$. Denoting with $\hat{\alpha}$ and $\hat{\theta}$ the empirical Bayes estimates obtained as in Section \ref{sec:parameter_elic}, we select the hyperparameters of the priors by imposing $\E(\alpha) = \hat{\alpha}$ and $\E(s) = \hat{s}$, where $\hat{s} = \hat{\alpha} + \hat{\theta}$. It follows that $a_\alpha/(a_\alpha + b_\alpha) = \hat{\alpha}$ and $a_s/b_s = \hat{s}$. Two additional equations, one involving $a_\alpha, b_\alpha$ and one involving $a_s, b_s$, are necessary to fix the four hyperparameters. For example, such equations may control the prior variances of $\alpha$ and $s$, so that the practitioner can select the degree of regularization induced by the priors. Draws from the posterior distribution of $(\alpha, s)$ can be easily obtained via any Metropolis-Hastings algorithm. Specifically, we apply the change of variables $\alpha' = \log(\alpha/(1-\alpha))$, $s' = \log(s)$, so that $(\alpha', s') \in \R^2$, and we update $(\alpha', s')$ via a pre-conditioned \textsc{mala} (Metropolis Adjusted Langevin Algorithm), since the gradient of the log-full-conditional density of $(\alpha', s')$ is available in closed form.

For the mixtures of \textsc{bb}s, having $\alpha < 0$, we consider
\[
-\alpha \sim \dgamma\left(a_\alpha, b_\alpha\right),\; s \sim \dgamma\left(a_s, b_s\right),
\]
with $a_\alpha, b_\alpha > 0$ and $a_s, b_s > 0$. We choose the hyperparameters of the priors such that $\E(\alpha) = \hat{\alpha}$ and $\E(s) = \hat{s}$. This imposes $a_\alpha/b_\alpha = - \hat{\alpha}$ and $a_s/b_s = \hat{s}$. Similarly to the mixtures of \textsc{ibp}s, one may set the degree of regularization induced by the priors by choosing the prior variances of $\alpha$ and $s$. Sampling from the posterior distribution of $(\alpha, s)$ is straightforward through any Metropolis-Hastings algorithm. We resort again to a pre-conditioned \textsc{mala} for the transformed parameters $(\alpha', s')$, where $\alpha' = \log(-\alpha)$, $s' = \log(s)$, and we stress that the gradient of the log-full-conditional density of $(\alpha', s')$ is available in closed form.

Here, we report the comparison among the fully Bayesian approach and the empirical Bayes approach described in Section \ref{sec:parameter_elic}, in terms of inference and/or prediction, concerning the two real data scenarios.

\subsubsection*{Vascular plants in Danish forest} Analyzing the vascular plants data of \cite{Mazz2016} in Section \ref{sec:vascular_plants}, we claimed the correct specification of the mixtures of \textsc{ibp}s. We consequently showed the prediction obtained via the gamma mixture of \textsc{ibp}s, using the empirical Bayes approach to select the parameters. In particular, the empirical Bayes estimates for $\alpha$ and $\theta$ result as follows: $\hat{\alpha} = 0.17$, $\hat{\theta} = 1.7$. For the fully Bayesian approach, we then select the hyperparameters by imposing: $\E(\alpha) = \hat{\alpha} = 0.17$, $\Var(\alpha) = 10^{-2}$, and $\E(s) = \hat{s} = 1.87$, $\Var(s) = 187$, so that a moderate amount of regularization is introduced. We run the \textsc{mcmc} algorithm for $5\cdot 10^4$ iterations, discarding the first $5\cdot10^3$ and keeping one every two iterations. In Figure \ref{fig:extr_plants_prior}, we report the comparison between the fully Bayesian approach and the empirical Bayes approach described in Section \ref{sec:parameter_elic}, in terms of the extrapolation curve. We observe that, for both the choices of the prior variance of $\gamma$, the expected number of unseen species which will be collected in additional samples of increasing sizes $m$ is almost identical for the two approaches. This is desirable since it confirms that the empirical Bayes approach leads to equivalent prediction than the fully Bayesian procedure. On the other hand, the credible intervals produced by the fully Bayesian approach are larger than the ones estimated with the empirical Bayes method, as naturally expected.

\begin{figure}[tbp]
    \centering    
    \includegraphics[width = \linewidth]{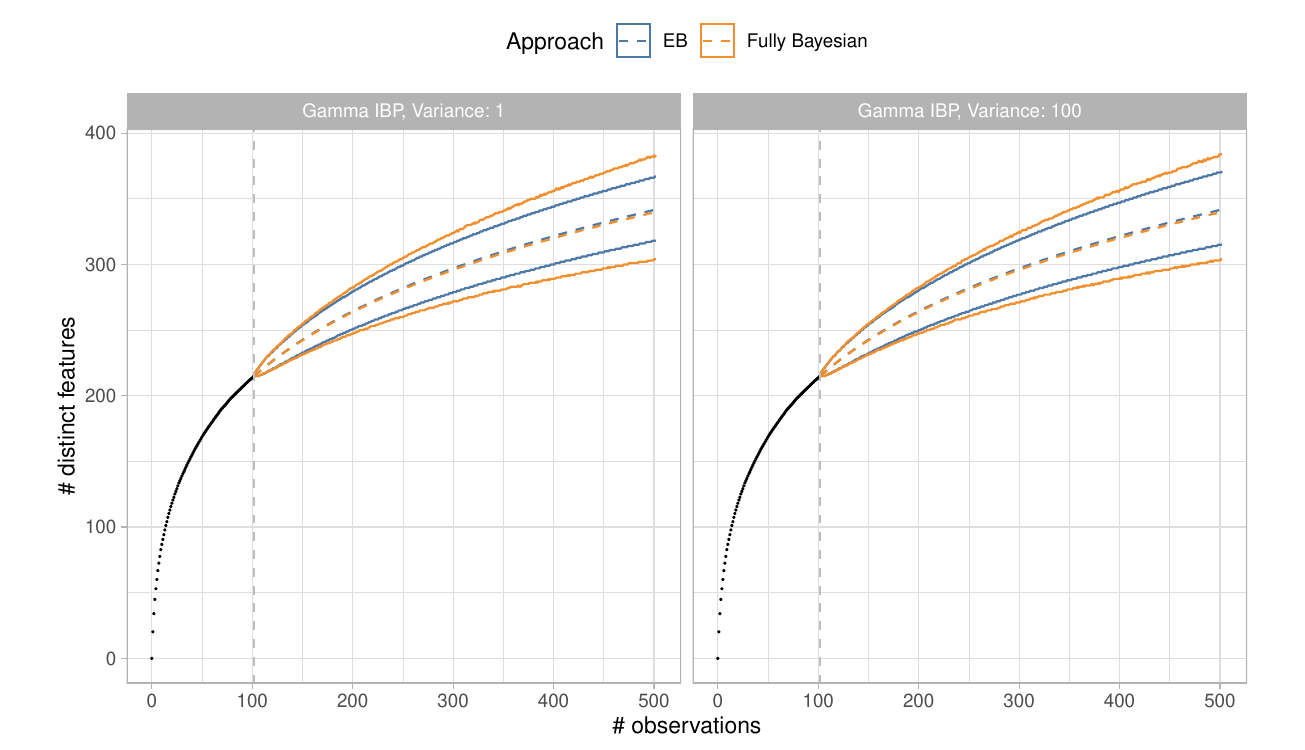}    
    
    \caption{Comparison between the empirical Bayes approach (EB) and the fully Bayesian approach in terms of expected values and the $95\%$ credible intervals for $K_m^{(n)} + k$, with $k = 215$, for the gamma mixtures of \textsc{ibp}s. The extrapolation horizon is $m=1,\ldots, 400$. The left panel shows the case {$\Var(\gamma) = 1$}, the right panel considers $\Var(\gamma) = 100$.}
    \label{fig:extr_plants_prior}
\end{figure}

\subsubsection*{Trees in Barro Colorado Island} Differently than for the vascular plants of \cite{Mazz2016}, the tree data investigated in Section \ref{sec:bci_data} are suitably modelled via the mixtures of \textsc{bb}s (refer to Figure \ref{fig:rare_knr_bci}). In the main text, we report the inference and prediction produced by the Poisson and negative binomial mixtures of \textsc{bb}s, with parameters selected via the empirical Bayes approach discussed in Section \ref{sec:parameter_elic}. In particular, we get $\hat{\alpha} = - 0.29$ and $\hat{\theta} = 0.95$ as empirical Bayes estimates for the parameters. In the fully Bayesian approach, we select the hyperparameters so that $\E(\alpha) = \hat{\alpha} = -0.29$, $\Var(\alpha) \approx 300$, and $\E(s) = \hat{s} = 0.66$, $\Var(s) \approx 650$. We run the \textsc{mcmc} algorithm for $5\cdot 10^4$ iterations, discarding the first $5\cdot10^3$ and keeping one every two iterations. In Figure \ref{fig:extr_bci_prior}, we report the comparison between the fully Bayesian approach and the empirical Bayes approach, in terms of the extrapolation curve. The expected values for the extrapolation curve are similar between the two approaches, with larger credible intervals in the fully Bayesian case. This behavior is expected and desired. Moreover, Figure \ref{fig:richness_bci_prior} shows the posterior distribution of $N$, the species richness, under the different approaches. It is evident that the fully Bayesian approach leads to much more dispersed posterior distributions for $N$. In particular, we remind that the empirical Bayes procedure estimates an expected species richness of $296.17$, with credible interval $[278, 316]$, for both the choices of the prior variance of $N$. The fully Bayesian approach produces an expected species richness of $306.24$ with credible interval $[262, 375]$, for the case $\Var(N) = 10 \times \mu_0$, while it leads to an expected species richness of {$316.44$ with credible interval $[261, 425]$, for the case $\Var(N) = 100 \times \mu_0$}.

\begin{figure}[tbp]
    \centering    
    \includegraphics[width = \linewidth]{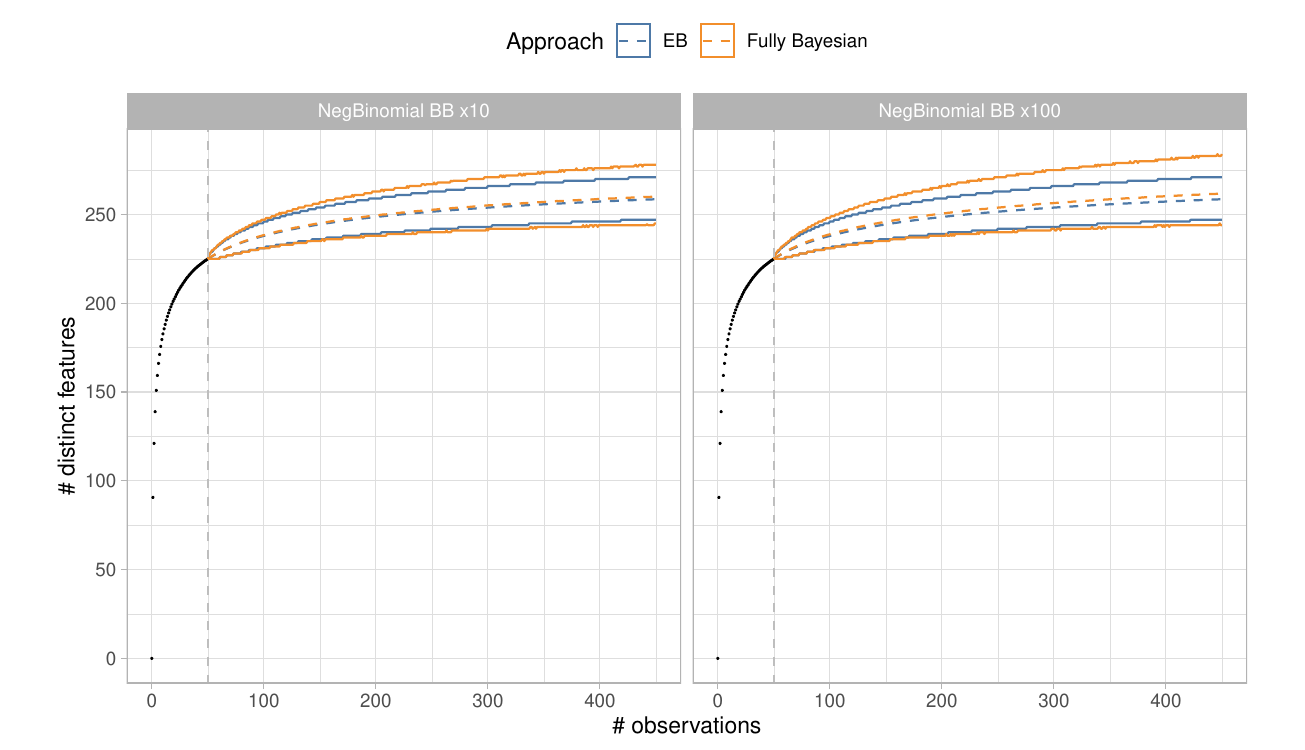}    
    
    \caption{Comparison between the empirical Bayes approach (EB) and the fully Bayesian approach in terms of expected values and the $95\%$ credible intervals for $K_m^{(n)} + k$, with $k = 225$, for the negative binomial mixtures of \textsc{bb}s. The extrapolation horizon is $m=1,\ldots, 400$. The left panel shows the case $\Var(N) = 10 \times \mu_0$, the right panel considers {$\Var(N) = 100 \times \mu_0$}.}
    \label{fig:extr_bci_prior}
\end{figure}

\begin{figure}[tbp]
    \centering    
    \includegraphics[width =  \linewidth]{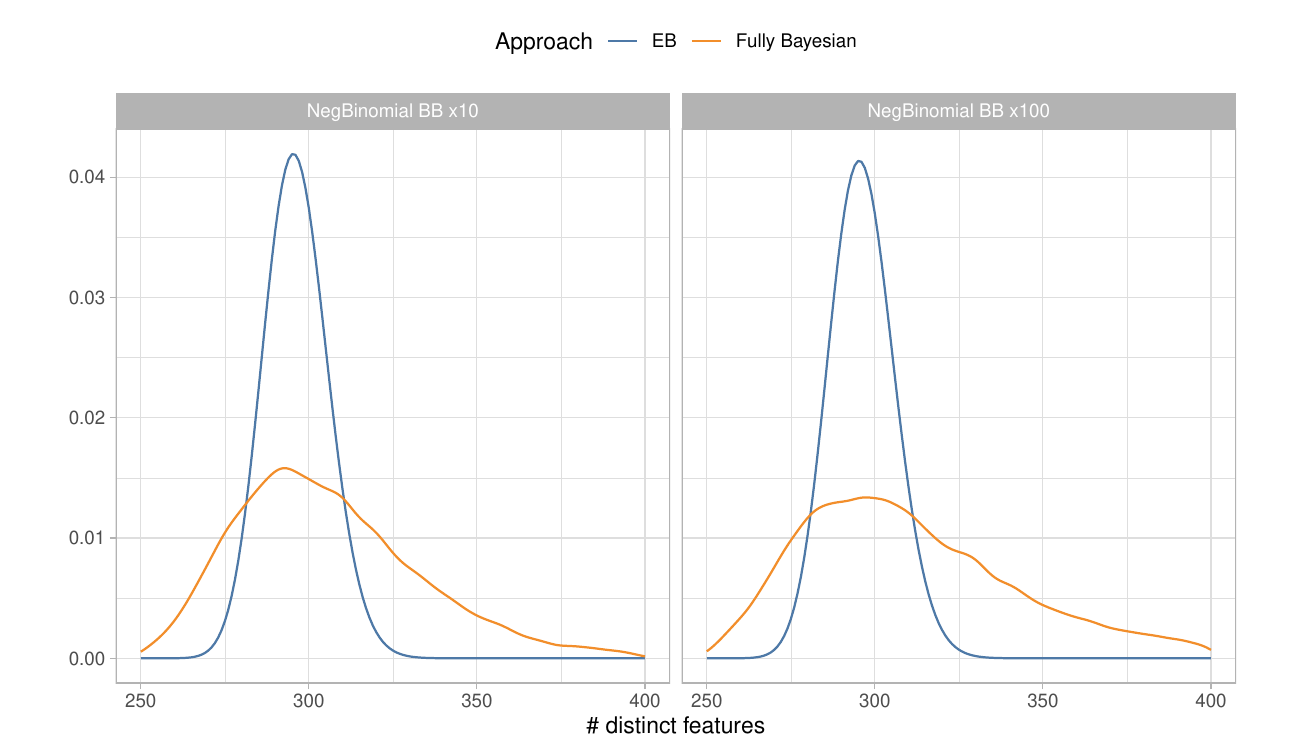}    
    
    \caption{Comparison between the empirical Bayes approach (EB) and the fully Bayesian approach in terms of the posterior distribution of the species richness $N$, for the negative binomial mixtures of \textsc{bb}s. The left panel shows the case $\Var(N) = 10 \times \mu_0$, the right panel considers {$\Var(N) = 100 \times \mu_0$}.}
    \label{fig:richness_bci_prior}
\end{figure}

{
\section{Discussion on computational complexity}

In this section, we present a brief discussion on some computational aspects related to the proposed methodology. In general, the computational procedure of our methodology is highly efficient, due to the availability of closed form expressions for the \textsc{efpf}s.
More specifically, the computational effort depends on the chosen parameter elicitation strategy, namely: (i) the empirical Bayes approach described in Section \ref{sec:parameter_elic}; (ii) the fully Bayesian approach presented in the real data analysis of Section \ref{app:application_prior}.

Concerning (i), i.e., the empirical Bayes approach (Section \ref{sec:parameter_elic}), computation is limited to optimizing the parameters of the mixtures of \textsc{bb}s or the mixtures of \textsc{ibp}s. Specifically, for the mixtures of \textsc{bb}s, the parameters are estimated by maximizing the \textsc{efpf} of the \textsc{bb} model as given in Equations \eqref{eq:EFPF_product}--\eqref{eq:EFPF_BB}. The computational complexity of each evaluation of the \textsc{efpf} of the \textsc{bb} model, as a function of the sample size $n$ and the number of observed features $k$, scales as $k$, since the complexity of the term in \eqref{eq:EFPF_BB} is independent of both $n$ and $k$. Similarly, for the mixtures of \textsc{ibp}s, parameter estimation involves maximizing the \textsc{efpf} of the \textsc{ibp} model from Equations \eqref{eq:EFPF_product}--\eqref{eq:EFPF_IBP}. Here, each evaluation scales as $n + k$, where the term in Equation \eqref{eq:EFPF_IBP} contributes to the dependence on $n$.
For both optimization problems, we use the  \texttt{nlminb} function from \texttt{R}'s \texttt{stats} package. This results in extremely fast computations across all settings.

Concerning (ii), i.e.,  the fully Bayesian approach (Section \ref{app:application_prior}), \textsc{mcmc} algorithms are employed for the updating of the parameters $\alpha$ and $\theta$. In particular, for the mixtures of \textsc{bb}s, we apply a pre-conditioned \textsc{mala} to the transformed parameters $(\alpha', s')$, with $\alpha' = \log(-\alpha)$ and $s' = \log(s)$. The gradient of the log-full-conditional density of $(\alpha', s')$ is available in closed form. Similarly, for the gamma mixture of \textsc{ibp}s, we apply the change of variables $\alpha' = \log(\alpha / (1 - \alpha))$ and $s' = \log(s)$, again resorting to a pre-conditioned \textsc{mala}, where the gradient of the log-full-conditional density of $(\alpha', s')$ is available in closed form.
Thus, in all cases, the \textsc{mcmc} algorithms involve only two transformed parameters and benefits from analytic gradients, ensuring both speed and stability. 

To quantify the efficiency of the \textsc{mcmc} implementations, we provide below the computational details which concern the real data examples of Section \ref{app:application_prior}. In particular, for the gamma mixture of \textsc{ibp}s on the \emph{Vascular plants in Danish forest}, under the choice $\operatornamewithlimits{Var}(\gamma) = 100$, we run $5\cdot 10^4$ iterations, discarding the first $5\cdot10^3$ and keeping one every two iterations. This yields effective sample sizes of $8071.03$ for $\alpha$ and $11396.61$ for $\theta$. The total runtime is $36.71$ seconds on a Windows 11 system with an Intel Core i7-1165G7 CPU @ 2.80 GHz and 8 GB RAM. Finally, for the negative binomial mixture of \textsc{bb}s on the \emph{Trees in Barro Colorado Island}, using $\operatornamewithlimits{Var}(N) = 10 \times \mu_0$ and the same \textsc{mcmc} settings, we obtain effective sample sizes of $11456.89$ for $\alpha$ and $11453.94$ for $\theta$. The total runtime is $30.23$ seconds using the same machine.
}
\end{document}